\documentclass[11pt]{article}
\usepackage[margin=1.15in]{geometry}
\usepackage{amsmath,amssymb,amsthm}
\usepackage{tikz}
\usetikzlibrary{calc}
\usepackage{subcaption}
\usepackage{xcolor}
\usepackage{microtype}
\usepackage{needspace,etoolbox}
\AtBeginEnvironment{theorem}{\needspace{3\baselineskip}}
\AtBeginEnvironment{lemma}{\needspace{3\baselineskip}}
\AtBeginEnvironment{corollary}{\needspace{3\baselineskip}}
\AtBeginEnvironment{proposition}{\needspace{3\baselineskip}}
\AtBeginEnvironment{definition}{\needspace{3\baselineskip}}
\AtBeginEnvironment{remark}{\needspace{3\baselineskip}}
\pretocmd{\section}{\needspace{10\baselineskip}}{}{}
\usepackage[numbers,sort&compress]{natbib}
\usepackage[colorlinks=true,linkcolor=blue,citecolor=blue,urlcolor=blue]{hyperref}
\newcommand{\doi}[1]{\url{https://doi.org/#1}}

\newtheorem{theorem}{Theorem}[section]
\newtheorem{lemma}[theorem]{Lemma}
\newtheorem{corollary}[theorem]{Corollary}

\newtheorem{definition}[theorem]{Definition}
\newtheorem{remark}[theorem]{Remark}

\newcommand{\md}{\operatorname{md}}
\newcommand{\Int}{\operatorname{int}}
\newcommand{\hb}{\bar h}
\newcommand{\T}{\mathsf T}
\newcommand{\cell}{\operatorname{cell}}
\tikzset{vx/.style={circle,fill,inner sep=1.1pt},
         hv/.style={circle,draw,fill=white,inner sep=1.1pt},
         dg/.style={circle,draw,very thick,fill=white,inner sep=1.8pt},
         lm/.style={rectangle,fill,inner sep=2.1pt}}

\title{Metric Dimension is NP-complete\\ on Graphs of Treewidth Two}
\author{Tristan Benoit\\[2pt]
\small Normandie Univ, UNICAEN, ENSICAEN, CNRS, GREYC, 14000, Caen, France\\[2pt]
\small ORCID \href{https://orcid.org/0009-0008-5203-6124}{0009-0008-5203-6124}}
\date{}

\begin{document}
\maketitle

\begin{abstract}
The \emph{metric dimension} of a connected graph is the least number of
vertices whose distance vectors distinguish all vertices.  We show that
deciding whether the metric dimension of a graph is at most a given
integer is NP-complete on graphs of treewidth at most two, and even on
two-connected two-terminal series--parallel graphs of pathwidth at most five.  As the
problem is polynomial on trees, the graphs of treewidth one, this
settles the treewidth threshold of its hardness, which was known to lie
between two and twenty-four.  The reduction is from satisfiability of
formulas in which every literal occurs at most twice.  To such a formula
with $n$ clauses and $r$ variables it associates a two-terminal
series--parallel graph $G_\varphi$ with $O((n+r)^2)$ vertices and pathwidth at
most five, whose metric dimension
is $2(n+r)+7$ if the formula is satisfiable and $2(n+r)+8$ otherwise.

\end{abstract}


\section{Introduction}\label{sec:N00}

\subsection{The problem and the known results}

For a connected graph $G$, a vertex $w$ \emph{separates} two vertices
$x,y$ when $d(w,x)\ne d(w,y)$, and a set $W\subseteq V(G)$
\emph{resolves} $G$ when every two distinct vertices are separated by a
vertex of $W$; the \emph{metric dimension} $\md(G)$ is the least size of
a resolving set (Definition~\ref{def:N01-resolve}).  The decision problem
\textsc{Metric Dimension} asks, for a connected graph $G$ and an integer
$k\ge0$, whether $\md(G)\le k$.  Resolving sets were introduced by Slater
\cite{Slater75} and by Harary and Melter \cite{Harary76}.  The problem is
NP-complete in general; Garey and Johnson list it as problem GT61 of
their book \cite{Garey79}, and Khuller, Raghavachari and Rosenfeld
\cite{Khuller96} give a reduction from 3-SAT.  On trees it is polynomial:
the metric dimension of a tree that is not a path is the number of its
leaves minus the number of its exterior major vertices
\cite{Slater75,Harary76,Chartrand00}, and Khuller et al.\
\cite{Khuller96} compute a minimum resolving set of a tree in linear
time.

Between these two poles, the problem has been placed on many classes.
It stays NP-complete on planar graphs of maximum degree six
\cite{Diaz17}, on split graphs, bipartite graphs, co-bipartite graphs and
line graphs of bipartite graphs \cite{Epstein15}, on interval graphs of
diameter two and on permutation graphs of diameter two \cite{Foucaud17},
and on a subclass of unit disk graphs \cite{Hoffmann12}.  It is polynomial on outerplanar
graphs \cite{Diaz17}, on cographs and, for every fixed $k$, on trees
with $k$ additional edges \cite{Epstein15}, on chain graphs
\cite{Fernau15}, and, in linear time, on cactus block graphs
\cite{Hoffmann16}.  In the parameterized setting the problem is
W[2]-hard with respect to the solution size $k$, already on graphs of
maximum degree three, and it cannot be approximated within a factor
$o(\log n)$ unless $\mathrm{P}=\mathrm{NP}$ \cite{Hartung13}.  It is
fixed-parameter tractable with respect to the vertex cover number
\cite{Hartung13}, the max leaf number \cite{Eppstein15}, the tree-length
plus the maximum degree, and the modular width \cite{Belmonte17}, and the
distance to cluster or to co-cluster \cite{Galby23}.  For the vertex
cover number $\mathrm{vc}$, Foucaud et al.\ \cite{Foucaud27} give an
algorithm running in time $2^{O(\mathrm{vc}^2)}\cdot n^{O(1)}$ and show
that, unless the Exponential Time Hypothesis fails, no algorithm runs in
time $2^{o(\mathrm{vc}^2)}\cdot n^{O(1)}$, even on graphs of bounded
diameter.

The parameter treewidth resisted longest.  Díaz et al.\ \cite{Diaz17},
Belmonte et al.\ \cite{Belmonte17} and Eppstein \cite{Eppstein15} asked
for the complexity of the problem on graphs of bounded treewidth, and
Hoffmann et al.\ \cite{Hoffmann16} observed that the decision problem,
for each fixed $k\ge2$, cannot be expressed in monadic second-order
logic, so that Courcelle's theorem does not apply.  Bonnet and Purohit \cite{Bonnet21} proved that
the problem is W[1]-hard parameterized by treewidth, even by pathwidth;
Galby et al.\ \cite{Galby23} strengthened this to the feedback vertex set
number plus the pathwidth.  Li and Pilipczuk \cite{Li22} then proved that
the problem is NP-hard on graphs of treewidth, and even pathwidth, $24$,
and asked for the
exact treewidth at which it becomes hard, the case of treewidth two
being open.  Bonnet and Purohit \cite{Bonnet21} and Bousquet, Deschamps
and Parreau \cite{Bousquet23} ask the same question.  On the positive
side, Bousquet, Deschamps and Parreau prove that the problem is
fixed-parameter tractable with respect to treewidth on chordal graphs;
since $k$-trees are chordal, the problem is polynomial on $k$-trees for
every fixed $k$, and in particular on 2-trees, the maximal graphs of
treewidth two.  Foucaud et al.\ \cite{Foucaud24} give an algorithm
running in time $2^{\mathrm{diam}^{O(\mathrm{tw})}}\cdot n^{O(1)}$, so the
problem is polynomial when both the treewidth and the diameter are
bounded, and they show that this running time is essentially optimal
under the Exponential Time Hypothesis.

\subsection{The main theorem}

Our result answers the question of the complexity on graphs of
treewidth two.

\begin{theorem}[main theorem]\label{thm:N00-main}
The following problem is NP-complete: given a two-connected
two-terminal series--parallel graph $(G;s,t)$ of pathwidth at most five
and an integer $k\ge0$, decide whether $\md(G)\le k$.  Consequently the
problem is NP-complete on connected graphs of treewidth at most two and
pathwidth at most five.
\end{theorem}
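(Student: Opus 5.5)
Membership in NP is immediate: a certificate is a set $W$ of at most $k$ vertices, and one checks in polynomial time, by breadth-first search from each vertex of $W$, that all distance vectors are pairwise distinct. The substance is NP-hardness. I plan a polynomial reduction from satisfiability of CNF formulas in which every literal occurs at most twice; this restriction remains NP-complete, as can be seen by the standard variable-splitting trick with implication cycles. Given such a formula $\varphi$ with $n$ clauses and $r$ variables, I will build a two-connected two-terminal series--parallel graph $(G_\varphi;s,t)$ with $O((n+r)^2)$ vertices and pathwidth at most five. The target statement is
\[
\md(G_\varphi)=2(n+r)+7 \text{ if } \varphi \text{ is satisfiable, and } \md(G_\varphi)=2(n+r)+8 \text{ otherwise,}
\]
with threshold $k=2(n+r)+7$. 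The final sentence of the theorem then follows because two-terminal series--parallel graphs have treewidth at most two.

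\textbf{Construction.} I would place gadgets as parallel components between $s$ and $t$, so that series--parallel structure and two-connectivity hold by design.
\begin{itemize}
\item A constant-size \emph{frame} gadget forces $7$ landmarks and fixes reference distances.
\item Each variable $x$ gets a gadget, essentially a long cycle through $s$ and $t$ with pendant-like parallel ``hairs'' (short parallel paths). Its hairs force exactly two landmarks, and the two cheapest placements correspond to $x$ true and $x$ false.
\item Each clause gets a gadget forcing two landmarks. It contains one pair of \emph{twin-like} vertices $a_C,b_C$ at equal distance from everything inside the gadget.
\end{itemize}
The pair $a_C,b_C$ can only be separated cheaply by a landmark in a literal gadget whose distance to $a_C$ and $b_C$ differs. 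The distance offsets come from path lengths chosen as distinct multiples of a large scale. Quadratic size arises because each clause gadget needs lengths encoding its position relative to all variable gadgets.

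\textbf{Key lemmas, in order.}
\begin{enumerate}
\item A local lower bound: each gadget contains a subset of vertices, such as the legs of a parallel bundle, that no vertex outside the gadget separates. Hence every resolving set spends at least the stated number of landmarks there, giving $\md\ge 2(n+r)+7$.
\item Completeness: from a satisfying assignment, place the variable landmarks according to the truth values. Verify that every pair of vertices is separated, by a case analysis on which gadgets contain the two vertices, using distances through $s$ or $t$.
\item Soundness: a resolving set of size exactly $2(n+r)+7$ must be tight in every gadget. A tight variable gadget then encodes a well-defined truth value. Since each literal occurs at most twice, each literal gadget can be attached to at most two clause gadgets, so the distance offsets can be tuned independently. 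The twins $a_C,b_C$ are then separated only if some literal of $C$ is true.
\item Adding one extra landmark to any clause gadget resolves it regardless of the assignment, which gives the upper bound $2(n+r)+8$.
\end{enumerate}

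\textbf{Pathwidth.} I bound pathwidth at most five by giving an explicit path decomposition: process gadgets one after another, keeping $s$ and $t$ in every bag, and sweep each cycle with a window of bounded size.

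\textbf{Main obstacle.} I expect the hardest part to be soundness. In a treewidth-two graph, distances through the two terminals make far-away landmarks separate many pairs, so tightness and the no-leak property must be proved against all global placements. I would first try making the gadgets' internal lengths much larger than the frame offsets, so that every distance from a remote landmark equals a term through $s$ or $t$ plus a local term. The separation question then reduces to comparing $\min$-expressions of two affine forms, which a careful choice of lengths can control.
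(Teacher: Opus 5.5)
Membership in NP and the passage from series--parallel to treewidth two are fine. The gap is in the hardness construction, at the mechanism on which both your completeness and your soundness rest: a pair $a_C,b_C$ in the clause gadget that a landmark in a literal gadget separates exactly when that literal is true. This mechanism cannot be realized in your architecture.

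Your clause gadget is a component attached at $\{s,t\}$. So for every landmark $w$ outside it, first exit (Lemma~\ref{lem:N01-exit}) gives, with $\kappa_w:=d(w,t)-d(w,s)$,
\[
 d(w,a_C)-d(w,b_C)=\min\bigl(d(s,a_C),\,\kappa_w+d(t,a_C)\bigr)-\min\bigl(d(s,b_C),\,\kappa_w+d(t,b_C)\bigr).
\]
This is a monotone function of $\kappa_w$. It is identically zero when $a_C,b_C$ have equal distances to $s$ and to $t$, as ``equal distance from everything inside the gadget'' suggests, and then no outside landmark separates them at all. In every case the skews of the outside landmarks that fail to separate the pair form an interval $I_C$.

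Now take three unit clauses $(x),(y),(z)$; any three clauses on disjoint variables behave the same way.
\begin{itemize}
\item Under the all-true assignment some landmark separates the pair of $(x)$. It must belong to the ``$x$ true'' placement, since every other landmark is also present under the assignment with only $x$ false, where your soundness claim says the pair is unseparated. Call its skew $X$, and define $Y,Z$ likewise.
\item The assignment with only $x$ false forces $Y,Z\in I_{(x)}$, while $X\notin I_{(x)}$. So $X$ lies outside the segment spanned by $Y$ and $Z$.
\item Symmetrically, $Y$ lies outside the segment spanned by $X,Z$, and $Z$ outside the segment spanned by $X,Y$. This is impossible for three points on a line.
\end{itemize}
So no choice of lengths, at whatever scale, realizes your Lemma~3. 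The ``main obstacle'' you identify is not a matter of tuning.

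The paper never asks a remote landmark to detect a disjunction. Every level (clause, variable or idle) has three parallel copies between its own terminals $s_k,t_k$. The lower bound and the forcing lemma (Lemmas~\ref{lem:N05-lower} and~\ref{lem:N05-forcing}) make every resolving set of size $2m+5$ a pattern up to twin swaps. Hence the free copy at a clause level \emph{is} the choice of a literal, just as the free copy at a variable level is the choice of a value.

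What remains is a family of binary conflicts. A pattern leaves unseparated exactly the pairs of diagonal members of free copies at two levels with equal distance from $T_1$ (Theorem~\ref{thm:N06-validity}). This is a conjunction over all landmarks, which the $\min$-formula does support. The cells place such a coincidence exactly where the chosen value falsifies the chosen literal (Lemma~\ref{lem:N03-options}(iv)), and two occurrences per literal keep each copy to at most two crossings.

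Separately, your item~4 is wrong as stated, although that upper bound is not needed for NP-completeness. One extra landmark in one clause gadget repairs only that clause, while an unsatisfiable formula such as $(x)\wedge(\neg x)\wedge(y)\wedge(\neg y)$ falsifies two clauses under every assignment. The paper's extra landmark is the single vertex $T_2$, which separates every leftover pair at once because $d(T_2,x)=7+\delta_1(x)+2k$ depends on the level (Lemma~\ref{lem:N06-designed}(iv)).
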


\begin{proof}
This is Theorem~\ref{thm:N08-np} with Corollary~\ref{cor:N08-tw2}.
\end{proof}

The theorem rests on the following value theorem, with the restricted
formulas of Definition~\ref{def:N03-formula} (formulas in conjunctive
normal form with at most three literals per clause and every literal in
at most two clauses) and the graph $G_\varphi$ of
Definition~\ref{def:N03-graph}.

\begin{theorem}[the value of $\md(G_\varphi)$]\label{thm:N00-value}
Let $\varphi$ be a restricted formula with $n$ clauses and $r$ variables,
and $m:=n+r+1$.  Then $(G_\varphi;T_1,T_2)$ is two-terminal
series--parallel, so $G_\varphi$ has treewidth at most two; $G_\varphi$
has pathwidth at most five; and
\[
 \md(G_\varphi)=\begin{cases}2m+5&\text{if $\varphi$ is satisfiable,}\\
 2m+6&\text{otherwise.}\end{cases}
\]
\end{theorem}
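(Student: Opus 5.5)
The plan has four parts: the structural claims, a lower bound from forced landmarks, an upper bound from a satisfying assignment, and the converse showing that a $(2m+5)$-resolving set encodes a satisfying assignment.

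\emph{Structural claims.} I would read them directly off the gadget decomposition of Definition~\ref{def:N03-graph}. The graph $G_\varphi$ is built from gadgets, each a two-terminal series--parallel graph between two attachment vertices. Each gadget is a long path or cycle with pendant or parallel paths, and the gadgets are composed in series and in parallel between $T_1$ and $T_2$. Induction on this composition shows that $(G_\varphi;T_1,T_2)$ is two-terminal series--parallel, which gives treewidth at most two. For pathwidth at most five, I would build an explicit path decomposition that sweeps the gadgets in order. Each bag contains the two terminals, the two attachment vertices of the current gadget, and at most two vertices of the gadget's own path decomposition; a path-like gadget with a parallel branch has pathwidth at most two, so bags have size at most six.

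\emph{Lower bound.} Next I would show $\md(G_\varphi)\ge 2m+5$ by locating forced landmarks. Twin-like pairs (pendant paths of equal length at a common vertex, or symmetric branches of a cycle gadget) force at least one landmark in each such structure. I would attribute two such forced landmarks to each of the $m$ clause and variable gadgets, including one auxiliary gadget, and $5$ more to the terminal and frame gadgets. The key tool is a locality lemma: a landmark outside a gadget sees the two interior branches through the same attachment vertex at equal distance, so it cannot separate them. The counts must therefore be met inside the gadgets.

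\emph{Upper bound when $\varphi$ is satisfiable.} Given a satisfying assignment, I would place the two landmarks in each variable gadget on the branch encoding the chosen truth value, and the clause landmarks in their canonical positions. I would then verify resolution in three steps: within each gadget, between gadgets via distances to the frame landmarks at $T_1,T_2$ (which give coordinates along the spine), and finally for the delicate clause vertices. Those clause vertices are separated precisely because some literal of the clause is true: the corresponding variable landmark is at a distance that differs by one between them. The literal-occurrence bound of two keeps these distance computations finite and case-free.

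\emph{Converse, and the expected main obstacle.} I must show that a resolving set $W$ of size $2m+5$ yields a satisfying assignment, and otherwise that $2m+6$ suffices. The construction with one extra landmark per clause, for instance, should make the upper bound $2m+6$ easy. With $|W|=2m+5$, the lower-bound count is tight, so each gadget holds exactly its forced number of landmarks. I would first prove a normalization lemma: any landmark in a variable gadget can be moved to one of the two canonical positions without destroying resolution, which defines a truth value. I would then argue that if some clause had no true literal, two designated clause-gadget vertices would have equal distances to every landmark. Inside the gadget this holds by the tight count, and outside by the locality lemma together with the geometry of the false variable gadgets. That contradiction gives satisfiability.

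The main difficulty I expect is exactly this exchange/normalization step: landmarks may sit in non-canonical places, so I would attack it by a careful distance table for each gadget rather than a general argument.
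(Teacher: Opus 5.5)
Your lower bound is essentially the paper's (Lemma~\ref{lem:N05-lower}). The five twin pairs $\{q,q'\}$ and $\{w_j,\tilde y^j\}$ force five landmarks. By the first-exit formula, if two copies $U_{k,c},U_{k,c'}$ of a level had no landmark, then $v^{k,c}_1$ and $v^{k,c'}_1$ would be unseparated. Beyond that, the plan does not fit $G_\varphi$ and has genuine gaps.

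The most concrete gap is the bound $\md(G_\varphi)\le2m+6$. One extra landmark per clause gives $2m+5+n$, not $2m+6$. You need a single vertex that repairs every pattern, and for that you must know exactly which pairs a pattern misses. The paper proves this in Theorem~\ref{thm:N06-validity}: $W_f$ fails to separate exactly the pairs of diagonal members of the free copies of two distinct levels $k\ne k'$, both below $m$, with the same $\delta_1$. Such a member $x$ at level $k$ has $d(T_2,x)=7+\delta_1(x)+2k$ (Lemma~\ref{lem:N06-designed}(iv)). Hence $W_f\cup\{T_2\}$ resolves $G_\varphi$ for every choice $f$.

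The same characterization drives both directions of the value, and your mechanism for them is not the one in $G_\varphi$. There are no separate clause and variable gadgets with branches. Every level, clause levels included, is three parallel copies of one template between $s_k$ and $t_k$. A pattern takes the cores of two of them and leaves the copy of the selected option free, so each clause level also chooses which literal it selects.

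Clause landmarks in fixed canonical positions therefore fail. Suppose the free copy at a clause level selects a literal that the assignment falsifies. Then its diagonal member at the cell of that occurrence, and the diagonal member at the same cell in the free copy of that variable's level (Lemmas~\ref{lem:N08-cells} and~\ref{lem:N03-options}(iv)), are at the same distance from every landmark (Lemma~\ref{lem:N06-designed}(v)). This happens even when another literal of the clause is true. What separates is not a variable landmark at distance one more or less, but the disjointness of the cell sets $Z_{k,f_k}$ of the free copies (Lemma~\ref{lem:N08-choice}).

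For the converse, you propose to move landmarks to canonical positions; this is left open and is not what happens. The landmark inside a non-free copy is forced to be its core (Lemma~\ref{lem:N05-forcing}). Any other member $w$ leaves $r^{k,c}_{\mu_k+1}$ (or $v^{k,c}_{L_k-1}$ when $w$ is on the $t$-ear) unseparated from its counterpart in the landmark-free copy of the level. With the twin swaps, every resolving set of size $2m+5$ is then a pattern up to automorphism (Corollary~\ref{cor:N05-pattern}), and Lemma~\ref{lem:N08-choice} reads a satisfying assignment off it.

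Finally, your pathwidth count carries $T_1,T_2,s_k,t_k$ in every bag and leaves two slots for the copy. The paper's bound for a copy with two nested crossings is three interior boundary vertices besides $s_k,t_k$ (Lemma~\ref{lem:N10-pwe}(ii) at depth two). It makes room by laying out the levels from $m$ down to $1$, so that between pieces the boundary is only $\{s_k,t_k\}$, and by placing $T_1,T_2$ last.
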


\begin{proof}
This is Theorem~\ref{thm:N08-value} with Theorem~\ref{thm:N07-tw}(i),(ii)
and Theorem~\ref{thm:N10-pw}.
\end{proof}

With the results recalled above, the main theorem has the following
consequences.
\begin{itemize}
\item \emph{The treewidth threshold is two.}  Graphs of treewidth at
  most one are those without a $K_3$ minor \cite[Theorem~17]{Bodlaender98},
  that is, the forests, on which the problem is polynomial
  \cite{Khuller96}; by the main theorem it is NP-complete at treewidth
  two.  Thus, unless $\mathrm{P}=\mathrm{NP}$, the gap left by Li and
  Pilipczuk \cite{Li22}, between two and twenty-four, is closed, and in
  the terms of Bonnet and Purohit
  \cite{Bonnet21}, \textsc{Metric Dimension} parameterized by treewidth
  is para-NP-hard at the first value where it can be.
\item \emph{The pathwidth threshold lies between two and five.}  Li and
  Pilipczuk \cite{Li22} proved the hardness at pathwidth twenty-four.
  The graph $G_\varphi$ has pathwidth at most five
  (Theorem~\ref{thm:N10-pw}), and the problem is polynomial at pathwidth
  one, where the graphs are forests, as the pathwidth of a graph is at
  least its treewidth (Definition~\ref{def:N10-pd}).  So, unless
  $\mathrm{P}=\mathrm{NP}$, the least pathwidth at which the problem is
  NP-complete lies between two and five.
\item \emph{Outerplanar graphs are the boundary.}  Outerplanar graphs
  have treewidth at most two \cite{Bodlaender98}, and the problem is
  polynomial on them \cite{Diaz17}.  The graphs $G_\varphi$ are not
  outerplanar: they contain a subdivision of $K_{2,3}$
  (Lemma~\ref{lem:N09-placement}(i)).  So, within treewidth two, the
  boundary between easy and hard runs between outerplanar graphs and
  graphs with a $K_{2,3}$ minor.
\item \emph{Two-connected graphs.}  Two-connected graphs of treewidth
  two are the two-connected series--parallel graphs
  \cite[Theorem~42]{Bodlaender98}.  The main theorem is stated on
  two-connected two-terminal series--parallel graphs of pathwidth at
  most five, which have treewidth at most two
  (Lemma~\ref{lem:N07-sp}(iii)); so the problem is
  NP-complete on two-connected graphs of treewidth two.
\item \emph{The diameter must grow.}  By the algorithm of Foucaud et al.\
  \cite{Foucaud24}, a hard family of bounded treewidth must have
  unbounded diameter unless $\mathrm{P}=\mathrm{NP}$; the diameter of
  $G_\varphi$ is at least $4m+15$ (Lemma~\ref{lem:N09-placement}(ii)).
\item \emph{Trees with $k$ additional edges.}  Epstein, Levin and
  Woeginger \cite{Epstein15} solve the problem in polynomial time on
  trees with $k$ additional edges for every fixed $k$, with a running
  time whose exponent grows with $k$.  The graph $G_\varphi$ is a tree
  with at most $12m+9$ additional edges
  (Lemma~\ref{lem:N09-placement}(iii)) and has $O(m^2)$ vertices
  (Lemma~\ref{lem:N03-structure}(iv)); so no algorithm for this class
  runs in time polynomial in both the number of vertices and $k$,
  unless $\mathrm{P}=\mathrm{NP}$.
\end{itemize}

\subsection{Overview of the reduction}

The construction and the proof are described in full in
Sections~\ref{sec:N02} to~\ref{sec:N08}, with Sections~\ref{sec:N10}
and~\ref{sec:N09} aside; this overview gives their shape,
and every claim in it points to the statement that proves it.

\emph{The template.}  The building block is the template
$\T(L,\mu;E)$ of Definition~\ref{def:N02-template}: a path, the
\emph{spine}, from a terminal $s$ to a terminal $t$, with one
\emph{$t$-ear} attached near $t$ and up to two \emph{crossings}, ears
(Definition~\ref{def:N01-ear}, Figure~\ref{fig:N01-ear}) attached around
the middle of the spine whose lengths are set by the list $E$
(Figure~\ref{fig:N02-shape}).  Every vertex $u$ of a template has
a profile $(h(u),\hb(u))$, its distances to $s$ and to $t$, and a skew
$\kappa(u)=\hb(u)-h(u)$.  The vertices whose skew equals the number
$\sigma=L-2\mu$ are the \emph{diagonal} vertices: one on the spine and one
on each crossing, at positions set by $E$
(Lemma~\ref{lem:N02-skew}(iii); Figure~\ref{fig:N02-template} marks them
on two examples).  The first vertex of the $t$-ear is the
\emph{core}, a vertex that is closer to every other member of the
template than any path through $s$ or $t$ would make it
(Lemma~\ref{lem:N04-isometric}(iv)).

\emph{The graph $G_\varphi$.}  A restricted formula $\varphi$ with $n$
clauses and $r$ variables has $m=n+r+1$ \emph{levels}, one per clause,
one per variable, and a last level, the idle level, each with three
\emph{options}: at the level of a
clause, an option selects one of its literals; at the level of a
variable, the option $1$ sets it true and the options $2,3$ set it false;
at the idle level the three options are alike
(Definition~\ref{def:N03-options}).  The graph $G_\varphi$
(Definition~\ref{def:N03-graph}; Figure~\ref{fig:N03-shape}(a)) has a
\emph{frame}, two paths
$T_1,s_1,\dots,s_m$ and $t_m,\dots,t_1,z_1,\dots,z_7,T_2$; at each level
$k$, three \emph{copies} $U_{k,1},U_{k,2},U_{k,3}$ of a template between
$s_k$ and $t_k$, one per option, nested inside one another along the
levels (Figure~\ref{fig:N03-shape}(b)); a \emph{pair} $q,q'$ of vertices, each
adjacent to $t_m$ and to $t_{m-1}$ (Figure~\ref{fig:N03-shape}(d)); and four \emph{arms}, paths of
length $D=4m+15$ from $T_1$ to $T_2$, each carrying one \emph{twin}
$\tilde y^j$ of one of its vertices $w_j$ (Figure~\ref{fig:N03-shape}(c)).  The copies of one
level differ only in their crossings: the crossing lengths of $U_{k,c}$
encode the \emph{cells} of the option $c$, integers attached to the
occurrences of literals in $\varphi$, so that the diagonal vertices of
$U_{k,c}$ lie at the distances $2m+4-k$ and the cells of $c$ from $T_1$
(Lemma~\ref{lem:N08-cells}).  The cells are arranged so that two levels
have a common cell exactly when one is a clause level, the other is the
level of the variable of a literal of that clause, and the options chosen
at the two levels select that literal and falsify it
(Lemma~\ref{lem:N03-options}(iv)).

\emph{Patterns.}  A \emph{choice} $f$ selects one option $f_k$ at every
level, and its \emph{pattern} $W_f$ consists of the cores of the two
copies not selected at each level (the black squares of
Figure~\ref{fig:N03-shape}(b)), together with $q$ and $w_1,\dots,w_4$: a
set of $2m+5$ vertices (Definition~\ref{def:N03-pattern}).  The lower
bound and its rigidity are proved in Section~\ref{sec:N05}: every
resolving set contains a vertex of each of the five twin pairs
(Lemma~\ref{lem:N01-twins}) and a vertex inside at least two of the
three copies of every level, so it has at least $2m+5$ vertices
(Lemma~\ref{lem:N05-lower}), and a resolving set of exactly $2m+5$
vertices is a pattern up to an automorphism of $G_\varphi$
(Corollary~\ref{cor:N05-pattern}).

\emph{What a pattern separates.}  Section~\ref{sec:N06} shows that a
pattern $W_f$ separates every pair of vertices except the pairs of
diagonal vertices of two selected copies, at two different levels, at
the same distance from $T_1$ (Theorem~\ref{thm:N06-validity}).  Here each
landmark does one job: the rows $w_j$ read the distance to $T_1$; the
cores of the non-selected copies read the skew of a vertex of the
selected copy of their level and single out the diagonal vertices; $q$
compares the skews of two levels.  Adding $T_2$ to a pattern separates
the remaining pairs, so $\md(G_\varphi)\le2m+6$ always.

\emph{The value.}  By the arrangement of the cells, $W_f$ resolves
$G_\varphi$ exactly when no chosen literal is falsified by the chosen
value of its variable (Lemma~\ref{lem:N08-choice};
Figure~\ref{fig:N08-cells}), that is, exactly when the assignment read
off $f$ satisfies $\varphi$ (Corollary~\ref{cor:N08-sat}).  With the
rigidity of the lower bound this gives the value theorem
(Theorem~\ref{thm:N08-value}), and a linear-time normalisation of 3-CNF
formulas into restricted formulas gives the NP-completeness
(Theorem~\ref{thm:N08-np}).

\emph{Structure of $G_\varphi$.}  Every template, every arm and the
graph $G_\varphi$ itself, between $T_1$ and $T_2$, are two-terminal
series--parallel graphs (Lemma~\ref{lem:N07-template},
Theorem~\ref{thm:N07-tw}(i)); so $G_\varphi$ has treewidth at most two
(Theorem~\ref{thm:N07-tw}(ii)).  Moreover $G_\varphi$ is two-connected
(Lemma~\ref{lem:N09-placement}(iv)) and has pathwidth at most five
(Theorem~\ref{thm:N10-pw}).

\subsection{Open questions}

The main theorem settles the treewidth threshold.  The following
questions remain.
\begin{itemize}
\item \emph{Pathwidth.}  Unless $\mathrm{P}=\mathrm{NP}$, the least
  pathwidth at which the problem is NP-complete lies between two and
  five; its exact value is open.
\item \emph{The weighted problem.}  Epstein, Levin and Woeginger
  \cite{Epstein15} solve the vertex-weighted problem in polynomial time
  on paths, trees, cycles, cographs, wheels and, for every fixed $k$,
  trees with $k$ additional edges, and Díaz et al.\ \cite{Diaz17} ask
  whether the weighted problem is polynomial on outerplanar graphs.
\item \emph{$k$-outerplanar graphs and grids.}  Díaz et al.\
  \cite{Diaz17} ask for the complexity on $k$-outerplanar graphs for
  fixed $k\ge2$, the case $k=1$ being the outerplanar graphs, and on
  irregular grids with or without holes.  The main theorem would answer the first question
  if $G_\varphi$ were $k$-outerplanar for a fixed $k$; we have not
  determined whether it is.
\end{itemize}

\subsection{Organisation}

The paper is written as a sequence of self-contained sections: every
object is a displayed formula, every statement carries all its
hypotheses, and every proof is a numbered argument citing definitions
and earlier statements.  Section~\ref{sec:N01} gives the definitions and
the distance tools.  Section~\ref{sec:N02} defines the template and
computes its profiles.  Section~\ref{sec:N03} defines restricted
formulas, the graph $G_\varphi$ and the patterns.
Section~\ref{sec:N04} computes the distances of $G_\varphi$ that the
proofs use.  Section~\ref{sec:N05} proves the lower bound $2m+5$ and
that every resolving set of that size is a pattern up to an
automorphism.  Section~\ref{sec:N06} determines the pairs that a pattern
leaves unseparated.  Section~\ref{sec:N07} proves that $G_\varphi$ has
treewidth at most two, Section~\ref{sec:N10} that it has pathwidth at
most five, Section~\ref{sec:N09} the four properties
of $G_\varphi$ used above.  Section~\ref{sec:N08} proves the value theorem and the main theorem.
Section~\ref{sec:ack} describes how the research and the paper were
produced with a large language model, and the formalisation of the
main results in Lean~4.

\tableofcontents


\section{Definitions and distance tools}\label{sec:N01}

\begin{definition}[graphs, distance]\label{def:N01-graph}
A \emph{graph} $G=(V,E)$ is finite and simple; $N_G(x)$ is the set of
neighbours of $x$.  A \emph{path} in $G$ is a sequence $(u_0,\dots,u_\ell)$
of distinct vertices with $u_iu_{i+1}\in E$ for $0\le i<\ell$; its
\emph{length} is $\ell$, and its \emph{graph} is the subgraph of $G$ with
the vertices $u_0,\dots,u_\ell$ and the edges $u_iu_{i+1}$
($0\le i<\ell$).  For $x,y\in V$, $d_G(x,y)$ is the least length
(number of edges) of an $x$--$y$ path in $G$, and $d_G(x,y):=\infty$ when
there is none; $G$ is \emph{connected} when $d_G$ is finite on
$V\times V$.  A \emph{tree} is a connected graph in which every two
vertices are joined by exactly one path.  For $S\subseteq V$, $G-S$ is the subgraph induced on
$V\setminus S$, and $G-v:=G-\{v\}$ for $v\in V$; $G$ is
\emph{two-connected} when it has at least three vertices and $G-v$ is
connected for every $v\in V$.  An \emph{isomorphism} from $G$ onto a graph $G'$ is a
bijection $\psi$ from $V$ onto $V(G')$ such that $xy\in E$ if and only if
$\psi(x)\psi(y)\in E(G')$; an \emph{automorphism} of $G$ is an isomorphism
from $G$ onto $G$.  When the graph is fixed by the enclosing statement we
write $d$ for $d_G$ and $N$ for $N_G$.  The triangle inequality
\[
 d(w,x)\le d(w,y)+d(y,x)\qquad(w,x,y\in V)
\]
holds, and in particular $|d(w,x)-d(w,y)|\le d(x,y)$.
\end{definition}

\begin{definition}[separation, resolving sets, metric dimension]\label{def:N01-resolve}
Let $G$ be a connected graph.  A vertex $w$ \emph{separates} two
vertices $x,y$ when $d(w,x)\ne d(w,y)$.  A set $W\subseteq V(G)$
\emph{separates} $x,y$ when some $w\in W$ does, and \emph{resolves} $G$
when it separates every two distinct vertices of $G$; its elements are
its \emph{landmarks}.  The \emph{metric dimension} of $G$ is
\[
 \md(G):=\min\{|W|:W\subseteq V(G)\text{ resolves }G\}.
\]
A landmark $w$ separates $w$ from every vertex $y\ne w$, since
$d(w,w)=0<d(w,y)$.
\end{definition}

The next lemma proves that a function $f$, written down in advance, is
the distance from a vertex $o$.  Its condition (b) gives the lower bound,
the harder half: along any path $f$ grows by at most one per edge, so no
path from $o$ to $x$ has fewer than $f(x)$ edges.  Its condition (c)
gives a path with exactly $f(x)$ edges.  Lemma~\ref{lem:N04-potentials}
uses it to prove the distances from $T_1$ and $T_2$ to every vertex of
$G_\varphi$, one edge at a time.

\begin{lemma}[distance certificates]\label{lem:N01-potential}
Let $G$ be a graph, $o\in V(G)$ and $f:V(G)\to\mathbb Z$ such that
\begin{enumerate}
\item[\rm(a)] $f(o)=0$;
\item[\rm(b)] $|f(x)-f(y)|\le1$ for every edge $xy$ of $G$;
\item[\rm(c)] every vertex $x\ne o$ has a neighbour $y$ with
  $f(y)=f(x)-1$.
\end{enumerate}
Then $G$ is connected and $f(x)=d(o,x)$ for every $x\in V(G)$.
\end{lemma}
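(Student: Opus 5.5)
The plan is to prove the two inequalities $f(x)\ge d(o,x)$ and $f(x)\le d(o,x)$ separately, and to obtain connectivity along the way.

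\emph{Step 1: $f\ge0$, and a descending path exists.}  First I show that $f(x)\ge0$ for every $x$.  Take a vertex $x$ at which $f$ is minimal.  If $x\ne o$, condition (c) gives a neighbour $y$ with $f(y)=f(x)-1<f(x)$, contradicting minimality.  Hence the minimum is attained only at $o$, and by (a) the minimum value is $f(o)=0$.

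Next, fix $x$ and build a walk $x=u_0,u_1,\dots$ by repeatedly applying (c): while $u_i\ne o$, choose a neighbour $u_{i+1}$ with $f(u_{i+1})=f(u_i)-1$.  The values of $f$ strictly decrease along the walk, so its vertices are distinct, and they stay $\ge0$.  The walk must therefore stop, and it can only stop at $o$.  Since $f$ drops by exactly one per step from $f(x)$ down to $f(o)=0$, it stops after exactly $f(x)$ steps.  This gives an $x$--$o$ path of length $f(x)$.  Consequently $G$ is connected (every vertex is joined to $o$), and $d(o,x)\le f(x)$.

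\emph{Step 2: the lower bound.}  Let $(o=u_0,\dots,u_\ell=x)$ be a shortest path, so $\ell=d(o,x)$.  By (b) and the triangle inequality on integers,
\[
 f(x)=f(x)-f(o)\le\sum_{i<\ell}|f(u_{i+1})-f(u_i)|\le\ell .
\]
Hence $f(x)\le d(o,x)$.  Combined with Step 1 this gives $f(x)=d(o,x)$.

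The only point requiring care is the termination in Step 1, which relies on finiteness of $G$ and on $f\ge0$.  Both are handled by the minimality argument above, so I expect no real obstacle.
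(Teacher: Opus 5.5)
Your proof is correct and follows essentially the same route as the paper: the minimum argument gives $f\ge0$ with equality only at $o$, repeated use of (c) gives a descending $x$--$o$ path of length $f(x)$ (the paper phrases this as an induction on $n=f(x)$, you as a greedy walk), and summing (b) along a path gives the lower bound. The only difference is that you apply (b) to a shortest path, which needs the connectivity from your Step~1, whereas the paper applies it to an arbitrary $o$--$x$ path, so its lower bound does not depend on the order of the two steps.
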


\begin{proof}
\emph{Lower bound.}  Let $P$ be a path from $o$ to a vertex $x$, and let
$k$ be its number of edges.  By (b), $f$ increases by at most one along
each edge of $P$, so $f(x)\le f(o)+k=k$ by (a).  Hence every $o$--$x$
path has at least $f(x)$ edges.

\emph{Upper bound.}  We first note that $f\ge0$, with equality only at
$o$.  Indeed, let $z$ be a vertex at which $f$ attains its minimum.  If
$z\ne o$, then by (c) $z$ has a neighbour with a smaller value, a
contradiction.  Hence $o$ is the only vertex at which $f$ is minimal, and
$f(x)>f(o)=0$ for every $x\ne o$.

We now prove, by induction on $n\ge0$, the following claim: every
vertex $x$ with $f(x)=n$ is joined to $o$ by a path
$x=y_0,y_1,\dots,y_n=o$ such that $f(y_i)=n-i$ for $0\le i\le n$.

For $n=0$, a vertex $x$ with $f(x)=0$ is $o$ by the previous paragraph,
and the path consisting of $o$ alone satisfies the claim.

Let $n\ge1$ and assume the claim for $n-1$.  Let $x$ be a vertex with
$f(x)=n$.  As $f(x)\ne0$, $x\ne o$, so by (c) $x$ has a neighbour $y$
with $f(y)=n-1$.  By the induction hypothesis applied to $y$, there is a
path from $y$ to $o$ along which $f$ decreases by one at each step;
numbering its vertices from $1$, it is $y=y_1,y_2,\dots,y_n=o$ with
$f(y_i)=n-i$ for $1\le i\le n$.  Put $y_0:=x$.  Since $y_0$ is adjacent
to $y_1=y$, the sequence $y_0,y_1,\dots,y_n$ is a walk from $x$ to $o$
with $n$ edges, and since $f(y_0)=n$, we have $f(y_i)=n-i$ for
$0\le i\le n$.  The values $n,n-1,\dots,0$ are pairwise distinct, so the
vertices $y_0,\dots,y_n$ are pairwise distinct, and the walk is a path.
Thus $x$ satisfies the claim for $n$, which completes the induction
step.  By induction, the claim holds for every $n\ge0$.

\emph{Conclusion.}  Let $x$ be any vertex.  As $f(x)\ge0$, the claim
applies with $n=f(x)$: $x$ is joined to $o$ by a path with exactly $f(x)$
edges.  Hence $G$ is connected and $d(o,x)\le f(x)$, and by the lower
bound $d(o,x)\ge f(x)$.
\end{proof}

\begin{definition}[attached subgraphs]\label{def:N01-attach}
Let $G$ be a graph, $X$ a subgraph of $G$, and $s\ne t$ two vertices of
$X$.  The \emph{interior} of $X$ is
\[
 \Int(X):=V(X)\setminus\{s,t\},
\]
and $X$ is \emph{attached at} $\{s,t\}$ when every edge of $G$ with an
end in $\Int(X)$ is an edge of $X$.
\end{definition}

\begin{lemma}[first exit]\label{lem:N01-exit}
Let $G$ be a connected graph and $X$ a subgraph attached at $\{s,t\}$
(Definition~\ref{def:N01-attach}).  For every $u\in\Int(X)$ and every
$w\in V(G)\setminus\Int(X)$,
\[
 d(u,w)=\min\bigl(d(u,s)+d(s,w),\;d(u,t)+d(t,w)\bigr).
\]
\end{lemma}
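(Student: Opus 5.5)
The plan is to prove the two inequalities separately, using the fact that $G$ is connected, so $d(u,w)$ is finite and realised by a shortest $u$--$w$ path $P=(u=u_0,u_1,\dots,u_\ell=w)$.

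\emph{Upper bound.}  The inequality $d(u,w)\le\min\bigl(d(u,s)+d(s,w),\,d(u,t)+d(t,w)\bigr)$ follows directly from the triangle inequality of Definition~\ref{def:N01-graph}, applied once with $y=s$ and once with $y=t$.  This half does not need attachment at all.

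\emph{Lower bound.}  Since $u_0=u\in\Int(X)$ and $u_\ell=w\notin\Int(X)$, there is a least index $j\ge1$ with $u_j\notin\Int(X)$.  I will show that $u_j\in\{s,t\}$.  The edge $u_{j-1}u_j$ has the end $u_{j-1}\in\Int(X)$, so by Definition~\ref{def:N01-attach} it is an edge of $X$.  Hence $u_j\in V(X)$, and together with $u_j\notin\Int(X)=V(X)\setminus\{s,t\}$ this forces $u_j\in\{s,t\}$.

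Next I split $P$ at $u_j$.  The subpath $u_0,\dots,u_j$ has length at least $d(u,u_j)$, and the subpath $u_j,\dots,u_\ell$ has length at least $d(u_j,w)$.  Therefore
\[
 d(u,w)=\ell\ge d(u,u_j)+d(u_j,w),
\]
which is one of the two terms in the minimum, so it is at least the minimum.

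I expect no real obstacle.  The only point needing care is the first-exit step: one must invoke attachment for the edge leaving $\Int(X)$, and not for arbitrary edges of $P$, which may well lie outside $X$.
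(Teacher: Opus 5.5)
Your proposal is correct and follows the paper's proof essentially step for step. You get the upper bound from the triangle inequality through $s$ and through $t$. For the lower bound you take the first vertex of a shortest $u$--$w$ path outside $\Int(X)$, use attachment on the edge entering it to place it in $\{s,t\}$, and split the path there.
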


\begin{proof}
\emph{Upper bound.}  For $z\in\{s,t\}$ the triangle inequality
(Definition~\ref{def:N01-graph}) gives $d(u,w)\le d(u,z)+d(z,w)$.  Taking
$z=s$ and $z=t$,
\[
 d(u,w)\le\min\bigl(d(u,s)+d(s,w),\ d(u,t)+d(t,w)\bigr).
\]

\emph{Lower bound.}  Let $u=x_0,x_1,\dots,x_\ell=w$ be a shortest $u$--$w$
path, so $\ell=d(u,w)$.  As $x_0\in\Int(X)$ and $x_\ell\notin\Int(X)$,
there is a least index $i\ge1$ with $x_i\notin\Int(X)$.  Then
$x_{i-1}\in\Int(X)$, so the edge $x_{i-1}x_i$ has an end in $\Int(X)$ and
is an edge of $X$ (Definition~\ref{def:N01-attach}); hence
$x_i\in V(X)\setminus\Int(X)=\{s,t\}$.  Put $z:=x_i$.  The path splits at
$z$ into $x_0,\dots,x_i$, from $u$ to $z$ with $i$ edges, and
$x_i,\dots,x_\ell$, from $z$ to $w$ with $\ell-i$ edges, so
\[
 d(u,z)\le i,\qquad d(z,w)\le\ell-i,\qquad\text{hence}\qquad
 d(u,z)+d(z,w)\le\ell=d(u,w).
\]
As $z\in\{s,t\}$,
\[
 \min\bigl(d(u,s)+d(s,w),\ d(u,t)+d(t,w)\bigr)\ \le\ d(u,z)+d(z,w)\ \le\ d(u,w).
\]
With the upper bound, this ends the proof.
\end{proof}

\begin{lemma}[edges between vertices of an attached subgraph]\label{lem:N01-induced}
Let $G$ be a graph and $X$ a subgraph attached at $\{s,t\}$
(Definition~\ref{def:N01-attach}) such that
\[
 d_{G-\Int(X)}(s,t)\ \ge\ d_X(s,t).
\]
Then every edge of $G$ with both ends in $V(X)$ is an edge of $X$.
\end{lemma}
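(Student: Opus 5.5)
The plan is to take an edge $xy$ of $G$ with $x,y\in V(X)$ and split on whether one of its ends lies in $\Int(X)$.

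\emph{Case 1: an end lies in $\Int(X)$.} Then $xy$ is an edge of $X$ directly by Definition~\ref{def:N01-attach}, since $X$ is attached at $\{s,t\}$.

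\emph{Case 2: neither end lies in $\Int(X)$.} Then $x,y\in V(X)\setminus\Int(X)=\{s,t\}$. Since $x\ne y$ (the graph is simple), $\{x,y\}=\{s,t\}$, so the edge is $st$. Neither end of $st$ is in $\Int(X)$, so $st$ is an edge of the induced subgraph $G-\Int(X)$. The one-edge path $(s,t)$ therefore gives $d_{G-\Int(X)}(s,t)\le1$. The hypothesis then yields $d_X(s,t)\le1$; in particular this distance is finite. Since $s\ne t$, we get $d_X(s,t)\ne0$, so $d_X(s,t)=1$. By Definition~\ref{def:N01-graph}, a path of length one in $X$ from $s$ to $t$ is exactly an edge $st$ of $X$. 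Hence $xy=st\in E(X)$.

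There is no real obstacle. The only point needing care is Case 2, which is the sole place the distance hypothesis is used. It rules out a ``shortcut'' edge $st$ of $G$ that is not in $X$, because such an edge would make $d_{G-\Int(X)}(s,t)=1$ while $d_X(s,t)\ge2$ or $d_X(s,t)=\infty$.
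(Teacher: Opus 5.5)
Your proof is correct and matches the paper's argument step for step. You split on whether an end of the edge lies in $\Int(X)$. Otherwise the edge must be $st$, and the distance hypothesis forces $d_X(s,t)=1$, so $st$ is an edge of $X$.
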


\begin{proof}
Let $e$ be an edge of $G$ with both ends in $V(X)$.  If an end of $e$
lies in $\Int(X)$, then $e$ is an edge of $X$ by
Definition~\ref{def:N01-attach}.  Otherwise both ends of $e$ lie in
$V(X)\setminus\Int(X)=\{s,t\}$, so $e=st$.  Then $e$ is an edge of
$G-\Int(X)$, so $d_{G-\Int(X)}(s,t)=1$, and the hypothesis gives
$d_X(s,t)\le1$.  As $s\ne t$, $d_X(s,t)=1$: $s$ and $t$ are adjacent in
$X$, that is, $e$ is an edge of $X$.
\end{proof}

\begin{lemma}[isometric attachment]\label{lem:N01-isometric}
Let $G$ be a connected graph, $X$ a connected subgraph attached at
$\{s,t\}$ (Definition~\ref{def:N01-attach}), and suppose
\[
 d_{G-\Int(X)}(s,t)\ \ge\ d_X(s,t).
\]
Then $d_G(u,u')=d_X(u,u')$ for all $u,u'\in V(X)$.
\end{lemma}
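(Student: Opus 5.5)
The inequality $d_G(u,u')\le d_X(u,u')$ is immediate, since every path of $X$ is a path of $G$. So the work is the reverse inequality. We take a shortest $u$--$u'$ path $P$ in $G$, of length $\ell=d_G(u,u')$, and turn it into a $u$--$u'$ walk in $X$ of length at most $\ell$. The plan is to cut $P$ into maximal segments that lie either inside $X$ or outside it, and to replace each outside segment by a path of $X$ that is no longer.

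\emph{Decomposition.} Look at the vertices of $P$ that are not in $V(X)$. Any maximal run of such vertices is bounded, on the path, by two vertices of $V(X)$, since both ends $u,u'$ lie in $V(X)$. Each bounding vertex is adjacent to a vertex outside $V(X)$. If it lay in $\Int(X)$, that edge would be an edge of $X$ by Definition~\ref{def:N01-attach}, which is impossible because its other end is not in $V(X)$. So each bounding vertex lies in $\{s,t\}$. As $P$ is a path, its vertices are distinct, so the two bounding vertices of one run are distinct. Hence they are $s$ and $t$ in some order, and there is at most one such run, because $s$ and $t$ each occur only once on $P$.

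\emph{Replacing the excursion.} Suppose there is such a run, with bounding vertices $s$ and $t$. The subpath of $P$ from $s$ to $t$ avoids $\Int(X)$: its inner vertices lie outside $V(X)$ and its ends are $s,t$. It is therefore a path of $G-\Int(X)$, so its length is at least $d_{G-\Int(X)}(s,t)\ge d_X(s,t)$. We replace it by a shortest $s$--$t$ path of $X$; this exists because $X$ is connected. The result is a $u$--$u'$ walk of length at most $\ell$.

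\emph{The remaining edges.} Every other edge of $P$ has both ends in $V(X)$, and Lemma~\ref{lem:N01-induced} (whose hypothesis is exactly ours) shows it is an edge of $X$. So the whole walk lies in $X$, which gives $d_X(u,u')\le\ell$.

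The main obstacle is the bookkeeping in the decomposition: showing that the excursions outside $V(X)$ are bounded by $s$ and $t$, and that there is at most one of them. Distinctness of the path vertices settles both points. The case with no excursion is covered by Lemma~\ref{lem:N01-induced} alone.
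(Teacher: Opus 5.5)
Your proof is correct and uses the same ingredients as the paper's proof. By attachment, the vertices flanking an excursion out of $V(X)$ must be $s$ and $t$; the excursion is an $s$--$t$ path of $G-\Int(X)$, so it can be swapped for a shortest $s$--$t$ path of $X$; and the remaining edges lie in $X$ by Lemma~\ref{lem:N01-induced}. The only difference is bookkeeping: the paper runs a strong induction on $d_G(u,u')$, handles the first excursion, and passes the two remaining pieces of $P$ to the induction hypothesis, whereas your observation that a path meets $s$ and $t$ only once, so there is at most one excursion, makes that induction unnecessary.
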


\begin{proof}
Put $\delta:=d_{G-\Int(X)}(s,t)$.  Since $X$ is a subgraph of $G$,
$d_G(u,u')\le d_X(u,u')$ for all $u,u'\in V(X)$; it remains to prove the
reverse inequality.

\emph{The claim.}  We prove, by induction on $\ell\ge0$, the following
claim: for all $u,u'\in V(X)$ with $d_G(u,u')=\ell$, we have
$d_X(u,u')\le\ell$.

\emph{Base case.}  If $\ell=0$, then $u=u'$ and $d_X(u,u')=0$.

\emph{Induction step.}  Let $\ell\ge1$ and assume the claim for every
value smaller than $\ell$.  Let $u,u'\in V(X)$ with $d_G(u,u')=\ell$, and
let $P$ be a shortest $u$--$u'$ path of $G$; it has $\ell$ edges.  There
are two cases.

\emph{Case 1: every vertex of $P$ is in $V(X)$.}  By
Lemma~\ref{lem:N01-induced}, every edge of $P$ is an edge of $X$, so $P$
is a path of $X$ and $d_X(u,u')\le\ell$.

\emph{Case 2: some vertex of $P$ is not in $V(X)$}
.  Going along $P$ from
$u$ to $u'$, let $y$ be the first vertex not in $V(X)$, let $z$ be the
vertex just before $y$, and let $z'$ be the first vertex of $V(X)$ after
$y$; $z$ and $z'$ exist because $u,u'\in V(X)$.  If $z$ were in
$\Int(X)$, the edge $zy$ would be an edge of $X$ by
Definition~\ref{def:N01-attach}, and $y$ would be a vertex of $X$, which
is a contradiction.  So $z\in V(X)\setminus\Int(X)=\{s,t\}$.  Likewise
$z'\in\{s,t\}$, taking the vertex just before $z'$ in place of $y$.  As $P$ is a path, $z\ne z'$, so
$\{z,z'\}=\{s,t\}$.  Split $P$ at $z$ and $z'$ into $P_1$ from $u$ to $z$,
$P_2$ from $z$ to $z'$ and $P_3$ from $z'$ to $u'$, with $\ell_1$, $\ell_2$
and $\ell_3$ edges, so that $\ell_1+\ell_2+\ell_3=\ell$.
\begin{itemize}
\item The vertices of $P_2$ other than $z,z'$ lie outside $V(X)$, so
  $P_2$ is a path of $G-\Int(X)$ joining $s$ and $t$, and
  $\ell_2\ge\delta\ge d_X(s,t)=d_X(z,z')$.
\item $P_2$ joins $z$ to $z'\ne z$, so $\ell_2\ge1$, and
  $\ell_1=\ell-\ell_2-\ell_3<\ell$, $\ell_3<\ell$.  As $P_1$ is a $u$--$z$
  path with $\ell_1$ edges, $d_G(u,z)\le\ell_1<\ell$; likewise, $P_3$
  gives $d_G(z',u')\le\ell_3<\ell$.  By the induction hypothesis, applied to the pairs $u,z$ and $z',u'$,
  $d_X(u,z)\le d_G(u,z)\le\ell_1$ and $d_X(z',u')\le d_G(z',u')\le\ell_3$.
\end{itemize}
By the triangle inequality in $X$,
\[
 d_X(u,u')\ \le\ d_X(u,z)+d_X(z,z')+d_X(z',u')\ \le\ \ell_1+\ell_2+\ell_3=\ell .
\]

In both cases $d_X(u,u')\le\ell$, so the claim holds for $\ell$, which
completes the induction step.  By induction, the claim holds for every
$\ell\ge0$.

\emph{Conclusion.}  For $u,u'\in V(X)$, the claim with
$\ell=d_G(u,u')$ gives $d_X(u,u')\le d_G(u,u')$, and with the first
inequality, $d_G(u,u')=d_X(u,u')$.
\end{proof}

\begin{definition}[profiles]\label{def:N01-profile}
Let $X$ be a graph and $s\ne t$ two of its vertices, its
\emph{terminals}.  The \emph{profile} of a vertex $u$ of $X$ is
$(h_X(u),\hb_X(u))$ and its \emph{skew} is $\kappa_X(u)$, where
\[
 h_X(u):=d_X(s,u),\qquad \hb_X(u):=d_X(t,u),\qquad
 \kappa_X(u):=\hb_X(u)-h_X(u).
\]
When the graph is fixed by the enclosing statement we write $h$, $\hb$,
$\kappa$ for $h_X$, $\hb_X$, $\kappa_X$.
\end{definition}

\begin{definition}[ears]\label{def:N01-ear}
An \emph{ear} of a graph $H$ (Figure~\ref{fig:N01-ear}) is a path
$E=(e_0,e_1,\dots,e_\ell)$ of $H$
with $\ell\ge2$ whose vertices $e_1,\dots,e_{\ell-1}$ have degree two in
$H$.  These vertices form its \emph{interior}, $e_0$ and $e_\ell$ are its
\emph{ends}, and $\ell$ is its \emph{length}.  So the only edges of $H$
at an interior vertex $e_i$ are the two edges $e_{i-1}e_i$ and
$e_ie_{i+1}$ of $E$.
\end{definition}

\begin{figure}[ht]
\centering
\begin{tikzpicture}[font=\small,scale=1.05]
\draw[fill=gray!12] (2.5,-1.0) ellipse (3.4 and 0.75);
\node at (2.5,-1.1) {the rest of $H$};
\foreach \k in {1,...,4} {
  \coordinate (e\k) at ({2.5+2.5*cos(180-36*\k)},{1.6*sin(180-36*\k)});
}
\coordinate (e0) at (0,0); \coordinate (e5) at (5,0);
\draw (e0) -- (-0.35,-0.55); \draw (e0) -- (0.4,-0.6);
\draw (e5) -- (5.35,-0.55); \draw (e5) -- (4.6,-0.6);
\draw[very thick] (-1.1,-0.8) -- (e0) -- (e1) -- (e2) -- (e3) -- (e4) -- (e5) -- (6.1,-0.8);
\node[vx,label={[xshift=-4pt]left:{$e_0=x_{a-1}$}}] at (e0) {};
\node[vx,label={[xshift=4pt]right:{$x_{b+1}=e_5$}}] at (e5) {};
\node[vx,label=above left:{$e_1=x_a$}] at (e1) {};
\node[vx,label=above:{$e_2$}] at (e2) {};
\node[vx,label=above:{$e_3$}] at (e3) {};
\node[vx,label=above right:{$x_b=e_4$}] at (e4) {};
\end{tikzpicture}
\caption{An ear $E=(e_0,\dots,e_5)$ of a graph $H$
(Definition~\ref{def:N01-ear}): the interior vertices $e_1,\dots,e_4$ have
no edge other than the two edges of $E$ at them, while the ends $e_0,e_5$
may have other edges.  A path $P$ (thick) with both ends outside the
interior of $E$ crosses $E$ from one end to the other
(Lemma~\ref{lem:N01-earcross}): here its maximal run in the interior of
$E$ (Definition~\ref{def:N01-run}) is $x_a,\dots,x_b=e_1,\dots,e_4$, with
$x_{a-1}=e_0$ and $x_{b+1}=e_5$.}
\label{fig:N01-ear}
\end{figure}

\begin{lemma}[moving along an ear]\label{lem:N01-earmove}
Let $E=(e_0,\dots,e_\ell)$ be an ear of a graph $H$
(Definition~\ref{def:N01-ear}), and let $x_p,x_{p+1},\dots,x_q$, with
$p<q$, be consecutive vertices of a path of $H$ such that $x_h$ is in
the interior of $E$ for every $p<h<q$; if $q=p+1$, assume moreover that
$x_p$ or $x_q$ is in the interior of $E$.  Then every $x_h$
($p\le h\le q$) is a vertex of $E$.  Write $c_h$ for its index, that is,
$x_h=e_{c_h}$.  Then either $c_{h+1}=c_h+1$ for all $p\le h<q$, or
$c_{h+1}=c_h-1$ for all $p\le h<q$.
\end{lemma}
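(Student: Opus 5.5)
The plan is to first show that every $x_h$ lies on $E$, and then to show that the index changes by exactly $\pm1$ at each step, with a sign that cannot flip. Both parts rest on the defining property of an ear (Definition~\ref{def:N01-ear}): an interior vertex $e_i$ has exactly two edges in $H$, namely $e_{i-1}e_i$ and $e_ie_{i+1}$. Hence any neighbour of an interior vertex is $e_{i-1}$ or $e_{i+1}$.

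\emph{Membership in $E$.}  The vertices $x_h$ with $p<h<q$ are interior by hypothesis, so they lie on $E$. For the endpoints we split into cases.
\begin{itemize}
\item If $q>p+1$, then $x_p$ is adjacent to $x_{p+1}$, which is interior. By the degree-two property, $x_p$ is a neighbour on $E$ of an interior vertex, so $x_p$ is a vertex of $E$. The same argument applies to $x_q$ via $x_{q-1}$.
\item If $q=p+1$, then by hypothesis one of $x_p,x_q$ is interior. The other is adjacent to it, so it lies on $E$ by the same argument.
\end{itemize}
Thus every $x_h$ equals some $e_{c_h}$. The index is well defined because $E$ is a path, so its vertices are distinct.

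\emph{Steps of $\pm1$.}  For each $p\le h<q$, at least one of $x_h,x_{h+1}$ is interior. This holds for $q=p+1$ by hypothesis. For $q>p+1$, every such pair contains an index strictly between $p$ and $q$, since $h+1\le q-1$ when $h=p$, and otherwise $h>p$ with $h<q$. The edge $x_hx_{h+1}$ is therefore an edge at an interior vertex, hence one of the two edges of $E$ at that vertex, and so $|c_{h+1}-c_h|=1$.

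\emph{No reversal.}  Suppose the sign changed at some $h$ with $p<h<q$, so that $c_{h-1}=c_{h+1}$. Then $x_{h-1}=x_{h+1}$, which contradicts the distinctness of the vertices of a path. Hence all the differences $c_{h+1}-c_h$ have the same sign.

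The only delicate point is the step check when an endpoint is a non-interior end $e_0$ or $e_\ell$. Such a vertex may have other neighbours, so its edges need not lie on $E$. This is handled by always arguing from the interior member of each consecutive pair, and the case analysis on $q=p+1$ guarantees that such a member exists. A second subtlety is that the edge $e_0e_\ell$, if present when $\ell=2$, would not be an edge at an interior vertex, so it is excluded by the same reasoning.
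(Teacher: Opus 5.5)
Your proof is correct and follows the paper's argument. Both use that an interior vertex $e_g$ has exactly the neighbours $e_{g-1},e_{g+1}$, both find an interior vertex in each consecutive pair (using the extra hypothesis when $q=p+1$), and both rule out a reversal by the distinctness of the vertices of a path. The paper obtains membership in $E$ and the $\pm1$ steps in one pass, and phrases the no-reversal step as $\{x_{h-1},x_{h+1}\}=\{e_{g-1},e_{g+1}\}$ rather than deriving $c_{h-1}=c_{h+1}$; and your closing remark about an edge $e_0e_\ell$ applies for every $\ell\ge2$, not only $\ell=2$, though the same hypothesis handles it.
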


\begin{proof}
By Definition~\ref{def:N01-ear}, the neighbours of an interior vertex
$e_g$ ($0<g<\ell$) are exactly $e_{g-1}$ and $e_{g+1}$.

\emph{Each step changes the index by one.}  Let $p\le h<q$.  At least
one of $x_h,x_{h+1}$ is in the interior of $E$: if $h+1<q$, this is
$x_{h+1}$; if $h>p$, this is $x_h$; and otherwise $h=p$ and $h+1=q$, and
the hypothesis for $q=p+1$ applies.  So the step from $x_h$ to $x_{h+1}$
joins an interior vertex $e_g$ to one of its neighbours $e_{g-1}$,
$e_{g+1}$.  Hence $x_h$ and $x_{h+1}$ are vertices of $E$, and their
indices differ by one: $c_{h+1}=c_h+1$ or $c_{h+1}=c_h-1$.

\emph{The direction does not change.}  Let $p<h<q$, and put $g:=c_h$.
Then $x_h=e_g$ is in the interior of $E$, and $x_{h-1}$, $x_{h+1}$ are two
of its neighbours, distinct because a path does not repeat a vertex.  So
$\{x_{h-1},x_{h+1}\}=\{e_{g-1},e_{g+1}\}$, and $(c_{h-1},c_h,c_{h+1})$ is
$(g-1,g,g+1)$ or $(g+1,g,g-1)$: the step before $x_h$ and the step after
it change the index in the same direction.  As this holds at every $x_h$
with $p<h<q$, all the steps of the stretch change the index in the same
direction.
\end{proof}

\begin{lemma}[distances from an ear]\label{lem:N01-eardist}
Let $H$ be a connected graph and $E=(e_0,\dots,e_\ell)$ an ear of $H$
(Definition~\ref{def:N01-ear}).  For $0<i<\ell$ and every vertex $y$ of
$H$ outside the interior of $E$,
\[
 d_H(e_i,y)=\min\bigl(i+d_H(e_0,y),\;(\ell-i)+d_H(e_\ell,y)\bigr).
\]
\end{lemma}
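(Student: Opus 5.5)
The plan is to prove the two inequalities separately, in the same way as in the first exit lemma (Lemma~\ref{lem:N01-exit}). The cleanest route is to reduce to that lemma directly. Take the subgraph $X:=E$, the graph of the ear, with terminals $s:=e_0$ and $t:=e_\ell$. Since $\ell\ge2$, we have $e_0\ne e_\ell$, because the path's vertices are distinct. Then $\Int(X)=\{e_1,\dots,e_{\ell-1}\}$ is exactly the interior of the ear. By Definition~\ref{def:N01-ear}, the only edges of $H$ at an interior vertex $e_g$ are $e_{g-1}e_g$ and $e_ge_{g+1}$, both edges of $X$. So $X$ is attached at $\{e_0,e_\ell\}$ in the sense of Definition~\ref{def:N01-attach}.

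Lemma~\ref{lem:N01-exit}, applied with $u=e_i$ and $w=y\notin\Int(X)$, then gives
\[
 d_H(e_i,y)=\min\bigl(d_H(e_i,e_0)+d_H(e_0,y),\;d_H(e_i,e_\ell)+d_H(e_\ell,y)\bigr).
\]
It remains to replace $d_H(e_i,e_0)$ by $i$ and $d_H(e_i,e_\ell)$ by $\ell-i$. This is the only step needing care, because $d_H(e_i,e_0)$ might be shorter than $i$ through a route leaving the ear at $e_\ell$.

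I would not claim equality there. Instead I would argue directly on the minimum. The subpaths of $E$ give $d_H(e_i,e_0)\le i$ and $d_H(e_i,e_\ell)\le \ell-i$, so the displayed minimum is at most the claimed one. For the reverse inequality, apply Lemma~\ref{lem:N01-exit} again, now with $w=e_0$:
\[
 d_H(e_i,e_0)=\min\bigl(d_H(e_i,e_0),\,d_H(e_i,e_\ell)+d_H(e_\ell,e_0)\bigr),
\]
which is circular. So I would instead use a shortest-path argument through Lemma~\ref{lem:N01-earmove}. Take a shortest $e_i$--$y$ path $x_0=e_i,\dots,x_k=y$ and let $q$ be the first index with $x_q\notin\Int(E)$. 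Such a $q$ exists since $y\notin\Int(E)$, and $q\ge1$.

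Lemma~\ref{lem:N01-earmove} applies to the stretch $x_0,\dots,x_q$, with $p=0$. If $q=1$, then $x_0$ is interior, so its hypothesis holds. The lemma shows the indices move monotonically by one per step. Hence $x_q$ is $e_0$ reached after exactly $i$ steps, or $e_\ell$ reached after exactly $\ell-i$ steps. The rest of the path has length at least $d_H(x_q,y)$, so $k\ge i+d_H(e_0,y)$ or $k\ge(\ell-i)+d_H(e_\ell,y)$. In either case $k$ is at least the claimed minimum.

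So the final proof is self-contained: the upper bound comes from the two subpaths of $E$ and the triangle inequality, and the lower bound from the monotone-exit argument. Connectedness of $H$ ensures a shortest path exists. The main obstacle is only the bookkeeping that the first exit is reached monotonically, which Lemma~\ref{lem:N01-earmove} supplies.
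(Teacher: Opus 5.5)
Your final argument is correct and is the paper's proof: the upper bound from the two subpaths of $E$, and the lower bound from a shortest $e_i$--$y$ path, its first vertex outside the interior, and Lemma~\ref{lem:N01-earmove}, which forces that exit to be $e_0$ after exactly $i$ steps or $e_\ell$ after exactly $\ell-i$ steps. The opening detour through Lemma~\ref{lem:N01-exit} can be dropped, since, as you noticed, it only moves the difficulty to $d_H(e_i,e_0)$ and $d_H(e_i,e_\ell)$.
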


\begin{proof}
\emph{Upper bound.}  The path $e_i,e_{i-1},\dots,e_0$ has $i$ edges and
the path $e_i,e_{i+1},\dots,e_\ell$ has $\ell-i$ edges, so
\[
 d_H(e_i,y)\le i+d_H(e_0,y),\qquad d_H(e_i,y)\le(\ell-i)+d_H(e_\ell,y).
\]

\emph{Lower bound.}  Let $e_i=x_0,x_1,\dots,x_k=y$ be a shortest path,
so $k=d_H(e_i,y)$, and let $j\ge1$ be the least index with $x_j$ outside
the interior of $E$; it exists because $y$ is outside.  The vertices
$x_0,\dots,x_{j-1}$ are interior; in particular $x_0=e_i$ is, which the
hypothesis of Lemma~\ref{lem:N01-earmove} asks for when $j=1$.  By that
lemma, applied to $x_0,\dots,x_j$, these vertices lie on $E$, and since
$x_0=e_i$, either $x_h=e_{i-h}$ for all $0\le h\le j$, or $x_h=e_{i+h}$
for all $0\le h\le j$.  Moreover $x_j$ lies on $E$ but not in its
interior, so $x_j=e_0$ or $x_j=e_\ell$.
\begin{itemize}
\item If $x_h=e_{i-h}$, then $x_j=e_{i-j}$ and $i-j<\ell$, so $x_j=e_0$
  and $j=i$.
\item If $x_h=e_{i+h}$, then $x_j=e_{i+j}$ and $i+j>0$, so $x_j=e_\ell$
  and $j=\ell-i$.
\end{itemize}
The rest of the path, $x_j,\dots,x_k$, joins $x_j$ to $y$ with $k-j$
edges, so $d_H(x_j,y)\le k-j$, that is,
\[
 d_H(e_i,y)=k\ \ge\ j+d_H(x_j,y).
\]
By the two cases above, $j+d_H(x_j,y)$ equals $i+d_H(e_0,y)$ if
$x_j=e_0$, and $(\ell-i)+d_H(e_\ell,y)$ if $x_j=e_\ell$; in both cases it
is at least the smaller of these two numbers.  Hence
\[
 d_H(e_i,y)\ \ge\ \min\bigl(i+d_H(e_0,y),\ (\ell-i)+d_H(e_\ell,y)\bigr).
\]
With the upper bound, this ends the proof.
\end{proof}

\begin{definition}[maximal runs]\label{def:N01-run}
Let $E$ be an ear of a graph $H$ (Definition~\ref{def:N01-ear}) and
$P=(x_0,\dots,x_k)$ a path of $H$.  A \emph{maximal run of $P$ in the
interior of $E$} is a sequence $x_a,x_{a+1},\dots,x_b$ of consecutive
vertices of $P$, with $0\le a\le b\le k$, all in the interior of $E$, such
that $x_{a-1}$ is not in the interior of $E$ if $a>0$, and $x_{b+1}$ is
not in the interior of $E$ if $b<k$ (Figure~\ref{fig:N01-ear}).
\end{definition}

\begin{lemma}[paths crossing an ear]\label{lem:N01-earcross}
Let $E=(e_0,\dots,e_\ell)$ be an ear of a graph $H$
(Definition~\ref{def:N01-ear}), and $P=(x_0,\dots,x_k)$ a path of $H$
with $x_0$ and $x_k$ outside the interior of $E$.  For every maximal run
$x_a,\dots,x_b$ of $P$ in the interior of $E$
(Definition~\ref{def:N01-run}), $0<a\le b<k$, and $x_{a-1},\dots,x_{b+1}$
is $E$ or $E$ reversed.
\end{lemma}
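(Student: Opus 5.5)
The plan is to apply Lemma~\ref{lem:N01-earmove} to the stretch $x_{a-1},x_a,\dots,x_b,x_{b+1}$ and then use that a path does not repeat vertices to pin down both ends.

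\emph{The run is strictly inside $P$.}  Since $x_0$ and $x_k$ lie outside the interior of $E$ while $x_a,\dots,x_b$ lie inside it, we have $a\neq0$ and $b\neq k$. Hence $0<a\le b<k$. By Definition~\ref{def:N01-run}, $x_{a-1}$ and $x_{b+1}$ are outside the interior of $E$.

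\emph{Applying Lemma~\ref{lem:N01-earmove}.}  Take $p:=a-1$ and $q:=b+1$, so $q-p\ge2$. Every $x_h$ with $p<h<q$ is in the interior of $E$, so the extra hypothesis for $q=p+1$ is not needed. The lemma then gives that all $x_h$ with $a-1\le h\le b+1$ are vertices of $E$, and their indices $c_h$ change by $+1$ at every step or by $-1$ at every step.

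\emph{Identifying the ends.}  Suppose first the indices increase. Then $c_{b+1}=c_{a-1}+(b-a+2)$. Both $x_{a-1}$ and $x_{b+1}$ lie on $E$ outside its interior, so each is $e_0$ or $e_\ell$. They are distinct vertices of a path, and the index increases from $x_{a-1}$ to $x_{b+1}$, so $c_{a-1}=0$ and $c_{b+1}=\ell$. Therefore $x_{a-1+h}=e_h$ for $0\le h\le\ell$, with $b-a+2=\ell$, and the stretch is exactly $E$. If the indices decrease, the symmetric argument gives $c_{a-1}=\ell$ and $c_{b+1}=0$, and the stretch is $E$ reversed.

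The only delicate point is checking that Lemma~\ref{lem:N01-earmove} applies, namely that the intermediate vertices are interior and $q>p+1$. This is immediate from $a\le b$.
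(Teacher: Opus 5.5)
Your proof is correct and follows the paper's argument: both apply Lemma~\ref{lem:N01-earmove} to $x_{a-1},\dots,x_{b+1}$ and then use that the two end vertices lie in $\{e_0,e_\ell\}$ to force $b-a+2=\ell$. The only difference is cosmetic: you split into cases by the direction in which the indices move, while the paper splits by whether $x_{a-1}$ is $e_0$ or $e_\ell$.
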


\begin{proof}
Since $x_0$ and $x_k$ are not in the interior of $E$ but $x_a$ and $x_b$
are, we have $0<a\le b<k$.  Thus $x_{a-1}$ and $x_{b+1}$ exist, and by
Definition~\ref{def:N01-run} they are not in the interior of $E$.

The vertices strictly between $x_{a-1}$ and $x_{b+1}$ are
$x_a,\dots,x_b$, which are in the interior of $E$.  So
Lemma~\ref{lem:N01-earmove} applies to $x_{a-1},\dots,x_{b+1}$, and these
vertices lie on $E$.  Write $x_{a-1}=e_g$.  Then either
$x_{a-1+h}=e_{g+h}$ for all $0\le h\le b-a+2$, or $x_{a-1+h}=e_{g-h}$ for
all $0\le h\le b-a+2$.  The vertices $x_{a-1}$ and $x_{b+1}$ lie on $E$
but not in its interior, so each of them is $e_0$ or $e_\ell$.  They are
distinct, since $P$ is a path.

Suppose first that $x_{a-1}=e_0$, so $g=0$.  The second form would give
$x_a=e_{-1}$, which is impossible.  Hence $x_{a-1+h}=e_h$ for every $h$.
Taking $h=b-a+2$ gives $x_{b+1}=e_{b-a+2}$.  Since $x_{b+1}\ne e_0$, we
have $x_{b+1}=e_\ell$, so $b-a+2=\ell$.  Thus the segment
$x_{a-1},\dots,x_{b+1}$ is $e_0,e_1,\dots,e_\ell$, which is $E$.

Suppose now that $x_{a-1}=e_\ell$, so $g=\ell$.  The first form would give
$x_a=e_{\ell+1}$, which is impossible.  Hence $x_{a-1+h}=e_{\ell-h}$ for
every $h$.  Taking $h=b-a+2$ gives $x_{b+1}=e_{\ell-(b-a+2)}$.  Since
$x_{b+1}\ne e_\ell$, we have $x_{b+1}=e_0$, so $b-a+2=\ell$.  Thus the
segment is $e_\ell,e_{\ell-1},\dots,e_0$, which is $E$ reversed.
\end{proof}

\begin{definition}[twins]\label{def:N01-twins}
Two distinct vertices $x,y$ of a graph are \emph{twins} when
$N(x)\setminus\{y\}=N(y)\setminus\{x\}$.
\end{definition}

\begin{lemma}[twins and automorphisms]\label{lem:N01-twins}
Let $G$ be a connected graph and $x,y$ twins
(Definition~\ref{def:N01-twins}).
\begin{enumerate}
\item[\rm(i)] The map $\psi_{xy}$ exchanging $x$ and $y$ and fixing every
  other vertex is an automorphism of $G$.
\item[\rm(ii)] $d(w,x)=d(w,y)$ for every $w\in V(G)\setminus\{x,y\}$.
\item[\rm(iii)] Every resolving set of $G$ contains $x$ or $y$.
\item[\rm(iv)] If $W$ resolves $G$ and $\psi$ is an automorphism of $G$,
  then $\psi(W)$ resolves $G$ and $|\psi(W)|=|W|$.
\end{enumerate}
\end{lemma}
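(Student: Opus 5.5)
The plan is to prove the four parts in order; each rests on the previous ones and only uses the definitions.

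For (i), I will show that $\psi:=\psi_{xy}$ preserves edges and non-edges, and it is clearly a bijection and an involution. Take two distinct vertices $a,b$ and split into three cases.
\begin{itemize}
\item If neither $a$ nor $b$ lies in $\{x,y\}$, then $\psi$ fixes both, and there is nothing to check.
\item If $\{a,b\}=\{x,y\}$, then $\psi$ maps the pair $\{x,y\}$ to itself, so the status of the edge $xy$ is unchanged.
\item If exactly one of them is in $\{x,y\}$, say $a=x$ and $b\notin\{x,y\}$, then $\psi(a)=y$ and $\psi(b)=b$. Since $N(x)\setminus\{y\}=N(y)\setminus\{x\}$ and $b\ne x,y$, we get $xb\in E$ if and only if $yb\in E$. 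The case $a=y$ is symmetric.
\end{itemize}
Hence $\psi$ is an automorphism.

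For (ii), I will use the general fact that an automorphism $\psi$ preserves distances: it maps paths to paths of the same length, and so does $\psi^{-1}$, hence $d(\psi(u),\psi(v))=d(u,v)$. Let $w\notin\{x,y\}$. Since $\psi$ fixes $w$,
\[
 d(w,x)=d(\psi(w),\psi(x))=d(w,y).
\]

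For (iii), let $W$ resolve $G$. Since $x\ne y$, some $w\in W$ separates $x$ and $y$. By (ii), such a $w$ cannot lie outside $\{x,y\}$. So $w\in\{x,y\}$, and $W$ contains $x$ or $y$.

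For (iv), $|\psi(W)|=|W|$ because $\psi$ is injective. Now let $a\ne b$ be two vertices. Then $\psi^{-1}(a)\ne\psi^{-1}(b)$, so some $w\in W$ satisfies $d(w,\psi^{-1}(a))\ne d(w,\psi^{-1}(b))$. Applying $\psi$ and using distance preservation gives
\[
 d(\psi(w),a)\ne d(\psi(w),b),
\]
with $\psi(w)\in\psi(W)$. Hence $\psi(W)$ resolves $G$.

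The only step needing care is the distance-preservation fact. A path $u_0,\dots,u_\ell$ is mapped to a sequence of distinct, consecutively adjacent vertices, i.e.\ a path of the same length; so $d(\psi u,\psi v)\le d(u,v)$. Applying the same bound to $\psi^{-1}$, which is also an automorphism, gives the reverse inequality. Connectivity ensures all these distances are finite.
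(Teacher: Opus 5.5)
Your proof is correct and follows the paper's argument essentially step by step: distance preservation by automorphisms, the case analysis for (i), (ii) obtained by letting $\psi_{xy}$ fix $w$, and the preimage argument for (iv). The only cosmetic differences are these. In (i) you check edges and non-edges together as an equivalence, where the paper shows that edges go to edges and uses the involution for the converse. In (iii) you argue directly, where the paper argues by contrapositive.
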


\begin{proof}
\emph{Distances.}  An automorphism $\varphi$ of $G$ maps every path to a
path with the same number of edges, and so does $\varphi^{-1}$.  Hence
$d(\varphi(w),\varphi(z))=d(w,z)$ for all vertices $w,z$.

(i) $\psi_{xy}$ is a bijection of $V(G)$, and applying it twice gives
the identity.  We first check that it maps every edge $uv$ to an edge.
If neither $u$ nor $v$ is in $\{x,y\}$, the edge is fixed.  If $u=x$ and
$v\ne y$, then $v\in N(x)\setminus\{y\}=N(y)\setminus\{x\}$, so the image
$yv$ is an edge.  If $u=y$ and $v\ne x$, the same argument with $x$ and
$y$ exchanged shows that the image $xv$ is an edge.  If $\{u,v\}=\{x,y\}$,
the edge is mapped to itself.  Conversely, if the image of $uv$ is an
edge, then applying $\psi_{xy}$ to it gives back $uv$, which is therefore
an edge.  So $\psi_{xy}$ is an automorphism.

(ii) Let $w\notin\{x,y\}$.  Then $\psi_{xy}(w)=w$ and $\psi_{xy}(x)=y$.
By the first paragraph,
$d(w,x)=d(\psi_{xy}(w),\psi_{xy}(x))=d(w,y)$.

(iii) Let $W$ be a set of vertices with $W\cap\{x,y\}=\emptyset$.  By
(ii), every $w\in W$ has $d(w,x)=d(w,y)$, so $W$ does not separate $x$
and $y$, and $W$ does not resolve $G$ (Definition~\ref{def:N01-resolve}).

(iv) Write $\psi(W):=\{\psi(w):w\in W\}$.  As $\psi$ is injective,
$|\psi(W)|=|W|$.  Let $u\ne v$.  Then $\psi^{-1}(u)\ne\psi^{-1}(v)$, so
some $w\in W$ separates them.  By the first paragraph,
$d(\psi(w),u)=d(w,\psi^{-1}(u))\ne d(w,\psi^{-1}(v))=d(\psi(w),v)$.  So
$\psi(w)\in\psi(W)$ separates $u$ and $v$, and $\psi(W)$ resolves $G$.
\end{proof}


\section{The template}\label{sec:N02}

\begin{figure}[ht]
\centering
\begin{subfigure}{\textwidth}
\centering
\begin{tikzpicture}[font=\scriptsize,x=1cm,y=1cm]
\foreach \p/\i in {0/0,1/1,3/a1,4/a0,5/mu,6/b0,7/b1,10/Lm,11/Lm1,13/L1,14/L}
  \coordinate (v\i) at (\p,0);
\draw (v0) -- (v1); \draw[dotted,thick] (v1) -- (va1);
\draw (va1) -- (vb1); \draw[dotted,thick] (vb1) -- (vLm);
\draw (vLm) -- (vLm1); \draw[dotted,thick] (vLm1) -- (vL1); \draw (vL1) -- (vL);
\foreach \k/\a in {1/157.5,2/135,3/112.5,4/67.5,5/45,6/22.5}
  \coordinate (c\k) at ($(5,0)+(\a:2 and 2.6)$);
\draw (va1) -- (c1) -- (c2) -- (c3); \draw (c4) -- (c5) -- (c6) -- (vb1);
\draw[dotted,thick] (c3) arc[start angle=112.5,end angle=67.5,x radius=2,y radius=2.6];
\foreach \k/\a in {1/150,2/120,3/60,4/30}
  \coordinate (d\k) at ($(5,0)+(\a:1 and 1.3)$);
\draw (va0) -- (d1) -- (d2); \draw (d3) -- (d4) -- (vb0);
\draw[dotted,thick] (d2) arc[start angle=120,end angle=60,x radius=1,y radius=1.3];
\foreach \k/\a in {1/205,2/230,3/310,4/335}
  \coordinate (r\k) at ($(12,0)+(\a:2 and 1.5)$);
\draw (vLm) -- (r1) -- (r2); \draw (r3) -- (r4) -- (vL);
\draw[dotted,thick] (r2) arc[start angle=230,end angle=310,x radius=2,y radius=1.5];
\foreach \i in {0,1,a1,a0,mu,b0,b1,Lm,Lm1,L1,L} \node[vx] at (v\i) {};
\foreach \k in {1,...,6} \node[vx] at (c\k) {};
\foreach \k in {1,...,4} \node[vx] at (d\k) {};
\foreach \k in {2,3,4} \node[vx] at (r\k) {};
\node[lm] at (r1) {};
\node[below=2pt] at (v0) {$s=v_0$};
\node[below=2pt,align=center] at (va1) {$v_{\mu-2}$\\$=v_{a_1}$};
\node[below=2pt,align=center] at (va0) {$v_{\mu-1}$\\$=v_{a_0}$};
\node[below=2pt] at (vmu) {$v_\mu$};
\node[below=2pt,align=center] at (vb0) {$v_{\mu+1}$\\$=v_{b_0}$};
\node[below=2pt,align=center] at (vb1) {$v_{\mu+2}$\\$=v_{b_1}$};
\node[above=2pt] at (vLm) {$v_{L-\mu}$};
\node[above=2pt] at (vL) {$t=v_L$};
\node at (5,1.85) {crossing $0$};
\node[above=1pt] at (5,2.6) {crossing $1$};
\node[left=2pt] at (r1) {core $r_1$};
\node[below=4pt] at (12,-1.5) {$t$-ear};
\draw[|<->|] (0,-1.35) -- (5,-1.35) node[midway,fill=white,inner sep=1pt] {$\mu$ edges};
\draw[|<->|] (5,-1.35) -- (10,-1.35) node[midway,fill=white,inner sep=1pt] {$\sigma=L-2\mu$ edges};
\draw[|<->|] (10,-2.35) -- (14,-2.35) node[midway,fill=white,inner sep=1pt] {$\mu$ edges};
\end{tikzpicture}
\caption{The whole template.}
\end{subfigure}

\bigskip
\begin{subfigure}{0.5\textwidth}
\centering
\begin{tikzpicture}[font=\scriptsize,x=1cm,y=1cm]
\coordinate (a) at (0,0); \coordinate (m) at (2.5,0); \coordinate (b) at (5,0);
\draw[dotted,thick] (-0.7,0) -- (a); \draw[dotted,thick] (b) -- (5.7,0);
\draw[dotted,thick] (a) -- (b);
\foreach \k/\an in {1/160,2/140,3/40,4/20}
  \coordinate (x\k) at ($(2.5,0)+(\an:2.5 and 2)$);
\draw (a) -- (x1) -- (x2); \draw (x3) -- (x4) -- (b);
\draw[dotted,thick] (x2) arc[start angle=140,end angle=40,x radius=2.5,y radius=2];
\foreach \p in {a,m,b} \node[vx] at (\p) {};
\foreach \k in {1,...,4} \node[vx] at (x\k) {};
\node[below=2pt] at (a) {$v_{a_j}=x^j_0$};
\node[below=2pt] at (m) {$v_\mu$};
\node[below=2pt] at (b) {$v_{b_j}=x^j_{\ell_j}$};
\node[left=1pt] at (x1) {$x^j_1$};
\node[left=1pt] at (x2) {$x^j_2$};
\node[right=1pt] at (x4) {$x^j_{(\ell_j)-1}$};
\node[above=1pt] at (2.5,2) {$\ell_j=2j+2+\varepsilon_j$ edges};
\draw[|<->|] (0,-0.95) -- (2.5,-0.95) node[midway,fill=white,inner sep=1pt] {$j+1$ edges};
\draw[|<->|] (2.5,-0.95) -- (5,-0.95) node[midway,fill=white,inner sep=1pt] {$j+1$ edges};
\end{tikzpicture}
\caption{Crossing $j$.}
\end{subfigure}%
\begin{subfigure}{0.5\textwidth}
\centering
\begin{tikzpicture}[font=\scriptsize,x=1cm,y=1cm]
\coordinate (p) at (0,0); \coordinate (q) at (4,0);
\draw[dotted,thick] (-0.7,0) -- (p); \draw[dotted,thick] (p) -- (q);
\foreach \k/\an in {1/200,2/225,3/315,4/340}
  \coordinate (r\k) at ($(2,0)+(\an:2 and 1.6)$);
\draw (p) -- (r1) -- (r2); \draw (r3) -- (r4) -- (q);
\draw[dotted,thick] (r2) arc[start angle=225,end angle=315,x radius=2,y radius=1.6];
\foreach \p in {p,q} \node[vx] at (\p) {};
\foreach \k in {2,3,4} \node[vx] at (r\k) {};
\node[lm] at (r1) {};
\node[above=2pt] at (p) {$v_{L-\mu}=r_0$};
\node[above=2pt] at (q) {$t=v_L=r_{\mu+2}$};
\node[left=1pt] at (r1) {core $r_1$};
\node[left=1pt] at (r2) {$r_2$};
\node[right=1pt] at (r4) {$r_{\mu+1}$};
\node[below=2pt] at (2,-1.6) {$\mu+2$ edges};
\draw[|<->|] (0,0.75) -- (4,0.75) node[midway,fill=white,inner sep=1pt] {$\mu$ edges};
\end{tikzpicture}
\caption{The $t$-ear.}
\end{subfigure}
\caption{The general shape of a template $\T(L,\mu;E)$
(Definition~\ref{def:N02-template}), drawn with two crossings ($e=2$);
when $e=1$ only crossing $0$ is present, and when $e=0$ there is none.
Dotted lines are paths whose inner vertices are not drawn.  (a)~The
spine $v_0,\dots,v_L$ runs from $s$ to $t$.  Crossing $j$ joins
$v_{a_j}=v_{\mu-1-j}$ to $v_{b_j}=v_{\mu+1+j}$, around $v_\mu$; the
$t$-ear joins $v_{L-\mu}$ to $v_L$.  Each of the three paths meets the
rest of the template only at its two ends, which are spine vertices, and
by Lemma~\ref{lem:N02-ends} the crossings end before $v_{L-\mu}$.
(b)~Crossing $j$ has $\ell_j=2j+2+\varepsilon_j$ edges, against $2j+2$
spine edges between its ends.  (c)~The $t$-ear has $\mu+2$ edges, against
$\mu$ spine edges between its ends; its first member $r_1$ is the core.}
\label{fig:N02-shape}
\end{figure}

\begin{figure}[ht]
\centering
\begin{subfigure}{\textwidth}
\centering
\begin{tikzpicture}[font=\scriptsize,x=0.82cm,y=0.82cm]
\foreach \i in {0,...,12} \coordinate (v\i) at (\i,0);
\draw (v0) -- (v12);
\foreach \k in {1,...,5} \coordinate (x\k) at ({3+cos(180-30*\k)},{1.8*sin(180-30*\k)});
\draw (v2) -- (x1) -- (x2) -- (x3) -- (x4) -- (x5) -- (v4);
\foreach \k in {1,...,4} \coordinate (r\k) at ({10.5+1.5*cos(180+36*\k)},{1.3*sin(180+36*\k)});
\draw (v9) -- (r1) -- (r2) -- (r3) -- (r4) -- (v12);
\foreach \i in {0,...,12} \node[vx] at (v\i) {};
\foreach \k in {1,2,4,5} \node[vx] at (x\k) {};
\foreach \k in {2,3,4} \node[vx] at (r\k) {};
\node[dg] at (v3) {}; \node[dg] at (x3) {}; \node[lm] at (r1) {};
\node[below=2pt] at (v0) {$s=v_0$};
\node[below=2pt] at (v2) {$v_2$};
\node[below=2pt] at (v3) {$v_3$};
\node[below=2pt] at (v4) {$v_4$};
\node[above=2pt] at (v9) {$v_9$};
\node[above=2pt] at (v12) {$t=v_{12}$};
\node[left=1pt] at (x1) {$x^0_1$};
\node[right=1pt] at (x5) {$x^0_5$};
\node[above=2pt] at (x3) {$x^0_3$, profile $(5,11)$};
\node[below=8pt] at (v3) {profile $(3,9)$};
\node[below left=1pt] at (r1) {$r_1$, profile $(10,4)$};
\node[below right=1pt] at (r4) {$r_4$};
\end{tikzpicture}
\caption{$\T(12,3;(4))$: $\sigma=6$; one crossing, from $v_2$ to $v_4$, of
length $6$; the $t$-ear from $v_9$ to $v_{12}$, of length $5$.}
\end{subfigure}

\medskip
\begin{subfigure}{\textwidth}
\centering
\begin{tikzpicture}[font=\scriptsize,x=0.7cm,y=0.7cm]
\foreach \i in {0,...,14} \coordinate (v\i) at (\i,0);
\draw (v0) -- (v14);
\foreach \k in {1,...,3} \coordinate (a\k) at ({4+cos(180-45*\k)},{1.0*sin(180-45*\k)});
\draw (v3) -- (a1) -- (a2) -- (a3) -- (v5);
\foreach \k in {1,...,7} \coordinate (b\k) at ({4+2*cos(180-22.5*\k)},{2.3*sin(180-22.5*\k)});
\draw (v2) -- (b1) -- (b2) -- (b3) -- (b4) -- (b5) -- (b6) -- (b7) -- (v6);
\foreach \k in {1,...,5} \coordinate (r\k) at ({12+2*cos(180+30*\k)},{1.3*sin(180+30*\k)});
\draw (v10) -- (r1) -- (r2) -- (r3) -- (r4) -- (r5) -- (v14);
\foreach \i in {0,...,14} \node[vx] at (v\i) {};
\foreach \k in {1,3} \node[vx] at (a\k) {};
\foreach \k in {1,2,3,5,6,7} \node[vx] at (b\k) {};
\foreach \k in {2,...,5} \node[vx] at (r\k) {};
\node[dg] at (v4) {}; \node[dg] at (a2) {}; \node[dg] at (b4) {}; \node[lm] at (r1) {};
\node[below=2pt] at (v0) {$s=v_0$};
\node[below=2pt] at (v2) {$v_2$};
\node[below=2pt] at (v3) {$v_3$};
\node[below=2pt] at (v4) {$v_4$};
\node[below=2pt] at (v5) {$v_5$};
\node[below=2pt] at (v6) {$v_6$};
\node[above=2pt] at (v10) {$v_{10}$};
\node[above=2pt] at (v14) {$t=v_{14}$};
\node[below=8pt] at (v4) {profile $(4,10)$};
\node[above=1pt,fill=white,inner sep=1pt] at (a2) {$x^0_2$, $(5,11)$};
\node[above=2pt] at (b4) {$x^1_4$, profile $(6,12)$};
\node[below left=1pt] at (r1) {$r_1$, profile $(11,5)$};
\end{tikzpicture}
\caption{$\T(14,4;(2,4))$: $\sigma=6$; crossing $0$ from $v_3$ to $v_5$, of
length $4$; crossing $1$ from $v_2$ to $v_6$, of length $8$; the $t$-ear
from $v_{10}$ to $v_{14}$, of length $6$.}
\end{subfigure}
\caption{Two templates (Definition~\ref{def:N02-template}), with $s=v_0$ on
the left and $t=v_L$ on the right.  The core $r_1$ is the black square,
with profile $(L-\mu+1,\mu+1)$ (Lemma~\ref{lem:N02-core}).  The circled
vertices are the diagonal members, those with $\kappa=\hb-h=\sigma$,
labelled with their profiles $(h,\hb)$: in (a) $v_3$ and $x^0_3$, in (b)
$v_4$, $x^0_2$ and $x^1_4$ (Lemma~\ref{lem:N02-skew}(iii)).}
\label{fig:N02-template}
\end{figure}

\begin{definition}[template]\label{def:N02-template}
Let $L,\mu$ be integers, put $\sigma:=L-2\mu$, and let
$E=(\varepsilon_j)_{0\le j<e}$ with $e\in\{0,1,2\}$ be a list of even
integers $\varepsilon_j\ge2$, with $\varepsilon_0<\varepsilon_1$ if
$e=2$.  Assume
\begin{equation}\label{eq:N02-hyp}
 \mu\ge3,\qquad \sigma\ge2,\qquad \sigma\ge4\ \text{ if }e=2 .
\end{equation}
The \emph{template} $\T=\T(L,\mu;E)$ (Figure~\ref{fig:N02-shape}; two
examples in Figure~\ref{fig:N02-template}) is the union of the following
paths.  Each path is given by the list of its vertices in order; its
edges join consecutive vertices of the list.
\begin{itemize}
\item The \emph{spine} $v_0,v_1,\dots,v_L$, with $L$ edges.  Its ends are
  the \emph{terminals} $s:=v_0$ and $t:=v_L$.
\item The \emph{$t$-ear} $r_0,r_1,\dots,r_{\mu+2}$, with $\mu+2$ edges,
  where $r_0:=v_{L-\mu}$ and $r_{\mu+2}:=v_L$.  Between these two
  vertices the spine has $\mu$ edges.
\item For each $j<e$, the \emph{crossing} $j$:
  $x^j_0,x^j_1,\dots,x^j_{\ell_j}$, with $\ell_j$ edges, where
  \[
  a_j:=\mu-1-j,\qquad b_j:=\mu+1+j,\qquad \ell_j:=2j+2+\varepsilon_j,
  \]
  $x^j_0:=v_{a_j}$ and $x^j_{\ell_j}:=v_{b_j}$.  Between these two
  vertices the spine has $b_j-a_j=2j+2$ edges.
\end{itemize}
The vertices $v_0,\dots,v_L$, $r_1,\dots,r_{\mu+1}$ and
$x^j_1,\dots,x^j_{(\ell_j)-1}$ ($j<e$) are pairwise distinct.  So the
$t$-ear meets the rest of $\T$ only at $v_{L-\mu}$ and $v_L$, and the
crossing $j$ only at $v_{a_j}$ and $v_{b_j}$.

The \emph{members} of $\T$ are the vertices of its interior
$\Int(\T)=V(\T)\setminus\{s,t\}$ (Definition~\ref{def:N01-attach}, with the
terminals $s$ and $t$).  As the vertices above are pairwise distinct,
$\Int(\T)$ is the disjoint union
\[
 \Int(\T)=\{v_1,\dots,v_{L-1}\}\cup\{r_1,\dots,r_{\mu+1}\}\cup
 \bigcup_{j<e}\{x^j_1,\dots,x^j_{(\ell_j)-1}\}
\]
of the \emph{spine members} $v_i$, the \emph{$t$-ear members} $r_i$ and
the \emph{crossing members} $x^j_i$ of the crossing $j$.  The spine
members and the $t$-ear members are the \emph{common} members of $\T$,
and the member $r_1$, the vertex of the $t$-ear next to $v_{L-\mu}$, is
its \emph{core}.
\end{definition}

\begin{definition}[diagonal vertices]\label{def:N02-diagonal}
In a template $\T=\T(L,\mu;E)$ the terminals are $s$ and $t$, and we
write $h$, $\hb$, $\kappa$ for $h_\T$, $\hb_\T$, $\kappa_\T$
(Definition~\ref{def:N01-profile}):
\[
 h(u)=d_\T(s,u),\qquad \hb(u)=d_\T(t,u),\qquad \kappa(u)=\hb(u)-h(u).
\]
A vertex $u$ of $\T$ is \emph{diagonal} when $\kappa(u)=\sigma$, where
$\sigma=L-2\mu$.
\end{definition}

Two templates $\T(L,\mu;E)$ and $\T(L,\mu;E')$ with the same $L$ and
$\mu$ have the same spine and the same $t$-ear: both have the vertices
$v_0,\dots,v_L$ and $r_1,\dots,r_{\mu+1}$, with the same edges between
them.  They differ only in their crossings, which depend on $E$ and
$E'$; this is why the spine members and the $t$-ear members are called
common.  Common does not mean shared: when several such templates are
used as separate graphs, each has its own spine and its own $t$-ear.

The rest of this section computes the profiles and proves the facts used
later.  The profile of a common member depends on $L$ and $\mu$ only
(Lemma~\ref{lem:N02-profiles}).  In Section~\ref{sec:N03}, the three
copies of a level are templates with the same $L$ and $\mu$; each copy
has its own spine, $t$-ear and crossings, and the copies meet only at
their terminals (Definition~\ref{def:N03-graph}).  A common member of one
copy and its counterparts in the other two copies, different vertices
with the same name (Definition~\ref{def:N03-names}), have the same
profile.  The diagonal
members are $v_\mu$ and one member on each crossing
(Lemma~\ref{lem:N02-skew}; in the plane of profiles they lie on the line
$\hb-h=\sigma$).  Two distinct members
have distinct profiles (Lemma~\ref{lem:N02-distinct}).  The core reaches
every other member by a path shorter than any walk through $s$ or $t$
(Lemma~\ref{lem:N02-core}).  In $G_\varphi$ the landmarks placed inside
copies are cores (Definition~\ref{def:N03-pattern}), and a resolving set
of $2m+5$ vertices has no vertex inside a copy other than its core
(Lemma~\ref{lem:N05-forcing}).

\begin{lemma}[positions of the ends]\label{lem:N02-ends}
In $\T=\T(L,\mu;E)$, for every $j<e$,
\[
 1\le a_j<b_j\le L-\mu-1 .
\]
So the ends $v_{a_j},v_{b_j}$ of the crossing $j$ lie strictly between
$s=v_0$ and $v_{L-\mu}$, the first end of the $t$-ear.  Moreover
$L-\mu=\mu+\sigma\ge5$.
\end{lemma}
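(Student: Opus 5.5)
This is a direct computation from the definitions of $a_j$, $b_j$ and $\sigma$ in Definition~\ref{def:N02-template} together with the hypotheses \eqref{eq:N02-hyp}. I will treat the three inequalities and the final identity separately. Only $j\in\{0,1\}$ occurs, since $e\le2$.

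\emph{The identity.} I start with $L-\mu=\mu+\sigma$, which holds because $\sigma=L-2\mu$. By \eqref{eq:N02-hyp}, $\mu\ge3$ and $\sigma\ge2$, so $L-\mu\ge5$.

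\emph{The lower bound.} Next, $a_j=\mu-1-j\ge\mu-1-(e-1)=\mu-e$. If $e\le1$, then $j=0$ and $a_0=\mu-1\ge2\ge1$. If $e=2$, then $a_j\ge\mu-2\ge1$ because $\mu\ge3$.

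\emph{The middle inequality.} The inequality $a_j<b_j$ is immediate, since $b_j-a_j=2j+2\ge2$.

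\emph{The upper bound.} This is the only step that uses the case distinction in \eqref{eq:N02-hyp}, so it is the one to watch. The bound $b_j\le L-\mu-1=\mu+\sigma-1$ is equivalent to $j+2\le\sigma$. For $j=0$ this is $\sigma\ge2$, which always holds. For $j=1$, which requires $e=2$, it is $\sigma\ge3$, and this follows from the hypothesis $\sigma\ge4$ when $e=2$.

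\emph{Conclusion.} Since $1\le a_j<b_j\le L-\mu-1$, both ends $v_{a_j},v_{b_j}$ of the crossing $j$ are spine vertices strictly between $v_0=s$ and $v_{L-\mu}$.
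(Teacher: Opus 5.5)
Your proof is correct and follows the paper's argument step for step. Both bound $a_j$ from below by $\mu-2\ge1$, observe $b_j-a_j=2j+2>0$, reduce $b_j\le L-\mu-1$ to $j+2\le\sigma$ (settled by $\sigma\ge2$ for $j=0$ and $\sigma\ge4$ for $j=1$, $e=2$), and get $L-\mu=\mu+\sigma\ge5$ from \eqref{eq:N02-hyp}.
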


\begin{proof}
As $j\le1$ and $\mu\ge3$ by \eqref{eq:N02-hyp},
$a_j=\mu-1-j\ge\mu-2\ge1$.  Next, $b_j-a_j=2j+2>0$, so $a_j<b_j$.
Finally,
\[
 b_j\le L-\mu-1\iff\mu+1+j\le L-\mu-1\iff2+j\le L-2\mu=\sigma .
\]
If $j=0$, this is $\sigma\ge2$, which holds by \eqref{eq:N02-hyp}.  If
$j=1$, then $e=2$, and $\sigma\ge4>3$ by \eqref{eq:N02-hyp}.  Moreover
$L-\mu=\mu+(L-2\mu)=\mu+\sigma\ge3+2=5$ by \eqref{eq:N02-hyp}.
\end{proof}

\begin{lemma}[the ears of the template]\label{lem:N02-ears}
In $\T=\T(L,\mu;E)$ (Definition~\ref{def:N02-template}):
\begin{enumerate}
\item[\rm(i)] $R:=(r_0,\dots,r_{\mu+2})$ is an ear of length $\mu+2$ with
  ends $v_{L-\mu},v_L$, and for $j<e$, $X^j:=(x^j_0,\dots,x^j_{\ell_j})$ is
  an ear of length $\ell_j$ with ends $v_{a_j},v_{b_j}$
  (Definition~\ref{def:N01-ear});
\item[\rm(ii)] every member is a spine member or lies in the interior of
  exactly one of these ears;
\item[\rm(iii)] $\T$ is connected.
\end{enumerate}
\end{lemma}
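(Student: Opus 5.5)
The plan is a direct check against Definition~\ref{def:N01-ear}, using only the explicit edge list of Definition~\ref{def:N02-template} and the pairwise distinctness of the listed vertices.

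\emph{Step 1: the edges at each vertex.} First I will describe the edges of $\T$ at every vertex. Every edge of $\T$ belongs to the spine, the $t$-ear or some crossing $j$. Take an interior vertex of one of these paths, for instance $r_i$ with $1\le i\le\mu+1$. By the distinctness in Definition~\ref{def:N02-template}, $r_i$ is neither a spine vertex nor a vertex of any crossing. So the only edges at $r_i$ are the two edges $r_{i-1}r_i$ and $r_ir_{i+1}$ of the $t$-ear, and $r_i$ has degree two. The same argument gives degree two at each $x^j_i$ with $0<i<\ell_j$, because these vertices differ from the spine vertices, from the $t$-ear members and from the members of the other crossing.

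\emph{Step 2: part (i).} Each of $R$ and $X^j$ is a sequence of distinct vertices with consecutive vertices adjacent, so each is a path. For the length condition: $R$ has length $\mu+2\ge5$, and $X^j$ has length $\ell_j=2j+2+\varepsilon_j\ge4$, because $\varepsilon_j\ge2$. Both lengths are therefore at least $2$. Step~1 gives the degree condition on the interiors. The ends are $r_0=v_{L-\mu}$, $r_{\mu+2}=v_L$ for $R$, and $v_{a_j},v_{b_j}$ for $X^j$, as defined. By Lemma~\ref{lem:N02-ends} these ends are spine vertices.

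\emph{Step 3: part (ii).} I will read off $\Int(\T)$ from its disjoint-union decomposition in Definition~\ref{def:N02-template}. A member is a spine member $v_i$ with $1\le i\le L-1$, or some $r_i$, or some $x^j_i$. The interior of $R$ is exactly $\{r_1,\dots,r_{\mu+1}\}$, and the interior of $X^j$ is exactly $\{x^j_1,\dots,x^j_{\ell_j-1}\}$. Because the union is disjoint, a non-spine member lies in the interior of exactly one ear, and a spine member lies in the interior of none.

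\emph{Step 4: part (iii).} Every vertex of $\T$ lies on the spine or on an ear. The spine is a path, so it is connected. Each ear is a path whose ends are spine vertices, so every vertex of an ear is joined to the spine along that ear. Hence $\T$ is connected.

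I do not expect a real obstacle here. The only point that needs care is the degree argument in Step~1, which relies entirely on the distinctness clause of Definition~\ref{def:N02-template}. I will state it once, in general form, and apply it to both kinds of ear.
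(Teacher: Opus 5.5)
Your proposal is correct and follows the paper's proof essentially step for step: the degree-two check comes from the distinctness clause, (ii) is read off the disjoint-union description of $\Int(\T)$, and (iii) joins every ear vertex to the spine along its ear. The one detail the paper states explicitly and you leave implicit is that the two ends of each ear are distinct ($v_{L-\mu}\ne v_L$ because $\mu\ge3$, and $v_{a_j}\ne v_{b_j}$ because $a_j<b_j$), which is needed for $R$ and $X^j$ to be paths.
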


\begin{proof}
(i) $R$ is a path of $\T$.  Its consecutive vertices are adjacent by
Definition~\ref{def:N02-template}.  Its vertices are pairwise distinct:
$r_1,\dots,r_{\mu+1}$ are distinct from each other and from the spine
vertices (Definition~\ref{def:N02-template}), and its ends
$r_0=v_{L-\mu}$ and $r_{\mu+2}=v_L$ are distinct, as $\mu\ge3$.  Likewise
$X^j$ is a path of $\T$, taking $x^j_i$ in place of $r_i$; its ends
$x^j_0=v_{a_j}$ and $x^j_{\ell_j}=v_{b_j}$ are distinct, as $a_j<b_j$.

The edges of $\T$ are those of the spine, the $t$-ear and the crossings
(Definition~\ref{def:N02-template}).  For $1\le i\le\mu+1$, $r_i$ is not a
vertex of the spine or of a crossing, so its only edges are
$r_{i-1}r_i$ and $r_ir_{i+1}$, and it has degree two.  Likewise $x^j_i$
has degree two for $1\le i\le\ell_j-1$.  The lengths are $\mu+2\ge5$ and
$\ell_j=2j+2+\varepsilon_j\ge4$.  So $R$ and $X^j$ are ears
(Definition~\ref{def:N01-ear}).

(ii) The vertex list of Definition~\ref{def:N02-template} is a disjoint
union of the spine vertices and the interiors of $R$ and the $X^j$.

(iii) Any two spine vertices are joined along the spine.  Every vertex of
$R$ is joined along $R$ to its end $r_0=v_{L-\mu}$, and every vertex of
$X^j$ is joined along $X^j$ to its end $v_{a_j}$; both ends are spine
vertices.  By Definition~\ref{def:N02-template}, every vertex of $\T$
lies on the spine, on $R$ or on some $X^j$.  So every vertex is joined to
a spine vertex, any two vertices are joined through the spine, and $\T$
is connected.
\end{proof}

\begin{lemma}[the spine is geodesic]\label{lem:N02-spine}
In $\T=\T(L,\mu;E)$, $d_\T(v_i,v_{i'})=|i-i'|$ for $0\le i,i'\le L$; in
particular $d_\T(s,t)=L$.
\end{lemma}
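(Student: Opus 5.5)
The plan is to use the distance certificate of Lemma~\ref{lem:N01-potential}, but a simpler route suffices: the upper bound $d_\T(v_i,v_{i'})\le|i-i'|$ comes from the spine path itself, so only the lower bound needs work. I would fix $i\le i'$ and define a potential $f$ on $V(\T)$ that is $1$-Lipschitz along every edge and equals $i$ on $v_i$. By condition (b) of Lemma~\ref{lem:N01-potential}, or directly by summing along a path, any $v_i$--$v_{i'}$ path then has at least $f(v_{i'})-f(v_i)=i'-i$ edges.

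The potential I would take is a ``projection to the spine'':
\begin{itemize}
\item on the spine, $f(v_g):=g$;
\item on the $t$-ear, $f(r_g):=\min\bigl(L-\mu+g,\;L\bigr)$, which is nondecreasing in $g$ with steps $0$ or $1$;
\item on crossing $j$, $f(x^j_g):=\min\bigl(a_j+g,\;b_j\bigr)$, which is again nondecreasing with steps in $\{0,1\}$.
\end{itemize}
The endpoint values agree with the spine values: $f(r_0)=L-\mu$ and $f(r_{\mu+2})=\min(L+2,L)=L$; likewise $f(x^j_0)=a_j$ and $f(x^j_{\ell_j})=\min(a_j+\ell_j,b_j)=b_j$, because $\ell_j=2j+2+\varepsilon_j>b_j-a_j$. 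So $f$ is well defined on the shared ends.

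Next I would check that $|f(x)-f(y)|\le1$ on every edge. By Lemma~\ref{lem:N02-ears}(i),(ii), every edge of $\T$ is a spine edge or an edge of one of the ears $R$, $X^j$. On spine edges $f$ changes by exactly $1$. On ear edges it changes by $0$ or $1$, by the monotone form of $f$ there. Consequently any path $v_i=y_0,\dots,y_k=v_{i'}$ satisfies $i'-i=f(y_k)-f(y_0)\le k$, and so $d_\T(v_i,v_{i'})\ge|i-i'|$. Together with the spine path this gives equality, and $d_\T(s,t)=d_\T(v_0,v_L)=L$.

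The main obstacle is only bookkeeping: confirming that the ear formulas match at both ends, which uses $\ell_j>b_j-a_j$ and $\mu+2>\mu$, and that the edges of $\T$ are exactly the listed ones. Both facts follow from Definition~\ref{def:N02-template} and Lemma~\ref{lem:N02-ears}.
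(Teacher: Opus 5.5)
Your proof is correct, and it takes a genuinely different route from the paper's.

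The paper fixes a shortest $v_i$--$v_{i'}$ path $P$ and argues by exchange. If $P$ enters the interior of an ear, Lemma~\ref{lem:N01-earcross} shows that $P$ traverses the whole ear from one end to the other. Replacing that ear by the strictly shorter spine segment between its ends gives a shorter walk, a contradiction. So $P$ uses only spine vertices, hence only spine edges, and is the spine segment.

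You instead build a $1$-Lipschitz ``projection'' $f$ onto the spine coordinate and telescope along any path. This is shorter and self-contained: it needs none of the ear machinery (maximal runs, Lemmas~\ref{lem:N01-earmove} and~\ref{lem:N01-earcross}), and it bounds walks as well as paths. It is in the same spirit as the lower-bound half of Lemma~\ref{lem:N01-potential}, which the paper uses in Section~\ref{sec:N04}. Your argument also only needs each ear to be \emph{at least} as long as the spine segment it spans ($\ell_j\ge b_j-a_j$, $\mu+2\ge\mu$).

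The paper's argument needs strict inequality, and in exchange proves more than the lemma states: every shortest $v_i$--$v_{i'}$ path is the spine segment. It also reuses ear lemmas the paper needs anyway for Lemma~\ref{lem:N01-eardist} and the profile computations.

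One small point of citation: the fact that every edge of $\T$ is a spine, $t$-ear or crossing edge comes from Definition~\ref{def:N02-template} ($\T$ is the union of those paths). Lemma~\ref{lem:N02-ears}(ii) is about members, not edges. You do also cite the definition, so nothing is missing.
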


\begin{proof}
The spine path from $v_i$ to $v_{i'}$ has $|i-i'|$ edges, so
$d_\T(v_i,v_{i'})\le|i-i'|$.  Let $P=(x_0,\dots,x_k)$ be a shortest
$v_i$--$v_{i'}$ path of $\T$.  We show that every vertex of $P$ is a
spine vertex.

Suppose not.  By Lemma~\ref{lem:N02-ears}(ii), some vertex $x_c$ of $P$
lies in the interior of one of the ears $E$ of
Lemma~\ref{lem:N02-ears}(i).  Let $x_a,\dots,x_b$ be the maximal run of
$P$ in the interior of $E$ that contains $x_c$
(Definition~\ref{def:N01-run}).  The ends $v_i$ and $v_{i'}$ of $P$ are
spine vertices, so they are not in the interior of $E$, and by
Lemma~\ref{lem:N01-earcross} the subpath $Q:=(x_{a-1},\dots,x_{b+1})$ of
$P$ runs along $E$ from one end to the other.

$E$ has more edges than the spine path $S$ between its ends.  The $t$-ear
$R$ has $\mu+2$ edges, and the spine path from $v_{L-\mu}$ to $v_L$ has
$\mu$.  The crossing $X^j$ has $\ell_j=2j+2+\varepsilon_j$ edges, where
$\varepsilon_j\ge2$ (Definition~\ref{def:N02-template}), and the spine path from $v_{a_j}$ to $v_{b_j}$ has $b_j-a_j=2j+2$.  Replacing $Q$
by $S$ in $P$ gives a $v_i$--$v_{i'}$ walk with fewer edges than $P$.  A
walk contains a path between its ends with at most as many edges, so some
$v_i$--$v_{i'}$ path is shorter than $P$, which is a contradiction.

So every vertex of $P$ is a spine vertex.  We check that every edge of
$P$ is a spine edge.  By Definition~\ref{def:N02-template}, an edge of
$\T$ is a spine edge or an edge $e_ke_{k+1}$ of an ear
$(e_0,\dots,e_\ell)$ ($0\le k<\ell$).  Such an ear edge has an end in the
interior of the ear: $e_k$ if $k\ge1$, and $e_1$ if $k=0$, as $\ell\ge2$.
Interior vertices of the ears are not spine vertices
(Definition~\ref{def:N02-template}).  So an ear edge never joins two spine
vertices, and every edge of $P$ is a spine edge.  So $P$ is a path of the
spine $v_0,\dots,v_L$ from $v_i$ to $v_{i'}$, that is,
$v_i,v_{i\pm1},\dots,v_{i'}$, with $|i-i'|$ edges.  Hence
$d_\T(v_i,v_{i'})=|i-i'|$.
\end{proof}

\begin{lemma}[profiles]\label{lem:N02-profiles}
In $\T=\T(L,\mu;E)$ (Definition~\ref{def:N02-template}), the profiles
$(h,\hb)$ (Definition~\ref{def:N02-diagonal}) are as follows:
\begin{enumerate}
\item[\rm(i)] $h(v_i)=i$ and $\hb(v_i)=L-i$ for $0\le i\le L$;
\item[\rm(ii)] $h(r_i)=L-\mu+i$ and $\hb(r_i)=\mu+2-i$ for $1\le i\le\mu+1$;
\item[\rm(iii)] for $j<e$ and $1\le i\le\ell_j-1$, with
  $a:=a_j$, $b:=b_j$, $\ell:=\ell_j$, $\varepsilon:=\varepsilon_j$,
  \[
  h(x^j_i)=\min(a+i,\ b+\ell-i),\qquad
  \hb(x^j_i)=\min(L-a+i,\ L-b+\ell-i),
  \]
  that is,
  \[
  (h,\hb)(x^j_i)=\begin{cases}
  (a+i,\ L-a+i) & \text{zone A: } 1\le i\le\varepsilon/2,\\
  (a+i,\ L-b+\ell-i) & \text{zone B: } \varepsilon/2<i\le 2j+2+\varepsilon/2,\\
  (b+\ell-i,\ L-b+\ell-i) & \text{zone C: } 2j+2+\varepsilon/2<i\le\ell-1.
  \end{cases}
  \]
\end{enumerate}
In particular the profiles of the common members depend on $L$ and $\mu$
only.
\end{lemma}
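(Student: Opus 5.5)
Part (i) is Lemma~\ref{lem:N02-spine}: $h(v_i)=d_\T(v_0,v_i)=i$ and $\hb(v_i)=d_\T(v_L,v_i)=L-i$.

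For (ii) and (iii) I will use Lemma~\ref{lem:N01-eardist}. It applies with $H=\T$, which is connected by Lemma~\ref{lem:N02-ears}(iii), and with the ears $R$ and $X^j$ of Lemma~\ref{lem:N02-ears}(i). The vertex $y$ will be $s$ or $t$. Both are spine vertices, so by Lemma~\ref{lem:N02-ears}(ii) neither lies in the interior of an ear. Then all distances from a member of an ear to $s$ or $t$ reduce to distances between spine vertices, and (i) gives those.

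For the $t$-ear, Lemma~\ref{lem:N01-eardist} with $\ell=\mu+2$, $e_0=v_{L-\mu}$ and $e_\ell=v_L$ gives
\[
 h(r_i)=\min\bigl(i+L-\mu,\;(\mu+2-i)+L\bigr).
\]
The first term is smaller, since $i\le\mu+1$ gives $2i\le 2\mu+2$. It also gives
\[
 \hb(r_i)=\min\bigl(i+\mu,\;\mu+2-i\bigr)=\mu+2-i
\]
for $i\ge1$.

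For crossing $j$, Lemma~\ref{lem:N01-eardist} directly gives the displayed min formulas, because $d(s,v_a)=a$, $d(s,v_b)=b$, $d(t,v_a)=L-a$ and $d(t,v_b)=L-b$. The zones then come from comparing the two arguments of each minimum, using $b-a=2j+2$ and $\ell=2j+2+\varepsilon$.

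For $h$, we have $a+i\le b+\ell-i$ if and only if $2i\le 4j+4+\varepsilon$, that is, $i\le 2j+2+\varepsilon/2$. Since $\varepsilon$ is even, $\varepsilon/2$ is an integer, so the threshold is exact.

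For $\hb$, we have $L-a+i\le L-b+\ell-i$ if and only if $2i\le \ell-(b-a)=\varepsilon$, that is, $i\le\varepsilon/2$.

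These two thresholds split $1\le i\le\ell-1$ into zones A, B and C. Zone A is nonempty because $\varepsilon\ge2$, and zone C is nonempty because $\varepsilon/2<\varepsilon$.

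None of the formulas for the spine and $t$-ear involves $E$, which gives the final remark. The only point needing care is the integrality and the tie cases at the thresholds. At a tie both arguments of the minimum are equal, so assigning the tie to either zone is harmless.
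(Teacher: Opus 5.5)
Your proof is correct and follows the paper's argument essentially step for step. Part (i) comes from Lemma~\ref{lem:N02-spine}, parts (ii) and (iii) from Lemma~\ref{lem:N01-eardist} with $y=s,t$, and the zones from comparing the two terms using $b-a=2j+2$ and $\ell=2j+2+\varepsilon$.

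One side remark is false, though the proof does not use it: zone C is empty when $\varepsilon=2$, since then $2j+2+\varepsilon/2=2j+3=\ell-1$. The paper notes this in the proof of Lemma~\ref{lem:N02-distinct}. Nonemptiness of the zones is not part of the statement, so simply drop that claim.
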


\begin{proof}
By Definition~\ref{def:N02-diagonal}, $h(u)=d_\T(s,u)$ and
$\hb(u)=d_\T(t,u)$, with $s=v_0$ and $t=v_L$.  The ranges of (i)--(iii)
cover every vertex of $\T$ exactly once: (ii) and (iii) are the ranges
of the $t$-ear members and of the crossing members in the disjoint union
$\Int(\T)$ of Definition~\ref{def:N02-template}, and (i) covers the spine
members $v_1,\dots,v_{L-1}$ and the terminals $s=v_0$ and $t=v_L$.

(i) By Lemma~\ref{lem:N02-spine}, $h(v_i)=d_\T(v_0,v_i)=i$ and
$\hb(v_i)=d_\T(v_L,v_i)=L-i$.

(ii) By Lemma~\ref{lem:N02-ears}(i), $R=(r_0,\dots,r_{\mu+2})$ is an ear of
length $\mu+2$ with ends $r_0=v_{L-\mu}$ and $r_{\mu+2}=v_L$; $\T$ is
connected by Lemma~\ref{lem:N02-ears}(iii), and $s,t$ are spine vertices,
so they are not in the interior of $R$.  For $1\le i\le\mu+1$,
Lemma~\ref{lem:N01-eardist} with $y:=s$ and (i) give
\[
 h(r_i)=d_\T(s,r_i)=\min\bigl(i+d_\T(s,v_{L-\mu}),\ (\mu+2-i)+d_\T(s,v_L)\bigr)
 =\min(i+L-\mu,\ \mu+2-i+L).
\]
The first term is the smaller: $(\mu+2-i+L)-(i+L-\mu)=2\mu+2-2i\ge0$, as
$i\le\mu+1$.  So $h(r_i)=L-\mu+i$.  Likewise, with $y:=t$,
\[
 \hb(r_i)=d_\T(t,r_i)=\min\bigl(i+d_\T(t,v_{L-\mu}),\ (\mu+2-i)+d_\T(t,v_L)\bigr)
 =\min(i+\mu,\ \mu+2-i),
\]
and $(i+\mu)-(\mu+2-i)=2i-2\ge0$, as $i\ge1$.  So $\hb(r_i)=\mu+2-i$.

(iii) By Lemma~\ref{lem:N02-ears}(i), $X^j=(x^j_0,\dots,x^j_\ell)$ is an ear
of length $\ell$ with ends $x^j_0=v_a$ and $x^j_\ell=v_b$, and $s,t$ are
not in its interior.  For $1\le i\le\ell-1$, Lemma~\ref{lem:N01-eardist}
and (i) give
\[
\begin{aligned}
 h(x^j_i)&=\min\bigl(i+d_\T(s,v_a),\ (\ell-i)+d_\T(s,v_b)\bigr)
  =\min(a+i,\ b+\ell-i),\\
 \hb(x^j_i)&=\min\bigl(i+d_\T(t,v_a),\ (\ell-i)+d_\T(t,v_b)\bigr)
  =\min(L-a+i,\ L-b+\ell-i).
\end{aligned}
\]
To see which terms are the smaller, use $b-a=2j+2$ and
$\ell=2j+2+\varepsilon$:
\[
\begin{aligned}
 (b+\ell-i)-(a+i)&=4j+4+\varepsilon-2i, &&\text{which is }\ge0\text{ iff }
  i\le2j+2+\varepsilon/2;\\
 (L-b+\ell-i)-(L-a+i)&=\varepsilon-2i, &&\text{which is }\ge0\text{ iff }
  i\le\varepsilon/2.
\end{aligned}
\]
As $\varepsilon$ is even, $\varepsilon/2$ is an integer, and
$\varepsilon/2<2j+2+\varepsilon/2$.  So the range $1\le i\le\ell-1$ splits
into three zones.  If $i\le\varepsilon/2$, both minima are the first
terms (zone A).  If $\varepsilon/2<i\le2j+2+\varepsilon/2$, the minimum for
$h$ is its first term and the minimum for $\hb$ its second (zone B).  If
$i>2j+2+\varepsilon/2$, both minima are the second terms (zone C).

Finally, by (i) and (ii), the profiles of the spine members and of the
$t$-ear members are given by formulas in $L$, $\mu$ and $i$ only, so they
do not depend on $E$.
\end{proof}

\begin{lemma}[skews and diagonal members]\label{lem:N02-skew}
In $\T=\T(L,\mu;E)$ (Definition~\ref{def:N02-template}), with $\kappa$ and
$\sigma=L-2\mu$ as in Definition~\ref{def:N02-diagonal}:
\begin{enumerate}
\item[\rm(i)] $\kappa(v_i)=L-2i$ for $0\le i\le L$, and
  $\kappa(r_i)=2-\sigma-2i\le-\sigma$ for $1\le i\le\mu+1$;
\item[\rm(ii)] on the crossing $j$, $\kappa(x^j_i)$ is $\sigma+2+2j$ in
  zone A, $\sigma+2+2j+\varepsilon_j-2i\in[\sigma-2-2j,\ \sigma+2j]$ in
  zone B, and $\sigma-2-2j$ in zone C; so
  $\sigma-2-2j\le\kappa(x^j_i)\le\sigma+2+2j\le\sigma+4$;
\item[\rm(iii)] the diagonal members are exactly $v_\mu$, with
  $h(v_\mu)=\mu$, and, for each $j<e$, the member
  $x^j_{j+1+\varepsilon_j/2}$, with
  $h(x^j_{j+1+\varepsilon_j/2})=\mu+\varepsilon_j/2$;
\item[\rm(iv)] every member $u$ with $\kappa(u)>\sigma+4$ is a spine member.
\end{enumerate}
\end{lemma}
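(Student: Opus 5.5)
The plan is a direct computation from the profile formulas of Lemma~\ref{lem:N02-profiles}: subtract $h$ from $\hb$ in each case, then read off where $\kappa=\sigma$ and where $\kappa>\sigma+4$.

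For (i), Lemma~\ref{lem:N02-profiles}(i) gives $\kappa(v_i)=(L-i)-i=L-2i$. Lemma~\ref{lem:N02-profiles}(ii) gives
\[
\kappa(r_i)=(\mu+2-i)-(L-\mu+i)=2-\sigma-2i .
\]
Since $i\ge1$, this is at most $-\sigma$.

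For (ii), I substitute $a=\mu-1-j$, $b=\mu+1+j$ and $\ell=2j+2+\varepsilon$ into the three zones of Lemma~\ref{lem:N02-profiles}(iii).
\begin{itemize}
\item In zone A, $\kappa=(L-a+i)-(a+i)=L-2a=\sigma+2+2j$.
\item In zone C, $\kappa=(L-b+\ell-i)-(b+\ell-i)=L-2b=\sigma-2-2j$.
\item In zone B, $\kappa=L-a-b+\ell-2i=L-2\mu+2j+2+\varepsilon-2i=\sigma+2+2j+\varepsilon-2i$.
\end{itemize}
The zone B value is strictly decreasing in $i$. At the endpoints $i=\varepsilon/2+1$ and $i=2j+2+\varepsilon/2$ it equals $\sigma+2j$ and $\sigma-2-2j$, which gives the stated interval. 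Since $j\le1$, the overall bounds follow, including $\sigma+2+2j\le\sigma+4$.

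For (iii), I treat the three kinds of member separately.
\begin{itemize}
\item On the spine, $L-2i=\sigma=L-2\mu$ holds exactly when $i=\mu$, and $h(v_\mu)=\mu$.
\item On the $t$-ear, $\kappa\le-\sigma<\sigma$ because $\sigma\ge2$, so there is no diagonal member.
\item On crossing $j$, zones A and C have $\kappa=\sigma\pm(2+2j)\ne\sigma$. In zone B, $\kappa=\sigma$ exactly when $2i=2+2j+\varepsilon$, that is, $i=j+1+\varepsilon_j/2$. This index lies in zone B because $\varepsilon/2<j+1+\varepsilon/2\le2j+2+\varepsilon/2$. There $h=a+i=\mu-1-j+j+1+\varepsilon/2=\mu+\varepsilon_j/2$.
\end{itemize}

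For (iv), (ii) bounds every crossing member by $\sigma+4$, and (i) bounds every $t$-ear member by $-\sigma$. By Lemma~\ref{lem:N02-ears}(ii) every member is a spine member or an ear member, so a member with $\kappa>\sigma+4$ must be a spine member.

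Nothing here is hard. The only step needing care is the bookkeeping in zone B, namely confirming that the diagonal index lies inside zone B and that the endpoint values match the stated interval.
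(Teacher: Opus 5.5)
Your proposal is correct and follows the paper's proof essentially step for step. Like the paper, you subtract the profiles of Lemma~\ref{lem:N02-profiles} zone by zone, locate $\kappa=\sigma$ in each class of members, and use the classification of members for (iv). The one detail the paper makes explicit and you skip is that $v_\mu$ really is a member, i.e.\ $1\le\mu\le L-1$, which follows from $\mu\ge3$ and $L-\mu\ge5$ (Lemma~\ref{lem:N02-ends}).
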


\begin{proof}
By Definition~\ref{def:N02-diagonal}, $\kappa(u)=\hb(u)-h(u)$.  On the
crossing $j$ we write $a=\mu-1-j$, $b=\mu+1+j$, $\ell=2j+2+\varepsilon$
and $\varepsilon=\varepsilon_j$, as in Lemma~\ref{lem:N02-profiles}(iii);
so $a+b=2\mu$.

(i) By Lemma~\ref{lem:N02-profiles}(i), $\kappa(v_i)=(L-i)-i=L-2i$.  By
Lemma~\ref{lem:N02-profiles}(ii),
$\kappa(r_i)=(\mu+2-i)-(L-\mu+i)=2-(L-2\mu)-2i=2-\sigma-2i$, and
$2-\sigma-2i\le-\sigma$ as $i\ge1$.

(ii) By Lemma~\ref{lem:N02-profiles}(iii):
\[
\begin{aligned}
 &\text{zone A:} &\kappa&=(L-a+i)-(a+i)=L-2a=L-2\mu+2+2j=\sigma+2+2j;\\
 &\text{zone B:} &\kappa&=(L-b+\ell-i)-(a+i)=L-(a+b)+\ell-2i
   =\sigma+2+2j+\varepsilon-2i;\\
 &\text{zone C:} &\kappa&=(L-b+\ell-i)-(b+\ell-i)=L-2b=L-2\mu-2-2j
   =\sigma-2-2j.
\end{aligned}
\]
In zone B, $\varepsilon/2+1\le i\le2j+2+\varepsilon/2$.  At
$i=\varepsilon/2+1$ the skew is
$\sigma+2+2j+\varepsilon-(\varepsilon+2)=\sigma+2j$; at
$i=2j+2+\varepsilon/2$ it is
$\sigma+2+2j+\varepsilon-(4j+4+\varepsilon)=\sigma-2-2j$; in between it
decreases by $2$ at each step.  So it lies in
$[\sigma-2-2j,\ \sigma+2j]$.  Hence
$\sigma-2-2j\le\kappa(x^j_i)\le\sigma+2+2j$ on the whole crossing, and
$\sigma+2+2j\le\sigma+4$ as $j\le1$.

(iii) By Definition~\ref{def:N02-template}, every member is a spine
member, a $t$-ear member or a member of a crossing $j$, and by
Lemma~\ref{lem:N02-profiles}(iii) every member $x^j_i$ of the crossing
$j$ lies in exactly one of the zones A, B, C.  We find the diagonal
members in each of these classes.
\begin{itemize}
\item \emph{Spine members.}  $\kappa(v_i)=L-2i=\sigma$ iff
  $2i=L-\sigma=2\mu$, i.e.\ $i=\mu$.  As $\mu\ge3$ and $L-\mu\ge5$
  (Lemma~\ref{lem:N02-ends}), $1\le\mu\le L-1$, so $v_\mu$ is a member.
  So $v_\mu$ is the only diagonal spine member, and $h(v_\mu)=\mu$ by
  Lemma~\ref{lem:N02-profiles}(i).
\item \emph{$t$-ear members.}  $\kappa\le-\sigma<\sigma$ by (i), as
  $\sigma\ge2$.  So no $t$-ear member is diagonal.
\item \emph{Zones A and C of the crossing $j$.}  By (ii),
  $\kappa=\sigma+(2+2j)$ in zone A and $\kappa=\sigma-(2+2j)$ in zone C;
  neither is $\sigma$.  So no member of these zones is diagonal.
\item \emph{Zone B of the crossing $j$.}  By (ii), $\kappa=\sigma$ iff
  $2+2j+\varepsilon-2i=0$, i.e.\ $i=j+1+\varepsilon/2$.  This $i$ lies in
  zone B: $\varepsilon/2<j+1+\varepsilon/2\le2j+2+\varepsilon/2$, as
  $1\le j+1\le2j+2$.  So $x^j_{j+1+\varepsilon/2}$ is the only diagonal
  member of the crossing $j$, and
  $h(x^j_{j+1+\varepsilon/2})=a+i=(\mu-1-j)+(j+1+\varepsilon/2)
  =\mu+\varepsilon/2$.
\end{itemize}

(iv) By (i), $t$-ear members have $\kappa\le-\sigma<\sigma+4$; by (ii),
crossing members have $\kappa\le\sigma+4$.  By
Definition~\ref{def:N02-template} every member is a spine, $t$-ear or
crossing member, so a member with $\kappa>\sigma+4$ is a spine member.
\end{proof}

\begin{lemma}[distinct profiles]\label{lem:N02-distinct}
In $\T=\T(L,\mu;E)$, two distinct members have distinct profiles.
\end{lemma}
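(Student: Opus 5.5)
The plan is to use the explicit profiles of Lemma~\ref{lem:N02-profiles} and to distinguish members first by class (spine, $t$-ear, crossing $j$), then within a class.  The profile $(h,\hb)$ determines the pair $(h+\hb,\kappa)$, and it is convenient to use the \emph{height excess} $\eta(u):=h(u)+\hb(u)-L\ge0$ together with $h$ or $\kappa$.

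\emph{Step 1: within a class.}  On the spine, $h(v_i)=i$ is injective.  On the $t$-ear, $h(r_i)=L-\mu+i$ is injective.  On the crossing $j$, consider $h$ and $\hb$ zone by zone (Lemma~\ref{lem:N02-profiles}(iii)).  In zones A and B, $h=a+i$ is strictly increasing in $i$.  In zone C, $h=b+\ell-i$ is strictly decreasing, but there $\kappa=\sigma-2-2j$.  In zones A and B, $\kappa\ge\sigma-2-2j$, with equality only at the last index of zone B, $i=2j+2+\varepsilon/2$.  So a zone C member and a zone A/B member can share a profile only if the latter is this last index; its $h$ is $b+\varepsilon/2$.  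Zone C members have $h=b+\ell-i<b+\varepsilon/2$, since $i>2j+2+\varepsilon/2$.  So they differ.

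\emph{Step 2: across classes.}  The excesses are:
\[
\eta(v_i)=0,\qquad
\eta(r_i)=2,\qquad
\eta(x^j_i)=
\begin{cases}
2i,&\text{zone A},\\
\varepsilon_j,&\text{zone B},\\
2(\ell_j-i),&\text{zone C}.
\end{cases}
\]
Crossing members therefore have $\eta\ge2$, so none equals a spine member, and a $t$-ear member is separated from everything else by $\eta$.  The one exception is a crossing member with $\eta=2$, i.e.\ $i=1$ or $i=\ell_j-1$.  For these, compare $\kappa$: $t$-ear members have $\kappa\le-\sigma<0$ by Lemma~\ref{lem:N02-skew}(i), while crossing members have $\kappa\ge\sigma-2-2j$.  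This is $\ge0$ when $j=0$, since $\sigma\ge2$, and when $j=1$, since then $\sigma\ge4$.  So no crossing member shares a profile with a $t$-ear member.

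\emph{Step 3: crossing $0$ versus crossing $1$ when $e=2$.}  This is the main obstacle.  Both crossings have members with the same excess, so compare $(h,\kappa)$ directly.
\begin{itemize}
\item Zone A of crossing $1$ has $\kappa=\sigma+4$, which crossing $0$ never attains (its maximum is $\sigma+2$).
\item Zone C of crossing $1$ has $\kappa=\sigma-4$, below crossing $0$'s minimum $\sigma-2$.
\item Zone B of crossing $1$ has $\kappa\in[\sigma-4,\sigma+2]$ and $\eta=\varepsilon_1$.  Any matching member of crossing $0$ must have $\eta=\varepsilon_1>\varepsilon_0$, so it lies in zone A or C of crossing $0$ (zone B there has $\eta=\varepsilon_0$).
\begin{itemize}
\item Zone A of crossing $0$ needs $\kappa=\sigma+2$ and $2i=\varepsilon_1$.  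On crossing $1$, $\kappa=\sigma+2$ is impossible in zone B, whose maximum is $\sigma+2j=\sigma+2$ attained at $i=\varepsilon_1/2+1$.  At that point $h=a_1+\varepsilon_1/2+1=\mu-1+\varepsilon_1/2$.  The zone A member of crossing $0$ has $h=a_0+\varepsilon_1/2=\mu-1+\varepsilon_1/2$, which collides.
\end{itemize}
\end{itemize}
So a direct check must be redone carefully.  Here $2i=\varepsilon_1$ requires $i=\varepsilon_1/2\le\varepsilon_0/2$, contradicting $\varepsilon_0<\varepsilon_1$, so zone A of crossing $0$ is excluded by its range.  Similarly, zone C of crossing $0$ needs $2(\ell_0-i)=\varepsilon_1$ with $\ell_0-i<\varepsilon_0/2$, also impossible.

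Thus the key inequality is that the zone A/C excesses of crossing $0$ stay below $\varepsilon_0<\varepsilon_1$.  I expect this bookkeeping of zone ranges against $\varepsilon_0<\varepsilon_1$ to be where care is needed.
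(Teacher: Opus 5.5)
Your proof is correct and is essentially the paper's argument. The paper likewise separates members by $(\Sigma,\kappa)$ with $\Sigma=h+\hb=L+\eta$, class by class: the spine by $\Sigma=L$, the $t$-ear against the crossings by your gap $\kappa\le-\sigma<\sigma-2-2j$, and crossing $0$ against crossing $1$ by $\kappa=\sigma\pm4$ in zones A and C together with $\Sigma\le L+\varepsilon_0<L+\varepsilon_1$ in zone B. Within a class you use $h$ where the paper uses $\kappa$ or $\Sigma$, which is only a cosmetic variation.

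Two slips in the write-up, neither of which affects validity:
\begin{itemize}
\item In Step 2, when $\varepsilon_j=2$ all of zone B also has $\eta=2$, not only $i=1$ and $i=\ell_j-1$. This is harmless, because your $\kappa$ comparison covers every crossing member.
\item The self-contradictory ``collision'' passage in Step 3 should be deleted. Replace it with the one-line observation that every member of crossing $0$ has $\eta\le\varepsilon_0$.
\end{itemize}
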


\begin{proof}
(1) The profile of a vertex $u$ is $(h(u),\hb(u))$ and its skew is
$\kappa(u)=\hb(u)-h(u)$ (Definitions~\ref{def:N01-profile}
and~\ref{def:N02-diagonal}).  Put $\Sigma(u):=h(u)+\hb(u)$.  Then
\[
 \Sigma(u)-\kappa(u)=2h(u),\qquad \Sigma(u)+\kappa(u)=2\hb(u),
\]
so $h(u)=(\Sigma(u)-\kappa(u))/2$ and $\hb(u)=(\Sigma(u)+\kappa(u))/2$.
Hence two members $u,u'$ have the same profile iff
$\Sigma(u)=\Sigma(u')$ and $\kappa(u)=\kappa(u')$.  Indeed, if the
profiles are equal, so are their sums and their differences.
Conversely, if $\Sigma(u)=\Sigma(u')$ and $\kappa(u)=\kappa(u')$, the two
formulas give
\[
\begin{aligned}
 h(u)&=\tfrac12\bigl(\Sigma(u)-\kappa(u)\bigr)
  =\tfrac12\bigl(\Sigma(u')-\kappa(u')\bigr)=h(u'),\\
 \hb(u)&=\tfrac12\bigl(\Sigma(u)+\kappa(u)\bigr)
  =\tfrac12\bigl(\Sigma(u')+\kappa(u')\bigr)=\hb(u').
\end{aligned}
\]
We show that two
distinct members differ in $\Sigma$ or in $\kappa$.  In the table and the
cases below, $\Sigma$ and $\kappa$ stand for $\Sigma(u)$ and $\kappa(u)$,
where $u$ is the member named in the row or in the case.

(2) On the crossing $j$ write $a,b,\ell,\varepsilon$ as in
Lemma~\ref{lem:N02-profiles}(iii).  Lemma~\ref{lem:N02-profiles} gives
$\Sigma$ and Lemma~\ref{lem:N02-skew} gives $\kappa$, with $i$ in the
ranges of Lemma~\ref{lem:N02-profiles}:
{\setlength{\arraycolsep}{3pt}
\[
\begin{array}{lll}
\text{spine } v_i: & \Sigma=i+(L-i)=L, & \kappa=L-2i;\\[2pt]
t\text{-ear } r_i: & \Sigma=(L-\mu+i)+(\mu+2-i)=L+2, &
 \kappa=2-\sigma-2i;\\[2pt]
\text{zone A}: & \Sigma=(a+i)+(L-a+i)=L+2i, & \kappa=\sigma+2+2j;\\[2pt]
\text{zone B}: & \Sigma=(a+i)+(L-b+\ell-i)=L+\ell-(b-a)=L+\varepsilon, &
 \kappa=\sigma+2+2j+\varepsilon-2i;\\[2pt]
\text{zone C}: & \Sigma=(b+\ell-i)+(L-b+\ell-i)=L+2(\ell-i), &
 \kappa=\sigma-2-2j.
\end{array}
\]}%
In zone A, $\Sigma$ runs over $L+2,L+4,\dots,L+\varepsilon$.  In zone C,
$\ell-i$ runs from $\ell-(2j+3+\varepsilon/2)=\varepsilon/2-1$ down to
$1$, so $\Sigma$ runs over $L+\varepsilon-2,\dots,L+2$.  If
$\varepsilon=2$, then $2j+2+\varepsilon/2=2j+3=\ell-1$, so no index lies
in zone C.  So, on the crossing $j$,
\[
 (\ast)\qquad \Sigma\le L+\varepsilon_j\quad\text{and}\quad
 \sigma-2-2j\le\kappa\le\sigma+2+2j\ \ \text{(Lemma~\ref{lem:N02-skew}(ii))}.
\]

(3) Let $u\ne u'$ be members.  By the disjoint union $\Int(\T)$ of
Definition~\ref{def:N02-template}, each of them is a spine member, a
$t$-ear member, or a member of the crossing $j$ for exactly one $j<e$.
By Lemma~\ref{lem:N02-profiles}(iii), a member of the crossing $j$ lies
in exactly one of the zones A, B, C, so the table in (2) gives $\Sigma$
and $\kappa$ for $u$ and for $u'$.  We distinguish cases according to the
classes of $u$ and $u'$.
\begin{itemize}
\item[(a)] \emph{Two spine members:} $\kappa=L-2i$ differs for different
  $i$.
\item[(b)] \emph{A spine member and a member of another class:}
  $\Sigma=L$ against $\Sigma\ge L+2$.
\item[(c)] \emph{Two $t$-ear members:} $\kappa=2-\sigma-2i$ differs for
  different $i$.
\item[(d)] \emph{A $t$-ear member and a member of the crossing $j$.}
  Exchanging $u$ and $u'$ if necessary, $u$ is the $t$-ear member and $u'$
  the member of the crossing $j$.  By Lemma~\ref{lem:N02-skew}(i),
  $\kappa(u)\le-\sigma$, and by $(\ast)$, $\kappa(u')\ge\sigma-2-2j$.  We
  compare the two bounds:
  \[
   (\sigma-2-2j)-(-\sigma)=2\sigma-2-2j=2(\sigma-1-j).
  \]
  If $j=0$, then $\sigma\ge2$ by \eqref{eq:N02-hyp}, so
  $\sigma-1-j=\sigma-1\ge1$.  If $j=1$, then $e=2$, as $j<e\le2$, so
  $\sigma\ge4$ by \eqref{eq:N02-hyp}, and $\sigma-1-j=\sigma-2\ge2$.  In
  both cases $\sigma-1-j>0$, so $-\sigma<\sigma-2-2j$.  Hence
  \[
   \kappa(u)\le-\sigma<\sigma-2-2j\le\kappa(u').
  \]
\item[(e)] \emph{Two members of the crossing $j$.}  We write $\kappa_A$,
  $\kappa_B$, $\kappa_C$ for the skew of a member in zone A, B, C, and
  $\Sigma_B$, $\Sigma_C$ for the value of $\Sigma$ at a member in zone B,
  C, as given by the table in (2).
  \begin{itemize}
  \item in one zone: $\Sigma$ differs for different $i$ in zones A and C,
    and $\kappa$ differs in zone B;
  \item zones A and B: $\kappa_A=\sigma+2+2j>\sigma+2j\ge\kappa_B$
    (Lemma~\ref{lem:N02-skew}(ii));
  \item zones A and C:
    $\kappa_A-\kappa_C=(\sigma+2+2j)-(\sigma-2-2j)=4+4j>0$;
  \item zones B and C: by the table, $\Sigma_B=L+\varepsilon$ and
    $\Sigma_C=L+2(\ell-i)$, where $i\ge2j+3+\varepsilon/2$ in zone C.  So
    \[
    \begin{aligned}
     \ell-i&\le\ell-(2j+3+\varepsilon/2)
       &&\text{as } i\ge2j+3+\varepsilon/2,\\
     &=(2j+2+\varepsilon)-(2j+3+\varepsilon/2)
       &&\text{as } \ell=2j+2+\varepsilon,\\
     &=\varepsilon/2-1,
    \end{aligned}
    \]
    and
    \[
     \Sigma_C\le L+2(\varepsilon/2-1)=L+\varepsilon-2<L+\varepsilon=\Sigma_B .
    \]
  \end{itemize}
\item[(f)] \emph{A member of the crossing $j$ and a member of the
  crossing $j'\ne j$.}  Then $e=2$ and $\{j,j'\}=\{0,1\}$.  Exchanging $u$
  and $u'$ if necessary, $u$ is a member of the crossing $0$ and $u'$ of
  the crossing $1$, and $\varepsilon_0<\varepsilon_1$
  (Definition~\ref{def:N02-template}).  By $(\ast)$ with $j=0$, $\kappa(u)\in[\sigma-2,\sigma+2]$ and
  $\Sigma(u)\le L+\varepsilon_0$.
  \begin{itemize}
  \item If $u'$ is in zone A or C of the crossing $1$,
    $\kappa(u')=\sigma+4$ or $\sigma-4$, outside $[\sigma-2,\sigma+2]$.
  \item If $u'$ is in zone B of the crossing $1$,
    $\Sigma(u')=L+\varepsilon_1>L+\varepsilon_0\ge\Sigma(u)$.
  \end{itemize}
\end{itemize}
These cases cover all pairs of classes.  In each, $\Sigma$ or $\kappa$
differs, so the profiles differ.
\end{proof}

\begin{remark}[the hypothesis $\sigma\ge4$ for two crossings]\label{rem:N02-sigma}
The hypothesis $\sigma\ge4$ when $e=2$ in \eqref{eq:N02-hyp} is used only
in case~(d), and it cannot be dropped.  Let $e=2$ and $\sigma=2$.  By
Lemma~\ref{lem:N02-profiles}(ii) with $i=1$, $(h,\hb)(r_1)=(L-\mu+1,\mu+1)$.
The last member $x^1_{\ell_1-1}$ of the crossing $1$ lies in zone~C: here
$j=1$ and $\ell_1=4+\varepsilon_1$, so $\ell_1-1=\varepsilon_1+3$ and
$2j+2+\varepsilon_1/2=4+\varepsilon_1/2$, and $\varepsilon_1+3>4+\varepsilon_1/2$
as $\varepsilon_1\ge4$, since $\varepsilon_1>\varepsilon_0\ge2$ and
$\varepsilon_1$ is even.  So by Lemma~\ref{lem:N02-profiles}(iii) with
$b=\mu+2$ and $i=\ell_1-1$,
\[
 (h,\hb)(x^1_{\ell_1-1})=(b+1,\ L-b+1)=(\mu+3,\ L-\mu-1)=(L-\mu+1,\ \mu+1),
\]
the last equality as $L-\mu=\mu+\sigma=\mu+2$.  So $r_1$ and
$x^1_{\ell_1-1}$ have the same profile.
\end{remark}

In $G_\varphi$ the copies with two crossings are those of the variable
levels, where $\sigma_k=2k\ge4$ (Lemma~\ref{lem:N03-params}(ii)).

\begin{lemma}[the core]\label{lem:N02-core}
In $\T=\T(L,\mu;E)$, with $h,\hb$ as in Definition~\ref{def:N02-diagonal},
put $\alpha_\T:=L-\mu+1$ and $\beta_\T:=\mu+1$.  Then
$(h,\hb)(r_1)=(\alpha_\T,\beta_\T)$, $\alpha_\T-\beta_\T=\sigma$, and for every member
$u\ne r_1$,
\[
 d_\T(r_1,u)\ <\ \min\bigl(\alpha_\T+h(u),\ \beta_\T+\hb(u)\bigr).
\]
\end{lemma}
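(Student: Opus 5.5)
The plan is a direct computation. First read off the profile of the core, then, for each class of members $u\ne r_1$, exhibit an explicit walk from $r_1$ to $u$ that is strictly shorter than both $\alpha_\T+h(u)$ and $\beta_\T+\hb(u)$.

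\emph{The profile of the core.}  Lemma~\ref{lem:N02-profiles}(ii) with $i=1$ gives $(h,\hb)(r_1)=(L-\mu+1,\mu+1)=(\alpha_\T,\beta_\T)$. Then $\alpha_\T-\beta_\T=L-2\mu=\sigma$ by Definition~\ref{def:N02-template}.

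\emph{The walks.}  In every case I bound $d_\T(r_1,u)$ from above by a walk, and compare with the two quantities using the profiles of Lemma~\ref{lem:N02-profiles}. The members split into three classes by Definition~\ref{def:N02-template}.

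\begin{itemize}
\item \emph{Spine members $v_i$, $1\le i\le L-1$.}  Go from $r_1$ to $r_0=v_{L-\mu}$, then along the spine, so $d_\T(r_1,v_i)\le1+|L-\mu-i|$. The two bounds are $\alpha_\T+h(v_i)=L-\mu+1+i$ and $\beta_\T+\hb(v_i)=\mu+1+L-i$.
  \begin{itemize}
  \item If $i\le L-\mu$, the inequalities reduce to $i>0$ and $\mu>0$.
  \item If $i>L-\mu$, they reduce to $L-\mu>0$ and $i<L$.
  \end{itemize}
  All of these hold, since $v_i$ is a member and $L-\mu\ge5$ by Lemma~\ref{lem:N02-ends}.
\item \emph{$t$-ear members $r_i$, $2\le i\le\mu+1$.}  Go along the ear, so $d_\T(r_1,r_i)\le i-1$. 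Then $\beta_\T+\hb(r_i)=2\mu+3-i$, and $i-1<2\mu+3-i$ because $i\le\mu+1$. Also $\alpha_\T+h(r_i)=2(L-\mu)+1+i$ is larger still.
\item \emph{Crossing members $x^j_i$.}  Write $a,b,\ell$ as in Lemma~\ref{lem:N02-profiles}(iii), and recall $1\le a<b\le L-\mu-1$ by Lemma~\ref{lem:N02-ends}. Go from $r_1$ to $v_{L-\mu}$, along the spine to $v_a$ or to $v_b$, then along the crossing. This gives
\[
 d_\T(r_1,x^j_i)\le1+\min\bigl(L-\mu-a+i,\ L-\mu-b+\ell-i\bigr).
\]
By Lemma~\ref{lem:N02-profiles}(iii), both $\alpha_\T+h$ and $\beta_\T+\hb$ are minima of two terms, one through $v_a$ and one through $v_b$. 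So it is enough to compare term by term, the $v_a$ term with the $v_a$ term and the $v_b$ term with the $v_b$ term.
  \begin{itemize}
  \item For $\alpha_\T+h$, the comparisons reduce to $a>0$ and $b>0$.
  \item For $\beta_\T+\hb=\mu+1+\min(L-a+i,\ L-b+\ell-i)$, they reduce to $L-\mu<L+\mu$.
  \end{itemize}
\end{itemize}

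There is no real obstacle here. The only point needing care is the crossing case. A minimum of terms is strictly below a minimum of terms as soon as each term on the right strictly exceeds the matching term on the left, so I will state that reduction explicitly rather than trying to identify which term attains each minimum. The spine case needs its split at $i=L-\mu$, and for $t$-ear members the bound $i\le\mu+1$ is exactly what makes the inequality strict.
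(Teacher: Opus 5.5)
Your proposal is correct and follows the paper's proof essentially step by step. It uses the same profile computation for $r_1$, the same three classes of members with the same paths (including the split at $i=L-\mu$ on the spine), and the same term-by-term comparison of the two minima on the crossings. The only cosmetic difference is that the paper computes the length of its crossing path exactly via Lemma~\ref{lem:N01-eardist}, whereas you use the two explicit walks through $v_a$ and $v_b$ as upper bounds, which is all the strict inequality needs.
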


\begin{proof}
By Lemma~\ref{lem:N02-profiles}(ii) with $i=1$,
$(h,\hb)(r_1)=(L-\mu+1,\mu+1)=(\alpha_\T,\beta_\T)$, and
$\alpha_\T-\beta_\T=L-2\mu=\sigma$.

As $\alpha_\T=d_\T(s,r_1)$ and $h(u)=d_\T(s,u)$
(Definition~\ref{def:N02-diagonal}), $\alpha_\T+h(u)$ is the length of a walk
from $r_1$ to $u$ through $s$: a shortest path from $r_1$ to $s$ followed
by a shortest path from $s$ to $u$.  Likewise $\beta_\T+\hb(u)$ is the length
of a walk from $r_1$ to $u$ through $t$.  In each case below we give a
path $P$ from $r_1$ to $u$ and show that it is shorter than both walks:
$(\alpha_\T+h(u))-|P|>0$ and $(\beta_\T+\hb(u))-|P|>0$.  Then
$d_\T(r_1,u)\le|P|<\min(\alpha_\T+h(u),\ \beta_\T+\hb(u))$.  By
Definition~\ref{def:N02-template}, a member
$u\ne r_1$ is a spine member, a $t$-ear member $r_i$ with
$2\le i\le\mu+1$, or a crossing member.  We treat the three kinds in turn.

\emph{Spine members.}  Let $u=v_p$ with $1\le p\le L-1$.  The path
$P$, the edge $r_1r_0$, where $r_0=v_{L-\mu}$, followed by the spine path
from $v_{L-\mu}$ to $v_p$, has $|P|=1+|L-\mu-p|$ edges.  By
Lemma~\ref{lem:N02-profiles}(i), $\alpha_\T+h(v_p)=(L-\mu+1)+p$ and
$\beta_\T+\hb(v_p)=(\mu+1)+(L-p)$.  If $p\le L-\mu$, then
$|P|=1+L-\mu-p$, and
\[
\begin{aligned}
 (\alpha_\T+h(v_p))-|P|&=(L-\mu+1+p)-(1+L-\mu-p)=2p>0,\\
 (\beta_\T+\hb(v_p))-|P|&=(\mu+1+L-p)-(1+L-\mu-p)=2\mu>0.
\end{aligned}
\]
If $p>L-\mu$, then $|P|=1+p-L+\mu$, and
\[
\begin{aligned}
 (\alpha_\T+h(v_p))-|P|&=(L-\mu+1+p)-(1+p-L+\mu)=2(L-\mu)>0,\\
 (\beta_\T+\hb(v_p))-|P|&=(\mu+1+L-p)-(1+p-L+\mu)=2(L-p)>0,
\end{aligned}
\]
since $L-\mu\ge5$ by Lemma~\ref{lem:N02-ends} and $p\le L-1$.

\emph{$t$-ear members.}  Let $u=r_i$ with $2\le i\le\mu+1$.  The path
$P=r_1,r_2,\dots,r_i$ along $R$ has $|P|=i-1$ edges.  By
Lemma~\ref{lem:N02-profiles}(ii),
\[
\begin{aligned}
 (\alpha_\T+h(r_i))-|P|&=(2(L-\mu)+1+i)-(i-1)=2(L-\mu)+2>0,\\
 (\beta_\T+\hb(r_i))-|P|&=(2\mu+3-i)-(i-1)=2(\mu+2-i)>0,
\end{aligned}
\]
the latter as $i\le\mu+1$.

\emph{Crossing members.}  Let $u=x^j_i$ with $1\le i\le\ell-1$, and
$a,b,\ell$ as in Lemma~\ref{lem:N02-profiles}(iii).  By
Lemma~\ref{lem:N02-ears}, $X^j$ is an ear of the connected graph $\T$
with ends $v_a$ and $v_b$, and the spine vertex $v_{L-\mu}$ is not in its
interior.  By Lemma~\ref{lem:N02-ends}, $a<b<L-\mu$, so
$d_\T(v_{L-\mu},v_a)=L-\mu-a$ and $d_\T(v_{L-\mu},v_b)=L-\mu-b$ by
Lemma~\ref{lem:N02-spine}.  Let $P$ be the edge $r_1r_0$ followed by a
shortest path from $v_{L-\mu}$ to $u$.  Lemma~\ref{lem:N01-eardist} with
$y:=v_{L-\mu}$ gives
\[
 |P|=1+d_\T(v_{L-\mu},u)=\min(c_1,c_2),\qquad
 c_1:=1+L-\mu-a+i,\quad c_2:=1+L-\mu-b+\ell-i,
\]
where $c_1$ and $c_2$ are the lengths of the two candidates for $P$,
entering the crossing at $v_a$ and at $v_b$.  By
Lemma~\ref{lem:N02-profiles}(iii),
\[
 \alpha_\T+h(u)=\min(c_1+2a,\ c_2+2b),\qquad
 \beta_\T+\hb(u)=\min(c_1+2\mu,\ c_2+2\mu);
\]
the two terms of $\alpha_\T+h(u)$ are the lengths of the walks through $s$
entering the crossing at $v_a$ and at $v_b$, and likewise for
$\beta_\T+\hb(u)$ through $t$.  As $a$, $b$ and $\mu$ are positive, each term
on the right exceeds the corresponding $c_k$.  Since $|P|=\min(c_1,c_2)\le
c_k$ for $k=1,2$, $|P|$ is smaller than both minima.

In all three cases, $d_\T(r_1,u)<\min\bigl(\alpha_\T+h(u),\ \beta_\T+\hb(u)\bigr)$.
\end{proof}

\begin{lemma}[distances to $r_{\mu+1}$]\label{lem:N02-to-r}
In $\T=\T(L,\mu;E)$:
\begin{enumerate}
\item[\rm(i)] $d_\T(v_p,r_{\mu+1})=L-p+1$ for $0\le p\le L$;
\item[\rm(ii)] $d_\T(x^j_i,r_{\mu+1})=\hb(x^j_i)+1$ for $j<e$ and
  $1\le i\le\ell_j-1$.
\end{enumerate}
\end{lemma}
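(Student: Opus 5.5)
The plan is to compute distances to $r_{\mu+1}$ by treating $r_{\mu+1}$ as an interior vertex of the ear $R=(r_0,\dots,r_{\mu+2})$ and applying Lemma~\ref{lem:N01-eardist} with $i=\mu+1$ and $\ell=\mu+2$. By Lemma~\ref{lem:N02-ears}(i),(iii), $R$ is an ear of the connected graph $\T$ with ends $v_{L-\mu}$ and $v_L=t$. Neither a spine vertex nor a crossing member lies in the interior of $R$, by Lemma~\ref{lem:N02-ears}(ii). So for every $y$ in (i) or (ii),
\[
 d_\T(r_{\mu+1},y)=\min\bigl((\mu+1)+d_\T(v_{L-\mu},y),\ 1+d_\T(t,y)\bigr)=\min\bigl(\mu+1+d_\T(v_{L-\mu},y),\ 1+\hb(y)\bigr).
\]
It then remains to show that the second term is never larger than the first. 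That is, $\hb(y)\le\mu+d_\T(v_{L-\mu},y)$, which holds by the triangle inequality. Indeed, $\hb(y)=d_\T(t,y)\le d_\T(t,v_{L-\mu})+d_\T(v_{L-\mu},y)=\mu+d_\T(v_{L-\mu},y)$, using Lemma~\ref{lem:N02-spine} for $d_\T(t,v_{L-\mu})=\mu$. Hence $d_\T(r_{\mu+1},y)=1+\hb(y)$ for every $y$ outside the interior of $R$.

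For (i) this gives $d_\T(v_p,r_{\mu+1})=1+\hb(v_p)=L-p+1$ by Lemma~\ref{lem:N02-profiles}(i). This covers all $0\le p\le L$, including $p=L$ (distance $1$) and $p=L-\mu$, since these are ends of the ear, not interior vertices. For (ii), the same identity with $y=x^j_i$ gives $\hb(x^j_i)+1$ directly.

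I expect no real obstacle. The only care needed is to check that Lemma~\ref{lem:N01-eardist} applies: $0<\mu+1<\mu+2$, and the vertices in question lie outside the interior of $R$. The triangle-inequality step avoids any case analysis over the zones of the crossings.
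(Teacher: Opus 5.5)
Your proof is correct, but it takes a different route from the paper's, mainly in how the two exits are compared and in part (ii).

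\emph{The paper's route.} For (i) the paper also applies Lemma~\ref{lem:N01-eardist} to the $t$-ear with $y:=v_p$. It then compares $\mu+1+|L-\mu-p|$ with $1+L-p$ through explicit spine distances, splitting into the cases $p\le L-\mu$ and $p>L-\mu$. For (ii) it switches ears: it applies Lemma~\ref{lem:N01-eardist} to the crossing $X^j$ with $y:=r_{\mu+1}$, using (i) at the crossing's ends $v_{a_j},v_{b_j}$. It then recognises $\min(L-a+i,\,L-b+\ell-i)$ as $\hb(x^j_i)$ through the formula of Lemma~\ref{lem:N02-profiles}(iii).

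\emph{Your route.} The inequality $\hb(y)\le d_\T(t,v_{L-\mu})+d_\T(v_{L-\mu},y)=\mu+d_\T(v_{L-\mu},y)$ settles the comparison once, for all $y$. It gives the single identity $d_\T(r_{\mu+1},y)=1+\hb(y)$ for every vertex $y$ outside the interior of the $t$-ear. Parts (i) and (ii) are then two instances of it, (ii) no longer depends on (i), and the zone-by-zone crossing profiles are never needed.

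\emph{What each buys.} The paper's route keeps every distance an explicit formula in $L,\mu,i$, matching the computational style of Section~\ref{sec:N02}. Yours is shorter and more general. It also makes clear that the result only uses that $r_{\mu+1}$ is adjacent to $t$ and that $v_{L-\mu}$ is at distance $\mu$ from $t$.
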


\begin{proof}
We use the ear $R=(r_0,\dots,r_{\mu+2})$, with $r_0=v_{L-\mu}$ and
$r_{\mu+2}=v_L$ (Lemma~\ref{lem:N02-ears}(i)), and
Lemma~\ref{lem:N01-eardist} in $\T$, which is connected
(Lemma~\ref{lem:N02-ears}(iii)).

(i) The spine vertex $v_p$ is not in the interior of $R$.
Lemma~\ref{lem:N01-eardist}, with $r_{\mu+1}$ in the place of $e_i$ and
$y:=v_p$, and Lemma~\ref{lem:N02-spine} give
\[
 d_\T(v_p,r_{\mu+1})
 =\min\bigl((\mu+1)+d_\T(v_{L-\mu},v_p),\ 1+d_\T(v_L,v_p)\bigr)
 =\min\bigl(\mu+1+|L-\mu-p|,\ 1+L-p\bigr).
\]
If $p\le L-\mu$, the two terms are equal:
$\mu+1+(L-\mu-p)=1+L-p$.  If $p>L-\mu$, the first exceeds the second:
$(\mu+1+p-L+\mu)-(1+L-p)=2\bigl(p-(L-\mu)\bigr)>0$.  So
$d_\T(v_p,r_{\mu+1})=L-p+1$.

(ii) Let $a,b,\ell$ be as in Lemma~\ref{lem:N02-profiles}(iii).  By
Lemma~\ref{lem:N02-ears}(i), $X^j$ is an ear with ends $v_a$ and $v_b$, and
$r_{\mu+1}$ is not in its interior.  By (i) with $p=a$ and $p=b$,
$d_\T(v_a,r_{\mu+1})=L-a+1$ and $d_\T(v_b,r_{\mu+1})=L-b+1$.  So
Lemma~\ref{lem:N01-eardist} with $y:=r_{\mu+1}$ gives
\[
\begin{aligned}
 d_\T(x^j_i,r_{\mu+1})
 &=\min\bigl(i+d_\T(v_a,r_{\mu+1}),\ (\ell-i)+d_\T(v_b,r_{\mu+1})\bigr)\\
 &=\min\bigl(i+(L-a+1),\ (\ell-i)+(L-b+1)\bigr)\\
 &=1+\min(L-a+i,\ L-b+\ell-i).
\end{aligned}
\]
By Lemma~\ref{lem:N02-profiles}(iii), the last minimum is $\hb(x^j_i)$.
\end{proof}

\begin{lemma}[distances to $v_{L-1}$]\label{lem:N02-to-v}
In $\T=\T(L,\mu;E)$, $d_\T(r_i,v_{L-1})=\mu+3-i$ for $2\le i\le\mu+1$.
\end{lemma}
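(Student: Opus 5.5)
The plan mirrors the proof of Lemma~\ref{lem:N02-to-r}: apply Lemma~\ref{lem:N01-eardist} to the ear $R=(r_0,\dots,r_{\mu+2})$ of Lemma~\ref{lem:N02-ears}(i), in the connected graph $\T$ (Lemma~\ref{lem:N02-ears}(iii)), with $r_i$ in the place of $e_i$ and $y:=v_{L-1}$. This is legitimate because $v_{L-1}$ is a spine vertex and so is not in the interior of $R$, and $2\le i\le\mu+1$ lies in the range $0<i<\mu+2$ required by that lemma. It gives
\[
 d_\T(r_i,v_{L-1})=\min\bigl(i+d_\T(v_{L-\mu},v_{L-1}),\ (\mu+2-i)+d_\T(v_L,v_{L-1})\bigr).
\]

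Next, evaluate the two spine distances by Lemma~\ref{lem:N02-spine}: $d_\T(v_{L-\mu},v_{L-1})=\mu-1$ and $d_\T(v_L,v_{L-1})=1$. The two terms are then $i+\mu-1$ and $\mu+3-i$.

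It remains to compare them. Their difference is $(i+\mu-1)-(\mu+3-i)=2i-4$, which is $\ge0$ because $i\ge2$. So the minimum is $\mu+3-i$, which is the claim; at $i=2$ the two terms are equal.

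No step is a real obstacle. The only points to check are that $L-\mu\le L-1$, which holds as $\mu\ge3$, so the first spine distance is $\mu-1$, and that the hypothesis $i\ge2$ is exactly what makes the comparison come out right. For $i=1$ the formula would fail, since $d_\T(r_1,v_{L-1})=\mu$ rather than $\mu+2$; this explains why the statement excludes the core.
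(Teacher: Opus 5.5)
Your proposal is correct and follows the paper's proof essentially verbatim: Lemma~\ref{lem:N01-eardist} on the ear $R$ with $y:=v_{L-1}$, the spine distances $\mu-1$ and $1$ from Lemma~\ref{lem:N02-spine}, and the comparison $2i-4\ge0$ for $i\ge2$. Your side remarks are also right: the two terms coincide at $i=2$, and at $i=1$ the distance is $\mu$, not $\mu+2$.
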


\begin{proof}
By Lemma~\ref{lem:N02-ears}(i), $R=(r_0,\dots,r_{\mu+2})$ is an ear with
ends $r_0=v_{L-\mu}$ and $r_{\mu+2}=v_L$, and the spine vertex $v_{L-1}$ is
not in its interior; $\T$ is connected (Lemma~\ref{lem:N02-ears}(iii)).
Lemma~\ref{lem:N01-eardist} with $y:=v_{L-1}$, and
Lemma~\ref{lem:N02-spine}, give
\[
 d_\T(r_i,v_{L-1})
 =\min\bigl(i+d_\T(v_{L-\mu},v_{L-1}),\ (\mu+2-i)+d_\T(v_L,v_{L-1})\bigr)
 =\min(i+\mu-1,\ \mu+3-i).
\]
The first term is at least the second: $(i+\mu-1)-(\mu+3-i)=2i-4\ge0$,
as $i\ge2$.  So $d_\T(r_i,v_{L-1})=\mu+3-i$.
\end{proof}

\begin{corollary}[members other than the core]\label{cor:N02-noncore}
In $\T=\T(L,\mu;E)$, with $h,\hb$ as in Definition~\ref{def:N02-diagonal},
let $w\ne r_1$ be a member.  Put $x:=r_{\mu+1}$ if $w$ is a spine or
crossing member, and $x:=v_{L-1}$ if $w$ is a $t$-ear member.  Then $x$
is a common member, $x\ne w$, and
\[
 d_\T(w,x)=\min\bigl(h(w)+h(x),\ \hb(w)+\hb(x)\bigr).
\]
\end{corollary}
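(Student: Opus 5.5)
The plan is to split according to the kind of $w$ and read each distance from Lemmas~\ref{lem:N02-to-r} and~\ref{lem:N02-to-v}. The profiles $h(x)$ and $\hb(x)$ come from Lemma~\ref{lem:N02-profiles}. In each case the proof reduces to checking that the stated minimum equals the computed distance.

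\emph{Common member and $x\ne w$.}  Both $r_{\mu+1}$ and $v_{L-1}$ are common members by Definition~\ref{def:N02-template}, since $L-1\ge1$. When $w$ is a spine or crossing member it is not a $t$-ear member, so $w\ne r_{\mu+1}$. When $w$ is a $t$-ear member it is not a spine member, so $w\ne v_{L-1}$.

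\emph{Spine member $w=v_p$, $1\le p\le L-1$.}  Lemma~\ref{lem:N02-to-r}(i) gives $d_\T(w,x)=L-p+1$. With $h(r_{\mu+1})=L+1$ and $\hb(r_{\mu+1})=1$ from Lemma~\ref{lem:N02-profiles}(ii), the two candidate terms are
\[
 h(w)+h(x)=p+L+1\qquad\text{and}\qquad \hb(w)+\hb(x)=L-p+1 .
\]
The second is the smaller, as $p\ge0$, and it equals $d_\T(w,x)$.

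\emph{Crossing member $w=x^j_i$.}  Lemma~\ref{lem:N02-to-r}(ii) gives $d_\T(w,x)=\hb(w)+1=\hb(w)+\hb(x)$. It remains to check $\hb(w)+1\le h(w)+L+1$, which holds because $\hb(w)\le L+h(w)$ by the triangle inequality through $s$, since $d_\T(s,t)=L$.

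\emph{$t$-ear member $w=r_i$, $2\le i\le\mu+1$.}  Lemma~\ref{lem:N02-to-v} gives $d_\T(w,x)=\mu+3-i$. With $h(v_{L-1})=L-1$ and $\hb(v_{L-1})=1$, the two terms are
\[
 h(w)+h(x)=2L-\mu-1+i\qquad\text{and}\qquad \hb(w)+\hb(x)=\mu+3-i .
\]
The second is the smaller, because the difference is $2(L-\mu)-4+2i>0$ by Lemma~\ref{lem:N02-ends}. So the minimum equals $d_\T(w,x)$.

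No step is a real obstacle. The only care needed is comparing the two terms in each case, and the crossing case is the one that needs the triangle inequality rather than a direct computation.
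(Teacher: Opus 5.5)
Your proof is correct and follows the paper's: the same case split, with the distances read from Lemmas~\ref{lem:N02-to-r} and~\ref{lem:N02-to-v} and the profiles from Lemma~\ref{lem:N02-profiles}. The only difference is in checking that $\hb(w)+\hb(x)$ is the minimum. The paper does this once for all cases: $h(w)+h(x)$ is the length of a walk through $s$, so $\hb(w)+\hb(x)=d_\T(w,x)\le h(w)+h(x)$. You instead compare the two terms separately in each case.
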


\begin{proof}
By Definition~\ref{def:N02-template}, $w$ is a spine member, a crossing
member, or a $t$-ear member $r_i$ with $2\le i\le\mu+1$.  Also $r_{\mu+1}$
is a $t$-ear member and $v_{L-1}$ a spine member, so $x$ is common, and
$x\ne w$.

We first show that $d_\T(w,x)=\hb(w)+\hb(x)$, case by case.
\begin{itemize}
\item $w=v_p$ with $1\le p\le L-1$; then $x=r_{\mu+1}$.  By
  Lemma~\ref{lem:N02-to-r}(i), $d_\T(v_p,r_{\mu+1})=L-p+1$.  By
  Lemma~\ref{lem:N02-profiles}(i),(ii), $\hb(v_p)=L-p$ and
  $\hb(r_{\mu+1})=\mu+2-(\mu+1)=1$.  So
  $d_\T(v_p,r_{\mu+1})=(L-p)+1=\hb(v_p)+\hb(r_{\mu+1})$.
\item $w=x^j_i$, a crossing member; then $x=r_{\mu+1}$.  By
  Lemma~\ref{lem:N02-to-r}(ii) and $\hb(r_{\mu+1})=1$,
  $d_\T(x^j_i,r_{\mu+1})=\hb(x^j_i)+1=\hb(x^j_i)+\hb(r_{\mu+1})$.
\item $w=r_i$ with $2\le i\le\mu+1$; then $x=v_{L-1}$.  By
  Lemma~\ref{lem:N02-to-v}, $d_\T(r_i,v_{L-1})=\mu+3-i$.  By
  Lemma~\ref{lem:N02-profiles}(ii),(i), $\hb(r_i)=\mu+2-i$ and
  $\hb(v_{L-1})=L-(L-1)=1$.  So
  $d_\T(r_i,v_{L-1})=(\mu+2-i)+1=\hb(r_i)+\hb(v_{L-1})$.
\end{itemize}

It remains to see that $\hb(w)+\hb(x)$ is the minimum.  By
Definition~\ref{def:N02-diagonal}, $h(w)=d_\T(s,w)$ and $h(x)=d_\T(s,x)$,
so $h(w)+h(x)$ is the length of a walk from $w$ to $x$ through $s$: a
shortest path from $w$ to $s$ followed by a shortest path from $s$ to $x$.
Hence $d_\T(w,x)\le h(w)+h(x)$, that is, $\hb(w)+\hb(x)\le h(w)+h(x)$.
So
\[
 \min\bigl(h(w)+h(x),\ \hb(w)+\hb(x)\bigr)=\hb(w)+\hb(x)=d_\T(w,x).
\]
\end{proof}


\section{Restricted formulas and the graph \texorpdfstring{$G_\varphi$}{G\_phi}}\label{sec:N03}

This section builds the graph $G_\varphi$ of the reduction from a
restricted formula $\varphi$ (Definition~\ref{def:N03-formula}) with $n$
clauses and $r$ variables.  The formula has $m=n+r+1$ levels: one for
each clause, one for each variable, and a last level, the idle level,
which receives no cell and is there to carry the pair $q,q'$ at its
end.  Each level has three options: at
the level of a clause an option selects a literal of the clause, and at
the level of a variable it sets the variable true or false.  Each option
receives a set of integers, its cells, so that two options at different
levels share a cell exactly when the literal selected by one of them is
falsified by the value set by the other
(Definition~\ref{def:N03-options} and Lemma~\ref{lem:N03-options}).
In $G_\varphi$, each option is a copy of a template (Section~\ref{sec:N02})
attached at two vertices $s_k$ and $t_k$ of its level $k$
(Definition~\ref{def:N03-graph}); its crossings put its diagonal members
at the distances from a fixed vertex $T_1$ given by its cells
(Lemma~\ref{lem:N08-cells}).  A pattern selects one option at each level
and leaves its copy without a landmark (Definition~\ref{def:N03-pattern});
it resolves $G_\varphi$ exactly when no two selected options share a cell
(Lemma~\ref{lem:N08-choice}), so such a pattern exists exactly when
$\varphi$ is satisfiable (Corollary~\ref{cor:N08-sat}).
Sections~\ref{sec:N04}--\ref{sec:N08} prove that
$\md(G_\varphi)=2m+5$ if $\varphi$ is satisfiable and $2m+6$ otherwise
(Theorem~\ref{thm:N08-value}).

\begin{definition}[restricted formulas]\label{def:N03-formula}
Let $\xi_1,\dots,\xi_r$ be variables.  A \emph{literal} is $\xi_g$ or $\neg\xi_g$;
its \emph{variable} is $\xi_g$.  A \emph{clause} is a nonempty sequence
$C=(\lambda_1,\dots,\lambda_k)$ of literals whose variables are pairwise
distinct; we write $|C|:=k$, and $\lambda\in C$ when $\lambda$ is one of
$\lambda_1,\dots,\lambda_k$.  An \emph{assignment} is
a map $A:\{\xi_1,\dots,\xi_r\}\to\{\mathrm{true},\mathrm{false}\}$; it
\emph{satisfies} $\xi_g$ when $A(\xi_g)=\mathrm{true}$, $\neg\xi_g$ when
$A(\xi_g)=\mathrm{false}$, a clause when it satisfies one of its literals,
and a sequence $\varphi=(C_1,\dots,C_n)$ of clauses when it satisfies
every $C_i$; $\varphi$ is \emph{satisfiable} when some assignment
satisfies it.  We write $C_i=(\lambda_{i,1},\dots,\lambda_{i,|C_i|})$, so
$\lambda_{i,c}$ is the $c$-th literal of the $i$-th clause.  The sequence
$\varphi$ is a \emph{restricted formula} when
\begin{enumerate}
\item[\rm(a)] $n\ge1$ and $|C_i|\le3$ for $1\le i\le n$;
\item[\rm(b)] every variable $\xi_g$, $1\le g\le r$, is the variable of a
  literal of some clause;
\item[\rm(c)] every literal belongs to at most two of the clauses
  $C_1,\dots,C_n$.
\end{enumerate}
For a restricted formula we put $m:=n+r+1$.
\end{definition}

\begin{definition}[levels, options, cells]\label{def:N03-options}
Let $\varphi$ be a restricted formula (Definition~\ref{def:N03-formula}).
Its \emph{levels} are the integers $1,\dots,m$: the level $i$ is the
level of the clause $C_i$ ($1\le i\le n$), the level $n+g$ is the
level of the variable $\xi_g$ ($1\le g\le r$), and the level
$m=n+r+1$ is the \emph{idle level}.  Each level has three
\emph{options} $1,2,3$:
\begin{itemize}
\item at the level $i$ of a clause, the option $c$ \emph{selects} the
  literal $\lambda_{i,\min(c,|C_i|)}$, that is, the $c$-th literal of
  $C_i$, or its last literal if $|C_i|<c$;
\item at the level $n+g$ of a variable, the option $c$ sets $\xi_g$ to
  the value $v(c)$, where $v(1):=\mathrm{true}$ and
  $v(2):=v(3):=\mathrm{false}$; the option $c$ \emph{falsifies} a literal
  $\lambda$ with variable $\xi_g$ when $\lambda$ is not satisfied by the
  assignment that gives $\xi_g$ the value $v(c)$
  (Definition~\ref{def:N03-formula}), that is, when $\lambda=\neg\xi_g$
  and $c=1$, or $\lambda=\xi_g$ and $c\in\{2,3\}$;
\item at the idle level, an option selects nothing and falsifies
  nothing.
\end{itemize}
An \emph{occurrence} of $\varphi$ is a pair $(i,\lambda)$ with
$1\le i\le n$ and $\lambda\in C_i$, that is, a place where a literal is
written in $\varphi$; $\Omega$ denotes the set of occurrences.  The
\emph{cell} of the occurrence $(i,\lambda_{i,c})$ is
\[
 \cell(i,\lambda_{i,c}):=2m+5+|C_1|+\dots+|C_{i-1}|+c;
\]
so, reading the clauses from $C_1$ to $C_n$, the occurrences receive the
cells $2m+6,2m+7,\dots,2m+5+|\Omega|$.  The \emph{option set} $Z_{k,c}$
of the option $c$ at the level $k$ is the following set of cells: at the
level of a clause, the cell of the selected occurrence; at the level of a
variable, the cells of the occurrences of the literal that the option
falsifies; at the idle level, no cell.  That is,
\[
\begin{aligned}
 Z_{i,c}&:=\{\cell(i,\lambda_{i,\min(c,|C_i|)})\} &&(1\le i\le n),\\
 Z_{n+g,1}&:=\{\cell(i,\neg\xi_g):(i,\neg\xi_g)\in\Omega\} &&(1\le g\le r),\\
 Z_{n+g,2}&:=Z_{n+g,3}:=\{\cell(i,\xi_g):(i,\xi_g)\in\Omega\} &&(1\le g\le r),\\
 Z_{m,c}&:=\emptyset &&(c\in\{1,2,3\}).
\end{aligned}
\]
For example, $\varphi=((\xi_1,\xi_2),(\neg\xi_1))$ has $m=5$, the
occurrences $(1,\xi_1)$, $(1,\xi_2)$, $(2,\neg\xi_1)$ with the cells $16$,
$17$, $18$, and the option sets $Z_{1,1}=\{16\}$,
$Z_{1,2}=Z_{1,3}=\{17\}$; $Z_{2,c}=\{18\}$; $Z_{3,1}=\{18\}$,
$Z_{3,2}=Z_{3,3}=\{16\}$; $Z_{4,1}=\emptyset$, $Z_{4,2}=Z_{4,3}=\{17\}$;
$Z_{5,c}=\emptyset$
(Figure~\ref{fig:N08-cells}).
\end{definition}

A variable option thus lists the occurrences that it falsifies: an option
of a clause and an option of a variable share a cell exactly when the
value set by the second falsifies the literal selected by the first
(Lemma~\ref{lem:N03-options}(iv)).

\begin{lemma}[option sets]\label{lem:N03-options}
Let $\varphi$ be a restricted formula, with the notation of
Definition~\ref{def:N03-options}.
\begin{enumerate}
\item[\rm(i)] Distinct occurrences have distinct cells, and every cell
  lies between $2m+6$ and $2m+5+|\Omega|\le5m+5$.
\item[\rm(ii)] $|Z_{k,c}|\le2$ for every level $k$ and option $c$, and
  $|Z_{k,c}|=1$ if $k$ is the level of a clause.
\item[\rm(iii)] Let $(i,\lambda)$ be an occurrence, and $\xi_g$ the
  variable of $\lambda$.  The cell of $(i,\lambda)$ lies in exactly the
  following option sets:
  \begin{itemize}
  \item $Z_{i,c}$, for each option $c$ of the level $i$ (of the clause
    $C_i$) that selects $\lambda$;
  \item if $\lambda=\xi_g$: $Z_{n+g,2}$ and $Z_{n+g,3}$, the sets of the
    options of the level $n+g$ that falsify $\lambda$;
  \item if $\lambda=\neg\xi_g$: $Z_{n+g,1}$, the set of the option of the
    level $n+g$ that falsifies $\lambda$.
  \end{itemize}
\item[\rm(iv)] Let $k<k'$ be levels and $c,c'$ options.  Then
  $Z_{k,c}\cap Z_{k',c'}\ne\emptyset$ iff $k$ is the level $i$ of a clause,
  $k'$ is the level of the variable of the literal $\lambda$ that the
  option $c$ at the level $i$ selects, and the option $c'$ at the level
  $k'$ falsifies $\lambda$.
\end{enumerate}
\end{lemma}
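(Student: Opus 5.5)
The four parts are direct bookkeeping from Definition~\ref{def:N03-options}. I would prove them in order, using (i) for (iii) and (iii) for (iv).

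\emph{Part (i).} The map $(i,\lambda_{i,c})\mapsto|C_1|+\dots+|C_{i-1}|+c$ is the position of the occurrence when the clauses are read from $C_1$ to $C_n$. It is a bijection onto $\{1,\dots,|\Omega|\}$, because $1\le c\le|C_i|$ and the blocks for consecutive $i$ are consecutive intervals. Two occurrences with the same position are therefore equal. The one subtlety is that $\lambda_{i,c}$ determines $c$, since the literals of a clause have pairwise distinct variables (Definition~\ref{def:N03-formula}). For the range, I add $2m+5$ to the position. For the bound, $|\Omega|=\sum_i|C_i|\le3n$, and $3n\le 3m$ because $m=n+r+1\ge n$, so $2m+5+|\Omega|\le5m+5$.

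\emph{Part (ii).} At a clause level, $Z_{i,c}$ is a single cell. At the idle level, $Z_{m,c}$ is empty. At a variable level, $Z_{n+g,c}$ is the set of cells of the occurrences $(i,\lambda)$ of a fixed literal $\lambda$. These occurrences are distinct for distinct $i$, since a clause has no repeated literal. Condition~(c) allows at most two of them, and by (i) their cells are distinct, so $|Z_{n+g,c}|\le2$.

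\emph{Part (iii).} I fix the cell $z$ of $(i,\lambda)$ and use the injectivity from (i). The cell $z$ lies in $Z_{k,c}$ iff $Z_{k,c}$ is defined from the occurrence $(i,\lambda)$ itself. For a clause level $k=i'$, this means $i'=i$ and the option selects $\lambda$. For a variable level $k=n+g'$, it means that $\lambda$ is the literal that option $c$ falsifies; that literal has variable $\xi_{g'}$, so $g'=g$, and I then split on the sign of $\lambda$. At the idle level the option sets are empty, so $z$ lies in none of them.

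\emph{Part (iv).} If $Z_{k,c}\cap Z_{k',c'}$ contains a cell $z$, then by (i) $z$ belongs to a unique occurrence $(i,\lambda)$, and by (iii) both $(k,c)$ and $(k',c')$ are among the listed options for that occurrence. These listed options sit at only two levels, namely $i\le n$ and $n+g>n$. Since $k<k'$, this forces $k=i$ with $c$ selecting $\lambda$, and $k'=n+g$ with $c'$ falsifying $\lambda$. Conversely, under those conditions the cell of $(i,\lambda)$ lies in both sets by (iii).

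I expect no real obstacle. The only care needed is the well-definedness in (i), that $\lambda_{i,c}$ determines $c$, and using $k<k'$ in (iv) to exclude the reversed roles.
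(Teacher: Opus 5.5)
Your proof is correct and follows the paper's argument essentially step for step. Part (i) reads the cells as positions in the clause sequence, and (ii) comes from condition (c). Part (iii) uses injectivity of the cells and reads off the definition of the option sets. Part (iv) locates the unique occurrence carrying a common cell and then uses $i\le n<n+g$ together with $k<k'$. Your explicit remark that $\lambda_{i,c}$ determines $c$, so that the cell is well defined, is a point the paper leaves implicit.
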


\begin{proof}
(i) By Definition~\ref{def:N03-options}, the occurrences receive the cells
$2m+6,\dots,2m+5+|\Omega|$, one each.  Every clause has at most three
literals (Definition~\ref{def:N03-formula}(a)), so
$|\Omega|\le3n\le3m$ and $2m+5+|\Omega|\le5m+5$.

(ii) At the level $i$ of a clause, $Z_{i,c}$ is one cell.  At the level
$n+g$ of a variable, $Z_{n+g,c}$ has one cell for each clause that contains
the literal falsified by the option $c$; there are at most two such
clauses by Definition~\ref{def:N03-formula}(c).

(iii) By (i), the cell of $(i,\lambda)$ is the cell of no other
occurrence.  We read off Definition~\ref{def:N03-options} which option
sets contain it.

The level $i'$ of a clause has the option sets
$Z_{i',c}=\{\cell(i',\lambda_{i',\min(c,|C_{i'}|)})\}$.  So the cell of
$(i,\lambda)$ lies in $Z_{i',c}$ iff $i'=i$ and the option $c$ selects
$\lambda$: the first item.

The level $n+g'$ of a variable has the option sets $Z_{n+g',1}$, made of
the cells of the occurrences of $\neg\xi_{g'}$, and
$Z_{n+g',2}=Z_{n+g',3}$, made of the cells of the occurrences of
$\xi_{g'}$.  So the cell of $(i,\lambda)$ lies in $Z_{n+g',1}$ iff
$\lambda=\neg\xi_{g'}$, and in $Z_{n+g',2}$ and $Z_{n+g',3}$ iff
$\lambda=\xi_{g'}$.  In both cases $g'=g$, as $\xi_g$ is the variable of
$\lambda$: these are the second and third items.  The idle level has
the option sets $Z_{m,c}=\emptyset$, which contain no cell.

(iv) Suppose first that $Z_{k,c}\cap Z_{k',c'}\ne\emptyset$.  The common
cell lies in an option set, so it is the cell of an occurrence
$(i,\lambda)$ (Definition~\ref{def:N03-options}); by (i), this occurrence
is unique.  Let $\xi_g$ be
the variable of $\lambda$.  By (iii), applied to $k$ and $c$, either $k=i$
and the option $c$ selects $\lambda$, or $k=n+g$ and the option $c$
falsifies $\lambda$ (the second and third items of (iii)).  Likewise for
$k'$, taking $c'$ in place of $c$.  As
$i\le n<n+g$ and $k\ne k'$, one of $k,k'$ is $i$ and the other is $n+g$;
as $k<k'$, $k=i$ and $k'=n+g$.  So $k$ is the level $i$ of a clause, the
option $c$ selects $\lambda$, $k'$ is the level of the variable of
$\lambda$, and the option $c'$ falsifies $\lambda$.

Conversely, suppose that the option $c$ at the level $i$ selects
$\lambda$, that $k'=n+g$ is the level of the variable $\xi_g$ of $\lambda$,
and that the option $c'$ at the level $k'$ falsifies $\lambda$.  Then
$\lambda\in C_i$, so $(i,\lambda)$ is an occurrence, and by (iii) its cell
lies in $Z_{i,c}$ and in $Z_{k',c'}$.
\end{proof}

\begin{definition}[the numbers of the construction]\label{def:N03-params}
Let $\varphi$ be a restricted formula, with the levels and option sets of
Definition~\ref{def:N03-options}.  Put
\[
 D:=4m+15,\qquad p:=2m+4,\qquad \alpha:=2m+5,
\]
and for each level $k$
\[
 L_k:=4m+8-2k,\qquad \mu_k:=2m+4-2k,\qquad \sigma_k:=2k,\qquad
 \beta_k:=2m+5-2k .
\]
For a level $k$ and an option $c$, let $\zeta_0<\dots<\zeta_{e-1}$ be the
cells of the option set $Z_{k,c}$, so that $e=|Z_{k,c}|\le2$
(Lemma~\ref{lem:N03-options}(ii)), and put
\[
 E_{k,c}:=\bigl(2(\zeta_j-k-\mu_k)\bigr)_{j<e}.
\]
\end{definition}

The copies of the level $k$ in $G_\varphi$ are the templates
$\T(L_k,\mu_k;E_{k,c})$, $c\in\{1,2,3\}$
(Definition~\ref{def:N03-graph}).  The numbers $\alpha$ and $\beta_k$ are
the profile of the core of such a copy (Lemmas~\ref{lem:N02-core}
and~\ref{lem:N03-params}(i)).  The list $E_{k,c}$ gives the excesses of
the crossings of the copy of the option $c$ at the level $k$; it puts their
diagonal members at the distances $\zeta\in Z_{k,c}$ from $T_1$
(Lemma~\ref{lem:N08-cells}).  $D$ is the length of the four arms, and $p$
and $7$ are the positions on the arms of the rows $w_j$
(Definition~\ref{def:N03-graph}).

\begin{lemma}[the numbers]\label{lem:N03-params}
Let $\varphi$ be a restricted formula, with the notation of
Definitions~\ref{def:N03-options} and~\ref{def:N03-params}, and
$1\le k\le m$.
\begin{enumerate}
\item[\rm(i)] $L_k=2\mu_k+\sigma_k$, $\alpha=L_k-\mu_k+1$,
  $\beta_k=\mu_k+1$, $\alpha-\beta_k=\sigma_k$, $L_k=D-7-2k$ and
  $D-p=p+7$.
\item[\rm(ii)] For $c\in\{1,2,3\}$, the triple $(L_k,\mu_k,E_{k,c})$
  satisfies the hypotheses of Definition~\ref{def:N02-template}, with
  $\sigma=L_k-2\mu_k=\sigma_k$ and $e=|Z_{k,c}|$: $e\le2$; the entries of
  $E_{k,c}$ are even and at least $2$, and $\varepsilon_0<\varepsilon_1$ if
  $e=2$; $\mu_k\ge3$, $\sigma_k\ge2$, and $\sigma_k\ge4$ if $e=2$.
\item[\rm(iii)] $m\ge3$.
\end{enumerate}
\end{lemma}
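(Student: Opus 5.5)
The statement is a list of arithmetic facts about the numbers of Definition~\ref{def:N03-params}, so the plan is to verify each item directly from the definitions, using Lemma~\ref{lem:N03-options} for the facts about cells. I would prove (iii) first, since (ii) needs it.

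\emph{Item (iii).} A restricted formula has $n\ge1$ by Definition~\ref{def:N03-formula}(a). Its first clause is nonempty, so it has a literal, whose variable is some $\xi_g$. Hence $r\ge1$, and $m=n+r+1\ge3$.

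\emph{Item (i).} Each identity is a one-line substitution:
\[
 2\mu_k+\sigma_k=4m+8-4k+2k=L_k,\qquad
 L_k-\mu_k+1=2m+5=\alpha,\qquad
 \mu_k+1=\beta_k,
\]
\[
 \alpha-\beta_k=2k=\sigma_k,\qquad
 D-7-2k=4m+8-2k=L_k,\qquad
 D-p=2m+11=p+7.
\]

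\emph{Item (ii).} The identity $L_k-2\mu_k=\sigma_k$ holds by (i).

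\begin{itemize}
\item $e=|Z_{k,c}|\le2$ by Lemma~\ref{lem:N03-options}(ii).
\item Each entry $2(\zeta_j-k-\mu_k)$ is even.
\item The entries are strictly increasing because $\zeta_0<\zeta_1$, so $\varepsilon_0<\varepsilon_1$ when $e=2$.
\item For the bound $\varepsilon_j\ge2$, we need $\zeta_j\ge k+\mu_k+1=2m+5-k$. Every cell is at least $2m+6$ by Lemma~\ref{lem:N03-options}(i), and $2m+6>2m+5-k$ since $k\ge1$.
\item $\sigma_k=2k\ge2$.
\item If $e=2$, then $k$ is not a clause level, by Lemma~\ref{lem:N03-options}(ii), and not the idle level, where $Z_{m,c}=\emptyset$. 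So $k=n+g$ with $n,g\ge1$, hence $k\ge2$ and $\sigma_k\ge4$.
\end{itemize}

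The main obstacle is $\mu_k\ge3$. We have $\mu_k=2m+4-2k\ge4$ for every $k\le m$, with value $4$ at the idle level $k=m$, so this holds with room to spare. The only real care needed in the whole proof is the case analysis showing that $e=2$ occurs only at variable levels, which gives $k\ge2$.
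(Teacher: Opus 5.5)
Your proof is correct and follows the paper's proof essentially step for step. Item (i) is direct substitution. Item (ii) uses Lemma~\ref{lem:N03-options}(i),(ii); your bound $\zeta_j\ge2m+6>2m+5-k$ is a weaker form of the paper's $\varepsilon_j\ge6$. Item (iii) follows from $n\ge1$ and a nonempty first clause. One cosmetic point: (ii) does not actually depend on (iii), since $\mu_k=2m+4-2k\ge4$ needs only $k\le m$.
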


\begin{proof}
(i) By Definition~\ref{def:N03-params}:
\[
\begin{aligned}
 2\mu_k+\sigma_k&=2(2m+4-2k)+2k=4m+8-2k=L_k,\\
 L_k-\mu_k+1&=(4m+8-2k)-(2m+4-2k)+1=2m+5=\alpha,\\
 \mu_k+1&=2m+5-2k=\beta_k,\\
 \alpha-\beta_k&=(2m+5)-(2m+5-2k)=2k=\sigma_k,\\
 D-7-2k&=(4m+15)-7-2k=4m+8-2k=L_k,\\
 D-p&=(4m+15)-(2m+4)=2m+11=p+7.
\end{aligned}
\]

(ii) By (i), $L_k-2\mu_k=\sigma_k$, so $\sigma_k$ is the number $\sigma$ of
Definition~\ref{def:N02-template} for this template.
\begin{itemize}
\item $e=|Z_{k,c}|\le2$ by Lemma~\ref{lem:N03-options}(ii).
\item Write $\varepsilon_j:=2(\zeta_j-k-\mu_k)$ for the entries of
  $E_{k,c}$; each is even.  The cells of Definition~\ref{def:N03-options}
  start at $2m+6$, so $\zeta_j\ge2m+6$ (Lemma~\ref{lem:N03-options}(i)).
  Hence $\zeta_j-k-\mu_k=\zeta_j-2m-4+k\ge2+k\ge3$, and
  $\varepsilon_j\ge6\ge2$, the bound required by
  Definition~\ref{def:N02-template}.  If $e=2$, then $\zeta_0<\zeta_1$ gives
  $\varepsilon_0<\varepsilon_1$.
\item $\mu_k=2m+4-2k\ge4\ge3$, as $k\le m$.
\item $\sigma_k=2k\ge2$.  If $e=2$, then $k$ is not the level of a clause,
  by Lemma~\ref{lem:N03-options}(ii), so $k\ge n+1\ge2$ by
  Definition~\ref{def:N03-formula}(a), and $\sigma_k=2k\ge4$.
\end{itemize}

(iii) By Definition~\ref{def:N03-formula}(a), $n\ge1$, so $\varphi$ has a
clause $C_1$.  It is nonempty, and the variable of its first literal is one
of $\xi_1,\dots,\xi_r$, so $r\ge1$.  By Definition~\ref{def:N03-formula},
$m=n+r+1$: one level per clause, one per variable, and the idle level
(Definition~\ref{def:N03-options}).  Hence $m\ge1+1+1=3$.
\end{proof}

\begin{definition}[the graph $G_\varphi$]\label{def:N03-graph}
Let $\varphi$ be a restricted formula, with the notation of
Definitions~\ref{def:N03-options} and~\ref{def:N03-params}.  The graph
$G_\varphi$ (Figure~\ref{fig:N03-shape}) is the union of the following
parts.  Paths
are given by the lists of their vertices in order; their edges join
consecutive vertices of the list.
\begin{itemize}
\item The \emph{frame}: the path $T_1,s_1,s_2,\dots,s_m$ and the path
  $t_m,t_{m-1},\dots,t_1,z_1,z_2,\dots,z_7,T_2$.  The vertices
  $z_1,\dots,z_7$ form the \emph{tail}.
\item The \emph{copies}: for each level $k$ and option $c\in\{1,2,3\}$,
  the \emph{copy of the option $c$ at the level $k$}, a copy $U_{k,c}$ of
  the template $\T(L_k,\mu_k;E_{k,c})$ (Definition~\ref{def:N02-template};
  its hypotheses hold by Lemma~\ref{lem:N03-params}(ii)) whose terminals
  $s$ and $t$ are $s_k$ and $t_k$.  Its vertices are $v^{k,c}_i$
  ($0\le i\le L_k$), $r^{k,c}_i$ ($1\le i\le\mu_k+1$) and $x^{k,c,j}_i$
  ($j<|Z_{k,c}|$, $1\le i\le\ell^{k,c}_j-1$, where
  $\ell^{k,c}_j:=2j+2+\varepsilon$ is the length of the crossing $j$ of the
  template, $\varepsilon$ being the $j$-th entry of $E_{k,c}$
  (Definition~\ref{def:N02-template})), named after the vertices $v_i$,
  $r_i$, $x^j_i$ of the template, with $v^{k,c}_0=s_k$ and $v^{k,c}_{L_k}=t_k$; its edges are
  those of the template, read through these names.
\item The \emph{pair} $q,q'$: the vertices $q$ and $q'$, with the four
  edges $t_mq$, $t_{m-1}q$, $t_mq'$ and $t_{m-1}q'$ (the vertex $t_{m-1}$
  exists, as $m\ge3$ by Lemma~\ref{lem:N03-params}(iii)).
\item The \emph{arms}: for $j\in\{1,2,3,4\}$, the path
  $y^j_0,y^j_1,\dots,y^j_D$, of length $D$, where $y^j_0:=T_1$ and
  $y^j_D:=T_2$, so that it runs from $T_1$ to $T_2$; and a vertex
  $\tilde y^j$ with the edges $\tilde y^jy^j_{\pi_j-1}$ and
  $\tilde y^jy^j_{\pi_j+1}$, where $\pi_1:=\pi_2:=p$ and
  $\pi_3:=\pi_4:=7$.
\end{itemize}
The vertices $T_1$, $T_2$, $s_k$, $t_k$ ($1\le k\le m$), $z_1,\dots,z_7$,
$q$, $q'$, the vertices of the copies other than their terminals, and
$y^j_i$ ($1\le i\le D-1$), $\tilde y^j$ ($1\le j\le4$) are pairwise
distinct.  So a copy $U_{k,c}$ meets the rest of $G_\varphi$ only at $s_k$
and $t_k$, an arm only at $T_1$ and $T_2$, and $q,q'$ are adjacent
to $t_m$, to $t_{m-1}$ and to nothing else.  The three copies of the idle
level $m$ are copies of $\T(L_m,\mu_m;())$, the template without
crossing, as $Z_{m,c}=\emptyset$ gives $E_{m,c}=()$
(Definitions~\ref{def:N03-options} and~\ref{def:N03-params}).
\end{definition}

\begin{figure}[!htbp]
\centering
\begin{subfigure}{\textwidth}
\centering
\begin{tikzpicture}[font=\scriptsize,x=0.76cm,y=0.8cm]
\coordinate (T1) at (0,0); \coordinate (T2) at (18.2,0);
\coordinate (s1) at (1,0); \coordinate (sk) at (3,0); \coordinate (sk1) at (4,0); \coordinate (sm) at (6,0);
\coordinate (tm) at (9,0); \coordinate (tk1) at (11,0); \coordinate (tk) at (12,0); \coordinate (t1) at (14,0);
\foreach \i in {1,...,7} \coordinate (z\i) at ({14.5+0.5*\i},0);
\draw (T1) -- (s1); \draw[dotted,thick] (s1) -- (sk); \draw (sk) -- (sk1); \draw[dotted,thick] (sk1) -- (sm);
\draw[dotted,thick] (tm) -- (tk1); \draw (tk1) -- (tk); \draw[dotted,thick] (tk) -- (t1);
\draw (t1) -- (z1) -- (z2) -- (z3) -- (z4) -- (z5) -- (z6) -- (z7) -- (T2);
\foreach \a/\b/\H in {s1/t1/4.2,s1/t1/3.9,s1/t1/3.6, sk/tk/2.9,sk/tk/2.6,sk/tk/2.3, sk1/tk1/1.85,sk1/tk1/1.6,sk1/tk1/1.35, sm/tm/0.95,sm/tm/0.75,sm/tm/0.55}
  \draw (\a) .. controls ($(\a)+(0,\H)$) and ($(\b)+(0,\H)$) .. (\b);
\node[fill=white,inner sep=1pt] at (7.5,3.2) {level $1$};
\node[fill=white,inner sep=1pt] at (7.5,2.2) {level $k$: $U_{k,1},U_{k,2},U_{k,3}$};
\node[fill=white,inner sep=1pt] at (7.5,1.4) {level $k+1$};
\node[fill=white,inner sep=1pt] at (7.5,0.75) {level $m$};
\coordinate (tm1) at (10,0);
\coordinate (q) at (9.5,-0.55); \coordinate (qq) at (9.5,-1.05);
\draw (tm) -- (q) -- (tm1); \draw (tm) -- (qq) -- (tm1);
\node[vx,label=left:$q$] at (q) {}; \node[vx,label=right:$q'$] at (qq) {};
\foreach \j/\D in {1/1.5,2/2.0,3/2.5,4/3.0}
  \draw (T1) .. controls (0,-\D) and (18.2,-\D) .. (T2) node[pos=0.5,below=0pt,fill=white,inner sep=1pt] {$Y^{\j}$};
\foreach \v in {T1,T2,s1,sk,sk1,sm,tm,tm1,tk1,tk,t1} \node[vx] at (\v) {};
\foreach \i in {1,...,7} \node[vx,inner sep=0.8pt] at (z\i) {};
\node[left] at (T1) {$T_1$}; \node[right] at (T2) {$T_2$};
\node[below=1pt] at (s1) {$s_1$}; \node[below=1pt] at (sk) {$s_k$}; \node[below=1pt] at (sk1) {$s_{k+1}$}; \node[below=1pt] at (sm) {$s_m$};
\node[above right=0pt] at (tm) {$t_m$}; \node[above=1pt] at (tm1) {$t_{m-1}$}; \node[below=1pt] at (tk1) {$t_{k+1}$}; \node[below=1pt] at (tk) {$t_k$}; \node[below=1pt] at (t1) {$t_1$};
\node[above=1pt] at (z1) {$z_1$}; \node[above=1pt] at (z7) {$z_7$};
\end{tikzpicture}
\caption{The whole graph.}
\end{subfigure}

\bigskip
\begin{subfigure}{\textwidth}
\centering
\begin{tikzpicture}[font=\scriptsize,x=1cm,y=1cm]
\coordinate (s) at (0,0); \coordinate (t) at (10,0);
\coordinate (a1) at (2,0); \coordinate (mu) at (3,0); \coordinate (b1) at (4,0); \coordinate (Lm) at (7,0);
\draw[dotted,thick] (s) -- (a1); \draw (a1) -- (b1); \draw[dotted,thick] (b1) -- (Lm); \draw[dotted,thick] (Lm) -- (t);
\draw (a1) .. controls (2,1.2) and (4,1.2) .. (b1);
\foreach \k/\a in {1/215,2/240,3/300,4/325} \coordinate (r\k) at ($(8.5,0)+(\a:1.5 and 1.1)$);
\draw (Lm) -- (r1); \draw[dotted,thick] (r1) .. controls (7.9,-1.2) and (9.1,-1.2) .. (r4); \draw (r4) -- (t);
\node[lm] at (r1) {}; \node[left=1pt] at (r1) {$\rho_{k,1}=r^{k,1}_1$};
\node[vx] at (a1) {}; \node[vx] at (b1) {}; \node[vx] at (mu) {}; \node[vx] at (Lm) {}; \node[vx] at (r4) {};
\node[below=1pt] at (mu) {$v^{k,1}_{\mu_k}$};
\node[above=1pt] at (Lm) {$v^{k,1}_{L_k-\mu_k}$};
\node at (3,1.25) {crossings $E_{k,1}$};
\node[below=7pt] at (8.5,-1.1) {$t$-ear};
\foreach \c/\H in {2/2.2,3/3.1} {
  \draw (s) .. controls (0,\H) and (10,\H) .. (t) node[pos=0.5,above=0pt,fill=white,inner sep=1pt] {$U_{k,\c}$, a copy of $\T(L_k,\mu_k;E_{k,\c})$};
}
\path (s) .. controls (0,2.2) and (10,2.2) .. (t) node[lm,pos=0.8] (c2) {};
\node[below left=0pt] at (c2) {$\rho_{k,2}$};
\path (s) .. controls (0,3.1) and (10,3.1) .. (t) node[lm,pos=0.84] (c3) {};
\node[above right=0pt] at (c3) {$\rho_{k,3}$};
\node[vx] at (s) {}; \node[vx] at (t) {};
\node[left] at (s) {$s_k=v^{k,c}_0$}; \node[right] at (t) {$t_k=v^{k,c}_{L_k}$};
\node[below=3pt] at (1,0) {$U_{k,1}$};
\end{tikzpicture}
\caption{One level $k$.}
\end{subfigure}

\bigskip
\begin{subfigure}{0.62\textwidth}
\centering
\begin{tikzpicture}[font=\scriptsize,x=0.82cm,y=1cm]
\coordinate (T1) at (0,0); \coordinate (T2) at (8,0);
\coordinate (ym) at (2.1,0); \coordinate (w) at (3,0); \coordinate (yp) at (3.9,0); \coordinate (tw) at (3,-0.8);
\draw (T1) -- (0.6,0); \draw[dotted,thick] (0.6,0) -- (ym); \draw (ym) -- (w) -- (yp); \draw[dotted,thick] (yp) -- (7.4,0); \draw (7.4,0) -- (T2);
\draw (ym) -- (tw) -- (yp);
\foreach \v in {T1,T2,ym,yp} \node[vx] at (\v) {};
\node[lm] at (w) {}; \node[hv] at (tw) {};
\node[left] at (T1) {$T_1=y^j_0$}; \node[right] at (T2) {$y^j_D=T_2$};
\node[above left=0pt] at (ym) {$y^j_{\pi_j-1}$}; \node[above=2pt] at (w) {$w_j=y^j_{\pi_j}$}; \node[above right=0pt] at (yp) {$y^j_{\pi_j+1}$};
\node[below=1pt] at (tw) {$\tilde y^j$};
\draw[|<->|] (0,-1.5) -- (3,-1.5) node[midway,fill=white,inner sep=1pt] {$\pi_j$ edges};
\draw[|<->|] (3,-1.5) -- (8,-1.5) node[midway,fill=white,inner sep=1pt] {$D-\pi_j$ edges};
\end{tikzpicture}
\caption{An arm $Y^j$.}
\end{subfigure}%
\begin{subfigure}{0.36\textwidth}
\centering
\begin{tikzpicture}[font=\scriptsize,x=1cm,y=1cm]
\coordinate (tm) at (0,0); \coordinate (tm1) at (1.6,0); \coordinate (q) at (0.8,-0.7); \coordinate (qq) at (0.8,0.7);
\draw[dotted,thick] (-1.2,0) -- (tm); \draw[dotted,thick] (tm1) -- (2.8,0); \draw (tm) -- (tm1);
\draw (tm) -- (q) -- (tm1); \draw (tm) -- (qq) -- (tm1);
\node[vx] at (tm) {}; \node[vx] at (tm1) {}; \node[vx] at (q) {}; \node[vx] at (qq) {};
\node[below=1pt] at (tm) {$t_m$}; \node[below=1pt] at (tm1) {$t_{m-1}$}; \node[below=1pt] at (q) {$q'$}; \node[above=1pt] at (qq) {$q$};
\end{tikzpicture}
\caption{The pair $q,q'$.}
\end{subfigure}
\caption{The general shape of $G_\varphi$ (Definition~\ref{def:N03-graph}), not
to scale; dotted lines are paths whose inner vertices are not drawn.
(a)~The frame is the path $T_1,s_1,\dots,s_m$ and the path
$t_m,\dots,t_1,z_1,\dots,z_7,T_2$.  The three copies of each level $k$
join $s_k$ to $t_k$, and the levels are nested.  Each curve stands for a
whole copy, a template from $s_k$ to $t_k$ (panel (b)), not for an edge.  Below the axis are
the pair $q,q'$ across the edge $t_mt_{m-1}$ and the four arms from $T_1$ to $T_2$.  (b)~The copies
$U_{k,1},U_{k,2},U_{k,3}$ of the options $1,2,3$ at the level $k$ share
only the terminals $s_k$ and $t_k$.  Each is a template
$\T(L_k,\mu_k;E_{k,c})$ with its own spine, $t$-ear and crossings, drawn
in full for $c=1$ and as a single curve for $c=2,3$.  The copies of the
idle level $m$ have no crossing.  The common members
of one copy have counterparts, with the same names and profiles, in the
other two (Definition~\ref{def:N03-names}).  The black squares are the
rows $\rho_{k,c}$, the cores of the copies, each on the $t$-ear of its own
copy.  (c)~An arm, a path of length $D$ from $T_1$ to
$T_2$, with the twin $\tilde y^j$ of the row $w_j=y^j_{\pi_j}$; here
$\pi_j=p$ for $j=1,2$ and $\pi_j=7$ for $j=3,4$.  (d)~The pair $q,q'$, each adjacent to $t_m$ and to $t_{m-1}$
and to nothing else; together with the edge $t_mt_{m-1}$ they form three
parallel paths from $t_m$ to $t_{m-1}$.}
\label{fig:N03-shape}
\end{figure}

\begin{definition}[copies in $G_\varphi$]\label{def:N03-names}
Let $\varphi$ be a restricted formula, $G_\varphi$ the graph of
Definition~\ref{def:N03-graph}, $k$ a level and $c$ an option.
\begin{itemize}
\item $U_{k,c}$ also denotes the subgraph of $G_\varphi$ formed by the
  vertices and the edges of the copy of the option $c$ at the level $k$.
  Its \emph{terminals} are $s_k$ and $t_k$, and its \emph{members} are its
  other vertices, that is, the vertices of $\Int(U_{k,c})$
  (Definition~\ref{def:N01-attach}).
\item The \emph{spine members} of $U_{k,c}$ are the $v^{k,c}_i$
  ($1\le i\le L_k-1$), its \emph{$t$-ear members} the $r^{k,c}_i$
  ($1\le i\le\mu_k+1$), and its \emph{crossing members} the
  $x^{k,c,j}_i$ ($j<|Z_{k,c}|$, $1\le i\le\ell^{k,c}_j-1$); spine and $t$-ear
  members are \emph{common}.  Its \emph{core} is $r^{k,c}_1$.
\item For a vertex $u$ of $U_{k,c}$, the profile and the skew of $u$ in
  $U_{k,c}$, with the terminals $s_k$ and $t_k$, are
  (Definition~\ref{def:N01-profile})
  \[
   h_{U_{k,c}}(u)=d_{U_{k,c}}(s_k,u),\qquad
   \hb_{U_{k,c}}(u)=d_{U_{k,c}}(t_k,u),\qquad
   \kappa_{U_{k,c}}(u)=\hb_{U_{k,c}}(u)-h_{U_{k,c}}(u);
  \]
  $u$ is \emph{diagonal} when $\kappa_{U_{k,c}}(u)=\sigma_k$.
\item For a common member $u$ of $U_{k,c}$ and an option $c'$, the vertex
  of $U_{k,c'}$ with the same indices ($v^{k,c'}_i$ for $u=v^{k,c}_i$,
  $r^{k,c'}_i$ for $u=r^{k,c}_i$) is the \emph{counterpart} of $u$ in
  $U_{k,c'}$; for $c'\ne c$ it is a different vertex
  (Definition~\ref{def:N03-graph}).
\end{itemize}
\end{definition}

\begin{definition}[arms, rows, skeleton]\label{def:N03-rows}
Let $\varphi$ be a restricted formula.
\begin{itemize}
\item For $1\le j\le4$, the \emph{arm} $Y^j$ is the subgraph of
  $G_\varphi$ formed by the vertices $y^j_0,\dots,y^j_D$, $\tilde y^j$ and
  the edges of the arm $j$ in Definition~\ref{def:N03-graph}, that is,
  $y^j_iy^j_{i+1}$ ($0\le i<D$), $\tilde y^jy^j_{\pi_j-1}$ and
  $\tilde y^jy^j_{\pi_j+1}$.
\item The \emph{rows} are the cores of the copies and the vertices of the
  arms at the positions $\pi_j$:
  \[
   \rho_{k,c}:=r^{k,c}_1\quad(1\le k\le m,\ c\in\{1,2,3\}),\qquad
   w_j:=y^j_{\pi_j}\quad(1\le j\le4).
  \]
\item The \emph{skeleton} is
  \[
   S:=\{T_1,T_2\}\cup\{s_k,t_k:1\le k\le m\}\cup\{z_1,\dots,z_7\}\cup\{q'\}.
  \]
\end{itemize}
\end{definition}

The skeleton consists of the vertices of $G_\varphi$ that are neither
members of a copy, nor in the interior of an arm, nor $q$: by
Lemma~\ref{lem:N03-structure}(i), $V(G_\varphi)$ is the disjoint union of
$S$, $\{q\}$, the interiors of the copies and the interiors of the arms.
The vertex $q$ is left out because it belongs to every pattern
(Definition~\ref{def:N03-pattern}).

\begin{lemma}[copies are templates]\label{lem:N03-copies}
Let $\varphi$ be a restricted formula, $k$ a level and $c$ an option, and
write $\T:=\T(L_k,\mu_k;E_{k,c})$.  Let $\psi$ map the vertices of $\T$ to
those of $U_{k,c}$ by
\[
\begin{aligned}
 \psi(v_i)&:=v^{k,c}_i &&(0\le i\le L_k),\\
 \psi(r_i)&:=r^{k,c}_i &&(1\le i\le\mu_k+1),\\
 \psi(x^j_i)&:=x^{k,c,j}_i &&(j<|Z_{k,c}|,\ 1\le i\le\ell^{k,c}_j-1);
\end{aligned}
\]
here $\ell^{k,c}_j$ is the length $\ell_j$ of the crossing $j$ of $\T$
(Definition~\ref{def:N03-graph}); in particular $\psi(s)=s_k$ and
$\psi(t)=t_k$.
\begin{enumerate}
\item[\rm(i)] $\psi$ is an isomorphism from $\T$ onto $U_{k,c}$.
\item[\rm(ii)] For all vertices $u,u'$ of $\T$,
  $d_{U_{k,c}}(\psi(u),\psi(u'))=d_\T(u,u')$.  In particular, for every
  vertex $u$ of $\T$, $h_{U_{k,c}}(\psi(u))=h_\T(u)$,
  $\hb_{U_{k,c}}(\psi(u))=\hb_\T(u)$ and
  $\kappa_{U_{k,c}}(\psi(u))=\kappa_\T(u)$.
\item[\rm(iii)] $\psi$ maps the spine, $t$-ear, crossing and common
  members, the core and the diagonal members of $\T$ onto those of
  $U_{k,c}$.
\end{enumerate}
\end{lemma}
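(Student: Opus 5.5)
The plan is to unwind the definitions: $U_{k,c}$ is, by Definition~\ref{def:N03-graph}, built from $\T$ by renaming, and the three claims follow in order.

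For (i), I first check that $\psi$ is a bijection from $V(\T)$ onto $V(U_{k,c})$. The vertex lists of $\T$ (Definition~\ref{def:N02-template}) and of $U_{k,c}$ (Definition~\ref{def:N03-graph}) are indexed by the same ranges. These are $0\le i\le L_k$ for the spine, $1\le i\le\mu_k+1$ for the $t$-ear, and $j<|Z_{k,c}|=e$, $1\le i\le\ell_j-1$ for the crossings, because $\ell^{k,c}_j$ is defined as the length $\ell_j$ of the crossing $j$ of the same template. In both graphs these vertices are pairwise distinct. In $\T$ this is stated in Definition~\ref{def:N02-template}; in $U_{k,c}$ it follows from Definition~\ref{def:N03-graph}, where the members of a copy are pairwise distinct and distinct from $s_k,t_k$, and $s_k\ne t_k$. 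So $\psi$ is injective, and it is surjective by construction. For edges, the edges of $U_{k,c}$ are by definition those of the template read through the names. Hence $uu'\in E(\T)$ if and only if $\psi(u)\psi(u')\in E(U_{k,c})$. One subtlety needs care: $U_{k,c}$ as a subgraph of $G_\varphi$ (Definition~\ref{def:N03-names}) has exactly these edges and no others. It has only its own edges, not induced ones, so an edge such as $s_kt_k$, if it existed elsewhere, is not counted. The definition of $U_{k,c}$ as ``the vertices and the edges of the copy'' settles this.

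For (ii), an isomorphism maps paths to paths of the same length, and so does its inverse, as in the first paragraph of the proof of Lemma~\ref{lem:N01-twins}. Hence $d_{U_{k,c}}(\psi(u),\psi(u'))=d_\T(u,u')$. Since $\psi(s)=s_k$ and $\psi(t)=t_k$, the identities for $h$, $\hb$ and $\kappa$ follow from Definitions~\ref{def:N02-diagonal} and~\ref{def:N03-names}.

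For (iii), the classes are defined by indices: spine members $v_i$ with $1\le i\le L_k-1$, $t$-ear members $r_i$, crossing members $x^j_i$, core $r_1$. The corresponding classes of $U_{k,c}$ (Definition~\ref{def:N03-names}) are the images of these index sets under $\psi$, so $\psi$ maps each class onto its counterpart. Common members are the union of the spine and $t$-ear members, so they are handled too. For diagonal members, $\T$ has $\sigma=L_k-2\mu_k=\sigma_k$ by Lemma~\ref{lem:N03-params}(i), so the diagonality condition $\kappa=\sigma$ in $\T$ matches $\kappa_{U_{k,c}}=\sigma_k$ in $U_{k,c}$. By (ii) the skews agree, and since $\psi$ is a bijection on members, it maps the diagonal members onto the diagonal members.

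No step is a real obstacle. The only point needing attention is the bookkeeping in (i): that the crossing lengths $\ell^{k,c}_j$ in Definition~\ref{def:N03-graph} coincide with the $\ell_j$ of $\T$, and that $U_{k,c}$ carries exactly the template's edges.
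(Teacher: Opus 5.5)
Your proposal is correct and follows the paper's proof essentially step by step. For (i) you read the vertices and edges of $U_{k,c}$ off Definitions~\ref{def:N03-graph} and~\ref{def:N03-names} as those of the template renamed. For (ii) you transport shortest paths through the isomorphism and its inverse, as in Lemma~\ref{lem:N01-twins}. For (iii) you use that $\psi$ preserves indices, together with $\sigma=L_k-2\mu_k=\sigma_k$ and the equality of skews from (ii).
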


\begin{proof}
(i) By Definition~\ref{def:N03-graph}, the vertices of the copy are the
images under $\psi$ of the vertices of $\T$, and they are pairwise
distinct.  Its edges are the edges of $\T$ read through these names, that
is, the pairs $\psi(u)\psi(u')$ for the edges $uu'$ of $\T$.  By
Definition~\ref{def:N03-names}, $U_{k,c}$ has exactly these vertices and
edges.  So $\psi$ is a bijection from $V(\T)$ onto $V(U_{k,c})$, and
$uu'$ is an edge of $\T$ iff $\psi(u)\psi(u')$ is an edge of $U_{k,c}$.

(ii) An isomorphism maps every path to a path with the same number of
edges, and so does its inverse (as in the proof of
Lemma~\ref{lem:N01-twins}).  Hence
$d_{U_{k,c}}(\psi(u),\psi(u'))=d_\T(u,u')$ for all vertices $u,u'$ of
$\T$.  This is the first claim.  With $u':=s$ and $\psi(s)=s_k$,
\[
 h_{U_{k,c}}(\psi(u))=d_{U_{k,c}}(s_k,\psi(u))=d_\T(s,u)=h_\T(u)
\]
(Definitions~\ref{def:N01-profile} and~\ref{def:N02-diagonal}).  Likewise
$\hb_{U_{k,c}}(\psi(u))=\hb_\T(u)$, taking $t$ and $t_k$ in place of $s$
and $s_k$; and $\kappa$ is the difference of the two.

(iii) The kinds of members are given by the same indices in
Definitions~\ref{def:N02-template} and~\ref{def:N03-names}, and $\psi$
keeps the indices; so $\psi$ maps each kind onto the same kind, and $r_1$
onto $r^{k,c}_1$.  By Lemma~\ref{lem:N03-params}(i), the number $\sigma$ of
$\T$ is $L_k-2\mu_k=\sigma_k$.  So a member $u$ of $\T$ is diagonal iff
$\kappa_\T(u)=\sigma_k$, iff $\kappa_{U_{k,c}}(\psi(u))=\sigma_k$ by (ii),
iff $\psi(u)$ is diagonal (Definition~\ref{def:N03-names}).
\end{proof}

\begin{lemma}[inside a copy]\label{lem:N03-inside}
Let $\varphi$ be a restricted formula, $k$ a level and $c$ an option.  Let
$\alpha$, $L_k$, $\beta_k$, $\mu_k$ and $\sigma_k$ be the numbers of
Definition~\ref{def:N03-params}, and $\rho_{k,c}=r^{k,c}_1$ the core of
$U_{k,c}$ (Definition~\ref{def:N03-rows}).  Profiles, skews, and the
spine, $t$-ear, crossing, common and diagonal members of $U_{k,c}$ are
those of Definition~\ref{def:N03-names}.
\begin{enumerate}
\item[\rm(i)] $U_{k,c}$ is connected, and $d_{U_{k,c}}(s_k,t_k)=L_k$.
\item[\rm(ii)] $(h_{U_{k,c}},\hb_{U_{k,c}})(\rho_{k,c})=(\alpha,\beta_k)$.
  For every member $u\ne\rho_{k,c}$ of $U_{k,c}$,
  \[
   d_{U_{k,c}}(\rho_{k,c},u)\ <\ \min\bigl(\alpha+h_{U_{k,c}}(u),\
   \beta_k+\hb_{U_{k,c}}(u)\bigr).
  \]
\item[\rm(iii)] Two distinct members of $U_{k,c}$ have distinct profiles
  in $U_{k,c}$.
\item[\rm(iv)] $h_{U_{k,c}}(v^{k,c}_i)=i$ and $\hb_{U_{k,c}}(v^{k,c}_i)=L_k-i$
  for $0\le i\le L_k$; $h_{U_{k,c}}(r^{k,c}_i)=L_k-\mu_k+i$ and
  $\hb_{U_{k,c}}(r^{k,c}_i)=\mu_k+2-i$ for $1\le i\le\mu_k+1$; and every
  member $u$ of $U_{k,c}$ with $\kappa_{U_{k,c}}(u)>\sigma_k+4$ is a spine
  member.
\item[\rm(v)] Every diagonal member $u$ of $U_{k,c}$ has
  $h_{U_{k,c}}(u)\ge\mu_k$.
\item[\rm(vi)] Let $w\ne\rho_{k,c}$ be a member of $U_{k,c}$.  Put
  $x:=r^{k,c}_{\mu_k+1}$ if $w$ is a spine or crossing member, and
  $x:=v^{k,c}_{L_k-1}$ if $w$ is a $t$-ear member.  Then $x$ is a common
  member of $U_{k,c}$, $x\ne w$, and
  \[
   d_{U_{k,c}}(w,x)=\min\bigl(h_{U_{k,c}}(w)+h_{U_{k,c}}(x),\
   \hb_{U_{k,c}}(w)+\hb_{U_{k,c}}(x)\bigr).
  \]
\end{enumerate}
\end{lemma}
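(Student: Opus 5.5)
The plan is a pure transfer argument.  Every item of the statement is a fact about the abstract template $\T:=\T(L_k,\mu_k;E_{k,c})$, already proved in Section~\ref{sec:N02}, carried over to $U_{k,c}$ by the isomorphism $\psi$ of Lemma~\ref{lem:N03-copies}.  First I fix the dictionary.  By Lemma~\ref{lem:N03-params}(ii) the hypotheses of Definition~\ref{def:N02-template} hold, with $\sigma=\sigma_k$.  By Lemma~\ref{lem:N03-params}(i), $\alpha_\T=L_k-\mu_k+1=\alpha$ and $\beta_\T=\mu_k+1=\beta_k$.  By Lemma~\ref{lem:N03-copies}(ii),(iii), $\psi$ preserves all distances, profiles and skews, and it maps each kind of member onto the same kind: spine, $t$-ear, crossing, common, diagonal, and the core $r_1$ onto $\rho_{k,c}$.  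Since $\psi$ is a bijection, every member of $U_{k,c}$ is $\psi(u)$ for a unique member $u$ of $\T$.  So each item reduces to the corresponding template statement with $u$ in place of $\psi(u)$.

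Then I go through the items in order.  For (i), connectivity comes from Lemma~\ref{lem:N02-ears}(iii) transported by the isomorphism, and $d_{U_{k,c}}(s_k,t_k)=d_\T(s,t)=L_k$ by Lemma~\ref{lem:N02-spine}.  For (ii), the profile of the core comes from Lemma~\ref{lem:N02-core} together with the identifications $\alpha_\T=\alpha$ and $\beta_\T=\beta_k$.  The strict inequality also comes from Lemma~\ref{lem:N02-core}, applied to $\psi^{-1}(u)\ne r_1$, rewriting each quantity through Lemma~\ref{lem:N03-copies}(ii).  For (iii), I use Lemma~\ref{lem:N02-distinct}: distinct members of $U_{k,c}$ have distinct preimages, and the profiles are preserved.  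For (iv), the formulas for the spine and $t$-ear members are Lemma~\ref{lem:N02-profiles}(i),(ii) with $L=L_k$ and $\mu=\mu_k$.  The last clause of (iv) is Lemma~\ref{lem:N02-skew}(iv), using that $\psi$ maps spine members onto spine members.

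For (v), Lemma~\ref{lem:N02-skew}(iii) lists the diagonal members.  They are $v_\mu$, with $h=\mu$, and the members $x^j_{j+1+\varepsilon_j/2}$, with $h=\mu+\varepsilon_j/2\ge\mu$ because $\varepsilon_j\ge2$.  Transporting through $\psi$, which maps diagonal members onto diagonal members, gives $h_{U_{k,c}}(u)\ge\mu_k$.  For (vi), I apply Corollary~\ref{cor:N02-noncore} to $\psi^{-1}(w)$.  Its $x$ is $r_{\mu+1}$ or $v_{L-1}$ according to the kind of member, and $\psi$ maps these to $r^{k,c}_{\mu_k+1}$ and $v^{k,c}_{L_k-1}$ respectively.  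Commonness and $x\ne w$ transfer because $\psi$ preserves kinds and is injective, and the distance identity transfers by Lemma~\ref{lem:N03-copies}(ii).

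There is no real obstacle.  The only points needing care are matching the constants, $\sigma_\T=\sigma_k$, $\alpha_\T=\alpha$ and $\beta_\T=\beta_k$ from Lemma~\ref{lem:N03-params}(i), and checking that the case split in (vi) is the same on both sides.  That split depends only on the kind of member, which $\psi$ preserves.
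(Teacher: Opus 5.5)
Your proposal is correct and follows the paper's proof essentially step for step. Both check the template hypotheses via Lemma~\ref{lem:N03-params}(ii) and match $\alpha_\T=\alpha$, $\beta_\T=\beta_k$ via Lemma~\ref{lem:N03-params}(i). Both then transfer Lemmas~\ref{lem:N02-ears}(iii), \ref{lem:N02-spine}, \ref{lem:N02-core}, \ref{lem:N02-distinct}, \ref{lem:N02-profiles}, \ref{lem:N02-skew} and Corollary~\ref{cor:N02-noncore} to $U_{k,c}$ through the isomorphism $\psi$ of Lemma~\ref{lem:N03-copies}, which preserves distances and kinds of members.
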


\begin{proof}
Let $U:=U_{k,c}$.  Let $\T:=\T(L_k,\mu_k;E_{k,c})$ and $\psi$ be as in
Lemma~\ref{lem:N03-copies}.  By Lemma~\ref{lem:N03-params}(ii), the triple
$(L_k,\mu_k,E_{k,c})$ satisfies the hypotheses of
Definition~\ref{def:N02-template}, so $\T$ is a template and the lemmas of
Section~\ref{sec:N02} apply to it.  By Lemma~\ref{lem:N03-copies},
$\psi(s)=s_k$ and $\psi(t)=t_k$, and by Lemma~\ref{lem:N03-copies}(i),
$\psi$ is an isomorphism from $\T$ onto $U$.

(1) We prove (i).  By Lemma~\ref{lem:N02-ears}(iii), $\T$ is connected.  As
$\psi$ is an isomorphism from $\T$ onto $U$, $U$ is connected.  By
Lemma~\ref{lem:N03-copies}(ii), and by Lemma~\ref{lem:N02-spine} for $\T$,
\[
 d_U(s_k,t_k)=d_U(\psi(s),\psi(t))=d_\T(s,t)=L_k.
\]

(2) We prove (ii).  We take Lemma~\ref{lem:N02-core} in $\T$, that is, with
$L=L_k$, $\mu=\mu_k$ and $E=E_{k,c}$.  Lemma~\ref{lem:N02-core} puts
$\alpha_\T:=L-\mu+1$ and $\beta_\T:=\mu+1$, so here
\[
 \alpha_\T=L_k-\mu_k+1=\alpha,\qquad \beta_\T=\mu_k+1=\beta_k,
\]
where the second equalities are Lemma~\ref{lem:N03-params}(i).  So
Lemma~\ref{lem:N02-core} gives $(h_\T,\hb_\T)(r_1)=(\alpha,\beta_k)$, and,
for every member $u_0\ne r_1$ of $\T$,
\[
 d_\T(r_1,u_0)<\min\bigl(\alpha+h_\T(u_0),\ \beta_k+\hb_\T(u_0)\bigr).
\]
By Lemma~\ref{lem:N03-copies} and Definition~\ref{def:N03-rows},
$\psi(r_1)=r^{k,c}_1=\rho_{k,c}$.  So, by Lemma~\ref{lem:N03-copies}(ii) at
$r_1$,
\[
 (h_U,\hb_U)(\rho_{k,c})=(h_U,\hb_U)(\psi(r_1))=(h_\T,\hb_\T)(r_1)
 =(\alpha,\beta_k).
\]
Now let $u\ne\rho_{k,c}$ be a member of $U$.  The members of $\T$ are its
vertices other than $s$ and $t$ (Definition~\ref{def:N02-template}).  The
members of $U$ are its vertices other than $s_k$ and $t_k$
(Definition~\ref{def:N03-names}).  The map $\psi$ is a bijection with
$\psi(s)=s_k$, $\psi(t)=t_k$ and $\psi(r_1)=\rho_{k,c}$, so $u=\psi(u_0)$
for a member $u_0\ne r_1$ of $\T$.  By Lemma~\ref{lem:N03-copies}(ii),
\[
\begin{aligned}
 d_U(\rho_{k,c},u)&=d_U(\psi(r_1),\psi(u_0))=d_\T(r_1,u_0)\\
 &<\min\bigl(\alpha+h_\T(u_0),\ \beta_k+\hb_\T(u_0)\bigr)
 =\min\bigl(\alpha+h_U(u),\ \beta_k+\hb_U(u)\bigr).
\end{aligned}
\]
This proves (ii).

(3) We prove (iii).  Let $u\ne u'$ be members of $U$.  As in (2),
$u=\psi(u_0)$ and $u'=\psi(u_0')$ for members $u_0,u_0'$ of $\T$, and
$u_0\ne u_0'$ as $\psi$ is injective.  By Lemma~\ref{lem:N02-distinct} in
$\T$, $(h_\T,\hb_\T)(u_0)\ne(h_\T,\hb_\T)(u_0')$.  By
Lemma~\ref{lem:N03-copies}(ii),
\[
 (h_U,\hb_U)(u)=(h_\T,\hb_\T)(u_0)\ne(h_\T,\hb_\T)(u_0')=(h_U,\hb_U)(u').
\]

(4) We prove (iv).  Let $0\le i\le L_k$.  By Lemma~\ref{lem:N03-copies},
$v^{k,c}_i=\psi(v_i)$, so by Lemma~\ref{lem:N03-copies}(ii) and
Lemma~\ref{lem:N02-profiles}(i) in $\T$,
\[
 h_U(v^{k,c}_i)=h_\T(v_i)=i,\qquad \hb_U(v^{k,c}_i)=\hb_\T(v_i)=L_k-i.
\]
Let $1\le i\le\mu_k+1$.  By Lemma~\ref{lem:N03-copies},
$r^{k,c}_i=\psi(r_i)$, so by Lemma~\ref{lem:N03-copies}(ii) and
Lemma~\ref{lem:N02-profiles}(ii) in $\T$,
\[
 h_U(r^{k,c}_i)=h_\T(r_i)=L_k-\mu_k+i,\qquad
 \hb_U(r^{k,c}_i)=\hb_\T(r_i)=\mu_k+2-i.
\]
Now let $u$ be a member of $U$ with $\kappa_U(u)>\sigma_k+4$, and
$u=\psi(u_0)$ with $u_0$ a member of $\T$, as in (2).  By
Lemma~\ref{lem:N03-copies}(ii), $\kappa_\T(u_0)=\kappa_U(u)>\sigma_k+4$.
By Lemma~\ref{lem:N03-params}(ii), the number $\sigma=L-2\mu$ of $\T$ is
$\sigma_k$, so Lemma~\ref{lem:N02-skew}(iv) in $\T$ shows that $u_0$ is a
spine member of $\T$.  By Lemma~\ref{lem:N03-copies}(iii), $u=\psi(u_0)$
is a spine member of $U$.

(5) We prove (v).  Let $u$ be a diagonal member of $U$.  By
Lemma~\ref{lem:N03-copies}(iii), $u=\psi(u_0)$ for a diagonal member $u_0$
of $\T$.  By Lemma~\ref{lem:N02-skew}(iii) in $\T$, the diagonal members
of $\T$ are $v_{\mu_k}$, with $h_\T(v_{\mu_k})=\mu_k$, and, for each
crossing $j$ of $\T$, one member of the crossing $j$ with
$h_\T=\mu_k+\varepsilon_j/2\ge\mu_k$, as $\varepsilon_j\ge2$
(Definition~\ref{def:N02-template}).  So $h_\T(u_0)\ge\mu_k$, and by
Lemma~\ref{lem:N03-copies}(ii), $h_U(u)=h_\T(u_0)\ge\mu_k$.

(6) We prove (vi).  Let $w\ne\rho_{k,c}$ be a member of $U$.  The
members of $\T$ and of $U$ are their vertices other than the terminals
(Definitions~\ref{def:N02-template} and~\ref{def:N03-names}), and $\psi$
is a bijection (Lemma~\ref{lem:N03-copies}(i)) with $\psi(s)=s_k$,
$\psi(t)=t_k$ and $\psi(r_1)=r^{k,c}_1=\rho_{k,c}$
(Lemma~\ref{lem:N03-copies}, Definition~\ref{def:N03-rows}).  So
$w=\psi(w_0)$ for a member $w_0\ne r_1$ of $\T$.  By Lemma~\ref{lem:N03-copies}(iii), $\psi$ maps the spine,
crossing and $t$-ear members of $\T$ onto those of $U$, so $w_0$ is a
member of the same kind as $w$.  Put $x_0:=r_{\mu_k+1}$ if $w_0$ is a
spine or crossing member, and $x_0:=v_{L_k-1}$ if $w_0$ is a $t$-ear
member.  By Lemma~\ref{lem:N03-copies}, $\psi(r_{\mu_k+1})=r^{k,c}_{\mu_k+1}$
and $\psi(v_{L_k-1})=v^{k,c}_{L_k-1}$, so $\psi(x_0)=x$.
Corollary~\ref{cor:N02-noncore} in $\T$, that is, with $L=L_k$ and
$\mu=\mu_k$, applied to $w_0$, shows that $x_0$ is a common member of
$\T$, $x_0\ne w_0$, and
\[
 d_\T(w_0,x_0)=\min\bigl(h_\T(w_0)+h_\T(x_0),\ \hb_\T(w_0)+\hb_\T(x_0)\bigr).
\]
By Lemma~\ref{lem:N03-copies}(iii), $x=\psi(x_0)$ is a common member of
$U$, and $x\ne w$ as $\psi$ is injective.  By
Lemma~\ref{lem:N03-copies}(ii), at $w_0,x_0$ and for the profiles,
\[
\begin{aligned}
 d_U(w,x)&=d_\T(w_0,x_0)
 =\min\bigl(h_\T(w_0)+h_\T(x_0),\ \hb_\T(w_0)+\hb_\T(x_0)\bigr)\\
 &=\min\bigl(h_U(w)+h_U(x),\ \hb_U(w)+\hb_U(x)\bigr).
\end{aligned}
\]
\end{proof}

The parts of $G_\varphi$ play the following roles.  A path from $T_1$ to
$T_2$ through a level $k$, from $T_1$ along $s_1,\dots,s_k$, then along the
spine of a copy of the level $k$, then along $t_k,\dots,t_1,z_1,\dots,z_7$
to $T_2$, has
\[
 k+L_k+(k-1)+8=2k+7+L_k=D
\]
edges, as $L_k=D-7-2k$ (Lemma~\ref{lem:N03-params}(i)); so it is as long
as an arm.  Lemma~\ref{lem:N04-potentials} gives the distances $\delta_1$
and $\delta_2$ from $T_1$ and $T_2$ to every vertex.  The pairs
$\{q,q'\}$ and $\{w_j,\tilde y^j\}$ are twins
(Lemma~\ref{lem:N03-structure}(iii)), so every resolving set contains a
vertex of each pair (Lemma~\ref{lem:N05-lower}(ii)); a pattern takes $q$
and $w_1,\dots,w_4$ (Definition~\ref{def:N03-pattern}).  The row $w_j$
sees a vertex $u$ outside its arm only through $\delta_1(u)$ and
$\delta_2(u)$ (Lemma~\ref{lem:N04-rows}(iii)).  The two positions of the
rows of the arms make them read a member $u$ of a copy differently:
$w_3$ and $w_4$, at the position $7$, read it at $7+\delta_1(u)$, and $w_1$
and $w_2$, at the position $p$, at $\min(p+\delta_1(u),\ p+7+\delta_2(u))$
(Corollary~\ref{cor:N04-levels}(iii),(iv)).  The rows $\rho_{k,c}$ are the
landmarks that patterns place inside the copies, and
Lemma~\ref{lem:N05-forcing} shows that a resolving set of $2m+5$ vertices
has no vertex inside a copy other than its core.

\begin{definition}[choices and patterns]\label{def:N03-pattern}
Let $\varphi$ be a restricted formula.  A \emph{choice} is
$f=(f_1,\dots,f_m)\in\{1,2,3\}^m$; it selects the option $f_k$ at each
level $k$ (Definition~\ref{def:N03-options}).  Its \emph{pattern} is
\[
 W_f:=\bigcup_{k=1}^{m}\{\rho_{k,c}:c\ne f_k\}\ \cup\ \{q,w_1,w_2,w_3,w_4\},
\]
a set of $2m+5$ vertices; the rows $\rho_{k,c}$ and $w_j$ are defined in
Definition~\ref{def:N03-rows}.  The copy
$U_{k,f_k}$ of the selected option is the \emph{free copy} of level $k$,
and the two others are its \emph{rowed copies}.
\end{definition}

\begin{lemma}[structure of $G_\varphi$]\label{lem:N03-structure}
Let $\varphi$ be a restricted formula and $G:=G_\varphi$
(Definitions~\ref{def:N03-graph}, \ref{def:N03-names}
and~\ref{def:N03-rows}).
\begin{enumerate}
\item[\rm(i)] $V(G)$ is the disjoint union of $S$, $\{q\}$, the sets
  $\Int(U_{k,c})$ of members of the copies, and the sets
  $\Int(Y^j)=\{y^j_1,\dots,y^j_{D-1},\tilde y^j\}$.
\item[\rm(ii)] Each $U_{k,c}$ is attached at $\{s_k,t_k\}$ and each $Y^j$
  at $\{T_1,T_2\}$ (Definition~\ref{def:N01-attach}).
\item[\rm(iii)] The five pairs $\{q,q'\}$ and $\{w_j,\tilde y^j\}$
  ($1\le j\le4$) are twins (Definition~\ref{def:N01-twins}); no vertex
  lies in two of these pairs, and no vertex of a pair lies in
  $\Int(U_{k,c})$ for any level $k$ and option $c$.
\item[\rm(iv)] $|V(G)|\le78m^2+60m+71$.
\end{enumerate}
\end{lemma}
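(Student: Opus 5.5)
The four parts are all read off Definition~\ref{def:N03-graph}, which lists the vertices of $G$ and declares them pairwise distinct. I will treat the parts in the order (i), (ii), (iii), (iv).

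For (i), every vertex of $G$ is introduced by exactly one item of the definition. It is one of the following:
\begin{itemize}
\item a frame vertex $T_1,T_2,s_k,t_k,z_i$;
\item one of $q,q'$;
\item a vertex of a copy other than its terminals, which is a member by Definition~\ref{def:N03-names};
\item an arm vertex $y^j_i$ with $1\le i\le D-1$, or $\tilde y^j$.
\end{itemize}
The terminals of a copy are $s_k,t_k$ and the ends of an arm are $T_1,T_2$, so they fall into $S$. Grouping the list as $S\cup\{q\}\cup\bigcup\Int(U_{k,c})\cup\bigcup\Int(Y^j)$ gives the disjoint union, because the definition asserts that the vertices are pairwise distinct.

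For (ii), take an edge with an end $u\in\Int(U_{k,c})$. Among the four kinds of edges (frame, copies, pair, arms), the only ones that mention copy vertices other than terminals are the edges of that copy. So the edge lies in $U_{k,c}$, which is the definition of being attached (Definition~\ref{def:N01-attach}). The same argument applies to an arm: the only edges at $y^j_i$ ($0<i<D$) or at $\tilde y^j$ are those listed for $Y^j$.

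For (iii), $N(q)=N(q')=\{t_m,t_{m-1}\}$, since $q$ and $q'$ are not adjacent. For the arms, note that $2\le\pi_j\le D-2$ (because $\pi_j\in\{7,p\}$ and $D=4m+15$). Hence $N(w_j)=\{y^j_{\pi_j-1},y^j_{\pi_j+1}\}=N(\tilde y^j)$, with the two vertices distinct and not adjacent. This is exactly Definition~\ref{def:N01-twins}. The five pairs are disjoint, and by (i) none of their vertices is a member of a copy.

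For (iv), the plan is a direct count, and the only step needing care is bounding the crossing lengths. The frame has $m+1+m+8=2m+9$ vertices, the pair has $2$, and the arms have $4(D-1+1)=4(4m+15)$ interior vertices. A copy has $L_k-1$ spine members and $\mu_k+1$ $t$-ear members. Each crossing contributes $\ell_j-1=2j+1+\varepsilon_j$ members, where $\varepsilon_j=2(\zeta_j-k-\mu_k)\le2\zeta_j\le2(5m+5)$ by Lemma~\ref{lem:N03-options}(i). With at most two crossings, a copy has at most
\[
(L_k-1)+(\mu_k+1)+2\bigl(4+10m+10\bigr)
\]
members. Using $L_k\le4m+6$ and $\mu_k\le2m+2$, this is at most a linear function of $m$. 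Summing over $3m$ copies then gives a quadratic bound in $m$. I expect the main obstacle to be getting the constants to land exactly under $78m^2+60m+71$. If the crude bound overshoots, I will sharpen it using $\varepsilon_j\le2(\zeta_j-2m-4)$, the fact that crossings exist only at non-idle levels, and the fact that clause levels have only one crossing.
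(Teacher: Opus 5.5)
Parts (i)--(iii) follow the paper's argument and are fine. The gap is in (iv). There you stop at a crude estimate and promise to sharpen it, but that estimate does not give the stated constant, and one of your proposed repairs is false.

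With $\varepsilon_j\le2\zeta_j\le10m+10$ and $2j+1\le4$, a copy has at most $(4m+5)+(2m+3)+2(10m+14)=26m+36$ members. Your total is then $(2m+11)+(16m+60)+3m(26m+36)=78m^2+126m+71$, which exceeds the claimed bound by $66m$.

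The sharpening $\varepsilon_j\le2(\zeta_j-2m-4)$ goes the wrong way. Since $\mu_k=2m+4-2k$, you have $k+\mu_k=2m+4-k$, hence $\varepsilon_j=2(\zeta_j-2m-4+k)>2(\zeta_j-2m-4)$ for every level $k\ge1$. An argument built on that inequality would be invalid, even though the final bound is true.

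The missing step is to keep the term $-(k+\mu_k)$ and bound it using $k\le m$. By Lemma~\ref{lem:N03-options}(i), $\zeta_j\le5m+5$, so $\zeta_j-k-\mu_k=\zeta_j-2m-4+k\le4m+1\le5m$. Hence $\varepsilon_j\le10m$, and each crossing has $\ell_j-1=2j+1+\varepsilon_j\le10m+3$ members, as $j\le1$. A copy then has at most $(4m+5)+(2m+3)+2(10m+3)=26m+14$ members, and the $3m$ copies at most $78m^2+42m$. Adding $2m+11$ for $S\cup\{q\}$ and $4D=16m+60$ for the arms gives exactly $78m^2+60m+71$, which is the paper's count.

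Your other two observations are correct and would also rescue the crude estimate. The idle level has no crossing, clause levels have exactly one, and $n\ge1$, so there are at most $3n+6r\le6m-9$ crossings in all. This gives a total of at most $78m^2+36m-55$. As written, though, the proposal carries out neither version.
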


\begin{proof}
(i) By Definition~\ref{def:N03-graph}, the vertices of $G$ are $T_1$,
$T_2$, the $s_k$ and $t_k$, $z_1,\dots,z_7$, $q$, $q'$, the vertices of
the copies other than their terminals, and the $y^j_i$
($1\le i\le D-1$) and $\tilde y^j$; and these are pairwise distinct.  The
vertices of the copies other than their terminals are the members, that
is, the vertices of the sets $\Int(U_{k,c})$
(Definition~\ref{def:N03-names}).  The $y^j_i$ ($1\le i\le D-1$) and
$\tilde y^j$ are the vertices of
$\Int(Y^j)=V(Y^j)\setminus\{T_1,T_2\}$ (Definitions~\ref{def:N01-attach}
and~\ref{def:N03-rows}).  The remaining vertices other than $q$, namely
$T_1$, $T_2$, the $s_k$, $t_k$, $z_i$ and $q'$, form $S$.

(ii) Every edge of $G$ is an edge of the frame, of a copy, an edge at
$q$ or $q'$, or an edge of an arm (Definition~\ref{def:N03-graph}).  A member of $U_{k,c}$ is a
vertex of no other part, so every edge at it is an edge of $U_{k,c}$;
likewise every edge at a vertex of $\Int(Y^j)$ is an edge of $Y^j$.  This
is Definition~\ref{def:N01-attach}, with $s_k\ne t_k$ and $T_1\ne T_2$.

(iii) By Definition~\ref{def:N03-graph}, $N(q)=N(q')=\{t_m,t_{m-1}\}$ and
$q,q'$ are not adjacent, so
$N(q)\setminus\{q'\}=\{t_m,t_{m-1}\}=N(q')\setminus\{q\}$.  Next, $\pi_j\in\{p,7\}$,
and $7\le2m+4=p\le4m+13=D-2$ as $m\ge3$ (Lemma~\ref{lem:N03-params}(iii)),
so $2\le\pi_j\le D-2$.  Hence
$w_j=y^j_{\pi_j}$ lies on exactly the edges $y^j_{\pi_j-1}y^j_{\pi_j}$ and
$y^j_{\pi_j}y^j_{\pi_j+1}$ (the edges of $\tilde y^j$ go to
$y^j_{\pi_j\pm1}$), so $N(w_j)=\{y^j_{\pi_j-1},y^j_{\pi_j+1}\}=N(\tilde y^j)$.
As $w_j$ and $\tilde y^j$ are not adjacent,
$N(w_j)\setminus\{\tilde y^j\}=N(\tilde y^j)\setminus\{w_j\}$.
The last two claims follow from (i): $q,q'\in S\cup\{q\}$ and
$w_j,\tilde y^j\in\Int(Y^j)$.

(iv) We count the five sets of (i).  $|S|=2+2m+7+1=2m+10$, so
$|S|+|\{q\}|=2m+11$.  Each $\Int(Y^j)$ has $(D-1)+1=D$ vertices, so the
arms contribute $4D=16m+60$.  A copy $U_{k,c}$ has
$(L_k-1)+(\mu_k+1)+\sum_j(\ell^{k,c}_j-1)$ members, where
$\ell^{k,c}_j-1=2j+1+2(\zeta_j-k-\mu_k)$, $\zeta_0<\cdots$ being the
cells of $Z_{k,c}$ (Definitions~\ref{def:N03-params}
and~\ref{def:N03-graph}).  As $\zeta_j\le5m+5$
(Lemma~\ref{lem:N03-options}(i)) and $k\le m$,
\[
 \zeta_j-k-\mu_k=\zeta_j-2m-4+k\le(5m+5)-2m-4+m=4m+1\le5m,
\]
so $\ell^{k,c}_j-1\le2\cdot1+1+10m=10m+3$, as $j\le1$.  With
$L_k-1\le L_1-1=4m+5$, $\mu_k+1\le\mu_1+1=2m+3$ and at most two crossings
(Lemma~\ref{lem:N03-options}(ii)), a copy has at most
$(4m+5)+(2m+3)+2(10m+3)=26m+14$ members, and the $3m$ copies at most
$3m(26m+14)=78m^2+42m$.  The sum is at most
$(2m+11)+(16m+60)+(78m^2+42m)=78m^2+60m+71$.
\end{proof}


\section{Distances in \texorpdfstring{$G_\varphi$}{G\_phi}}\label{sec:N04}

\begin{lemma}[distances to $T_1$ and $T_2$]\label{lem:N04-potentials}
Let $\varphi$ be a restricted formula and $G:=G_\varphi$
(Definition~\ref{def:N03-graph}), and let
\[
 \delta_1(u):=d(T_1,u),\qquad \delta_2(u):=d(T_2,u)\qquad(u\in V(G)).
\]
For a member $u$ of a copy $U_{k,c}$,
$(h_{U_{k,c}}(u),\hb_{U_{k,c}}(u))=(d_{U_{k,c}}(s_k,u),\,d_{U_{k,c}}(t_k,u))$
is its profile (Definition~\ref{def:N03-names}).  Let
$F_1,F_2:V(G)\to\mathbb Z$ be given by the table
\[
\begin{array}{lll}
\text{vertex} & F_1 & F_2\\\hline
T_1 & 0 & D\\
T_2 & D & 0\\
s_k\ (1\le k\le m) & k & D-k\\
t_k\ (1\le k\le m) & D-7-k & 7+k\\
z_i\ (1\le i\le7) & D-8+i & 8-i\\
q,\ q' & D-6-m & 7+m\\
u\in\Int(U_{k,c}) & k+h_{U_{k,c}}(u) & 7+k+\hb_{U_{k,c}}(u)\\
y^j_i\ (1\le i\le D-1) & i & D-i\\
\tilde y^j & \pi_j & D-\pi_j
\end{array}
\tag{T1}\label{eq:N04-values}
\]
Then $G$ is connected, $\delta_1=F_1$ and $\delta_2=F_2$.
\end{lemma}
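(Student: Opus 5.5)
The plan is to apply Lemma~\ref{lem:N01-potential} twice: once with $o=T_1$ and $f=F_1$, once with $o=T_2$ and $f=F_2$. Condition (a) is read off the table. Conditions (b) and (c) are checked edge by edge, using the list of edges in Definition~\ref{def:N03-graph} and the disjoint decomposition of Lemma~\ref{lem:N03-structure}(i). Connectedness is then a by-product of the lemma.

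\emph{Condition (b).} I sort the edges into four kinds.

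First, the frame edges. For $T_1s_1$ and $s_ks_{k+1}$, $F_1$ changes by $1$ and $F_2$ by $-1$. For $t_{k+1}t_k$, $t_1z_1$, $z_iz_{i+1}$ and $z_7T_2$ the same holds; for instance $F_1(t_1)=D-8=F_1(z_1)-1$.

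Second, the pair edges. Here $F_1(q)=D-6-m$ lies between $F_1(t_m)=D-7-m$ and $F_1(t_{m-1})=D-6-m$, and $F_2(q)=7+m$ lies between $F_2(t_{m-1})=6+m$ and $F_2(t_m)=7+m$. So these edges change each potential by at most $1$.

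Third, the arm edges. The values $i$ and $D-i$ agree at the ends with $F_1(T_1)=0$, $F_1(T_2)=D$, and likewise for $F_2$. The twin $\tilde y^j$ has the values of $y^j_{\pi_j}$, and its two neighbours are $y^j_{\pi_j\pm1}$.

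Fourth, the edges of a copy $U_{k,c}$. I write $F_1=k+h_U$ and $F_2=7+k+\hb_U$ on all of $V(U_{k,c})$, checking at the terminals that $k+h_U(s_k)=k$ and $k+h_U(t_k)=k+L_k=D-7-k$, using $L_k=D-7-2k$ from Lemma~\ref{lem:N03-params}(i). The $F_2$ values match in the same way. Since $h_U$ and $\hb_U$ are distance functions in $U_{k,c}$, they are $1$-Lipschitz along edges of $U_{k,c}$.

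\emph{Condition (c).} Again I go vertex by vertex.

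On the frame, $s_k$ steps down to $s_{k-1}$ or $T_1$ for $F_1$, and to the spine vertex $v^{k,1}_1$ for $F_2$; that vertex has $F_2=7+k+L_k-1=D-k-1$. Symmetrically, $t_k$ steps to $t_{k+1}$ for $F_1$ when $k<m$, and to $v^{k,1}_{L_k-1}$ otherwise, which has $F_1=k+L_k-1=D-8-k$. For $F_2$, $t_k$ steps to $t_{k-1}$ or $z_1$. On the tail, $z_i$ steps to its neighbours.

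The pair vertices $q,q'$ step to $t_m$ for $F_1$ and to $t_{m-1}$ for $F_2$. Arm vertices step along the arm, and $\tilde y^j$ steps to $y^j_{\pi_j-1}$ or $y^j_{\pi_j+1}$.

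For a member $u$ of $U_{k,c}$, I take the next vertex on a shortest $s_k$--$u$ path inside $U_{k,c}$, which exists by Lemma~\ref{lem:N03-inside}(i). That neighbour has $h_U$ one less. Either it is a member, or it is $s_k$ itself, where $F_1(s_k)=k=k+h_U(s_k)$. The same argument with $t_k$ handles $F_2$.

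I expect no real obstacle, only bookkeeping. The one point needing care is that the formulas $k+h_U$ and $7+k+\hb_U$, stated in the table only for members, extend consistently to the terminals. Everything then rests on the identity $L_k=D-7-2k$, and on the fact that each copy has at least one spine neighbour at each terminal.
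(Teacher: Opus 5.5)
Your proposal is correct and is essentially the paper's proof: Lemma~\ref{lem:N01-potential} with $o=T_1$ and $o=T_2$, the copy formulas extended to both terminals through $L_k=D-7-2k$, the Lipschitz bound along copy edges from the triangle inequality in $U_{k,c}$, and for members the predecessor on a shortest path inside the copy. That predecessor may be $t_k$ as well as $s_k$ (e.g.\ for $r^{k,c}_{\mu_k+1}$), but your extension of the formulas to all of $V(U_{k,c})$ already covers this. The remaining differences are cosmetic. You step every $s_k$ to $v^{k,1}_1$ for $F_2$, where the paper uses $s_{k+1}$ except at $k=m$. You also leave implicit the trivial cases $T_2$ (to $z_7$) for $F_1$ and $T_1$ (to $s_1$) for $F_2$.
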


\begin{proof}
By Definition~\ref{def:N03-graph}, the vertices of $G$ are $T_1$, $T_2$,
the $s_k$, $t_k$ and $z_i$, $q$, $q'$, the members of the copies, the
$y^j_i$ ($1\le i\le D-1$) and the $\tilde y^j$, and these are pairwise
distinct; the members of $U_{k,c}$ form $\Int(U_{k,c})$
(Definition~\ref{def:N03-names}).  So the table \eqref{eq:N04-values} lists every vertex of $G$
exactly once, and $F_1,F_2$ are well defined.  We check the conditions
(a)--(c) of Lemma~\ref{lem:N01-potential} for $F_1$ with $o:=T_1$, and for
$F_2$ with $o:=T_2$.  These conditions involve only the values of $F_1$, $F_2$ at a
vertex and at its neighbours, so steps (1)--(5) use no distance of $G$.  Distances of $G$ appear only
in (6), as the conclusion of Lemma~\ref{lem:N01-potential}.

(1) \emph{Condition (a).}  By the table \eqref{eq:N04-values},
$F_1(T_1)=0$ and $F_2(T_2)=0$.

(2) \emph{The formulas of a copy.}  Let $U:=U_{k,c}$.  For a member $u$ of
$U$, that is, $u\in\Int(U)$, the table \eqref{eq:N04-values} gives $F_1(u)=k+h_U(u)$ and
$F_2(u)=7+k+\hb_U(u)$.  By Definition~\ref{def:N01-profile}, applied to $U$
with the terminals $s_k$ and $t_k$ (Definition~\ref{def:N03-names}),
$h_U(u)=d_U(s_k,u)$ and $\hb_U(u)=d_U(t_k,u)$, which are distances in $U$,
not in $G$.

At the terminals, the table \eqref{eq:N04-values} gives other formulas, with the same values.
By Lemmas~\ref{lem:N02-profiles}(i) and~\ref{lem:N03-copies}(ii), taken at
$v^{k,c}_0=s_k$ and at $v^{k,c}_{L_k}=t_k$,
\[
 (h_U,\hb_U)(s_k)=(0,L_k),\qquad (h_U,\hb_U)(t_k)=(L_k,0).
\]
With $L_k=D-7-2k$ (Lemma~\ref{lem:N03-params}(i)) and the values of $F_1$
and $F_2$ at $s_k$ and $t_k$ given in the table \eqref{eq:N04-values},
\[
\begin{aligned}
 k+h_U(s_k)&=k=F_1(s_k), & 7+k+\hb_U(s_k)&=7+k+L_k=D-k=F_2(s_k),\\
 k+h_U(t_k)&=k+L_k=D-7-k=F_1(t_k), & 7+k+\hb_U(t_k)&=7+k=F_2(t_k).
\end{aligned}
\]
Hence, for every vertex $u$ of $U$,
\[
 F_1(u)=k+h_U(u)=k+d_U(s_k,u),\qquad F_2(u)=7+k+\hb_U(u)=7+k+d_U(t_k,u).
 \tag{$*$}
\]

(3) \emph{Condition (b): along every edge, $F_1$ and $F_2$ change by at
most one.}  By Definition~\ref{def:N03-graph}, every edge of $G$ is an edge
of the frame, of a copy, an edge at $q$ or $q'$, or of an arm.

\emph{The edges outside the copies.}  In the table \eqref{eq:N04-edges} below, the first six
lines are the edges of the frame, the next two the four edges at $q$
and $q'$, and the last five the edges of the arms; by
Definition~\ref{def:N03-graph}, these are all the edges outside the
copies.  For an edge $xy$, the table \eqref{eq:N04-edges} gives the values at $x$ and at $y$,
then the differences $F_1(y)-F_1(x)$ and $F_2(y)-F_2(x)$:
\[
\begin{array}{llll}
 \text{edge } xy & F_1(x),\ F_1(y) & F_2(x),\ F_2(y)
   & \text{differences}\\\hline
 T_1s_1 & 0,\ 1 & D,\ D-1 & 1,\ -1\\
 s_ks_{k+1}\ (1\le k<m) & k,\ k+1 & D-k,\ D-k-1 & 1,\ -1\\
 t_{k+1}t_k\ (1\le k<m) & D-8-k,\ D-7-k & 8+k,\ 7+k & 1,\ -1\\
 t_1z_1 & D-8,\ D-7 & 8,\ 7 & 1,\ -1\\
 z_iz_{i+1}\ (1\le i\le6) & D-8+i,\ D-7+i & 8-i,\ 7-i & 1,\ -1\\
 z_7T_2 & D-1,\ D & 1,\ 0 & 1,\ -1\\
 t_mq,\ t_mq' & D-7-m,\ D-6-m & 7+m,\ 7+m & 1,\ 0\\
 t_{m-1}q,\ t_{m-1}q' & D-6-m,\ D-6-m & 6+m,\ 7+m & 0,\ 1\\
 T_1y^j_1 & 0,\ 1 & D,\ D-1 & 1,\ -1\\
 y^j_iy^j_{i+1}\ (1\le i\le D-2) & i,\ i+1 & D-i,\ D-i-1 & 1,\ -1\\
 y^j_{D-1}T_2 & D-1,\ D & 1,\ 0 & 1,\ -1\\
 \tilde y^jy^j_{\pi_j-1} & \pi_j,\ \pi_j-1 & D-\pi_j,\ D-\pi_j+1 & -1,\ 1\\
 \tilde y^jy^j_{\pi_j+1} & \pi_j,\ \pi_j+1 & D-\pi_j,\ D-\pi_j-1 & 1,\ -1
\end{array}
\tag{T2}\label{eq:N04-edges}
\]
Each value is the formula of the table \eqref{eq:N04-values} at that vertex.  At
the vertices with a shifted index:
\[
\begin{aligned}
 F_1(s_{k+1})&=k+1, & F_2(s_{k+1})&=D-(k+1)=D-k-1,\\
 F_1(t_{k+1})&=D-7-(k+1)=D-8-k, & F_2(t_{k+1})&=7+(k+1)=8+k,\\
 F_1(t_1)&=D-7-1=D-8, & F_2(t_1)&=7+1=8,\\
 F_1(z_1)&=D-8+1=D-7, & F_2(z_1)&=8-1=7,\\
 F_1(z_{i+1})&=D-8+(i+1)=D-7+i, & F_2(z_{i+1})&=8-(i+1)=7-i,\\
 F_1(z_7)&=D-8+7=D-1, & F_2(z_7)&=8-7=1,\\
 F_1(y^j_1)&=1, & F_2(y^j_1)&=D-1,\\
 F_1(y^j_{i+1})&=i+1, & F_2(y^j_{i+1})&=D-(i+1)=D-i-1,\\
 F_1(y^j_{D-1})&=D-1, & F_2(y^j_{D-1})&=D-(D-1)=1,\\
 F_1(y^j_{\pi_j\pm1})&=\pi_j\pm1, & F_2(y^j_{\pi_j\pm1})&=D-(\pi_j\pm1)=D-\pi_j\mp1.
\end{aligned}
\]
For the arms, the values $F_1(y^j_{\pi_j\pm1})=\pi_j\pm1$ and
$F_2(y^j_{\pi_j\pm1})=D-\pi_j\mp1$ need $1\le\pi_j\pm1\le D-1$.  This holds
as $\pi_j\in\{p,7\}$ (Definition~\ref{def:N03-graph}), and $2\le7\le D-2$
and $2\le p=2m+4\le4m+13=D-2$ (Definition~\ref{def:N03-params}).  In the
last column, every difference is $-1$, $0$ or $1$.

\emph{The edges of a copy.}  Let $uu'$ be an edge of a copy $U:=U_{k,c}$.
By ($*$) and the triangle inequality in $U$
(Definition~\ref{def:N01-graph}),
\[
 F_1(u)-k=d_U(s_k,u)\le d_U(s_k,u')+d_U(u',u)=(F_1(u')-k)+1,
\]
so $F_1(u)-F_1(u')\le1$; exchanging $u$ and $u'$ gives
$F_1(u')-F_1(u)\le1$.  Likewise for $F_2$, taking $t_k$ and $7+k$ in place
of $s_k$ and $k$.

(4) \emph{Condition (c) for $F_1$: every vertex $x\ne T_1$ has a neighbour
$x'$ with $F_1(x')=F_1(x)-1$.}  By Definition~\ref{def:N03-graph}, every
vertex $x\ne T_1$ of $G$ either is listed in the first column of the table
\eqref{eq:N04-down1} below, or is $t_m$, or is a member of a copy.  We treat the three cases in
turn.

\emph{The vertices of the table.}  For each $x$, the table \eqref{eq:N04-down1} gives
$x'$.  The pair $x,x'$ is an edge in the table \eqref{eq:N04-edges}, and the values are the ones
computed there:
\[
\begin{array}{lll}
 x & x' & F_1(x),\ F_1(x')\\\hline
 s_1 & T_1 & 1,\ 0\\
 s_{k+1}\ (1\le k<m) & s_k & k+1,\ k\\
 t_k\ (1\le k<m) & t_{k+1} & D-7-k,\ D-8-k\\
 z_1 & t_1 & D-7,\ D-8\\
 z_{i+1}\ (1\le i\le6) & z_i & D-7+i,\ D-8+i\\
 T_2 & z_7 & D,\ D-1\\
 q,\ q' & t_m & D-6-m,\ D-7-m\\
 y^j_1 & T_1 & 1,\ 0\\
 y^j_{i+1}\ (1\le i\le D-2) & y^j_i & i+1,\ i\\
 \tilde y^j & y^j_{\pi_j-1} & \pi_j,\ \pi_j-1
\end{array}
\tag{T3}\label{eq:N04-down1}
\]
In each line, $F_1(x')=F_1(x)-1$.

\emph{The vertex $t_m$.}  As $m\ge3$ (Lemma~\ref{lem:N03-params}(iii)),
its neighbours outside the copies are $t_{m-1}$, $q$ and $q'$; by the table
\eqref{eq:N04-edges}, they have the value $D-6-m=F_1(t_m)+1$.  So we
take $x'$ in a copy.  Let $x':=v^{m,1}_{L_m-1}$, the neighbour of
$t_m=v^{m,1}_{L_m}$ on the spine of $U_{m,1}$
(Definition~\ref{def:N03-graph}); any option of level $m$ would do.  By
Lemmas~\ref{lem:N02-profiles}(i) and~\ref{lem:N03-copies}(ii),
$h_{U_{m,1}}(x')=L_m-1$.  So by ($*$) and $L_m=D-7-2m$
(Lemma~\ref{lem:N03-params}(i)),
\[
 F_1(x')=m+(L_m-1)=m+(D-7-2m)-1=D-8-m=F_1(t_m)-1.
\]

\emph{The members of the copies.}  Let $x$ be a member of a copy
$U:=U_{k,c}$.  The copy $U$ is connected (Lemma~\ref{lem:N03-inside}(i)),
and $x\ne s_k$.  So there is a shortest
$s_k$--$x$ path $P$ of $U$, with $h_U(x)=d_U(s_k,x)\ge1$ edges.  Let $x'$
be the vertex before $x$ on $P$; then $x'x$ is an edge of $U$.  The part of
$P$ from $s_k$ to $x'$ has $h_U(x)-1$ edges, so $h_U(x')\le h_U(x)-1$.  By
the triangle inequality in $U$ (Definition~\ref{def:N01-graph}),
\[
 h_U(x)\le h_U(x')+d_U(x',x)=h_U(x')+1.
\]
Hence $h_U(x')=h_U(x)-1$.  The vertex $x'$ is a member of $U$ or one of
its terminals $s_k$, $t_k$; in both cases ($*$) applies, as it holds at
every vertex of $U$.  So
\[
 F_1(x')=k+h_U(x')=k+\bigl(h_U(x)-1\bigr)=\bigl(k+h_U(x)\bigr)-1=F_1(x)-1.
\]

(5) \emph{Condition (c) for $F_2$: every vertex $x\ne T_2$ has a neighbour
$x'$ with $F_2(x')=F_2(x)-1$.}  By Definition~\ref{def:N03-graph}, every
vertex $x\ne T_2$ of $G$ either is listed in the first column of the table
\eqref{eq:N04-down2} below, or is $s_m$, or is a member of a copy.  We
treat the three cases in turn.

\emph{The vertices of the table.}  For each $x$, the table
\eqref{eq:N04-down2} gives $x'$.  The pair $x,x'$ is an edge in the table
\eqref{eq:N04-edges}, and the values are the ones computed there:
\[
\begin{array}{lll}
 x & x' & F_2(x),\ F_2(x')\\\hline
 T_1 & s_1 & D,\ D-1\\
 s_k\ (1\le k<m) & s_{k+1} & D-k,\ D-k-1\\
 t_{k+1}\ (1\le k<m) & t_k & 8+k,\ 7+k\\
 t_1 & z_1 & 8,\ 7\\
 z_i\ (1\le i\le6) & z_{i+1} & 8-i,\ 7-i\\
 z_7 & T_2 & 1,\ 0\\
 q,\ q' & t_{m-1} & 7+m,\ 6+m\\
 y^j_i\ (1\le i\le D-2) & y^j_{i+1} & D-i,\ D-i-1\\
 y^j_{D-1} & T_2 & 1,\ 0\\
 \tilde y^j & y^j_{\pi_j+1} & D-\pi_j,\ D-\pi_j-1
\end{array}
\tag{T4}\label{eq:N04-down2}
\]
In each line, $F_2(x')=F_2(x)-1$.

\emph{The vertex $s_m$.}  As $m\ge3$ (Lemma~\ref{lem:N03-params}(iii)),
its only neighbour outside the copies is $s_{m-1}$; by the table
\eqref{eq:N04-edges}, it has the value $D-m+1=F_2(s_m)+1$.  So we
take $x'$ in a copy.  Let $x':=v^{m,1}_1$, the neighbour of
$s_m=v^{m,1}_0$ on the spine of $U_{m,1}$
(Definition~\ref{def:N03-graph}); any option of level $m$ would do.  By
Lemmas~\ref{lem:N02-profiles}(i) and~\ref{lem:N03-copies}(ii),
$\hb_{U_{m,1}}(x')=L_m-1$.  So by ($*$) and $L_m=D-7-2m$
(Lemma~\ref{lem:N03-params}(i)),
\[
 F_2(x')=7+m+(L_m-1)=7+m+(D-7-2m)-1=D-m-1=F_2(s_m)-1.
\]

\emph{The members of the copies.}  Let $x$ be a member of a copy
$U:=U_{k,c}$.  As in (4), taking $t_k$ and $\hb_U$ in place of $s_k$ and
$h_U$, the vertex $x'$ before $x$ on a shortest $t_k$--$x$ path of $U$ has
$\hb_U(x')=\hb_U(x)-1$.  So $F_2(x')=F_2(x)-1$, by ($*$).

(6) \emph{Conclusion.}  By (1), (3), (4), (5) and
Lemma~\ref{lem:N01-potential}, $G$ is connected, $\delta_1=F_1$ and
$\delta_2=F_2$.
\end{proof}

\begin{lemma}[distances on the frame]\label{lem:N04-skeleton}
Let $\varphi$ be a restricted formula and $G:=G_\varphi$
(Definition~\ref{def:N03-graph}).  For $1\le j,k\le m$ and $1\le i\le7$:
\begin{enumerate}
\item[\rm(i)] $d(s_j,s_k)=d(t_j,t_k)=|j-k|$;
\item[\rm(ii)] $d(s_j,t_k)=L_{\max(j,k)}+|j-k|=D-7-j-k$;
\item[\rm(iii)] $d(t_1,z_i)=i$;
\item[\rm(iv)] $d(T_1,T_2)=D$.
\end{enumerate}
\end{lemma}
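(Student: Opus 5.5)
Each of the four distances will be bounded from above by an explicit path and from below by the potentials of Lemma~\ref{lem:N04-potentials}.  The lower bounds use only $\delta_1=F_1$, $\delta_2=F_2$ and the triangle inequality $|d(T_i,x)-d(T_i,y)|\le d(x,y)$ of Definition~\ref{def:N01-graph}.  Item (iv) is immediate: $d(T_1,T_2)=\delta_1(T_2)=F_1(T_2)=D$ by the table \eqref{eq:N04-values}.

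\emph{Item (iii).}  The tail path $t_1,z_1,\dots,z_i$ has $i$ edges.  For the lower bound,
\[
 d(t_1,z_i)\ \ge\ \delta_1(z_i)-\delta_1(t_1)=(D-8+i)-(D-8)=i .
\]

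\emph{Item (i).}  The frame paths $s_j,\dots,s_k$ and $t_j,\dots,t_k$ have $|j-k|$ edges.  For the lower bound,
\[
 d(s_j,s_k)\ \ge\ |F_1(s_j)-F_1(s_k)|=|j-k|,\qquad
 d(t_j,t_k)\ \ge\ |F_2(t_j)-F_2(t_k)|=|j-k| .
\]

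\emph{Item (ii).}  First the arithmetic: since $L_k=D-7-2k$ (Lemma~\ref{lem:N03-params}(i)), with $M:=\max(j,k)$ we get
\[
 L_M+|j-k|=D-7-2M+|j-k|=D-7-j-k .
\]
For the upper bound, take the case $j\le k$.  Walk $s_j,\dots,s_k$ ($k-j$ edges), then the spine of a copy $U_{k,1}$ from $s_k$ to $t_k$ ($L_k$ edges, Lemma~\ref{lem:N03-inside}(i)).  The total is $L_k+(k-j)=L_{\max(j,k)}+|j-k|$.  When $j>k$, go along the spine of $U_{j,1}$ from $s_j$ to $t_j$, then along $t_j,\dots,t_k$, which gives $L_j+(j-k)$.

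For the lower bound, a single potential does not suffice: $F_1(t_k)-F_1(s_j)=D-7-k-j$, and this is exactly the target value.  So
\[
 d(s_j,t_k)\ \ge\ \delta_1(t_k)-\delta_1(s_j)=(D-7-k)-j=D-7-j-k ,
\]
which closes (ii).  The only step needing care is the bookkeeping showing that this bound matches the explicit path in both orders of $j$ and $k$; there is no real obstacle.
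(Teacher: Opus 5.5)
Your proof is correct and is the paper's argument: explicit frame and spine paths for the upper bounds, and the potentials $\delta_1,\delta_2$ of Lemma~\ref{lem:N04-potentials} with the triangle inequality for the lower bounds. In (iii) you subtract $\delta_1$-values where the paper uses $\delta_2$, which works equally well; the only slip is your remark in (ii) that ``a single potential does not suffice'', which your next line refutes, since $\delta_1(t_k)-\delta_1(s_j)$ alone already gives the matching lower bound.
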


\begin{proof}
We prove each equality with a path of $G$ for the upper bound and the
triangle inequality for the lower bound.  For vertices $o,x,y$ of $G$, the
triangle inequality (Definition~\ref{def:N01-graph}) gives
$d(o,y)\le d(o,x)+d(x,y)$, that is,
\[
 d(x,y)\ge d(o,y)-d(o,x).
\]
We take $o=T_1$ or $o=T_2$, so that $d(o,\cdot)$ is $\delta_1$ or
$\delta_2$, given by the table \eqref{eq:N04-values} of
Lemma~\ref{lem:N04-potentials}.

(1) We first compute the right-hand side of (ii).  By
Lemma~\ref{lem:N03-params}(i), $L_k=D-7-2k$ for every level $k$.  Since
$2\max(j,k)-|j-k|=j+k$, we get
\[
 L_{\max(j,k)}+|j-k|=D-7-2\max(j,k)+|j-k|=D-7-j-k.
\]

(2) We prove (i).  By symmetry we may assume $j\le k$.  The frame
contains the paths $s_j,s_{j+1},\dots,s_k$ and $t_k,t_{k-1},\dots,t_j$
(Definition~\ref{def:N03-graph}).  Each has $k-j$ edges, so
$d(s_j,s_k)\le k-j$ and $d(t_j,t_k)\le k-j$.  Conversely, the triangle
inequality with $o=T_1$ and with $o=T_2$ gives
\[
 d(s_j,s_k)\ge\delta_1(s_k)-\delta_1(s_j)=k-j,\qquad
 d(t_j,t_k)\ge\delta_2(t_k)-\delta_2(t_j)=(7+k)-(7+j)=k-j.
\]
With the upper bounds, this proves (i).

(3) We prove (ii).  Suppose first that $j\le k$.  We go from $s_j$ to
$s_k$ along the frame, with $k-j$ edges.  Then we go from $s_k$ to $t_k$
along the spine $s_k=v^{k,1}_0,v^{k,1}_1,\dots,v^{k,1}_{L_k}=t_k$ of
$U_{k,1}$ (Definition~\ref{def:N03-graph}), with $L_k$ edges.  This path
has $(k-j)+L_k=L_{\max(j,k)}+|j-k|$ edges.  Suppose now that $j>k$.  We go
from $s_j$ to $t_j$ along the spine of $U_{j,1}$, with $L_j$ edges.  Then
we go from $t_j$ to $t_k$ along the frame, with $j-k$ edges.  This path has
$L_j+(j-k)=L_{\max(j,k)}+|j-k|$ edges.  In both cases, by (1),
\[
 d(s_j,t_k)\le L_{\max(j,k)}+|j-k|=D-7-j-k.
\]
Conversely, the triangle inequality with $o=T_1$ gives, for all $j$ and
$k$,
\[
 d(s_j,t_k)\ge\delta_1(t_k)-\delta_1(s_j)=(D-7-k)-j=D-7-j-k.
\]
With the upper bound, this proves (ii).

(4) We prove (iii).  The path $t_1,z_1,\dots,z_i$ of the frame has $i$
edges, so $d(t_1,z_i)\le i$.  Conversely, the triangle inequality with
$o=T_2$ gives
\[
 d(t_1,z_i)\ge\delta_2(t_1)-\delta_2(z_i)=8-(8-i)=i.
\]
With the upper bound, this proves (iii).

(5) Finally, (iv) is a value of the table \eqref{eq:N04-values}:
$d(T_1,T_2)=\delta_1(T_2)=D$.
\end{proof}

\begin{lemma}[the copies are isometric]\label{lem:N04-isometric}
Let $\varphi$ be a restricted formula and $G:=G_\varphi$
(Definition~\ref{def:N03-graph}).  Let $k$ be a level and $c$ an option,
with $h_{U_{k,c}}(u)=d_{U_{k,c}}(s_k,u)$ and
$\hb_{U_{k,c}}(u)=d_{U_{k,c}}(t_k,u)$ for $u\in V(U_{k,c})$
(Definition~\ref{def:N03-names}); let $\alpha=2m+5$ and $\beta_k=2m+5-2k$ be the numbers of
Definition~\ref{def:N03-params}, and $\rho_{k,c}=r^{k,c}_1$ the core of
$U_{k,c}$ (Definition~\ref{def:N03-rows}).
\begin{enumerate}
\item[\rm(i)] $d(u,u')=d_{U_{k,c}}(u,u')$ for all $u,u'\in V(U_{k,c})$.
\item[\rm(ii)] $d(s_k,u)=h_{U_{k,c}}(u)$ and $d(t_k,u)=\hb_{U_{k,c}}(u)$
  for every $u\in V(U_{k,c})$.
\item[\rm(iii)] $(h_{U_{k,c}},\hb_{U_{k,c}})(\rho_{k,c})=(\alpha,\beta_k)$,
  $d(\rho_{k,c},s_k)=\alpha$ and $d(\rho_{k,c},t_k)=\beta_k$.
\item[\rm(iv)] For every member $u\ne\rho_{k,c}$ of $U_{k,c}$,
  \[
   d(\rho_{k,c},u)\ <\ \min\bigl(\alpha+h_{U_{k,c}}(u),\
   \beta_k+\hb_{U_{k,c}}(u)\bigr).
  \]
\end{enumerate}
\end{lemma}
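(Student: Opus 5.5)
The main step is (i), and it rests on Lemma~\ref{lem:N01-isometric} applied with $X:=U_{k,c}$, $s:=s_k$, $t:=t_k$.  Three hypotheses must be checked.  First, $U_{k,c}$ is connected (Lemma~\ref{lem:N03-inside}(i)).  Second, it is attached at $\{s_k,t_k\}$ (Lemma~\ref{lem:N03-structure}(ii)).  Third, and this is the real work, we need
\[
 d_{G-\Int(U_{k,c})}(s_k,t_k)\ \ge\ d_{U_{k,c}}(s_k,t_k)=L_k ,
\]
where the equality is Lemma~\ref{lem:N03-inside}(i).  We get the inequality from Lemma~\ref{lem:N04-skeleton}(ii) with $j=k$, which gives $d_G(s_k,t_k)=L_k$.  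Deleting vertices can only increase distances, so $d_{G-\Int(U_{k,c})}(s_k,t_k)\ge d_G(s_k,t_k)=L_k$.  Lemma~\ref{lem:N01-isometric} then yields $d(u,u')=d_{U_{k,c}}(u,u')$ for all $u,u'\in V(U_{k,c})$.

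Part (ii) is (i) with $u'=s_k$ or $u'=t_k$, read through the definitions $h_{U_{k,c}}(u)=d_{U_{k,c}}(s_k,u)$ and $\hb_{U_{k,c}}(u)=d_{U_{k,c}}(t_k,u)$.  For part (iii), the first claim is Lemma~\ref{lem:N03-inside}(ii), and the two distances follow by (ii) applied at $u=\rho_{k,c}$.  For part (iv), (i) gives $d(\rho_{k,c},u)=d_{U_{k,c}}(\rho_{k,c},u)$, and Lemma~\ref{lem:N03-inside}(ii) gives the strict inequality.

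The only point needing care is the third hypothesis of Lemma~\ref{lem:N01-isometric}.  One might fear that other copies of level $k$, or detours through $T_1$ and $T_2$, could provide a shortcut between $s_k$ and $t_k$.  The argument above rules this out: every such path is also a path of $G$, and $d_G(s_k,t_k)$ was already certified to be $L_k$ by the potential $\delta_1$ in Lemma~\ref{lem:N04-skeleton}(ii).  This gives the bound directly, so no analysis of the other copies or of the arms is needed.
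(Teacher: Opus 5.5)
Your proof is correct and follows the paper's own argument. Part (i) comes from Lemma~\ref{lem:N01-isometric}, whose distance hypothesis you get from $d_{G-\Int(U_{k,c})}(s_k,t_k)\ge d(s_k,t_k)=L_k$ by Lemma~\ref{lem:N04-skeleton}(ii); parts (ii)--(iv) are then read off from (i) and Lemma~\ref{lem:N03-inside}(ii). The only thing you leave unstated is a fourth hypothesis of Lemma~\ref{lem:N01-isometric}, that $G$ itself is connected. The paper takes this from Lemma~\ref{lem:N04-potentials}.
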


\begin{proof}
Let $U:=U_{k,c}$.  We prove (i) with Lemma~\ref{lem:N01-isometric}, and
(ii) follows from (i).  Then (iii) and (iv) are
Lemma~\ref{lem:N03-inside}(ii), moved to $G$ by (i) and (ii).

(1) We prove (i).  We apply Lemma~\ref{lem:N01-isometric} to $G$ and
$X:=U$, with $s:=s_k$ and $t:=t_k$, and we check its four hypotheses.

First, by Lemma~\ref{lem:N04-potentials}, $G$ is connected.  Second, by
Lemma~\ref{lem:N03-inside}(i), $U$ is connected.  Third, by
Lemma~\ref{lem:N03-structure}(ii), $U$ is attached at $\{s_k,t_k\}$.

Finally, we show $d_{G-\Int(U)}(s_k,t_k)\ge d_U(s_k,t_k)$.  Every path of
$G-\Int(U)$ is a path of $G$, so
\[
 d_{G-\Int(U)}(s_k,t_k)\ge d(s_k,t_k).
\]
By Lemma~\ref{lem:N04-skeleton}(ii) with $j=k$,
\[
 d(s_k,t_k)=L_{\max(k,k)}+|k-k|=L_k.
\]
By Lemma~\ref{lem:N03-inside}(i), $d_U(s_k,t_k)=L_k$.  Hence
$d_{G-\Int(U)}(s_k,t_k)\ge d(s_k,t_k)=L_k=d_U(s_k,t_k)$.

So the hypotheses of Lemma~\ref{lem:N01-isometric} hold, and it gives
$d(u,u')=d_U(u,u')$ for all $u,u'\in V(U)$, which is (i).

(2) We prove (ii).  Let $u\in V(U)$.  By (i) with $u'=s_k$ and with
$u'=t_k$, and by Definition~\ref{def:N03-names},
\[
 d(s_k,u)=d_U(s_k,u)=h_U(u),\qquad d(t_k,u)=d_U(t_k,u)=\hb_U(u).
\]

(3) We prove (iii).  By Lemma~\ref{lem:N03-inside}(ii),
$(h_U,\hb_U)(\rho_{k,c})=(\alpha,\beta_k)$.  By (ii) at $u=\rho_{k,c}$,
$d(s_k,\rho_{k,c})=h_U(\rho_{k,c})=\alpha$ and
$d(t_k,\rho_{k,c})=\hb_U(\rho_{k,c})=\beta_k$.  This proves (iii).

(4) We prove (iv).  Let $u\ne\rho_{k,c}$ be a member of $U$.  By (i) and
Lemma~\ref{lem:N03-inside}(ii),
\[
 d(\rho_{k,c},u)=d_U(\rho_{k,c},u)<\min\bigl(\alpha+h_U(u),\
 \beta_k+\hb_U(u)\bigr).
\]
This proves (iv).
\end{proof}

\begin{lemma}[first-exit readings]\label{lem:N04-reading}
Let $\varphi$ be a restricted formula and $G:=G_\varphi$
(Definition~\ref{def:N03-graph}), with $\delta_1=d(T_1,\cdot)$ and
$\delta_2=d(T_2,\cdot)$ (Lemma~\ref{lem:N04-potentials}).
\begin{enumerate}
\item[\rm(i)] Let $k$ be a level, $c$ an option, $u$ a member of $U_{k,c}$,
  with the profile
  $(h_{U_{k,c}}(u),\hb_{U_{k,c}}(u))=(d_{U_{k,c}}(s_k,u),\,d_{U_{k,c}}(t_k,u))$
  of Definition~\ref{def:N03-names}, and $w\in V(G)\setminus\Int(U_{k,c})$.
  Then
  \[
   d(w,u)=\min\bigl(d(w,s_k)+h_{U_{k,c}}(u),\ d(w,t_k)+\hb_{U_{k,c}}(u)\bigr).
  \]
\item[\rm(ii)] Let $1\le j\le4$, $u\in\Int(Y^j)$ and
  $w\in V(G)\setminus\Int(Y^j)$.  Then
  \[
   d(w,u)=\min\bigl(d(w,T_1)+\delta_1(u),\ d(w,T_2)+\delta_2(u)\bigr).
  \]
\item[\rm(iii)] For every $u\in V(G)\setminus\{q,q'\}$,
  $d(q,u)=d(q',u)=1+\min\bigl(d(t_m,u),\ d(t_{m-1},u)\bigr)$; and
  $d(q,q')=2$.
\end{enumerate}
\end{lemma}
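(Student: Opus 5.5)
Parts (i) and (ii) are instances of the first-exit lemma, Lemma~\ref{lem:N01-exit}, and (iii) is a direct neighbourhood argument.

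For (i), we apply Lemma~\ref{lem:N01-exit} to $G$ and $X:=U_{k,c}$ with $s:=s_k$, $t:=t_k$. The hypotheses hold because $G$ is connected (Lemma~\ref{lem:N04-potentials}) and $U_{k,c}$ is attached at $\{s_k,t_k\}$ (Lemma~\ref{lem:N03-structure}(ii)). The lemma gives
\[
 d(w,u)=\min\bigl(d(u,s_k)+d(s_k,w),\ d(u,t_k)+d(t_k,w)\bigr).
\]
We then replace $d(s_k,u)$ by $h_{U_{k,c}}(u)$ and $d(t_k,u)$ by $\hb_{U_{k,c}}(u)$, as Lemma~\ref{lem:N04-isometric}(ii) allows; this is the only place where the isometry of the copies is needed.

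For (ii), we apply the same lemma to $X:=Y^j$ attached at $\{T_1,T_2\}$, again by Lemma~\ref{lem:N03-structure}(ii). Here $d(u,T_1)=\delta_1(u)$ and $d(u,T_2)=\delta_2(u)$ hold by definition, so no isometry statement is required.

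For (iii), Definition~\ref{def:N03-graph} gives $N(q)=N(q')=\{t_m,t_{m-1}\}$. Let $u\notin\{q,q'\}$.
\begin{itemize}
\item For the lower bound, take a shortest $q$--$u$ path. Since $u\ne q$, the path has at least one edge, and its second vertex is $t_m$ or $t_{m-1}$. Hence $d(q,u)\ge1+\min(d(t_m,u),d(t_{m-1},u))$.
\item For the upper bound, the triangle inequality through each neighbour gives $d(q,u)\le1+d(t_m,u)$ and $d(q,u)\le1+d(t_{m-1},u)$.
\end{itemize}
The same argument applies to $q'$, or we can use Lemma~\ref{lem:N01-twins}(ii) with the twins of Lemma~\ref{lem:N03-structure}(iii). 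For $d(q,q')$, the vertices $q$ and $q'$ are not adjacent, since each is adjacent only to $t_m$ and $t_{m-1}$. The path $q,t_m,q'$ then shows $d(q,q')=2$.

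The only point that needs care is the lower bound in (iii) when $u$ is itself $t_m$ or $t_{m-1}$. In that case the minimum is $0$, and the formula gives $1$, which is correct. Beyond that there is no real obstacle: the statement is bookkeeping on top of Lemmas~\ref{lem:N01-exit} and~\ref{lem:N04-isometric}.
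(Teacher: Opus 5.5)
Your proposal is correct and matches the paper's proof. Parts (i) and (ii) are Lemma~\ref{lem:N01-exit} together with Lemma~\ref{lem:N04-isometric}(ii). For (iii), the paper applies Lemma~\ref{lem:N01-exit} to the four-vertex subgraph on $t_m,t_{m-1},q,q'$, which is attached at $\{t_m,t_{m-1}\}$, whereas you argue directly from $N(q)=N(q')=\{t_m,t_{m-1}\}$; this is the same first-exit argument, specialised to an exit at the first step.
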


\begin{proof}
By Lemma~\ref{lem:N04-potentials}, $G$ is connected.  Items (i) and (ii)
are Lemma~\ref{lem:N01-exit} (first exit), applied to a copy and to an
arm.  We prove (iii) directly, by the argument of the lower bound of
Lemma~\ref{lem:N01-exit}.

(1) We prove (i).  Let $U:=U_{k,c}$.  We apply Lemma~\ref{lem:N01-exit} to
$G$ and $X:=U$, with $s:=s_k$ and $t:=t_k$.  Its hypotheses hold:
\begin{itemize}
\item $G$ is connected;
\item $U$ is attached at $\{s_k,t_k\}$ (Lemma~\ref{lem:N03-structure}(ii));
\item $u$ is a member of $U$, so $u\in\Int(U)$
  (Definition~\ref{def:N03-names});
\item $w\in V(G)\setminus\Int(U)$.
\end{itemize}
Lemma~\ref{lem:N01-exit} gives
\[
 d(u,w)=\min\bigl(d(u,s_k)+d(s_k,w),\ d(u,t_k)+d(t_k,w)\bigr).
\]
By Lemma~\ref{lem:N04-isometric}(ii), $d(s_k,u)=h_U(u)$ and
$d(t_k,u)=\hb_U(u)$.  As $d$ is symmetric (Definition~\ref{def:N01-graph}),
\[
 d(w,u)=d(u,w)=\min\bigl(d(w,s_k)+h_U(u),\ d(w,t_k)+\hb_U(u)\bigr).
\]

(2) We prove (ii).  We apply Lemma~\ref{lem:N01-exit} to $G$ and
$X:=Y^j$, with $s:=T_1$ and $t:=T_2$.  Its hypotheses hold:
\begin{itemize}
\item $G$ is connected;
\item $Y^j$ is attached at $\{T_1,T_2\}$ (Lemma~\ref{lem:N03-structure}(ii));
\item $u\in\Int(Y^j)$ and $w\in V(G)\setminus\Int(Y^j)$.
\end{itemize}
Lemma~\ref{lem:N01-exit} gives
\[
 d(u,w)=\min\bigl(d(u,T_1)+d(T_1,w),\ d(u,T_2)+d(T_2,w)\bigr).
\]
By the definition of $\delta_1$ and $\delta_2$, $d(T_1,u)=\delta_1(u)$ and
$d(T_2,u)=\delta_2(u)$.  As $d$ is symmetric,
\[
 d(w,u)=d(u,w)=\min\bigl(d(w,T_1)+\delta_1(u),\ d(w,T_2)+\delta_2(u)\bigr).
\]

(3) We prove (iii).  Let $X$ be the subgraph of $G$ with the vertices
$t_m$, $t_{m-1}$, $q$, $q'$ and the four edges $t_mq$, $t_{m-1}q$, $t_mq'$,
$t_{m-1}q'$.  By Definition~\ref{def:N03-graph}, every edge of $G$ with an
end in $\{q,q'\}$ is one of these four, so $X$ is attached at
$\{t_m,t_{m-1}\}$ with $\Int(X)=\{q,q'\}$ (Definition~\ref{def:N01-attach}).
Let $u\in V(G)\setminus\{q,q'\}$ and $w\in\{q,q'\}$.  Lemma~\ref{lem:N01-exit}
at the interior vertex $w$ and the vertex $u$ gives
\[
 d(w,u)=\min\bigl(d(w,t_m)+d(t_m,u),\ d(w,t_{m-1})+d(t_{m-1},u)\bigr)
 =1+\min\bigl(d(t_m,u),\ d(t_{m-1},u)\bigr),
\]
as $w$ is adjacent to $t_m$ and to $t_{m-1}$.  Finally $q$ and $q'$ are
not adjacent and have the common neighbour $t_m$, so $d(q,q')=2$.  This
proves (iii).
\end{proof}

\begin{lemma}[the rows]\label{lem:N04-rows}
Let $\varphi$ be a restricted formula and $G:=G_\varphi$
(Definition~\ref{def:N03-graph}).  Let $\alpha$ and $\beta_j$
($1\le j\le m$) be the numbers of Definition~\ref{def:N03-params},
$\rho_{j,c}=r^{j,c}_1$ and $w_j=y^j_{\pi_j}$ the rows of
Definition~\ref{def:N03-rows}, and $\delta_1=d(T_1,\cdot)$,
$\delta_2=d(T_2,\cdot)$ (Lemma~\ref{lem:N04-potentials}).
\begin{enumerate}
\item[\rm(i)] For $1\le j,k\le m$ and every option $c$,
  \[
   d(\rho_{j,c},s_k)=\alpha+|j-k|,\qquad d(\rho_{j,c},t_k)=\beta_j+|j-k|.
  \]
\item[\rm(ii)] For $1\le k\le m$,
  \[
   d(q,s_k)=m-k+1+L_m,\qquad d(q,t_k)=\begin{cases}m-k&\text{if }k<m,\\1&\text{if }k=m.\end{cases}
  \]
\item[\rm(iii)] For $1\le j\le4$,
  \[
  \begin{aligned}
   d(w_j,u)&=\min\bigl(\pi_j+\delta_1(u),\ D-\pi_j+\delta_2(u)\bigr)
     &&(u\in V(G)\setminus\Int(Y^j)),\\
   d(w_j,y^j_i)&=|i-\pi_j| &&(0\le i\le D),\\
   d(w_j,\tilde y^j)&=2.
  \end{aligned}
  \]
\end{enumerate}
\end{lemma}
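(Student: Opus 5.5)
The three items are all first-exit computations, each done with an upper bound by an explicit walk and a lower bound from a known distance.

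For (i), the core $\rho_{j,c}$ is a member of $U_{j,c}$, and $s_k,t_k\notin\Int(U_{j,c})$. So Lemma~\ref{lem:N04-reading}(i), with $w:=s_k$ or $w:=t_k$ and $u:=\rho_{j,c}$, gives
\[
 d(\rho_{j,c},s_k)=\min\bigl(d(s_k,s_j)+\alpha,\ d(s_k,t_j)+\beta_j\bigr),
\]
using the profile $(\alpha,\beta_j)$ of Lemma~\ref{lem:N04-isometric}(iii), and similarly for $t_k$. Lemma~\ref{lem:N04-skeleton}(i),(ii) gives $d(s_k,s_j)=|j-k|$ and $d(s_k,t_j)=D-7-j-k$. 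We then have to check that the other term is not smaller.
\begin{itemize}
\item For $s_k$: we need $D-7-j-k+\beta_j\ge\alpha+|j-k|$. With $D-7=4m+8$ and $\beta_j=\alpha-2j$, this reads $4m+8-3j-k\ge|j-k|$, which holds since $j,k\le m$.
\item For $t_k$: we compare $d(t_k,s_j)+\alpha=D-7-j-k+\alpha$ with $|j-k|+\beta_j$. The difference is $4m+8-j-k+2j-|j-k|\ge 2m+8>0$.
\end{itemize}

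For (ii), we use Lemma~\ref{lem:N04-reading}(iii): $d(q,u)=1+\min(d(t_m,u),d(t_{m-1},u))$.
\begin{itemize}
\item For $u=t_k$: Lemma~\ref{lem:N04-skeleton}(i) gives $1+\min(m-k,|m-1-k|)$. This equals $m-k$ when $k\le m-1$, and $1$ when $k=m$.
\item For $u=s_k$: Lemma~\ref{lem:N04-skeleton}(ii) gives $d(t_m,s_k)=D-7-m-k$ and $d(t_{m-1},s_k)=D-6-m-k$. The minimum is the first, so $d(q,s_k)=D-6-m-k$. Since $L_m=D-7-2m$, this is $m-k+1+L_m$.
\end{itemize}

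For (iii), the first formula is Lemma~\ref{lem:N04-reading}(ii), applied with the roles exchanged: take $u$ outside $\Int(Y^j)$ and $w_j\in\Int(Y^j)$. Table~\eqref{eq:N04-values} gives $\delta_1(w_j)=\pi_j$ and $\delta_2(w_j)=D-\pi_j$, and then symmetry of $d$ finishes it.

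The distances inside the arm are the only part needing care.
\begin{itemize}
\item Upper bounds: the path along the arm has $|i-\pi_j|$ edges, and the walk $w_j,y^j_{\pi_j\pm1},\tilde y^j$ has 2 edges.
\item Lower bound for $y^j_i$: the triangle inequality through $T_1$ gives $d(w_j,y^j_i)\ge|\delta_1(y^j_i)-\delta_1(w_j)|=|i-\pi_j|$, using the table. The endpoints $i=0$ and $i=D$ are covered as well, since $\delta_1(T_1)=0$ and $\delta_1(T_2)=D$.
\item Lower bound for $\tilde y^j$: $w_j$ and $\tilde y^j$ are distinct and non-adjacent twins (Lemma~\ref{lem:N03-structure}(iii)), so their distance is at least 2.
\end{itemize}

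The main obstacle, mild as it is, is the inequality bookkeeping in (i). There one must confirm that the ``wrong'' terminal never yields a shorter route, for every pair $j,k$ on either side. The estimates above settle it using only $j,k\le m$.
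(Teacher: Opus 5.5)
Your proposal is correct and follows the paper's proof essentially step for step. Items (i) and (ii) go through the first-exit formulas of Lemma~\ref{lem:N04-reading}(i),(iii) with the frame distances of Lemma~\ref{lem:N04-skeleton}, and (iii) through Lemma~\ref{lem:N04-reading}(ii) with the arm path and the triangle inequality through $T_1$. The differences are only cosmetic: you compare the terms using the closed form $D-7-j-k$ instead of $L_{\max(j,k)}+|j-k|$, and you leave implicit the check $1\le\pi_j\le D-1$, which places $w_j$ in $\Int(Y^j)$ and lets you read $\delta_1(w_j)=\pi_j$ off the table.
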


\begin{proof}
Items (i) and (ii) follow from Lemma~\ref{lem:N04-reading}(i) and (iii),
with the distances on the frame (Lemma~\ref{lem:N04-skeleton}).  Item (iii)
follows
from Lemma~\ref{lem:N04-reading}(ii) and the table \eqref{eq:N04-values}
of Lemma~\ref{lem:N04-potentials}.

(1) We prove (i).  Let $1\le j,k\le m$, $c$ an option,
$\rho:=\rho_{j,c}$, $U:=U_{j,c}$ and $M:=\max(j,k)$.  We apply
Lemma~\ref{lem:N04-reading}(i) at the level $j$, with $u:=\rho$, and with
$w:=s_k$ and then $w:=t_k$.  Its hypotheses hold:
\begin{itemize}
\item $\rho=r^{j,c}_1$ is a $t$-ear member of $U$
  (Definitions~\ref{def:N03-names} and~\ref{def:N03-rows});
\item $s_k$ and $t_k$ lie in $S$, so they are not in $\Int(U)$
  (Lemma~\ref{lem:N03-structure}(i)).  This also holds when $k=j$, where
  $s_k$ and $t_k$ are the terminals of $U$.
\end{itemize}
By Lemma~\ref{lem:N04-isometric}(iii), $(h_U,\hb_U)(\rho)=(\alpha,\beta_j)$,
so Lemma~\ref{lem:N04-reading}(i) gives
\[
\begin{aligned}
 d(s_k,\rho)&=\min\bigl(d(s_k,s_j)+\alpha,\ d(s_k,t_j)+\beta_j\bigr),\\
 d(t_k,\rho)&=\min\bigl(d(t_k,s_j)+\alpha,\ d(t_k,t_j)+\beta_j\bigr).
\end{aligned}
\]
By Lemma~\ref{lem:N04-skeleton}(i), $d(s_k,s_j)=d(t_k,t_j)=|j-k|$.  By
Lemma~\ref{lem:N04-skeleton}(ii), taken at $(k,j)$ and at $(j,k)$,
$d(s_k,t_j)=d(s_j,t_k)=L_M+|j-k|$.  Hence
\[
 d(s_k,\rho)=|j-k|+\min(\alpha,\ L_M+\beta_j),\qquad
 d(t_k,\rho)=|j-k|+\min(L_M+\alpha,\ \beta_j).
\]
We compare the two terms of each minimum.  By
Lemma~\ref{lem:N03-params}(i) and Definition~\ref{def:N03-params},
$\alpha-\beta_j=\sigma_j=2j$ and $L_M=4m+8-2M$.  As $j\le m$ and $M\le m$,
\[
 L_M-(\alpha-\beta_j)=4m+8-2M-2j\ge8>0,
\]
so $\alpha<L_M+\beta_j$, and the first minimum is $\alpha$.  As
$\alpha-\beta_j=2j>0$ and $L_M>0$, we have $\beta_j<\alpha<L_M+\alpha$, and
the second minimum is $\beta_j$.  As $d$ is symmetric
(Definition~\ref{def:N01-graph}),
\[
 d(\rho,s_k)=\alpha+|j-k|,\qquad d(\rho,t_k)=\beta_j+|j-k|.
\]

(2) We prove (ii).  Let $1\le k\le m$.  By Definition~\ref{def:N03-graph},
$s_k$ and $t_k$ are distinct from $q$ and $q'$, so
Lemma~\ref{lem:N04-reading}(iii) applies at $u=s_k$ and at $u=t_k$:
\[
 d(q,s_k)=1+\min\bigl(d(t_m,s_k),d(t_{m-1},s_k)\bigr),\qquad
 d(q,t_k)=1+\min\bigl(d(t_m,t_k),d(t_{m-1},t_k)\bigr).
\]
As $k\le m$, Lemma~\ref{lem:N04-skeleton}(ii) at $(k,m)$ and at
$(k,m-1)$ gives $d(s_k,t_m)=L_m+(m-k)$ and
$d(s_k,t_{m-1})=L_{\max(k,m-1)}+|k-m+1|$, which is
$L_{m-1}+(m-1-k)=L_m+(m-k)+1$ if $k\le m-1$ and $L_m+1$ if $k=m$
(Definition~\ref{def:N03-params}: $L_{m-1}=L_m+2$); in both cases
$d(s_k,t_{m-1})>d(s_k,t_m)$, so $d(q,s_k)=1+L_m+(m-k)$.
Lemma~\ref{lem:N04-skeleton}(i) gives $d(t_k,t_m)=m-k$ and
$d(t_k,t_{m-1})=|m-1-k|$, which is $m-k-1$ if $k<m$ and $1$ if $k=m$; so
$d(q,t_k)=1+(m-k-1)=m-k$ if $k<m$, and $d(q,t_m)=1+\min(0,1)=1$.

(3) We prove (iii).  Let $1\le j\le4$.  By Definition~\ref{def:N03-graph},
$\pi_j\in\{p,7\}$.  By Definition~\ref{def:N03-params}, $1\le7\le4m+14=D-1$
and $1\le2m+4=p\le D-1$.  So $1\le\pi_j\le D-1$, and $w_j=y^j_{\pi_j}$ lies
in $\Int(Y^j)=\{y^j_1,\dots,y^j_{D-1},\tilde y^j\}$
(Lemma~\ref{lem:N03-structure}(i)).  By the line $y^j_i$ of the table
\eqref{eq:N04-values} with $i=\pi_j$,
\[
 (\delta_1,\delta_2)(w_j)=(\pi_j,\ D-\pi_j).
\]

\emph{The first formula.}  Let $u\in V(G)\setminus\Int(Y^j)$.  We apply
Lemma~\ref{lem:N04-reading}(ii) with $w_j$ in the role of its $u$, and $u$
in the role of its $w$.  By the definition of $\delta_1,\delta_2$ and the
symmetry of $d$,
\[
 d(w_j,u)=\min\bigl(d(u,T_1)+\delta_1(w_j),\ d(u,T_2)+\delta_2(w_j)\bigr)
 =\min\bigl(\pi_j+\delta_1(u),\ D-\pi_j+\delta_2(u)\bigr).
\]

\emph{The second formula.}  Let $0\le i\le D$.  The arm contains the path
from $y^j_i$ to $y^j_{\pi_j}$ along $y^j_0,\dots,y^j_D$
(Definition~\ref{def:N03-graph}), with $|i-\pi_j|$ edges, so
$d(w_j,y^j_i)\le|i-\pi_j|$.  Conversely, $\delta_1(y^j_i)=i$:
\begin{itemize}
\item for $1\le i\le D-1$, this is the line $y^j_i$ of the table
  \eqref{eq:N04-values};
\item $y^j_0=T_1$ and $y^j_D=T_2$, with $\delta_1(T_1)=0$ and
  $\delta_1(T_2)=D$ by the lines $T_1$ and $T_2$.
\end{itemize}
The triangle inequality (Definition~\ref{def:N01-graph}), with $T_1$ as the
base point, gives
\[
 d(w_j,y^j_i)\ge|\delta_1(y^j_i)-\delta_1(w_j)|=|i-\pi_j|.
\]

\emph{The third formula.}  By Definition~\ref{def:N03-graph}, the
neighbours of $\tilde y^j$ are $y^j_{\pi_j-1}$ and $y^j_{\pi_j+1}$, and
$\tilde y^j\ne w_j$.  So $\tilde y^j$ is distinct from $w_j=y^j_{\pi_j}$
and not adjacent to it, and $d(w_j,\tilde y^j)\ge2$.  The vertex
$y^j_{\pi_j-1}$ is adjacent to $w_j$, by an edge of the arm, and to
$\tilde y^j$, so $d(w_j,\tilde y^j)\le2$.  Hence $d(w_j,\tilde y^j)=2$.
\end{proof}

\begin{corollary}[how the rows read a copy]\label{cor:N04-levels}
Let $\varphi$ be a restricted formula and $G:=G_\varphi$
(Definition~\ref{def:N03-graph}).  Let $\alpha$, $\beta_j$, $\sigma_j$
($1\le j\le m$), $p$ and $L_m$ be the numbers of
Definition~\ref{def:N03-params}, $\rho_{j,c}=r^{j,c}_1$ and
$w_j=y^j_{\pi_j}$ the rows of Definition~\ref{def:N03-rows}, and
$\delta_1=d(T_1,\cdot)$, $\delta_2=d(T_2,\cdot)$
(Lemma~\ref{lem:N04-potentials}).  Let $k$ be a level, $c$ an option and
$u$ a member of $U_{k,c}$, with the profile
$(h_{U_{k,c}}(u),\hb_{U_{k,c}}(u))=(d_{U_{k,c}}(s_k,u),\,d_{U_{k,c}}(t_k,u))$
of Definition~\ref{def:N03-names}.
\begin{enumerate}
\item[\rm(i)] For every row $\rho_{j,c'}$ with $(j,c')\ne(k,c)$,
  \[
   d(\rho_{j,c'},u)=|j-k|+\beta_j+\min\bigl(\sigma_j+h_{U_{k,c}}(u),\
   \hb_{U_{k,c}}(u)\bigr).
  \]
\item[\rm(ii)] If $k<m$, then
  $d(q,u)=m-k+1+\min\bigl(L_m+h_{U_{k,c}}(u),\ \hb_{U_{k,c}}(u)-1\bigr)$;
  if $k=m$, then $d(q,u)=1+\hb_{U_{m,c}}(u)$.
\item[\rm(iii)] $d(w_3,u)=d(w_4,u)=7+\delta_1(u)=7+k+h_{U_{k,c}}(u)$.
\item[\rm(iv)] $d(w_1,u)=d(w_2,u)=\min\bigl(p+\delta_1(u),\ p+7+\delta_2(u)\bigr)$.
\end{enumerate}
\end{corollary}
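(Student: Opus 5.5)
Every item is obtained by combining the first-exit formula Lemma~\ref{lem:N04-reading}(i) with the distances from the landmark to the terminals $s_k$, $t_k$ computed in Lemmas~\ref{lem:N04-rows} and~\ref{lem:N04-skeleton}, and then simplifying the minimum. Throughout write $h:=h_{U_{k,c}}(u)$ and $\hb:=\hb_{U_{k,c}}(u)$.

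For (i), the landmark $w=\rho_{j,c'}$ lies outside $\Int(U_{k,c})$, because $(j,c')\ne(k,c)$ and the interiors of distinct copies are disjoint (Lemma~\ref{lem:N03-structure}(i)). Lemma~\ref{lem:N04-reading}(i) and Lemma~\ref{lem:N04-rows}(i) then give
\[
 d(\rho_{j,c'},u)=\min\bigl(\alpha+|j-k|+h,\ \beta_j+|j-k|+\hb\bigr).
\]
Since $\alpha=\beta_j+\sigma_j$ (Lemma~\ref{lem:N03-params}(i)), factoring out $|j-k|+\beta_j$ gives the stated formula. Note that $\rho_{j,c'}$ with $j=k$ and $c'\ne c$ is allowed here, as the interiors are still disjoint.

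For (ii), $q$ is not a member of any copy, so the same formula applies with the values of Lemma~\ref{lem:N04-rows}(ii). If $k<m$, then $d(q,s_k)=m-k+1+L_m$ and $d(q,t_k)=m-k$, and $m-k=(m-k+1)+(-1)$ gives the stated expression. If $k=m$, then
\[
 d(q,u)=\min\bigl(1+L_m+h,\ 1+\hb\bigr).
\]
Here $\hb\le d(t_m,s_m)+h=L_m+h$ by the triangle inequality inside the copy, which is isometric by Lemma~\ref{lem:N04-isometric}. So the minimum is $1+\hb$.

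For (iii) and (iv), Lemma~\ref{lem:N04-rows}(iii) gives $d(w_j,u)=\min\bigl(\pi_j+\delta_1(u),\ D-\pi_j+\delta_2(u)\bigr)$, since $u$ is not in the interior of any arm. For $j=1,2$ we have $\pi_j=p$ and $D-p=p+7$ (Lemma~\ref{lem:N03-params}(i)), which gives (iv) directly. For $j=3,4$ we have $\pi_j=7$ and $D-7=L_k+2k$, so we must show $7+\delta_1(u)\le D-7+\delta_2(u)$. By the table \eqref{eq:N04-values}, this reads
\[
 7+k+h\ \le\ D-7+7+k+\hb,
\]
that is, $h-\hb\le D-7$. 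This holds because $h-\hb\le L_k\le D-7$. So the minimum is $7+\delta_1(u)=7+k+h$.

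The only real obstacle is choosing the right branch of each minimum, in (ii) for $k=m$ and in (iii). In both cases the wrong branch is excluded by a triangle inequality inside the isometric copy, together with the identities $L_k=D-7-2k$ and $D-p=p+7$.
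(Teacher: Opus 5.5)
Your proof is correct and follows the paper's route: you combine the first-exit formula (Lemma~\ref{lem:N04-reading}(i), or Lemma~\ref{lem:N04-rows}(iii) for the arm rows) with the terminal distances of Lemmas~\ref{lem:N04-rows} and~\ref{lem:N04-skeleton}, then pick the right branch of each minimum. The only difference is in (ii) for $k=m$: you bound $\hb\le L_m+h$ by the triangle inequality in $U_{m,c}$, which holds for every member and needs neither isometry nor the absence of crossings, whereas the paper checks it member type by member type via Lemma~\ref{lem:N03-inside}(iv) and $E_{m,c}=()$.
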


\begin{proof}
Let $U:=U_{k,c}$.  Items (i) and (ii) follow from
Lemma~\ref{lem:N04-reading}(i) and the distances of
Lemma~\ref{lem:N04-rows}(i),(ii); items (iii) and (iv) follow from
Lemma~\ref{lem:N04-rows}(iii) and the values of the table
\eqref{eq:N04-values}.  By Lemma~\ref{lem:N03-structure}(i), the set
$\Int(U)$ of the members of $U$ is disjoint from $S$, from $\{q\}$, from
every $\Int(U_{j,c'})$ with $(j,c')\ne(k,c)$, and from every
$\Int(Y^j)$.  This gives the hypothesis $w\notin\Int(U)$ of
Lemma~\ref{lem:N04-reading}(i) in (1) and (2), and the hypothesis
$u\notin\Int(Y^j)$ of Lemma~\ref{lem:N04-rows}(iii) in (3) and (4).

(1) We prove (i).  Let $(j,c')\ne(k,c)$ and $\rho:=\rho_{j,c'}$.  The row
$\rho$ is a member of $U_{j,c'}$ (Definitions~\ref{def:N03-names}
and~\ref{def:N03-rows}), so $\rho\in\Int(U_{j,c'})$ and
$\rho\notin\Int(U)$.  Lemma~\ref{lem:N04-reading}(i) with $w:=\rho$ gives
\[
 d(\rho,u)=\min\bigl(d(\rho,s_k)+h_U(u),\ d(\rho,t_k)+\hb_U(u)\bigr).
\]
By Lemma~\ref{lem:N04-rows}(i), $d(\rho,s_k)=\alpha+|j-k|$ and
$d(\rho,t_k)=\beta_j+|j-k|$.  By Lemma~\ref{lem:N03-params}(i),
$\alpha-\beta_j=\sigma_j$, that is, $\alpha=\beta_j+\sigma_j$.  Hence
\[
 d(\rho,u)=\min\bigl(\beta_j+\sigma_j+|j-k|+h_U(u),\
 \beta_j+|j-k|+\hb_U(u)\bigr)
 =|j-k|+\beta_j+\min\bigl(\sigma_j+h_U(u),\ \hb_U(u)\bigr).
\]

(2) We prove (ii).  As $q\notin\Int(U)$, Lemma~\ref{lem:N04-reading}(i)
with $w:=q$ gives
\[
 d(q,u)=\min\bigl(d(q,s_k)+h_U(u),\ d(q,t_k)+\hb_U(u)\bigr).
\]
By Lemma~\ref{lem:N04-rows}(ii), $d(q,s_k)=m-k+1+L_m$, and $d(q,t_k)=m-k$
if $k<m$, $d(q,t_m)=1$.  If $k<m$,
\[
 d(q,u)=\min\bigl(m-k+1+L_m+h_U(u),\ m-k+\hb_U(u)\bigr)
 =m-k+1+\min\bigl(L_m+h_U(u),\ \hb_U(u)-1\bigr).
\]
If $k=m$, then $d(q,u)=\min(1+L_m+h_U(u),\ 1+\hb_U(u))=1+\hb_U(u)$: by
Lemma~\ref{lem:N03-inside}(iv), a spine member $v_i$ has
$\hb_U-h_U=L_m-2i\le L_m$, a $t$-ear member has $\hb_U<h_U$, and $U$ has
no crossing (Definition~\ref{def:N03-graph}: $E_{m,c}=()$), so
$\hb_U(u)\le L_m+h_U(u)$.

(3) We prove (iii).  Let $j\in\{3,4\}$; then $\pi_j=7$
(Definition~\ref{def:N03-graph}).  As $u\notin\Int(Y^j)$, the first
formula of Lemma~\ref{lem:N04-rows}(iii) gives
\[
 d(w_j,u)=\min\bigl(7+\delta_1(u),\ D-7+\delta_2(u)\bigr).
\]
By the line $u\in\Int(U_{k,c})$ of the table \eqref{eq:N04-values},
$\delta_1(u)=k+h_U(u)$ and $\delta_2(u)=7+k+\hb_U(u)$.  We compare the two
terms of the minimum:
\[
 \bigl(D-7+\delta_2(u)\bigr)-\bigl(7+\delta_1(u)\bigr)
 =D-14+(7+k+\hb_U(u))-(k+h_U(u))=D-7-\bigl(h_U(u)-\hb_U(u)\bigr).
\]
By Lemma~\ref{lem:N04-isometric}(ii), $h_U(u)=d(s_k,u)$ and
$\hb_U(u)=d(t_k,u)$, and by the triangle inequality
(Definition~\ref{def:N01-graph}) and Lemma~\ref{lem:N04-skeleton}(ii) with
$j=k$,
\[
 h_U(u)-\hb_U(u)=d(s_k,u)-d(t_k,u)\le d(s_k,t_k)=L_k .
\]
By Lemma~\ref{lem:N03-params}(i), $L_k=D-7-2k$, so
\[
 D-7-\bigl(h_U(u)-\hb_U(u)\bigr)\ge D-7-L_k=2k>0 .
\]
So the minimum is its first term:
$d(w_j,u)=7+\delta_1(u)=7+k+h_U(u)$, for $j=3$ and for $j=4$.

(4) We prove (iv).  Let $j\in\{1,2\}$; then $\pi_j=p$
(Definition~\ref{def:N03-graph}).  As $u\notin\Int(Y^j)$, the first
formula of Lemma~\ref{lem:N04-rows}(iii) gives
\[
 d(w_j,u)=\min\bigl(p+\delta_1(u),\ D-p+\delta_2(u)\bigr),
\]
and $D-p=p+7$ by Lemma~\ref{lem:N03-params}(i).  This proves (iv), for
$j=1$ and for $j=2$.
\end{proof}

\begin{lemma}[small distances from $T_1$ and $T_2$]\label{lem:N04-small}
Let $\varphi$ be a restricted formula and $G:=G_\varphi$
(Definition~\ref{def:N03-graph}), with $\delta_1=d(T_1,\cdot)$ and
$\delta_2=d(T_2,\cdot)$ (Lemma~\ref{lem:N04-potentials}).  For a member $u$
of a copy $U_{j,c}$, $h_{U_{j,c}}(u)=d_{U_{j,c}}(s_j,u)$
(Definition~\ref{def:N03-names}).  Let $u$ be a vertex of $G$ that lies in
no $\Int(Y^j)$, $1\le j\le4$.
\begin{enumerate}
\item[\rm(i)] If $1\le\delta_1(u)\le m$, then $u=s_k$ with $k=\delta_1(u)$,
  or $u$ is a member of a copy $U_{j,c}$ with $j<\delta_1(u)$ and
  $h_{U_{j,c}}(u)=\delta_1(u)-j$.
\item[\rm(ii)] If $1\le\delta_2(u)\le7$, then $u=z_i$ with
  $i=8-\delta_2(u)$.
\end{enumerate}
\end{lemma}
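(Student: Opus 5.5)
The plan is a case analysis over the classes of vertices in Lemma~\ref{lem:N03-structure}(i), excluding the arm interiors, reading $\delta_1$ and $\delta_2$ off table~\eqref{eq:N04-values} of Lemma~\ref{lem:N04-potentials}. Since $u\notin\Int(Y^j)$ for every $j$, the vertex $u$ is in the skeleton $S$, is $q$, or is a member of some copy $U_{j,c}$. In each class we compare the tabulated value with the range in the hypothesis, using $D=4m+15$ and $L_k=D-7-2k$ (Lemma~\ref{lem:N03-params}(i)).

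\emph{Part (i).}  Assume $1\le\delta_1(u)\le m$.
\begin{itemize}
\item $T_1$ has $\delta_1=0$, so it is excluded.
\item $T_2$ has $\delta_1=D>m$, so it is excluded.
\item $s_k$ has $\delta_1=k$; this gives the first alternative, with $k=\delta_1(u)$.
\item $t_k$ has $\delta_1=D-7-k\ge D-7-m=3m+8>m$.
\item $z_i$ has $\delta_1=D-8+i>m$.
\item $q$ and $q'$ have $\delta_1=D-6-m=3m+9>m$.
\item A member $u$ of $U_{j,c}$ has $\delta_1(u)=j+h_{U_{j,c}}(u)$, so $h_{U_{j,c}}(u)=\delta_1(u)-j$. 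A member is not the terminal $s_j$, so $h_{U_{j,c}}(u)\ge1$ and hence $j<\delta_1(u)$. This is the second alternative.
\end{itemize}

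\emph{Part (ii).}  Assume $1\le\delta_2(u)\le7$.
\begin{itemize}
\item $z_i$ has $\delta_2=8-i\in[1,7]$, which gives $i=8-\delta_2(u)$.
\item $T_2$ has $\delta_2=0$, and $T_1$ has $\delta_2=D$; both are excluded.
\item $s_k$ has $\delta_2=D-k>7$.
\item $t_k$ has $\delta_2=7+k\ge8$.
\item $q$ and $q'$ have $\delta_2=7+m\ge8$.
\item A member of $U_{j,c}$ has $\delta_2=7+j+\hb\ge8$, since $j\ge1$ and $\hb\ge1$ (again because a member is not a terminal).
\end{itemize}
So only the tail vertices remain.

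The only point that needs care is justifying that $h_{U_{j,c}}(u)\ge1$ and $\hb_{U_{j,c}}(u)\ge1$ for members. This holds because a member differs from $s_j$ and $t_j$, and distances in the connected copy $U_{j,c}$ (Lemma~\ref{lem:N03-inside}(i)) are positive between distinct vertices. Beyond that, the whole argument is checking the inequalities $3m+8>m$ and similar, which follow from $m\ge3$ (Lemma~\ref{lem:N03-params}(iii)). I expect no real obstacle; the main task is making sure the case list is exhaustive, which Lemma~\ref{lem:N03-structure}(i) guarantees.
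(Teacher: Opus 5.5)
Your proposal is correct and takes the same route as the paper: you list the vertices outside the arm interiors via Lemma~\ref{lem:N03-structure}(i), read $\delta_1$ and $\delta_2$ off table~\eqref{eq:N04-values}, and use $h_{U_{j,c}}(u)\ge1$ and $\hb_{U_{j,c}}(u)\ge1$ for members, which gives $j<\delta_1(u)$ in (i) and rules out members in (ii). Your bound $\delta_2\ge8$ for members is weaker than the available $\delta_2\ge9$, but it still exceeds $7$, so it suffices.
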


\begin{proof}
By Lemma~\ref{lem:N03-structure}(i), $V(G)$ is the disjoint union of $S$,
$\{q\}$, the sets $\Int(U_{j,c})$ and the sets $\Int(Y^j)$.  As $u$ lies in
no $\Int(Y^j)$, $u$ lies in $S$, or $u=q$, or $u$ is a member of a copy.
By Definition~\ref{def:N03-rows}, $S$ consists of $T_1$, $T_2$, the $s_k$
and $t_k$ ($1\le k\le m$), $z_1,\dots,z_7$ and $q'$.  So $u$ is one of
\[
 T_1,\ T_2,\ s_k,\ t_k\ (1\le k\le m),\ z_i\ (1\le i\le7),\ q,\ q',\
 \text{or a member of a copy }U_{j,c}.
 \tag{$\ast$}\label{eq:N04-list}
\]
By Lemma~\ref{lem:N04-potentials}, $\delta_1=F_1$ and $\delta_2=F_2$, so
the values of $\delta_1,\delta_2$ at these vertices are those of the table
\eqref{eq:N04-values}; there $D=4m+15$ (Definition~\ref{def:N03-params}).
For a member $u$ of $U_{j,c}$, $u\ne s_j,t_j$
(Definition~\ref{def:N03-names}), and the distance between two distinct
vertices is at least $1$ (Definition~\ref{def:N01-graph}), so
\[
 h_{U_{j,c}}(u)=d_{U_{j,c}}(s_j,u)\ge1,\qquad
 \hb_{U_{j,c}}(u)=d_{U_{j,c}}(t_j,u)\ge1.
\]

(1) We prove (i).  Let $1\le\delta_1(u)\le m$.  We first exclude the
vertices of \eqref{eq:N04-list} whose $\delta_1$ is $0$ or larger than $m$.
By the column $F_1$ of the table \eqref{eq:N04-values}, with $1\le k\le m$
and $1\le i\le7$,
\[
\begin{aligned}
 \delta_1(T_1)&=0,\\
 \delta_1(T_2)&=D=4m+15>m,\\
 \delta_1(t_k)&=D-7-k=4m+8-k\ge3m+8>m,\\
 \delta_1(z_i)&=D-8+i=4m+7+i>m,\\
 \delta_1(q)=\delta_1(q')&=D-6-m=3m+9>m.
\end{aligned}
\]
So $u$ is none of $T_1$, $T_2$, $t_k$, $z_i$, $q$, $q'$, and by
\eqref{eq:N04-list} it is some $s_k$ or a member of a copy.

If $u=s_k$, then $\delta_1(u)=k$ by the line $s_k$ of the table
\eqref{eq:N04-values}, so $k=\delta_1(u)$.

If $u$ is a member of $U_{j,c}$, then $\delta_1(u)=j+h_{U_{j,c}}(u)$ by the
line $u\in\Int(U_{k,c})$ of the table \eqref{eq:N04-values}, at the level
$j$.  So $h_{U_{j,c}}(u)=\delta_1(u)-j$, and as $h_{U_{j,c}}(u)\ge1$,
$j\le\delta_1(u)-1<\delta_1(u)$.  This proves (i).

(2) We prove (ii).  Let $1\le\delta_2(u)\le7$.  We exclude the vertices of
\eqref{eq:N04-list} whose $\delta_2$ is $0$ or larger than $7$.  By the
column $F_2$ of the table \eqref{eq:N04-values}, with $1\le k\le m$, and
with $j\ge1$ and $\hb_{U_{j,c}}(u)\ge1$ for a member $u$ of $U_{j,c}$,
\[
\begin{aligned}
 \delta_2(T_1)&=D=4m+15>7,\\
 \delta_2(T_2)&=0,\\
 \delta_2(s_k)&=D-k=4m+15-k\ge3m+15>7,\\
 \delta_2(t_k)&=7+k\ge8>7,\\
 \delta_2(q)=\delta_2(q')&=7+m>7,\\
 \delta_2(u)&=7+j+\hb_{U_{j,c}}(u)\ge9>7
   \qquad(u\text{ a member of }U_{j,c}).
\end{aligned}
\]
So by \eqref{eq:N04-list}, $u=z_i$ for some $1\le i\le7$.  By the line $z_i$
of the table \eqref{eq:N04-values}, $\delta_2(u)=8-i$, so
$i=8-\delta_2(u)$.  This proves (ii).
\end{proof}

\begin{corollary}[the cores of a level]\label{cor:N04-cores}
Let $\varphi$ be a restricted formula and $G:=G_\varphi$
(Definition~\ref{def:N03-graph}).  Let $\alpha$ and $\beta_k$ be the
numbers of Definition~\ref{def:N03-params}, $k$ a level, and for each
option $c$ let $\rho_{k,c}=r^{k,c}_1$ be the core of $U_{k,c}$
(Definition~\ref{def:N03-rows}).  For a member $u$ of $U_{k,c}$,
$(h_{U_{k,c}}(u),\hb_{U_{k,c}}(u))=(d_{U_{k,c}}(s_k,u),\,d_{U_{k,c}}(t_k,u))$
is its profile (Definition~\ref{def:N03-names}).
\begin{enumerate}
\item[\rm(i)] For every option $c$ and every vertex $v\notin\Int(U_{k,c})$,
  \[
   d(\rho_{k,c},v)=\min\bigl(\alpha+d(s_k,v),\ \beta_k+d(t_k,v)\bigr).
  \]
\item[\rm(ii)] For two options $a\ne b$ and every member $u$ of $U_{k,a}$,
  $d(\rho_{k,a},u)<d(\rho_{k,b},u)$.
\end{enumerate}
\end{corollary}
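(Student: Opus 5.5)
Part (i) is the first-exit formula. I apply Lemma~\ref{lem:N04-reading}(i) with $u:=\rho_{k,c}$, a member of $U_{k,c}$, and $w:=v\notin\Int(U_{k,c})$. This gives
\[
 d(v,\rho_{k,c})=\min\bigl(d(v,s_k)+h_{U_{k,c}}(\rho_{k,c}),\ d(v,t_k)+\hb_{U_{k,c}}(\rho_{k,c})\bigr).
\]
By Lemma~\ref{lem:N04-isometric}(iii) the profile of the core is $(\alpha,\beta_k)$, and since $d$ is symmetric this is (i).

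For (ii), let $a\ne b$ and let $u$ be a member of $U_{k,a}$. First I bound the right-hand side. The members of $U_{k,a}$ are disjoint from $\Int(U_{k,b})$ (Lemma~\ref{lem:N03-structure}(i)), so (i) applies to $\rho_{k,b}$ with $v:=u$. Lemma~\ref{lem:N04-isometric}(ii) gives $d(s_k,u)=h_{U_{k,a}}(u)$ and $d(t_k,u)=\hb_{U_{k,a}}(u)$. Hence
\[
 d(\rho_{k,b},u)=\min\bigl(\alpha+h_{U_{k,a}}(u),\ \beta_k+\hb_{U_{k,a}}(u)\bigr).
\]

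Next I bound the left-hand side, in two cases.
\begin{itemize}
\item If $u\ne\rho_{k,a}$, Lemma~\ref{lem:N04-isometric}(iv) says that $d(\rho_{k,a},u)$ is strictly smaller than exactly this minimum, and we are done.
\item If $u=\rho_{k,a}$, then $d(\rho_{k,a},u)=0$. The right-hand side is $\min(\alpha+\alpha,\beta_k+\beta_k)=2\beta_k>0$, because $\beta_k=2m+5-2k\ge5$.
\end{itemize}

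There is no real obstacle here. The only care needed is to check that $u$ lies outside $\Int(U_{k,b})$, which is what makes (i) applicable, and to handle the degenerate case $u=\rho_{k,a}$ separately, since Lemma~\ref{lem:N04-isometric}(iv) excludes it.
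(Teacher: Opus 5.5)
Your proof is correct and is essentially the paper's: (i) comes from Lemma~\ref{lem:N04-reading}(i) with the core's profile taken from Lemma~\ref{lem:N04-isometric}(iii), and (ii) applies (i) to $\rho_{k,b}$, converts the distances with Lemma~\ref{lem:N04-isometric}(ii), and then invokes Lemma~\ref{lem:N04-isometric}(iv). The only difference is cosmetic and occurs in the case $u=\rho_{k,a}$: the paper uses $\rho_{k,b}\ne u$ to get $d(\rho_{k,b},u)\ge1$, while you evaluate the formula to $2\beta_k\ge10$, and both arguments are valid.
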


\begin{proof}
(1) We prove (i).  Let $c$ be an option and $v\notin\Int(U_{k,c})$.  The
core $\rho_{k,c}$ is a member of $U_{k,c}$
(Definitions~\ref{def:N03-names} and~\ref{def:N03-rows}), so
Lemma~\ref{lem:N04-reading}(i) applies to it with $w:=v$:
\[
 d(v,\rho_{k,c})=\min\bigl(d(v,s_k)+h_{U_{k,c}}(\rho_{k,c}),\
 d(v,t_k)+\hb_{U_{k,c}}(\rho_{k,c})\bigr).
\]
By Lemma~\ref{lem:N04-isometric}(iii),
$(h_{U_{k,c}},\hb_{U_{k,c}})(\rho_{k,c})=(\alpha,\beta_k)$.  As $d$ is
symmetric (Definition~\ref{def:N01-graph}), this is (i).

(2) We prove (ii).  Let $a\ne b$ be options and $u$ a member of
$U_{k,a}$.  The sets $\Int(U_{k,a})$ and $\Int(U_{k,b})$ are disjoint
(Lemma~\ref{lem:N03-structure}(i)), so $u\notin\Int(U_{k,b})$.

If $u=\rho_{k,a}$, then $d(\rho_{k,a},u)=0$.  The core $\rho_{k,b}$ lies in
$\Int(U_{k,b})$ (Definitions~\ref{def:N03-names} and~\ref{def:N03-rows}),
so $\rho_{k,b}\ne u$, and the distance between two distinct vertices is at
least $1$ (Definition~\ref{def:N01-graph}).  Hence
$d(\rho_{k,a},u)=0<1\le d(\rho_{k,b},u)$.

If $u\ne\rho_{k,a}$, then (i) applies at $c:=b$ and $v:=u$, as
$u\notin\Int(U_{k,b})$.  By Lemma~\ref{lem:N04-isometric}(ii) in $U_{k,a}$,
$d(s_k,u)=h_{U_{k,a}}(u)$ and $d(t_k,u)=\hb_{U_{k,a}}(u)$.  Hence
\[
 d(\rho_{k,b},u)=\min\bigl(\alpha+d(s_k,u),\ \beta_k+d(t_k,u)\bigr)
 =\min\bigl(\alpha+h_{U_{k,a}}(u),\ \beta_k+\hb_{U_{k,a}}(u)\bigr).
\]
By Lemma~\ref{lem:N04-isometric}(iv) in $U_{k,a}$,
$d(\rho_{k,a},u)<\min(\alpha+h_{U_{k,a}}(u),\ \beta_k+\hb_{U_{k,a}}(u))$.
Hence $d(\rho_{k,a},u)<d(\rho_{k,b},u)$.
\end{proof}

\begin{corollary}[the rows of two arms]\label{cor:N04-arms}
Let $\varphi$ be a restricted formula and $G:=G_\varphi$
(Definition~\ref{def:N03-graph}).  Let $D$ and $p$ be the numbers of
Definition~\ref{def:N03-params}, $Y^1,\dots,Y^4$ the arms and
$w_a=y^a_{\pi_a}$ their rows (Definition~\ref{def:N03-rows}), and
$\delta_1=d(T_1,\cdot)$, $\delta_2=d(T_2,\cdot)$
(Lemma~\ref{lem:N04-potentials}).  Let $a\ne b$ be indices with
$\pi_a=\pi_b$, that is, $\{a,b\}=\{1,2\}$ or $\{a,b\}=\{3,4\}$
(Definition~\ref{def:N03-graph}).
\begin{enumerate}
\item[\rm(i)] For every vertex $v\notin\Int(Y^a)\cup\Int(Y^b)$,
  $d(w_a,v)=d(w_b,v)$.
\item[\rm(ii)] For every $u\in\Int(Y^a)$, $d(w_a,u)<d(w_b,u)$.
\end{enumerate}
\end{corollary}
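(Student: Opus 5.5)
For (i) I will apply the first formula of Lemma~\ref{lem:N04-rows}(iii) to both rows. Since $v\notin\Int(Y^a)$, it gives
\[
 d(w_a,v)=\min\bigl(\pi_a+\delta_1(v),\ D-\pi_a+\delta_2(v)\bigr).
\]
Since $v\notin\Int(Y^b)$, the same formula with $b$ in place of $a$ gives the same expression, because $\pi_a=\pi_b$. So $d(w_a,v)=d(w_b,v)$, and nothing more is needed for (i).

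For (ii), let $u\in\Int(Y^a)$. The arms have disjoint interiors (Lemma~\ref{lem:N03-structure}(i)), so $u\notin\Int(Y^b)$. Lemma~\ref{lem:N04-rows}(iii) for the row $w_b$ then gives
\[
 d(w_b,u)=\min\bigl(\pi+\delta_1(u),\ D-\pi+\delta_2(u)\bigr),\qquad \pi:=\pi_a=\pi_b .
\]
I will read $\delta_1(u)$ and $\delta_2(u)$ from the table \eqref{eq:N04-values}. There are two cases.

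\emph{Case $u=y^a_i$ with $1\le i\le D-1$.} Here $(\delta_1,\delta_2)(u)=(i,D-i)$, so
\[
 d(w_b,u)=\min(\pi+i,\ 2D-\pi-i).
\]
The second formula of Lemma~\ref{lem:N04-rows}(iii) gives $d(w_a,u)=|i-\pi|$. To show $|i-\pi|<\min(\pi+i,\,2D-\pi-i)$, I use $\pi\ge1$, $i\ge1$, $i\le D-1$ and $\pi\le D-1$. First, $|i-\pi|<i+\pi$ because both $i$ and $\pi$ are positive. Second, $i-\pi<2D-\pi-i$ because $i<D$, and $\pi-i<2D-\pi-i$ because $\pi<D$.

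\emph{Case $u=\tilde y^a$.} Here $(\delta_1,\delta_2)(u)=(\pi,D-\pi)$, so
\[
 d(w_b,u)=\min(2\pi,\ 2D-2\pi)\ge2 .
\]
The third formula of Lemma~\ref{lem:N04-rows}(iii) gives $d(w_a,u)=2$. So I need the strict inequality $2<\min(2\pi,2D-2\pi)$, that is $2\le\pi\le D-2$. This was checked in the proof of Lemma~\ref{lem:N03-structure}(iii), using $7\le p\le D-2$.

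The only step needing care is this strictness at the twin $\tilde y^a$. It is exactly where the position $\pi_j\ge2$ and its distance from $T_2$ matter, so I will cite the bound $2\le\pi_j\le D-2$ explicitly rather than rederive it.
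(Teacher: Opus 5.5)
Your proof is correct and follows the paper's argument. Part (i) comes from the first formula of Lemma~\ref{lem:N04-rows}(iii) with $\pi_a=\pi_b$. Part (ii) uses the same two cases, $u=y^a_i$ and $u=\tilde y^a$, read off the table \eqref{eq:N04-values}. The only cosmetic difference is that the paper derives $\pi\ge7$ and $D-\pi\ge15$ directly from $m\ge3$, getting $d(w_b,\tilde y^a)\ge14$. You instead cite $2\le\pi\le D-2$, which is all the strict inequality needs.
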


\begin{proof}
Put $\pi:=\pi_a=\pi_b$.  By Definition~\ref{def:N03-graph}, $\pi=7$ or
$\pi=p$.  By Definition~\ref{def:N03-params}, $p=2m+4$ and $D=4m+15$, and
by Lemma~\ref{lem:N03-params}(iii), $m\ge3$.  So
\[
\begin{aligned}
 \pi=7:&\quad \pi=7, && D-\pi=4m+8\ge16,\\
 \pi=p:&\quad \pi=2m+4\ge10, && D-\pi=2m+11\ge17.
\end{aligned}
\]
In both cases $\pi\ge7$ and $D-\pi\ge15$.

(1) We prove (i).  As $v\notin\Int(Y^a)$ and $v\notin\Int(Y^b)$, the first
formula of Lemma~\ref{lem:N04-rows}(iii), at $w_a$ and at $w_b$, gives
\[
 d(w_a,v)=\min\bigl(\pi+\delta_1(v),\ D-\pi+\delta_2(v)\bigr)=d(w_b,v).
\]

(2) We prove (ii).  Let $u\in\Int(Y^a)$.  By
Lemma~\ref{lem:N03-structure}(i), $\Int(Y^a)\cap\Int(Y^b)=\emptyset$, so
$u\notin\Int(Y^b)$, and the first formula of
Lemma~\ref{lem:N04-rows}(iii) gives
\[
 d(w_b,u)=\min\bigl(\pi+\delta_1(u),\ D-\pi+\delta_2(u)\bigr).
\]
By Lemma~\ref{lem:N03-structure}(i), $u=y^a_i$ with $1\le i\le D-1$, or
$u=\tilde y^a$.

If $u=y^a_i$, then $d(w_a,u)=|i-\pi|$ by the second formula of
Lemma~\ref{lem:N04-rows}(iii), and $\delta_1(u)=i$, $\delta_2(u)=D-i$ by the
line $y^j_i$ of the table \eqref{eq:N04-values}.  As $i\ge1$ and
$\pi\ge1$,
\[
 |i-\pi|=\max(i,\pi)-\min(i,\pi)<\max(i,\pi)+\min(i,\pi)=\pi+\delta_1(u),
\]
and as $D-i\ge1$ and $D-\pi\ge1$, in the same way,
\[
 |i-\pi|=|(D-i)-(D-\pi)|<(D-i)+(D-\pi)=D-\pi+\delta_2(u).
\]
So $d(w_a,u)<d(w_b,u)$.

If $u=\tilde y^a$, then $d(w_a,u)=2$ by the third formula of
Lemma~\ref{lem:N04-rows}(iii), and $(\delta_1,\delta_2)(u)=(\pi,D-\pi)$ by
the line $\tilde y^j$ of the table \eqref{eq:N04-values}.  So
$d(w_b,u)=\min(2\pi,\ 2(D-\pi))\ge14>2=d(w_a,u)$.
\end{proof}

\begin{corollary}[a row and a vertex outside its arm]\label{cor:N04-other-arm}
Let $\varphi$ be a restricted formula and $G:=G_\varphi$
(Definition~\ref{def:N03-graph}).  Let $D$ be the number of
Definition~\ref{def:N03-params}, $Y^1,\dots,Y^4$ the arms and
$w_a=y^a_{\pi_a}$ their rows (Definition~\ref{def:N03-rows}), and
$\delta_1=d(T_1,\cdot)$, $\delta_2=d(T_2,\cdot)$
(Lemma~\ref{lem:N04-potentials}).  Let $1\le a\le4$ and
$u\notin\Int(Y^a)$.
\begin{enumerate}
\item[\rm(i)] If $\delta_1(u)\le D-\pi_a$, then $d(w_a,u)=\pi_a+\delta_1(u)$.
\item[\rm(ii)] If $\delta_2(u)\le\pi_a$, then $d(w_a,u)=D-\pi_a+\delta_2(u)$.
\end{enumerate}
\end{corollary}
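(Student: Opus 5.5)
The plan is to start from the first formula of Lemma~\ref{lem:N04-rows}(iii), which applies because $u\notin\Int(Y^a)$:
\[
 d(w_a,u)=\min\bigl(\pi_a+\delta_1(u),\ D-\pi_a+\delta_2(u)\bigr).
\]
Each item then amounts to showing that one of the two terms is at most the other. For that comparison I use the triangle inequality between $T_1$ and $T_2$ (Definition~\ref{def:N01-graph}). Together with $d(T_1,T_2)=D$ from Lemma~\ref{lem:N04-skeleton}(iv), it gives
\[
 |\delta_1(u)-\delta_2(u)|\le D,\qquad \delta_1(u)+\delta_2(u)\ge D .
\]

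For (i), assume $\delta_1(u)\le D-\pi_a$. The goal is $\pi_a+\delta_1(u)\le D-\pi_a+\delta_2(u)$, which is equivalent to $\delta_1(u)-\delta_2(u)\le D-2\pi_a$.

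From $\delta_2(u)\ge D-\delta_1(u)$ we get $\delta_1(u)-\delta_2(u)\le 2\delta_1(u)-D$. The hypothesis then gives
\[
 2\delta_1(u)-D\le 2(D-\pi_a)-D=D-2\pi_a,
\]
so the first term is the minimum, and $d(w_a,u)=\pi_a+\delta_1(u)$.

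For (ii), assume $\delta_2(u)\le\pi_a$. The argument is symmetric. We need $D-\pi_a+\delta_2(u)\le\pi_a+\delta_1(u)$, that is, $\delta_2(u)-\delta_1(u)\le 2\pi_a-D$.

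From $\delta_1(u)\ge D-\delta_2(u)$ we get $\delta_2(u)-\delta_1(u)\le 2\delta_2(u)-D$. The hypothesis bounds this by $2\pi_a-D$, so the second term is the minimum.

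There is no real obstacle. The only points to check are that the lower bound $\delta_1(u)+\delta_2(u)\ge D$ comes from the triangle inequality through $u$, and that the hypothesis $u\notin\Int(Y^a)$ is exactly what Lemma~\ref{lem:N04-rows}(iii) needs.
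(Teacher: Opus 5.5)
Your proof is correct and follows essentially the paper's route: the first formula of Lemma~\ref{lem:N04-rows}(iii), combined with $\delta_1(u)+\delta_2(u)\ge d(T_1,T_2)=D$ from Lemma~\ref{lem:N04-skeleton}(iv) and the triangle inequality. The paper states the comparison as ``the chosen term is at most $D$ and the other is at least $D$'', which is a rearrangement of your inequality.
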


\begin{proof}
As $u\notin\Int(Y^a)$, the first formula of Lemma~\ref{lem:N04-rows}(iii)
gives
\[
 d(w_a,u)=\min\bigl(\pi_a+\delta_1(u),\ D-\pi_a+\delta_2(u)\bigr).
\]
By Lemma~\ref{lem:N04-skeleton}(iv), the triangle inequality and the
symmetry of $d$ (Definition~\ref{def:N01-graph}),
\[
 D=d(T_1,T_2)\le d(T_1,u)+d(u,T_2)=\delta_1(u)+\delta_2(u).
\]

(1) We prove (i).  As $\delta_1(u)\le D-\pi_a$, the first term satisfies
$\pi_a+\delta_1(u)\le D$.  As $\delta_2(u)\ge D-\delta_1(u)$, the second
term satisfies
\[
 D-\pi_a+\delta_2(u)\ge2D-\bigl(\pi_a+\delta_1(u)\bigr)\ge2D-D=D,
\]
where the last inequality comes from the preceding one.  So the minimum
is its first term, $\pi_a+\delta_1(u)$.

(2) We prove (ii).  As $\delta_2(u)\le\pi_a$, the second term satisfies
$D-\pi_a+\delta_2(u)\le D$.  As $\delta_1(u)\ge D-\delta_2(u)\ge D-\pi_a$,
the first term satisfies $\pi_a+\delta_1(u)\ge D$.  So the minimum is its
second term, $D-\pi_a+\delta_2(u)$.
\end{proof}


\section{The lower bound and the forcing}\label{sec:N05}

\begin{lemma}[the lower bound]\label{lem:N05-lower}
Let $\varphi$ be a restricted formula and $G:=G_\varphi$
(Definition~\ref{def:N03-graph}); $G$ is connected
(Lemma~\ref{lem:N04-potentials}).  Let $W$ be a set resolving $G$
(Definition~\ref{def:N01-resolve}).  For each level $k$
(Definition~\ref{def:N03-options}) let
\[
 I_k:=\Int(U_{k,1})\cup\Int(U_{k,2})\cup\Int(U_{k,3})
\]
be the set of the members of the copies of the level $k$
(Definition~\ref{def:N03-names}).  Let $w_j=y^j_{\pi_j}$ ($1\le j\le4$) be
the rows of the arms (Definition~\ref{def:N03-rows}), and $\tilde y^j$ the
vertex of the arm $Y^j$ adjacent to $y^j_{\pi_j-1}$ and $y^j_{\pi_j+1}$
(Definition~\ref{def:N03-graph}).  Let
\[
 P_0:=\{q,q'\},\qquad P_j:=\{w_j,\tilde y^j\}\quad(1\le j\le4)
\]
be the five pairs of Lemma~\ref{lem:N03-structure}(iii).
\begin{enumerate}
\item[\rm(i)] For every level $k$, at most one option $c$ has
  $W\cap\Int(U_{k,c})=\emptyset$.
\item[\rm(ii)] $W$ meets each of the pairs $P_0,\dots,P_4$.
\item[\rm(iii)] $|W|\ge2m+5$.
\item[\rm(iv)] If $|W|=2m+5$, then for every level $k$, exactly two of the
  sets $\Int(U_{k,1})$, $\Int(U_{k,2})$, $\Int(U_{k,3})$ meet $W$, each in
  exactly one vertex; $W$ meets each pair $P_j$ in exactly one vertex;
  and
  $W\subseteq I_1\cup\dots\cup I_m\cup P_0\cup\dots\cup P_4$.
\end{enumerate}
\end{lemma}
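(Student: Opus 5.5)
Item (i) is the only item that needs a new argument; (ii)--(iv) are counting. For (i), suppose that two distinct options $a\ne b$ at a level $k$ have $W\cap\Int(U_{k,a})=W\cap\Int(U_{k,b})=\emptyset$. We show that $W$ does not separate the two cores $\rho_{k,a}$ and $\rho_{k,b}$, which are distinct vertices (Lemma~\ref{lem:N03-structure}(i)). Let $w\in W$. Then $w\notin\Int(U_{k,a})$ and $w\notin\Int(U_{k,b})$, so Corollary~\ref{cor:N04-cores}(i) applies at $c=a$ and at $c=b$ with $v:=w$. It gives
\[
 d(\rho_{k,a},w)=\min\bigl(\alpha+d(s_k,w),\ \beta_k+d(t_k,w)\bigr)=d(\rho_{k,b},w).
\]
The right-hand side does not depend on the option, because the two cores have the same profile $(\alpha,\beta_k)$ (Lemma~\ref{lem:N04-isometric}(iii)). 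So no landmark separates $\rho_{k,a}$ from $\rho_{k,b}$, and $W$ does not resolve $G$, a contradiction. This case is the only real content of the lemma, and the work is already done in Corollary~\ref{cor:N04-cores}. The one point to watch is that a landmark may be one of the terminals $s_k,t_k$; these are outside both interiors, so the corollary still covers them.

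For (ii), each $P_j$ is a pair of twins by Lemma~\ref{lem:N03-structure}(iii), and $G$ is connected (Lemma~\ref{lem:N04-potentials}). So Lemma~\ref{lem:N01-twins}(iii) shows that $W$ contains $q$ or $q'$, and $w_j$ or $\tilde y^j$ for each $j$.

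For (iii), we count over disjoint sets. By Lemma~\ref{lem:N03-structure}(i), the sets $\Int(U_{k,c})$ (all $k,c$) are pairwise disjoint, so the $I_k$ are pairwise disjoint. By Lemma~\ref{lem:N03-structure}(iii), the pairs $P_0,\dots,P_4$ are pairwise disjoint and disjoint from every $\Int(U_{k,c})$. By (i), $W$ meets at least two of the three disjoint sets $\Int(U_{k,c})$, so $|W\cap I_k|\ge2$ for every $k$. By (ii), $|W\cap P_j|\ge1$ for every $j$. Summing over these $m+5$ disjoint sets gives $|W|\ge2m+2\cdot0+5$, that is, $|W|\ge 2m+5$.

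For (iv), assume $|W|=2m+5$. Every inequality in the sum of (iii) must then be an equality. So $|W\cap I_k|=2$ for every $k$; by (i) the two vertices lie in two different copies, one in each, and the third copy has none. Likewise $|W\cap P_j|=1$ for every $j$. Finally, no vertex of $W$ lies outside $I_1\cup\dots\cup I_m\cup P_0\cup\dots\cup P_4$, since such a vertex would add one more to the count.
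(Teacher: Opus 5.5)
Your proof is correct and is essentially the paper's. For (i) the paper likewise exhibits two counterparts with equal profiles that the first-exit formula makes indistinguishable from every landmark outside both copies: it takes the first spine members $v^{k,c}_1,v^{k,c'}_1$ (profile $(1,L_k-1)$, Lemma~\ref{lem:N04-reading}(i)) where you take the cores via Corollary~\ref{cor:N04-cores}(i), and (ii)--(iv) are the same twin and counting arguments (the stray ``$2\cdot0$'' in your bound for (iii) is a typo; the sum is $2m+5$).
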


\begin{proof}
(1) We prove (i).  Let $k$ be a level, and suppose that two distinct
options $c,c'$ have $W\cap\Int(U_{k,c})=W\cap\Int(U_{k,c'})=\emptyset$.
Put $u:=v^{k,c}_1$ and $u':=v^{k,c'}_1$.  By
Definition~\ref{def:N03-params}, $L_k=4m+8-2k\ge2m+8$, so $1\le L_k-1$,
and $u$, $u'$ are spine members of $U_{k,c}$ and of $U_{k,c'}$
(Definition~\ref{def:N03-names}).  They are distinct, as
$\Int(U_{k,c})\cap\Int(U_{k,c'})=\emptyset$
(Lemma~\ref{lem:N03-structure}(i)).  By Lemma~\ref{lem:N03-inside}(iv)
with $i=1$, in $U_{k,c}$ and in $U_{k,c'}$,
\[
 (h_{U_{k,c}},\hb_{U_{k,c}})(u)=(1,\ L_k-1)
 =(h_{U_{k,c'}},\hb_{U_{k,c'}})(u').
\]
Let $w\in W$.  By the supposition, $w\notin\Int(U_{k,c})$ and
$w\notin\Int(U_{k,c'})$.  So Lemma~\ref{lem:N04-reading}(i) applies to the
member $u$ of $U_{k,c}$ and to the member $u'$ of $U_{k,c'}$, with this
$w$:
\[
\begin{aligned}
 d(w,u)&=\min\bigl(d(w,s_k)+1,\ d(w,t_k)+L_k-1\bigr)
   &&(\text{Lemma~\ref{lem:N04-reading}(i) in }U_{k,c}),\\
 d(w,u')&=\min\bigl(d(w,s_k)+1,\ d(w,t_k)+L_k-1\bigr)
   &&(\text{Lemma~\ref{lem:N04-reading}(i) in }U_{k,c'}).
\end{aligned}
\]
The right-hand sides are equal, so $d(w,u)=d(w,u')$.  So no vertex of $W$ separates $u$ and $u'$, and $W$ does not resolve $G$
(Definition~\ref{def:N01-resolve}), a contradiction.  This proves (i).

(2) We prove (ii).  By Lemma~\ref{lem:N04-potentials}, $G$ is connected,
and by Lemma~\ref{lem:N03-structure}(iii), the two vertices of each pair
$P_j$ are twins.  By Lemma~\ref{lem:N01-twins}(iii), $W$ contains a vertex
of each pair $P_j$.

(3) We prove (iii).  Put
$X:=I_1\cup\dots\cup I_m\cup P_0\cup\dots\cup P_4$.  The $m+5$ sets
$I_1,\dots,I_m,P_0,\dots,P_4$ are pairwise disjoint: the sets $\Int(U_{k,c})$ are pairwise disjoint
(Lemma~\ref{lem:N03-structure}(i)), so $I_k\cap I_{k'}=\emptyset$ for
$k\ne k'$; no vertex lies in two of the pairs, and no vertex of a pair lies
in a set $\Int(U_{k,c})$ (Lemma~\ref{lem:N03-structure}(iii)).  Hence
\begin{equation}\label{eq:N05-sum}
 |W|\ \ge\ |W\cap X|\ =\ \sum_{k=1}^m|W\cap I_k|+\sum_{j=0}^4|W\cap P_j| .
\end{equation}
Let $k$ be a level.  The three sets $W\cap\Int(U_{k,c})$ are pairwise
disjoint, and by (i) at least two of them are nonempty, so
\[
 |W\cap I_k|=\sum_{c=1}^3|W\cap\Int(U_{k,c})|\ \ge\ 2 .
\]
By (ii), $|W\cap P_j|\ge1$ for $0\le j\le4$.  With \eqref{eq:N05-sum},
$|W|\ge2m+5$.

(4) We prove (iv).  Let $|W|=2m+5$.  By \eqref{eq:N05-sum} and the
bounds of (3),
\[
 |W|\ \ge\ |W\cap X|\ =\ \sum_{k=1}^m|W\cap I_k|+\sum_{j=0}^4|W\cap P_j|
 \ \ge\ 2m+5\ =\ |W|,
\]
so both inequalities are equalities.  The first gives $|W\cap X|=|W|$,
that is, $W\subseteq X$.  In the second, each of the $m+5$ terms is at
least its bound ($2$ for $I_k$, $1$ for $P_j$), and their sum equals the
sum $2m+5$ of the bounds.  So each term equals its bound: $|W\cap I_k|=2$
for every level $k$, and $|W\cap P_j|=1$ for $0\le j\le4$.

Let $k$ be a level.  The three sets $W\cap\Int(U_{k,c})$,
$c\in\{1,2,3\}$, are pairwise disjoint and their union is $W\cap I_k$, so
\[
 |W\cap\Int(U_{k,1})|+|W\cap\Int(U_{k,2})|+|W\cap\Int(U_{k,3})|
 =|W\cap I_k|=2 .
\]
By (i), at least two of these three numbers are at least $1$.  So exactly
two of them are $1$ and the third is $0$; that is, exactly two of the sets
$\Int(U_{k,1})$, $\Int(U_{k,2})$, $\Int(U_{k,3})$ meet $W$, each in
exactly one vertex.  This proves (iv).
\end{proof}

\begin{lemma}[the forcing]\label{lem:N05-forcing}
Let $\varphi$ be a restricted formula and $G:=G_\varphi$
(Definition~\ref{def:N03-graph}); $G$ is connected
(Lemma~\ref{lem:N04-potentials}).  Let $W$ be a set resolving $G$
(Definition~\ref{def:N01-resolve}) with $|W|=2m+5$.  Let $k$ be a level
and $c$ an option (Definition~\ref{def:N03-options}) with
$W\cap\Int(U_{k,c})\ne\emptyset$, and let $\rho_{k,c}=r^{k,c}_1$ be the
core of $U_{k,c}$ (Definition~\ref{def:N03-rows}).  Then
\[
 W\cap\Int(U_{k,c})=\{\rho_{k,c}\}.
\]
\end{lemma}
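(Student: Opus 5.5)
By Lemma~\ref{lem:N05-lower}(iv), $|W|=2m+5$ forces $W\cap\Int(U_{k,c})$ to be a single vertex, say $w$. Moreover, among the three copies of level $k$ exactly one, say $U_{k,c_0}$ with $c_0\ne c$, satisfies $W\cap\Int(U_{k,c_0})=\emptyset$. It therefore suffices to show that $w=\rho_{k,c}$. I will argue by contradiction: assume $w\ne\rho_{k,c}$ and exhibit two distinct vertices that $W$ does not separate.

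The witness pair comes from Lemma~\ref{lem:N03-inside}(vi), applied to $w$ in $U_{k,c}$. It gives a common member $x\ne w$ of $U_{k,c}$, namely $r^{k,c}_{\mu_k+1}$ or $v^{k,c}_{L_k-1}$, with
\[
d_{U_{k,c}}(w,x)=\min\bigl(h_{U_{k,c}}(w)+h_{U_{k,c}}(x),\ \hb_{U_{k,c}}(w)+\hb_{U_{k,c}}(x)\bigr).
\]
Let $x'$ be the counterpart of $x$ in $U_{k,c_0}$ (Definition~\ref{def:N03-names}). It is a vertex distinct from $x$, and by Lemma~\ref{lem:N03-inside}(iv) it has the same profile as $x$, because the profiles of common members depend only on $L_k$ and $\mu_k$.

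I then check that every landmark gives equal distances to $x$ and $x'$, in two cases.

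First, let $w'\in W\setminus\{w\}$. By the choice of $c_0$ and the uniqueness of $w$, $w'$ lies in neither $\Int(U_{k,c})$ nor $\Int(U_{k,c_0})$. Lemma~\ref{lem:N04-reading}(i), applied once in $U_{k,c}$ and once in $U_{k,c_0}$, gives
\[
d(w',x)=\min\bigl(d(w',s_k)+h(x),\ d(w',t_k)+\hb(x)\bigr),
\]
and the same expression for $d(w',x')$, since $x$ and $x'$ have the same profile. Hence $d(w',x)=d(w',x')$.

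Second, consider $w$ itself. By Lemma~\ref{lem:N04-isometric}(i), $d(w,x)=d_{U_{k,c}}(w,x)$, which is the minimum displayed above. Since $w\notin\Int(U_{k,c_0})$, Lemma~\ref{lem:N04-reading}(i) in $U_{k,c_0}$ gives
\[
d(w,x')=\min\bigl(d(w,s_k)+h(x'),\ d(w,t_k)+\hb(x')\bigr).
\]
By Lemma~\ref{lem:N04-isometric}(ii), $d(w,s_k)=h_{U_{k,c}}(w)$ and $d(w,t_k)=\hb_{U_{k,c}}(w)$. Combined with the equal profiles of $x$ and $x'$, this shows that $d(w,x')$ equals the displayed minimum, so $d(w,x)=d(w,x')$.

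Thus no landmark separates $x$ from $x'$, contradicting the assumption that $W$ resolves $G$.

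The main obstacle was finding, for a non-core landmark, a partner vertex that it sees only through the terminals. Corollary~\ref{cor:N02-noncore}, transported to the copy as Lemma~\ref{lem:N03-inside}(vi), supplies exactly that, so the remaining work is the bookkeeping above.
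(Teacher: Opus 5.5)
Your proposal is correct and follows the paper's proof essentially step for step. It uses Lemma~\ref{lem:N05-lower}(iv) to obtain the single landmark $w$ and the landmark-free copy $U_{k,c_0}$, takes the witness $x$ from Lemma~\ref{lem:N03-inside}(vi) with its equal-profile counterpart $x'$, and runs the same two-case check (landmarks other than $w$, then $w$ itself) via Lemma~\ref{lem:N04-reading}(i) and Lemma~\ref{lem:N04-isometric}(i),(ii).
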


\begin{proof}
Let $U:=U_{k,c}$, and let $L_k$ and $\mu_k$ be the numbers of
Definition~\ref{def:N03-params}.

(1) \emph{The landmark $w$ and the copy $U_0$ with no landmark.}  By
Lemma~\ref{lem:N05-lower}(iv), exactly two of the sets
$\Int(U_{k,1})$, $\Int(U_{k,2})$, $\Int(U_{k,3})$ meet $W$, each in exactly
one vertex.  One of them is $\Int(U)$, as $W\cap\Int(U)\ne\emptyset$; so
$W\cap\Int(U)=\{w\}$ for one vertex $w$, and there is an option $c_0\ne c$
with $W\cap\Int(U_{k,c_0})=\emptyset$.  Put $U_0:=U_{k,c_0}$.  Suppose,
for a contradiction, that $w\ne\rho_{k,c}$.  In (2) and (3) we find two
distinct vertices $x$ and $x'$, and in (4) we show that no landmark
separates them.

(2) \emph{The vertex $x$.}  The landmark $w$ is a member of $U$, and
$w\ne\rho_{k,c}$ by the supposition.  Put $x:=r^{k,c}_{\mu_k+1}$ if $w$ is
a spine or crossing member of $U$, and $x:=v^{k,c}_{L_k-1}$ if $w$ is a
$t$-ear member of $U$ (Definition~\ref{def:N03-names}).  By
Lemma~\ref{lem:N03-inside}(vi), $x$ is a common member of $U$, $x\ne w$,
and
\[
 d_U(w,x)=\min\bigl(h_U(w)+h_U(x),\ \hb_U(w)+\hb_U(x)\bigr).
\]
By Lemma~\ref{lem:N04-isometric}(i), $d(w,x)=d_U(w,x)$.  Hence
\begin{equation}\label{eq:N05-wx}
 d(w,x)=\min\bigl(h_U(w)+h_U(x),\ \hb_U(w)+\hb_U(x)\bigr).
\end{equation}

(3) \emph{The counterpart $x'$.}  Let $x'$ be the counterpart of $x$ in
$U_0$ (Definition~\ref{def:N03-names}), that is,
$x':=r^{k,c_0}_{\mu_k+1}$ if $x=r^{k,c}_{\mu_k+1}$, and
$x':=v^{k,c_0}_{L_k-1}$ if $x=v^{k,c}_{L_k-1}$.  As $c_0\ne c$, $x'\ne x$
(Definition~\ref{def:N03-names}).  By Lemma~\ref{lem:N03-inside}(iv) in $U$
and in $U_0$, at $i=\mu_k+1$ for the $t$-ear members and at $i=L_k-1$ for
the spine members,
\[
\begin{aligned}
 (h_U,\hb_U)(r^{k,c}_{\mu_k+1})
   &=\bigl(L_k-\mu_k+(\mu_k+1),\,\mu_k+2-(\mu_k+1)\bigr)=(L_k+1,1)
   =(h_{U_0},\hb_{U_0})(r^{k,c_0}_{\mu_k+1}),\\
 (h_U,\hb_U)(v^{k,c}_{L_k-1})
   &=\bigl(L_k-1,\,L_k-(L_k-1)\bigr)=(L_k-1,1)
   =(h_{U_0},\hb_{U_0})(v^{k,c_0}_{L_k-1}).
\end{aligned}
\]
So in both cases
\begin{equation}\label{eq:N05-same}
 (h_{U_0},\hb_{U_0})(x')=(h_U,\hb_U)(x).
\end{equation}

(4) \emph{No landmark separates $x$ and $x'$.}  Let $w'\in W$.

\emph{The case $w'=w$.}  As $w\in\Int(U)$ and
$\Int(U)\cap\Int(U_0)=\emptyset$ (Lemma~\ref{lem:N03-structure}(i)),
$w\notin\Int(U_0)$, so Lemma~\ref{lem:N04-reading}(i) applies to the
member $x'$ of $U_0$ with this $w$.  By Lemma~\ref{lem:N04-isometric}(ii)
at $w\in V(U)$ and the symmetry of $d$ (Definition~\ref{def:N01-graph}),
$d(w,s_k)=h_U(w)$ and $d(w,t_k)=\hb_U(w)$.  Hence
\[
\begin{aligned}
 d(w,x')&=\min\bigl(d(w,s_k)+h_{U_0}(x'),\ d(w,t_k)+\hb_{U_0}(x')\bigr)
   &&(\text{Lemma~\ref{lem:N04-reading}(i)})\\
 &=\min\bigl(h_U(w)+h_U(x),\ \hb_U(w)+\hb_U(x)\bigr)
   &&(\text{Lemma~\ref{lem:N04-isometric}(ii), \eqref{eq:N05-same}})\\
 &=d(w,x) &&(\text{by }\eqref{eq:N05-wx}).
\end{aligned}
\]

\emph{The case $w'\ne w$.}  By (1), $W\cap\Int(U)=\{w\}$ and
$W\cap\Int(U_0)=\emptyset$, so $w'\notin\Int(U)$ and $w'\notin\Int(U_0)$.
Hence
\[
\begin{aligned}
 d(w',x)&=\min\bigl(d(w',s_k)+h_U(x),\ d(w',t_k)+\hb_U(x)\bigr)
   &&(\text{Lemma~\ref{lem:N04-reading}(i) in }U)\\
 &=\min\bigl(d(w',s_k)+h_{U_0}(x'),\ d(w',t_k)+\hb_{U_0}(x')\bigr)
   &&(\text{by }\eqref{eq:N05-same})\\
 &=d(w',x') &&(\text{Lemma~\ref{lem:N04-reading}(i) in }U_0).
\end{aligned}
\]

(5) By (4), no landmark separates the distinct vertices $x$ and $x'$, so
$W$ does not resolve $G$ (Definition~\ref{def:N01-resolve}), a
contradiction.  Hence $w=\rho_{k,c}$, that is,
$W\cap\Int(U)=\{\rho_{k,c}\}$.
\end{proof}

\begin{corollary}[small resolving sets are patterns]\label{cor:N05-pattern}
Let $\varphi$ be a restricted formula and $G:=G_\varphi$
(Definition~\ref{def:N03-graph}); $G$ is connected
(Lemma~\ref{lem:N04-potentials}).  Let $W$ be a set resolving $G$
(Definition~\ref{def:N01-resolve}) with $|W|\le2m+5$.
\begin{enumerate}
\item[\rm(i)] $|W|=2m+5$.
\item[\rm(ii)] There are an automorphism $\psi$ of $G$ and a choice $f$
  with $\psi(W)=W_f$, where $W_f$ is the pattern of $f$
  (Definition~\ref{def:N03-pattern}).
\end{enumerate}
\end{corollary}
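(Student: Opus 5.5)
Part (i) is immediate: by Lemma~\ref{lem:N05-lower}(iii), $|W|\ge2m+5$, and with the hypothesis $|W|\le2m+5$ we get $|W|=2m+5$.

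For (ii), we first read off the shape of $W$ from Lemma~\ref{lem:N05-lower}(iv) and Lemma~\ref{lem:N05-forcing}.  By Lemma~\ref{lem:N05-lower}(iv), $W\subseteq I_1\cup\dots\cup I_m\cup P_0\cup\dots\cup P_4$, and $W$ meets each pair $P_j$ in exactly one vertex.  Also, at every level $k$, exactly two of the sets $\Int(U_{k,c})$ meet $W$, each in one vertex.  For each level $k$, let $f_k$ be the unique option with $W\cap\Int(U_{k,f_k})=\emptyset$; this defines the choice $f=(f_1,\dots,f_m)$.  For each $c\ne f_k$, Lemma~\ref{lem:N05-forcing} gives $W\cap\Int(U_{k,c})=\{\rho_{k,c}\}$.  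Hence
\[
 W=\bigcup_{k=1}^m\{\rho_{k,c}:c\ne f_k\}\ \cup\ \{a_0,a_1,a_2,a_3,a_4\},
\]
where $a_j$ is the unique element of $W\cap P_j$: $a_0\in\{q,q'\}$ and $a_j\in\{w_j,\tilde y^j\}$ for $1\le j\le4$.

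Next we build the automorphism.  By Lemma~\ref{lem:N03-structure}(iii) the five pairs are twins, so by Lemma~\ref{lem:N01-twins}(i) each transposition $\psi_{q q'}$ and $\psi_{w_j\tilde y^j}$ is an automorphism of $G$.  Let $\psi$ be the composition of those transpositions $\psi_{P_j}$ for which $a_j$ is not the pattern vertex ($q$ for $j=0$, $w_j$ for $j\ge1$).  The five pairs are pairwise disjoint by Lemma~\ref{lem:N03-structure}(iii), so these transpositions have disjoint supports and commute.  Their composite therefore maps $a_0$ to $q$ and $a_j$ to $w_j$.

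It remains to check that $\psi$ fixes every core $\rho_{k,c}$.  Each core lies in some $\Int(U_{k,c})$, and no vertex of a pair lies in such a set (Lemma~\ref{lem:N03-structure}(iii)), so $\psi$ fixes it.  Hence $\psi(W)=W_f$, and $\psi(W)$ resolves $G$ by Lemma~\ref{lem:N01-twins}(iv).

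The only point needing care is this bookkeeping: that the supports of the transpositions avoid the cores and are mutually disjoint, so that the composite is well defined and moves only the pair vertices.  Everything substantive was already done in Lemma~\ref{lem:N05-forcing}.
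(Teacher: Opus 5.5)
Your proposal is correct and follows the paper's proof essentially step for step. Part (i) comes from Lemma~\ref{lem:N05-lower}(iii). For part (ii), you read off $f$ and the cores from Lemma~\ref{lem:N05-lower}(iv) and Lemma~\ref{lem:N05-forcing}, compose the twin transpositions on the pairs where $W$ takes the non-pattern vertex, and note that these transpositions have disjoint supports avoiding the copies' interiors. Your closing remark that $\psi(W)$ resolves $G$ is true but not needed for the statement.
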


\begin{proof}
Let $I_k$ ($1\le k\le m$), $P_0:=\{q,q'\}$ and
$P_j:=\{w_j,\tilde y^j\}$ ($1\le j\le4$) be the sets of
Lemma~\ref{lem:N05-lower}.

(1) We prove (i).  By Lemma~\ref{lem:N05-lower}(iii), $|W|\ge2m+5$; with
$|W|\le2m+5$, $|W|=2m+5$.

(2) We prove (ii).  We first show that $W$ is a pattern up to the
vertices it takes in the pairs $P_j$.  Let $k$ be a level.  By (i) and
Lemma~\ref{lem:N05-lower}(iv), exactly two of the sets $\Int(U_{k,1})$,
$\Int(U_{k,2})$, $\Int(U_{k,3})$ meet $W$.  Let $f_k$ be the option of the
set that does not meet $W$:
\[
 W\cap\Int(U_{k,f_k})=\emptyset,\qquad W\cap\Int(U_{k,c})\ne\emptyset
 \quad(c\ne f_k).
\]
For $c\ne f_k$, the hypotheses of Lemma~\ref{lem:N05-forcing} hold
($|W|=2m+5$ by (i), and $W\cap\Int(U_{k,c})\ne\emptyset$), so
$W\cap\Int(U_{k,c})=\{\rho_{k,c}\}$.  Hence
\[
 W\cap I_k=\bigcup_{c=1}^{3}\bigl(W\cap\Int(U_{k,c})\bigr)
 =\{\rho_{k,c}:c\ne f_k\}.
\]
By Lemma~\ref{lem:N05-lower}(iv), $W$ meets each pair $P_j$ in exactly one
vertex, which we call $p_j$, so $p_0\in\{q,q'\}$ and
$p_j\in\{w_j,\tilde y^j\}$ for $1\le j\le4$.  Also
$W\subseteq I_1\cup\dots\cup I_m\cup P_0\cup\dots\cup P_4$, so $W$ is the
union of the sets $W\cap I_k$ and $W\cap P_j=\{p_j\}$.  With the choice
$f:=(f_1,\dots,f_m)$, this gives
\begin{equation}\label{eq:N05-W}
 W=\{\rho_{k,c}:1\le k\le m,\ c\ne f_k\}\cup\{p_0,p_1,p_2,p_3,p_4\}.
\end{equation}
By Definition~\ref{def:N03-pattern},
$W_f=\{\rho_{k,c}:1\le k\le m,\ c\ne f_k\}\cup\{q,w_1,w_2,w_3,w_4\}$.
This is the same set with $q,w_1,\dots,w_4$ in place of $p_0,\dots,p_4$.

(3) \emph{The map $\psi$.}  For $0\le j\le4$, let $a_j$ be the vertex of
$P_j$ that lies in $W_f$, and $b_j$ the other one:
\[
 a_0:=q,\quad b_0:=q';\qquad a_j:=w_j,\quad b_j:=\tilde y^j\quad
 (1\le j\le4).
\]
So $p_j=a_j$ or $p_j=b_j$.  Let $J:=\{j:p_j=b_j\}$ be the set of the
pairs where $W$ takes the other vertex.  Define $\psi:V(G)\to V(G)$ by
\[
 \psi(a_j):=b_j,\quad\psi(b_j):=a_j\quad(j\in J);\qquad
 \psi(u):=u\quad\text{for every other vertex }u.
\]
This is well defined: the pairs $P_0,\dots,P_4$ are pairwise disjoint
(Lemma~\ref{lem:N03-structure}(iii)), so every vertex lies in at most one
pair.  So $\psi$ exchanges $a_j$ and $b_j$ for $j\in J$, and fixes every
other vertex.  In particular $\psi(p_j)=a_j$ for every $j$.  If $j\in J$,
then $p_j=b_j$ and $\psi(b_j)=a_j$.  If $j\notin J$, then $p_j=a_j$,
which $\psi$ fixes.

(4) \emph{$\psi$ is an automorphism of $G$.}  For $j\in J$, let $\psi_j$
exchange $a_j$ and $b_j$ and fix every other vertex.  As $a_j,b_j$ are
twins (Lemma~\ref{lem:N03-structure}(iii)) and $G$ is connected, $\psi_j$
is an automorphism of $G$ (Lemma~\ref{lem:N01-twins}(i)).  Let
$j_1<\dots<j_t$ be the elements of $J$.  We show
\[
 \psi=\psi_{j_1}\circ\dots\circ\psi_{j_t}
\]
(the identity if $J=\emptyset$).  Let $u$ be a vertex of $G$.

If $u\in P_j$ with $j\in J$: for $j'\ne j$, the map $\psi_{j'}$ moves only
the two vertices of $P_{j'}$, which are not in $P_j$, so it fixes both $u$
and $\psi_j(u)$.  Hence the composition sends $u$ to $\psi_j(u)$, the
other vertex of $P_j$, which is $\psi(u)$.

Otherwise, every $\psi_{j'}$ with $j'\in J$ fixes $u$, and the composition
sends $u$ to $u=\psi(u)$.

A composition of automorphisms is an automorphism, so $\psi$ is an
automorphism of $G$.

(5) \emph{$\psi(W)=W_f$.}  Let $k$ be a level and $c$ an option.  The
row $\rho_{k,c}$ lies in $\Int(U_{k,c})$
(Definitions~\ref{def:N03-names} and~\ref{def:N03-rows}).  By
Lemma~\ref{lem:N03-structure}(iii), no vertex of the pairs
$P_0,\dots,P_4$ lies in $\Int(U_{k,c})$.  So $\rho_{k,c}$ is none of the
vertices $a_j,b_j$ ($0\le j\le4$), and by (3),
$\psi(\rho_{k,c})=\rho_{k,c}$.  Also by (3), $\psi(p_j)=a_j$.  By
\eqref{eq:N05-W} and Definition~\ref{def:N03-pattern},
\[
\begin{aligned}
 \psi(W)&=\{\rho_{k,c}:1\le k\le m,\ c\ne f_k\}\cup\{a_0,a_1,a_2,a_3,a_4\}\\
 &=\{\rho_{k,c}:1\le k\le m,\ c\ne f_k\}\cup\{q,w_1,w_2,w_3,w_4\}=W_f.
\end{aligned}
\]
This proves (ii).
\end{proof}


\section{The pairs a pattern leaves unseparated}\label{sec:N06}

\begin{lemma}[pairs inside a copy]\label{lem:N06-copy}
Let $\varphi$ be a restricted formula and $G:=G_\varphi$
(Definition~\ref{def:N03-graph}); $G$ is connected
(Lemma~\ref{lem:N04-potentials}).  Let $k$ be a level and $c$ an option
(Definition~\ref{def:N03-options}), and $x\ne y$ two members of $U_{k,c}$
(Definition~\ref{def:N03-names}).  Then $q$ or the row $w_3=y^3_7$
(Definitions~\ref{def:N03-graph} and~\ref{def:N03-rows}) separates $x$ and
$y$ (Definition~\ref{def:N01-resolve}).
\end{lemma}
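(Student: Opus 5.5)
We read the profile of a member of $U:=U_{k,c}$ off the distances from $w_3$ and from $q$. By Lemma~\ref{lem:N03-inside}(iii), two distinct members have distinct profiles in $U$, so it suffices to show that $d(w_3,\cdot)$ and $d(q,\cdot)$ together determine $(h_U,\hb_U)$ on the members of $U$.

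The first reading comes from Corollary~\ref{cor:N04-levels}(iii). It gives $d(w_3,u)=7+k+h_U(u)$ for every member $u$ of $U$, so $w_3$ determines $h_U(u)$. If $h_U(x)\ne h_U(y)$ we are done, and we may assume $h:=h_U(x)=h_U(y)$. Since the profiles differ, $\hb_U(x)\ne\hb_U(y)$.

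The second reading comes from Corollary~\ref{cor:N04-levels}(ii).
\begin{itemize}
\item If $k=m$, then $d(q,u)=1+\hb_U(u)$, and $q$ separates $x$ and $y$ immediately.
\item If $k<m$, then $d(q,u)=m-k+1+\min(L_m+h,\ \hb_U(u)-1)$. This is injective in $\hb_U(u)$ as long as $\hb_U(u)-1\le L_m+h$.
\end{itemize}
So the main step is the bound $\hb_U(u)\le L_m+h_U(u)+1$, or at least that no two members with the same $h$ both have $\hb_U$ beyond the cutoff. I would first try to show $\kappa_U(u)=\hb_U(u)-h_U(u)\le L_m+1$ for every member, using the skew bounds of Lemma~\ref{lem:N02-skew} transported by Lemma~\ref{lem:N03-copies}.

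For crossing and $t$-ear members this bound is easy, since $\kappa\le\sigma_k+4=2k+4\le 2m+2$ while $L_m=2m+8$. It can fail only for spine members $v_i$ with $L_k-2i>L_m+1$, that is, for small $i$. Two members $x\ne y$ with the same $h=i$ and both beyond the cutoff would then have $\kappa>\sigma_k+4$. By Lemma~\ref{lem:N03-inside}(iv) both are spine members, and a spine member with $h=i$ is unique ($v_i$), so $x=y$, a contradiction.

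If exactly one of them, say $x$, is beyond the cutoff, then $d(q,x)=m-k+1+L_m+h$ while $d(q,y)=m-k+\hb_U(y)$ with $\hb_U(y)-1<L_m+h$. These values differ, so $q$ still separates them.

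The obstacle is thus only the case analysis on the cutoff in the minimum of Corollary~\ref{cor:N04-levels}(ii). It is resolved by the uniqueness of the spine member at a given $h$, together with the fact that any member with large skew lies on the spine.
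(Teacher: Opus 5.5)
Your route is the paper's. The row $w_3$ reads $h_U$ (Corollary~\ref{cor:N04-levels}(iii)). When the heights agree, distinct profiles (Lemma~\ref{lem:N03-inside}(iii)) force $\hb_U(x)\ne\hb_U(y)$. The vertex $q$ then reads $\hb_U$ through the minimum of Corollary~\ref{cor:N04-levels}(ii). The saturated branch of that minimum is excluded because a skew $>\sigma_k+4$ forces a spine member (Lemma~\ref{lem:N03-inside}(iv)), and the spine has a unique member at each height.

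There is one unjustified step, at the boundary of your case split.

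\emph{The gap.} You call $u$ beyond the cutoff when $\hb_U(u)-1>L_m+h$. So in the case where only $x$ is beyond, all you know about $y$ is $\hb_U(y)-1\le L_m+h$, not the strict inequality you use. If equality held, then $d(q,y)=m-k+1+(L_m+h)=d(q,x)$, and $q$ would not separate the two vertices. This case must therefore be excluded explicitly.

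\emph{The fix.} Your own argument excludes it. Equality means $\kappa_U(y)=L_m+1>\sigma_k+4$. So $y$ is also a spine member with $h_U(y)=h$, hence $y=x$, a contradiction. Alternatively, every skew in a copy is even by Lemma~\ref{lem:N02-skew}, while $L_m+1=2m+9$ is odd.

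\emph{How the paper avoids it.} The paper orders $\hb_U(x)<\hb_U(y)$ and deduces from $d(q,x)=d(q,y)$ the non-strict bound $\hb_U(x)-1\ge L_m+h$. Both skews are then at least $L_m+1$ at once, and a single spine argument finishes, with no boundary case to treat separately.
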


\begin{proof}
Let $U:=U_{k,c}$, and write $h$, $\hb$, $\kappa$ for $h_U$, $\hb_U$,
$\kappa_U$ (Definition~\ref{def:N03-names}).  Let $L_k$, $L_m$ and
$\sigma_k$ be the numbers of Definition~\ref{def:N03-params}.  By
Corollary~\ref{cor:N04-levels}(iii) and~(ii), for every member $u$ of $U$,
\begin{equation}\label{eq:N06-copy}
 d(w_3,u)=7+k+h(u),\qquad
 d(q,u)=\begin{cases}m-k+1+\min\bigl(L_m+h(u),\ \hb(u)-1\bigr)&\text{if }k<m,\\
 1+\hb(u)&\text{if }k=m.\end{cases}
\end{equation}

(1) \emph{The case $h(x)\ne h(y)$.}  By \eqref{eq:N06-copy},
$d(w_3,x)=7+k+h(x)\ne7+k+h(y)=d(w_3,y)$, so $w_3$ separates $x$ and $y$.

(2) \emph{The case $h(x)=h(y)=:\eta$.}  We show that $q$ separates $x$
and $y$.  By Lemma~\ref{lem:N03-inside}(iii), the profiles of $x$ and $y$
are distinct, so $\hb(x)\ne\hb(y)$.  The claim is symmetric in $x$ and
$y$, so we may exchange their names and assume $\hb(x)<\hb(y)$.  If
$k=m$, then by \eqref{eq:N06-copy} $d(q,x)=1+\hb(x)<1+\hb(y)=d(q,y)$,
and $q$ separates $x$ and $y$.  Let $k<m$, and suppose, for a
contradiction, that $d(q,x)=d(q,y)$.  By \eqref{eq:N06-copy} this means
\[
 \min\bigl(L_m+\eta,\ \hb(x)-1\bigr)=\min\bigl(L_m+\eta,\ \hb(y)-1\bigr).
\]
If $\hb(x)-1<L_m+\eta$, the left side is $\hb(x)-1$, while both terms of
the right side exceed $\hb(x)-1$, a contradiction.  So
$\hb(x)-1\ge L_m+\eta$, and $\hb(y)-1>\hb(x)-1\ge L_m+\eta$.  Hence, with
the skew $\kappa=\hb-h$ of Definition~\ref{def:N03-names},
\[
 \kappa(x)=\hb(x)-\eta\ge L_m+1,\qquad \kappa(y)=\hb(y)-\eta\ge L_m+1 .
\]

(3) \emph{The contradiction.}  By Definition~\ref{def:N03-params},
$L_m=4m+8-2m=2m+8$ and $\sigma_k=2k\le2m$, so
$L_m+1\ge\sigma_k+9>\sigma_k+4$.  Hence $\kappa(x)>\sigma_k+4$ and
$\kappa(y)>\sigma_k+4$, so $x$ and $y$ are spine members of $U$
(Lemma~\ref{lem:N03-inside}(iv)), that is, $x=v^{k,c}_i$ and
$y=v^{k,c}_{i'}$ with $1\le i,i'\le L_k-1$
(Definition~\ref{def:N03-names}).  Then $i=h(x)=h(y)=i'$
(Lemma~\ref{lem:N03-inside}(iv)), so $x=y$, which contradicts $x\ne y$.
Hence $q$ separates $x$ and $y$.
\end{proof}

\begin{lemma}[members of rowed copies]\label{lem:N06-rowed}
Let $\varphi$ be a restricted formula and $G:=G_\varphi$
(Definition~\ref{def:N03-graph}); $G$ is connected
(Lemma~\ref{lem:N04-potentials}).  Let $f$ be a choice and $W:=W_f$ its
pattern (Definition~\ref{def:N03-pattern}).  Let $k$ be a level and
$c\ne f_k$ an option (Definition~\ref{def:N03-options}), so that $U_{k,c}$
is a rowed copy of the level $k$ (Definition~\ref{def:N03-pattern}), and
let $x$ be a member of $U_{k,c}$ (Definition~\ref{def:N03-names}).  Then
$W$ separates $x$ from every vertex $y\ne x$ of $G$
(Definition~\ref{def:N01-resolve}).
\end{lemma}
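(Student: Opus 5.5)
The plan is to use the two cores that the pattern places at level $k$. The lemma fixes a rowed option $c\ne f_k$. Let $c''$ be the third option, with $\{c,c''\}=\{1,2,3\}\setminus\{f_k\}$. By Definition~\ref{def:N03-pattern}, both $\rho_{k,c}$ and $\rho_{k,c''}$ lie in $W$. The idea is that a member of $U_{k,c}$ is told apart from everything else by \emph{comparing} its distances to these two landmarks, rather than by either distance alone. If $x=\rho_{k,c}$, then $x\in W$ and the claim is immediate (Definition~\ref{def:N01-resolve}), so assume $x\ne\rho_{k,c}$, although the argument below does not really need this.

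First step: every vertex $v$ outside $\Int(U_{k,c})\cup\Int(U_{k,c''})$ is equidistant from the two cores. Corollary~\ref{cor:N04-cores}(i), applied to both options, gives
\[
d(\rho_{k,c},v)=\min\bigl(\alpha+d(s_k,v),\ \beta_k+d(t_k,v)\bigr)=d(\rho_{k,c''},v).
\]
The middle expression does not depend on the option.

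Second step: the member $x$ is strictly unbalanced. By Corollary~\ref{cor:N04-cores}(ii) with $a=c$ and $b=c''$,
\[
d(\rho_{k,c},x)<d(\rho_{k,c''},x).
\]
Symmetrically, every member $y$ of $U_{k,c''}$ satisfies $d(\rho_{k,c''},y)<d(\rho_{k,c},y)$.

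Now let $y\ne x$ and split into three cases according to where $y$ lies.
\begin{enumerate}
\item[(a)] If $y\notin\Int(U_{k,c})\cup\Int(U_{k,c''})$, then the two cores are at equal distance from $y$ but at unequal distances from $x$. So one of the two cores separates $x$ and $y$: if both distances agreed, the two differences $d(\rho_{k,c},\cdot)-d(\rho_{k,c''},\cdot)$ would coincide.
\item[(b)] If $y\in\Int(U_{k,c''})$, the same argument applies, since the sign of the difference is negative at $x$ and positive at $y$.
\item[(c)] If $y\in\Int(U_{k,c})$, then $x$ and $y$ are two distinct members of one copy. Lemma~\ref{lem:N06-copy} then shows that $q$ or $w_3$ separates them, and both lie in $W_f$.
\end{enumerate}

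The main (mild) obstacle is the bookkeeping that justifies the case split. The split is exhaustive by Lemma~\ref{lem:N03-structure}(i). Corollary~\ref{cor:N04-cores}(i) must be applied with the right exclusion, that is, to vertices $v$ outside $\Int(U_{k,c})$ for the core $\rho_{k,c}$ and outside $\Int(U_{k,c''})$ for $\rho_{k,c''}$. This covers terminals, skeleton vertices, $q$, arm vertices and members of other copies uniformly, so no further distance computations are needed.
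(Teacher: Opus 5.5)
Your proposal is correct and follows the paper's proof essentially step for step. It uses the same three-way split on where $y$ lies: Lemma~\ref{lem:N06-copy} handles $y\in\Int(U_{k,c})$, and the two parts of Corollary~\ref{cor:N04-cores} handle the other two cases. Your separate treatment of $x=\rho_{k,c}$ is harmless but unnecessary, as you note, since Corollary~\ref{cor:N04-cores}(ii) already covers that case.
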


\begin{proof}
Let $c'$ be the option other than $c$ and $f_k$.  Then $U_{k,c'}$ is the
other rowed copy of the level $k$, and by
Definition~\ref{def:N03-pattern} the cores $\rho_{k,c}$ and $\rho_{k,c'}$
lie in $W$, as do $q$ and $w_3$.  Let $\alpha$ and $\beta_k$ be the numbers
of Definition~\ref{def:N03-params}, and $y\ne x$ a vertex of $G$.  By
Lemma~\ref{lem:N03-structure}(i), $y$ lies in $\Int(U_{k,c})$, in
$\Int(U_{k,c'})$, or in neither, and (1)--(3) treat these cases.

(1) \emph{The case $y\in\Int(U_{k,c})$.}  Then $x$ and $y$ are two
distinct members of $U_{k,c}$.  By Lemma~\ref{lem:N06-copy}, $q$ or $w_3$
separates them, and both lie in $W$.

(2) \emph{The case $y\in\Int(U_{k,c'})$.}  Corollary~\ref{cor:N04-cores}(ii)
with $(a,b)=(c,c')$ at the member $u:=x$ of $U_{k,c}$, and with
$(a,b)=(c',c)$ at the member $u:=y$ of $U_{k,c'}$, gives
\[
 d(\rho_{k,c},x)<d(\rho_{k,c'},x),\qquad d(\rho_{k,c'},y)<d(\rho_{k,c},y).
\]
Suppose that neither $\rho_{k,c}$ nor $\rho_{k,c'}$ separates $x$ and $y$:
$d(\rho_{k,c},x)=d(\rho_{k,c},y)$ and $d(\rho_{k,c'},x)=d(\rho_{k,c'},y)$.
Then
\[
 d(\rho_{k,c},x)<d(\rho_{k,c'},x)=d(\rho_{k,c'},y)<d(\rho_{k,c},y)
 =d(\rho_{k,c},x),
\]
a contradiction.  So $\rho_{k,c}$ or $\rho_{k,c'}$ separates $x$ and $y$.

(3) \emph{The case $y\notin\Int(U_{k,c})\cup\Int(U_{k,c'})$.}  Suppose that
neither $\rho_{k,c}$ nor $\rho_{k,c'}$ separates $x$ and $y$:
$d(\rho_{k,c},x)=d(\rho_{k,c},y)$ and $d(\rho_{k,c'},x)=d(\rho_{k,c'},y)$.
By Corollary~\ref{cor:N04-cores}(i) at $c$ and at $c'$, with $v:=y$,
\[
 d(\rho_{k,c},y)=\min\bigl(\alpha+d(s_k,y),\ \beta_k+d(t_k,y)\bigr)
 =d(\rho_{k,c'},y).
\]
Hence
\[
 d(\rho_{k,c},x)=d(\rho_{k,c},y)=d(\rho_{k,c'},y)=d(\rho_{k,c'},x).
\]
This contradicts Corollary~\ref{cor:N04-cores}(ii) with $(a,b)=(c,c')$ at
the member $u:=x$ of $U_{k,c}$, which gives
$d(\rho_{k,c},x)<d(\rho_{k,c'},x)$.  So $\rho_{k,c}$ or $\rho_{k,c'}$
separates $x$ and $y$.
\end{proof}

\begin{lemma}[pairs inside an arm]\label{lem:N06-arm-pairs}
Let $\varphi$ be a restricted formula and $G:=G_\varphi$
(Definition~\ref{def:N03-graph}); $G$ is connected
(Lemma~\ref{lem:N04-potentials}).  Let $1\le j\le4$, and $j':=3$ if
$j\le2$, $j':=1$ if $j\ge3$, so that $\{\pi_j,\pi_{j'}\}=\{p,7\}$
(Definition~\ref{def:N03-graph}).  Let $x\ne y$ be two vertices of
$\Int(Y^j)$ (Definition~\ref{def:N03-rows}).  Then the row $w_j$ or the
row $w_{j'}$ separates $x$ and $y$ (Definition~\ref{def:N01-resolve}).
\end{lemma}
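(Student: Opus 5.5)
We compute the distances from $w_j$ and from $w_{j'}$ to every vertex of $\Int(Y^j)$ and compare. Put $\pi:=\pi_j$ and $\pi':=\pi_{j'}$. The vertices of $\Int(Y^j)$ are the $y^j_i$ ($1\le i\le D-1$) and $\tilde y^j$ (Lemma~\ref{lem:N03-structure}(i)).

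\emph{Distances from $w_j$.}  By Lemma~\ref{lem:N04-rows}(iii), $d(w_j,y^j_i)=|i-\pi|$ and $d(w_j,\tilde y^j)=2$.

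\emph{Distances from $w_{j'}$.}  As $j'\ne j$, every vertex $u$ of $\Int(Y^j)$ lies outside $\Int(Y^{j'})$, so the first formula of Lemma~\ref{lem:N04-rows}(iii) applies:
\[
 d(w_{j'},u)=\min\bigl(\pi'+\delta_1(u),\ D-\pi'+\delta_2(u)\bigr).
\]
The table~\eqref{eq:N04-values} gives $(\delta_1,\delta_2)(y^j_i)=(i,D-i)$, so
\[
 d(w_{j'},y^j_i)=\min(\pi'+i,\ 2D-\pi'-i).
\]
This is strictly increasing in $i$ for $i\le D-\pi'$. It also gives $(\delta_1,\delta_2)(\tilde y^j)=(\pi,D-\pi)$, so
\[
 d(w_{j'},\tilde y^j)=\min(\pi'+\pi,\ 2D-\pi'-\pi)=p+7,
\]
using $\pi+\pi'=p+7\le D$.

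\emph{Two path vertices $y^j_i,y^j_{i'}$ with $i\ne i'$.}  If $w_j$ does not separate them, then $|i-\pi|=|i'-\pi|$, so $i'=2\pi-i$, and we may take $i<\pi<i'$. We then want $w_{j'}$ to separate them. The first term $\pi'+i$ of $d(w_{j'},y^j_i)$ is at most $\pi'+\pi\le D$. The case split is on $\pi$.
\begin{itemize}
\item If $\pi=7$, then $\pi'=p$. Here $i'\le 13\le D-p$, so both distances equal their first terms, $p+i$ and $p+i'$, which differ.
\item If $\pi=p$, then $\pi'=7$. Here $i'=2p-i\le 2p-1=4m+7\le D-7$, so again both distances are the first terms $7+i$ and $7+i'$, which differ.
\end{itemize}
This uses $D=4m+15$ and $p=2m+4$ (Definition~\ref{def:N03-params}), and it is the only place where the specific positions $p$ and $7$ matter.

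\emph{The vertex $\tilde y^j$ against a path vertex $y^j_i$.}  If $w_j$ does not separate them, then $|i-\pi|=2$, so $i=\pi\pm2$. Then $d(w_{j'},y^j_i)=\pi'+\pi\pm2$, by the same range check as above, since $\pi+2\le D-\pi'$. This differs from $d(w_{j'},\tilde y^j)=p+7=\pi+\pi'$, so $w_{j'}$ separates them.

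The main obstacle is only the bookkeeping of which term attains each minimum. This is settled by the numerical inequalities $\pi+\pi'+2\le D$ and $2p-1\le D-7$, both of which hold since $m\ge3$ (Lemma~\ref{lem:N03-params}(iii)).
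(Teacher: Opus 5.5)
Your proof is correct and follows the paper's argument. When $w_j$ fails to separate, the pair is either symmetric about $\pi_j$ or is $\tilde y^j$ with $y^j_{\pi_j\pm2}$, and $w_{j'}$ then reads $\pi_{j'}+\delta_1$ on both vertices, which differ. The paper folds your case-by-case range checks into the single bound $\delta_1\le2\pi_j-1\le D-\pi_{j'}$ and invokes Corollary~\ref{cor:N04-other-arm}(i), whereas you evaluate the minimum directly, including the explicit value $d(w_{j'},\tilde y^j)=p+7$.
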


\begin{proof}
Put $\pi:=\pi_j$, and let $D$ and $p$ be the numbers of
Definition~\ref{def:N03-params}.  By Lemma~\ref{lem:N03-structure}(i),
$\Int(Y^j)=\{y^j_1,\dots,y^j_{D-1},\tilde y^j\}$, and by
Lemma~\ref{lem:N04-rows}(iii),
\[
 d(w_j,y^j_i)=|i-\pi|\quad(1\le i\le D-1),\qquad d(w_j,\tilde y^j)=2.
\]
Either $d(w_j,x)\ne d(w_j,y)$ or $d(w_j,x)=d(w_j,y)$, and (1)--(3) treat
the two cases.

(1) \emph{The case $d(w_j,x)\ne d(w_j,y)$.}  Then $w_j$ separates $x$ and
$y$.

(2) \emph{The case $d(w_j,x)=d(w_j,y)$: $x$ and $y$ lie symmetrically
around $\pi$.}  We show that $\{x,y\}$ is
$\{y^j_{\pi-e},y^j_{\pi+e}\}$ for some $1\le e\le\pi-1$, or
$\{\tilde y^j,y^j_{\pi-2}\}$, or $\{\tilde y^j,y^j_{\pi+2}\}$.

If $x=y^j_i$ and $y=y^j_{i'}$ with $i\ne i'$, the distances above give
\[
 |i-\pi|=d(w_j,x)=d(w_j,y)=|i'-\pi|,
\]
so $i-\pi=-(i'-\pi)$, as $i\ne i'$.  Put $e:=|i-\pi|$.  Then the numbers
$i-\pi$ and $i'-\pi$ are $e$ and $-e$ in some order, so
$\{i,i'\}=\{\pi-e,\pi+e\}$ and $\{x,y\}=\{y^j_{\pi-e},y^j_{\pi+e}\}$.  Here
$e\ge1$, as $e=0$ would give $i=\pi=i'$.  Also $e\le\pi-1$, as $\pi-e$ is
one of $i,i'$, which are at least $1$.

If $\tilde y^j$ is one of $x,y$, the other is some $y^j_i$, and in the
same way $2=d(w_j,\tilde y^j)=d(w_j,y^j_i)=|i-\pi|$.  So $i=\pi-2$ or
$i=\pi+2$.

(3) \emph{The case $d(w_j,x)=d(w_j,y)$: $w_{j'}$ separates $x$ and $y$.}
By the lines $y^j_i$ and $\tilde y^j$ of the table \eqref{eq:N04-values},
$\delta_1(y^j_i)=i$ and $\delta_1(\tilde y^j)=\pi$.  So by (2), the two
numbers $\delta_1(x),\delta_1(y)$ are $\pi-e$ and $\pi+e$, or $\pi$ and
$\pi-2$, or $\pi$ and $\pi+2$.

These two numbers differ, as $e\ge1$.  We show that they are at most
$D-\pi_{j'}$.  First, $\pi+e\le2\pi-1$ as $e\le\pi-1$, and
$\pi+2\le2\pi-1$ as $\pi\ge7$ ($\pi\in\{p,7\}$ and $p=2m+4\ge8$,
Definition~\ref{def:N03-params} and Lemma~\ref{lem:N03-params}(iii)).  So
both are at most $2\pi-1$.  Next, by Definition~\ref{def:N03-params},
$p=2m+4$ and $D=4m+15$, so
\[
\begin{aligned}
 \pi=p,\ \pi_{j'}=7:&\quad 2\pi-1=4m+7, && D-\pi_{j'}=4m+8,\\
 \pi=7,\ \pi_{j'}=p:&\quad 2\pi-1=13, && D-\pi_{j'}=2m+11\ge15.
\end{aligned}
\]
In both cases $2\pi-1\le D-\pi_{j'}$, so $\delta_1(x),\delta_1(y)\le
D-\pi_{j'}$.

We now apply Corollary~\ref{cor:N04-other-arm}(i) with $a:=j'$.  Its
hypotheses are $u\notin\Int(Y^{j'})$ and $\delta_1(u)\le D-\pi_{j'}$.  By
the definition of $j'$ in the statement, $j'\ne j$, so
$\Int(Y^j)\cap\Int(Y^{j'})=\emptyset$ (Lemma~\ref{lem:N03-structure}(i)),
and $x,y\notin\Int(Y^{j'})$.

At $u:=x$: $x\notin\Int(Y^{j'})$ and $\delta_1(x)\le D-\pi_{j'}$, so
Corollary~\ref{cor:N04-other-arm}(i) gives
$d(w_{j'},x)=\pi_{j'}+\delta_1(x)$.

At $u:=y$: $y\notin\Int(Y^{j'})$ and $\delta_1(y)\le D-\pi_{j'}$, so
Corollary~\ref{cor:N04-other-arm}(i) gives
$d(w_{j'},y)=\pi_{j'}+\delta_1(y)$.

As $\delta_1(x)\ne\delta_1(y)$, $d(w_{j'},x)\ne d(w_{j'},y)$, so $w_{j'}$
separates $x$ and $y$.
\end{proof}

\begin{lemma}[the arms]\label{lem:N06-arms}
Let $\varphi$ be a restricted formula and $G:=G_\varphi$
(Definition~\ref{def:N03-graph}); $G$ is connected
(Lemma~\ref{lem:N04-potentials}).  Let $f$ be a choice and $W:=W_f$ its
pattern (Definition~\ref{def:N03-pattern}).  Let $1\le j\le4$ and
$x\in\Int(Y^j)$, where $Y^j$ is the arm $j$
(Definition~\ref{def:N03-rows}).  Then $W$ separates $x$ from every vertex
$y\ne x$ of $G$ (Definition~\ref{def:N01-resolve}).
\end{lemma}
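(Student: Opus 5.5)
Split according to where $y$ lies.  Let $j'$ be as in Lemma~\ref{lem:N06-arm-pairs}; the pattern $W$ contains $w_1,\dots,w_4$, so it contains $w_j$, $w_{j'}$, and the partner $w_b$ with $\pi_b=\pi_j$, $b\ne j$.  If $y\in\Int(Y^j)$, Lemma~\ref{lem:N06-arm-pairs} gives the result directly.  If $y\in\Int(Y^b)$, Corollary~\ref{cor:N04-arms}(ii) gives $d(w_j,x)<d(w_b,x)$ and $d(w_b,y)<d(w_j,y)$.  If neither $w_j$ nor $w_b$ separates $x,y$, these chain into $d(w_j,x)<d(w_b,x)=d(w_b,y)<d(w_j,y)=d(w_j,x)$, a contradiction; this is the argument of Lemma~\ref{lem:N06-rowed}(2).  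If $y$ lies in neither $\Int(Y^j)$ nor $\Int(Y^b)$, then Corollary~\ref{cor:N04-arms}(i) gives $d(w_j,y)=d(w_b,y)$.  Corollary~\ref{cor:N04-arms}(ii) gives $d(w_j,x)<d(w_b,x)$.  So $w_j$ and $w_b$ cannot both fail to separate $x$ and $y$, by the same chain.

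The last case, $y\notin\Int(Y^j)\cup\Int(Y^b)$, covers $y$ in the other pair of arms, in copies, on the skeleton, or $y=q$, all at once.  The same chain handles every such $y$: equality of $d(\cdot,y)$ for $w_j,w_b$, together with strict inequality at $x$, forces a separation.

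So the proof is short and rests entirely on Corollary~\ref{cor:N04-arms}, applied to the partner row $w_b\in W$.  The only point needing care is checking that $w_b\in W_f$ for every choice $f$.  This holds because the pattern contains all four $w$'s (Definition~\ref{def:N03-pattern}), and because $\{j,b\}$ is $\{1,2\}$ or $\{3,4\}$ by the definition of $\pi_j$.  No step should be a real obstacle.  The mild subtlety is that Lemma~\ref{lem:N06-arm-pairs} is needed only for $y$ in the same arm, since Corollary~\ref{cor:N04-arms}(ii) does not apply there.
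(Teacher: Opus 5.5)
Your proof is correct and is essentially the paper's proof. It uses the same three-way split on $y$: Lemma~\ref{lem:N06-arm-pairs} when $y\in\Int(Y^j)$, and, against the partner row $w_b$ with $\pi_b=\pi_j$, the chain built from Corollary~\ref{cor:N04-arms}(ii) at $x$ combined with (ii) at $y$ when $y\in\Int(Y^b)$, or with (i) at $y$ otherwise.
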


\begin{proof}
Let $j^*$ be the other index with $\pi_{j^*}=\pi_j$, that is,
$\{j,j^*\}=\{1,2\}$ or $\{j,j^*\}=\{3,4\}$
(Definition~\ref{def:N03-graph}).  By Definition~\ref{def:N03-pattern},
$w_1,\dots,w_4$ lie in $W$.  As $x\in\Int(Y^j)$,
Corollary~\ref{cor:N04-arms}(ii) with $(a,b)=(j,j^*)$ at $u:=x$ gives
\begin{equation}\label{eq:N06-arm-x}
 d(w_j,x)<d(w_{j^*},x).
\end{equation}
Let $y\ne x$ be a vertex of $G$.  By Lemma~\ref{lem:N03-structure}(i), $y$
lies in $\Int(Y^j)$, in $\Int(Y^{j^*})$, or in neither, and (1)--(3) treat
these cases.

(1) \emph{The case $y\in\Int(Y^j)$.}  Let $j':=3$ if $j\le2$ and $j':=1$
if $j\ge3$.  By Lemma~\ref{lem:N06-arm-pairs}, $w_j$ or $w_{j'}$ separates
$x$ and $y$, and both lie in $W$.

(2) \emph{The case $y\in\Int(Y^{j^*})$.}  Corollary~\ref{cor:N04-arms}(ii)
with $(a,b)=(j^*,j)$ at $u:=y\in\Int(Y^{j^*})$ gives
$d(w_{j^*},y)<d(w_j,y)$.  Suppose that neither $w_j$ nor $w_{j^*}$
separates $x$ and $y$.  We then have
\[
\begin{aligned}
 d(w_j,x)&<d(w_{j^*},x) &&\text{by \eqref{eq:N06-arm-x}},\\
 d(w_{j^*},x)&=d(w_{j^*},y) &&\text{by the supposition},\\
 d(w_{j^*},y)&<d(w_j,y) &&\text{by the inequality above},\\
 d(w_j,y)&=d(w_j,x) &&\text{by the supposition}.
\end{aligned}
\]
Chaining these four relations gives $d(w_j,x)<d(w_j,x)$, a
contradiction.  So $w_j$ or $w_{j^*}$ separates $x$ and $y$.

(3) \emph{The case $y\notin\Int(Y^j)\cup\Int(Y^{j^*})$.}  Suppose that
neither $w_j$ nor $w_{j^*}$ separates $x$ and $y$.  By the case
hypothesis, $y$ lies neither in $\Int(Y^j)$ nor in $\Int(Y^{j^*})$, so
Corollary~\ref{cor:N04-arms}(i) applies with $(a,b)=(j,j^*)$ at $v:=y$ and
gives $d(w_j,y)=d(w_{j^*},y)$.  Hence
\[
 d(w_j,x)=d(w_j,y)=d(w_{j^*},y)=d(w_{j^*},x).
\]
This contradicts \eqref{eq:N06-arm-x}.  So $w_j$ or $w_{j^*}$ separates
$x$ and $y$.
\end{proof}

\begin{lemma}[four vertices of the skeleton]\label{lem:N06-skel-easy}
Let $\varphi$ be a restricted formula and $G:=G_\varphi$
(Definition~\ref{def:N03-graph}); $G$ is connected
(Lemma~\ref{lem:N04-potentials}).  Let $w_1,w_3$ be the rows of the arms
$1$ and $3$, and for a level $k$ and an option $c$, $\rho_{k,c}$ the core
of $U_{k,c}$ (Definition~\ref{def:N03-rows}).  Let $y$ be a vertex of
$G$.
\begin{enumerate}
\item[\rm(i)] If $y\ne T_1$ and $y\notin\Int(Y^3)$, then $w_3$ separates
  $T_1$ and $y$.
\item[\rm(ii)] If $y\ne T_2$ and $y\notin\Int(Y^1)\cup\Int(Y^3)$, then
  $w_1$ or $w_3$ separates $T_2$ and $y$.
\item[\rm(iii)] Let $k$ be a level and $c$ an option.  If $y\ne t_k$ and
  $y\notin\Int(U_{k,c})$, then $\rho_{k,c}$ separates $t_k$ and $y$.
\item[\rm(iv)] If $y\ne q'$, then $q$, $w_1$ or $w_3$ separates $q'$
  and $y$.
\end{enumerate}
\end{lemma}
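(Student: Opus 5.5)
Each of the four items has the same shape: compute the distance from the landmark to the named skeleton vertex exactly, then show that any other vertex $y$ with the same distance must be that vertex. The tools are Corollary~\ref{cor:N04-other-arm} for the rows of the arms, Corollary~\ref{cor:N04-cores}(i) for the cores, and Lemma~\ref{lem:N04-small} to pin down vertices with small $\delta_1$ or $\delta_2$.

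\emph{Item (i).} Since $T_1\notin\Int(Y^3)$ and $\delta_1(T_1)=0\le D-7$, Corollary~\ref{cor:N04-other-arm}(i) gives $d(w_3,T_1)=7$. For $y\notin\Int(Y^3)$ the first formula of Lemma~\ref{lem:N04-rows}(iii) gives
\[
 d(w_3,y)=\min\bigl(7+\delta_1(y),\ D-7+\delta_2(y)\bigr).
\]
The second term is at least $D-7>7$. So $d(w_3,y)=7$ forces $\delta_1(y)=0$, that is, $y=T_1$.

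\emph{Item (ii).} For $y\notin\Int(Y^1)\cup\Int(Y^3)$ we have $d(w_1,T_2)=D-p=p+7$ and $d(w_3,T_2)=D-7$. Suppose neither row separates, so $d(w_3,y)=D-7$ and $d(w_1,y)=p+7$.
\begin{itemize}
\item From $d(w_3,y)=D-7$: either $7+\delta_1(y)=D-7$, or $\delta_2(y)=0$. The second gives $y=T_2$.
\item So assume $\delta_1(y)=D-14$. Then $\delta_1(y)\le D-p$ because $p\le 14$ fails for large $m$. I should check this: $D-14=4m+1$ and $D-p=2m+11$, so $\delta_1(y)=4m+1>2m+11$ once $m\ge 6$. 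The clean route is therefore to use $\delta_2$.
\item From $d(w_1,y)=p+7$ with $p+\delta_1(y)=p+4m+1>p+7$, we get $p+7+\delta_2(y)=p+7$, hence $\delta_2(y)=0$ and $y=T_2$.
\end{itemize}
If instead the minimum in $d(w_1,y)$ were attained by the $\delta_1$ term, that would need $\delta_1(y)=7$, contradicting $\delta_1(y)=D-14$ since $D-14\ne7$. So both cases give $y=T_2$.

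\emph{Item (iii).} By Lemma~\ref{lem:N04-rows}(i) with $j=k$, $d(\rho_{k,c},t_k)=\beta_k$. For $y\notin\Int(U_{k,c})$, Corollary~\ref{cor:N04-cores}(i) gives
\[
 d(\rho_{k,c},y)=\min\bigl(\alpha+d(s_k,y),\ \beta_k+d(t_k,y)\bigr).
\]
The first term is at least $\alpha>\beta_k$. So equality with $\beta_k$ forces $d(t_k,y)=0$, that is, $y=t_k$.

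\emph{Item (iv).} Lemma~\ref{lem:N04-reading}(iii) shows $q$ and $q'$ are at distance $2$ and agree on every other vertex, so $q$ separates $q'$ from $y=q$ (and $y=q$ is in fact separated by $q$ itself). Now let $y\notin\{q,q'\}$. The values at $q'$ are:
\begin{itemize}
\item $d(q,q')=2$;
\item $d(w_3,q')=7+\delta_1(q')=D+1-m$, by Corollary~\ref{cor:N04-other-arm}(i) since $\delta_1(q')=D-6-m\le D-7$;
\item $d(w_1,q')=D-p+\delta_2(q')=p+14+m$, by Corollary~\ref{cor:N04-other-arm}(ii) since $\delta_2(q')=7+m\le p$.
\end{itemize}
If $d(q,y)=2$, then $\min(d(t_m,y),d(t_{m-1},y))=1$, so $y$ is a neighbour of $t_m$ or of $t_{m-1}$. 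These neighbours are $t_{m-1}$, $t_m$, $t_{m-2}$, the two spine neighbours in the copies of levels $m$ and $m-1$, and $q'$. The first four have $\delta_1$ values differing from $\delta_1(q')$, so $w_3$ separates them, provided they lie outside $\Int(Y^3)$ so that the reading $7+\delta_1$ holds. Spine neighbours $v^{m-1,c}_{L_{m-1}-1}$ have $\delta_1=D-9-m$, and $v^{m,c}_{L_m-1}$ has $\delta_1=D-8-m$; both differ from $D-6-m$. Vertices in $\Int(Y^1)\cup\Int(Y^3)$ are far from $t_m$, so they cannot satisfy $d(q,y)=2$. It therefore appears that $w_1$ is not even needed, and item (iv) reduces to $q$ together with $w_3$.

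\emph{Main obstacle.} The hardest step is the exhaustive listing of neighbours of $t_m$ and $t_{m-1}$ in item (iv). The second hardest is choosing the correct minimum branch in item (ii); I would verify that carefully with $D-14\ne 7$ and $m\ge3$.
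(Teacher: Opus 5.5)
Items (i)--(iii) are correct. In (ii) you apply the two rows in the reverse order from the paper: the paper uses $w_1$ first to force $\delta_1(y)=7$, then gets $d(w_3,y)=14<D-7$. Your second bullet there is a dead end and should be deleted. The rest of (ii) is valid: $\delta_1(y)=D-14=4m+1$ gives $p+\delta_1(y)>p+7$, which forces $\delta_2(y)=0$, i.e.\ $y=T_2$.

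\emph{The gap is in (iv).} Your list of the vertices $y\notin\{q,q'\}$ with $\min(d(t_m,y),d(t_{m-1},y))=1$ omits the last members of the $t$-ears. The $t$-ear of $U_{k,c}$ ends at $t_k$, so $t_k$ is adjacent to $r^{k,c}_{\mu_k+1}$ for every option $c$. Hence $r^{m,c}_{\mu_m+1}$ and $r^{m-1,c}_{\mu_{m-1}+1}$ also satisfy $d(q,y)=2=d(q,q')$.

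The first of these refutes your conclusion that $w_1$ is not needed. By Lemma~\ref{lem:N03-inside}(iv), its profile in $U_{m,c}$ is $(L_m+1,1)$. So $\delta_1(r^{m,c}_{\mu_m+1})=m+L_m+1=3m+9=D-6-m=\delta_1(q')$, and $w_3$ reads $3m+16$ at both vertices. Thus neither $q$ nor $w_3$ separates $q'$ from $r^{m,c}_{\mu_m+1}$, and reducing (iv) to $q$ and $w_3$ fails. This is exactly why the statement includes $w_1$.

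The two vertices differ only in $\delta_2$: $\delta_2(r^{m,c}_{\mu_m+1})=7+m+1=m+8$, against $\delta_2(q')=m+7$. The row $w_1$ reads this difference through the first formula of Lemma~\ref{lem:N04-rows}(iii), using $m\ge3$:
\begin{itemize}
\item $d(w_1,q')=\min(5m+13,\,3m+18)=3m+18$;
\item $d(w_1,r^{m,c}_{\mu_m+1})=\min(5m+13,\,3m+19)=3m+19$.
\end{itemize}
Compare the two terms directly here. Your tool, Corollary~\ref{cor:N04-other-arm}(ii), does not apply at $m=3$, where $\delta_2(r^{m,c}_{\mu_m+1})=11>p=10$. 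The other omitted vertices, $r^{m-1,c}_{\mu_{m-1}+1}$, have $\delta_1=3m+10$ and are separated by $w_3$.

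Two smaller slips in the same item:
\begin{itemize}
\item $\delta_1(v^{m-1,c}_{L_{m-1}-1})=D-7-m$, not $D-9-m$. This is harmless, since it still differs from $D-6-m$.
\item $t_m$ and $t_{m-1}$ do not belong in your list, because $d(q,t_m)=d(q,t_{m-1})=1$ and $q$ already separates them from $q'$. This matters, since $\delta_1(t_{m-1})=D-6-m=\delta_1(q')$, so your claim that $w_3$ separates these vertices from $q'$ is false for $t_{m-1}$.
\end{itemize}
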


\begin{proof}
Let $D$, $p$, $\alpha$ and $\beta_k$ be the numbers of
Definition~\ref{def:N03-params}, and $\delta_1=d(T_1,\cdot)$,
$\delta_2=d(T_2,\cdot)$ (Lemma~\ref{lem:N04-potentials}).  By
Definition~\ref{def:N03-params} and Lemma~\ref{lem:N03-params}(i),(iii),
$D-p=p+7$, $\alpha=\beta_k+\sigma_k=\beta_k+2k$, and $D-7=4m+8\ge16$ as
$m\ge3$ (Lemma~\ref{lem:N03-params}(iii)).

(1) We prove (i).  By the second formula of
Lemma~\ref{lem:N04-rows}(iii) with $j=3$ and $i=0$, as $y^3_0=T_1$ and
$\pi_3=7$ (Definition~\ref{def:N03-graph}), $d(w_3,T_1)=|0-\pi_3|=7$.  As
$y\notin\Int(Y^3)$, the first formula of Lemma~\ref{lem:N04-rows}(iii) with
$j=3$ gives
\[
 d(w_3,y)=\min\bigl(\pi_3+\delta_1(y),\ D-\pi_3+\delta_2(y)\bigr)
 =\min\bigl(7+\delta_1(y),\ D-7+\delta_2(y)\bigr).
\]
Here $\delta_1(y)\ge1$ as $y\ne T_1$, and $D-7\ge16$.  So
$d(w_3,y)\ge\min(8,16)=8>7=d(w_3,T_1)$.

(2) We prove (ii).  We first compute the distances from $w_1$ and $w_3$
to $T_2$.  The second formula of Lemma~\ref{lem:N04-rows}(iii) gives
$d(w_j,y^j_i)=|i-\pi_j|$ for $1\le j\le4$ and $0\le i\le D$.

With $j=1$ and $i=D$: as $y^1_D=T_2$ (Definition~\ref{def:N03-graph}), it
gives $d(w_1,T_2)=|D-\pi_1|$.  As $\pi_1=p$
(Definition~\ref{def:N03-graph}) and $D-p=p+7>0$,
\[
 d(w_1,T_2)=|D-p|=D-p=p+7.
\]

With $j=3$ and $i=D$: as $y^3_D=T_2$, it gives $d(w_3,T_2)=|D-\pi_3|$.
As $\pi_3=7$ (Definition~\ref{def:N03-graph}) and $D-7\ge16>0$,
\begin{equation}\label{eq:N06-w3T2}
 d(w_3,T_2)=|D-7|=D-7.
\end{equation}

Suppose that neither $w_1$
nor $w_3$ separates $T_2$ and $y$; in particular
$d(w_1,y)=d(w_1,T_2)=p+7$.  As
$y\notin\Int(Y^1)$, the first formula of Lemma~\ref{lem:N04-rows}(iii) with
$j=1$ gives, as $\pi_1=p$ and $D-p=p+7$,
\[
 d(w_1,y)=\min\bigl(\pi_1+\delta_1(y),\ D-\pi_1+\delta_2(y)\bigr)
 =\min\bigl(p+\delta_1(y),\ p+7+\delta_2(y)\bigr).
\]
So $p+7=\min(p+\delta_1(y),\ p+7+\delta_2(y))$.  As $y\ne T_2$, $\delta_2(y)\ge1$, so the second term exceeds $p+7$, and
$p+\delta_1(y)=p+7$, that is, $\delta_1(y)=7$.  As $y\notin\Int(Y^3)$ and
$\delta_1(y)=7\le D-7=D-\pi_3$, Corollary~\ref{cor:N04-other-arm}(i) with
$a:=3$ gives $d(w_3,y)=\pi_3+\delta_1(y)=7+7=14$.  With
\eqref{eq:N06-w3T2} and $D-7\ge16$, $d(w_3,T_2)\ge16$.  This contradicts
$d(w_3,y)=d(w_3,T_2)$.  So $w_1$ or $w_3$ separates $T_2$ and $y$.

(3) We prove (iii).  By Lemma~\ref{lem:N04-rows}(i),
$d(\rho_{j,c},t_k)=\beta_j+|j-k|$; with $j=k$,
$d(\rho_{k,c},t_k)=\beta_k$.  As $y\notin\Int(U_{k,c})$,
Corollary~\ref{cor:N04-cores}(i) gives
\[
 d(\rho_{k,c},y)=\min\bigl(\alpha+d(s_k,y),\ \beta_k+d(t_k,y)\bigr).
\]
Both terms exceed $\beta_k$: $\alpha+d(s_k,y)\ge\alpha>\beta_k$, and
$\beta_k+d(t_k,y)\ge\beta_k+1$ as $y\ne t_k$.  So $d(\rho_{k,c},y)>d(\rho_{k,c},t_k)$, and
$\rho_{k,c}$ separates $t_k$ and $y$.

(4) We prove (iv).  Let $y\ne q'$.  If $y=q$: $d(q,q)=0\ne2=d(q,q')$
(Lemma~\ref{lem:N04-reading}(iii)).  Let $y\notin\{q,q'\}$.  By
Lemma~\ref{lem:N04-reading}(iii), $d(q,q')=2$ and
$d(q,y)=1+\min(d(t_m,y),d(t_{m-1},y))$.  So $q$ separates $q'$ and $y$
unless $\min(d(t_m,y),d(t_{m-1},y))=1$.
By Definition~\ref{def:N03-graph}, the neighbours of $t_m$ are
\[
 t_{m-1},\quad q,\quad q',\quad v^{m,c}_{L_m-1},\quad r^{m,c}_{\mu_m+1}
 \qquad(c\in\{1,2,3\}),
\]
and the neighbours of $t_{m-1}$ are
\[
 t_m,\quad t_{m-2},\quad q,\quad q',\quad v^{m-1,c}_{L_{m-1}-1},\quad
 r^{m-1,c}_{\mu_{m-1}+1}\qquad(c\in\{1,2,3\}),
\]
where $t_{m-2}$ exists as $m\ge3$ (Lemma~\ref{lem:N03-params}(iii)).  As
$y\notin\{q,q'\}$, the minimum is $1$ exactly when $y$ is a vertex of one
of the five kinds
\begin{equation}\label{eq:N06-fivekinds}
 t_{m-2},\quad v^{m,c}_{L_m-1},\quad r^{m,c}_{\mu_m+1},\quad
 v^{m-1,c}_{L_{m-1}-1},\quad r^{m-1,c}_{\mu_{m-1}+1}\qquad(c\in\{1,2,3\}).
\end{equation}
Let $y$ be such a vertex.  We prove that $w_3$ or $w_1$ separates $q'$
and $y$.

Neither $q'$ nor a vertex of \eqref{eq:N06-fivekinds} lies in
$\Int(Y^1)\cup\Int(Y^3)$, since the vertices of an arm $Y^j$ other than
$T_1,T_2$ are the $y^j_i$ and $\tilde y^j$ (Definition~\ref{def:N03-graph}).
So the first formula of Lemma~\ref{lem:N04-rows}(iii), with $\pi_3=7$ and
$\pi_1=p$, gives for each such vertex $v$
\[
 d(w_3,v)=\min\bigl(7+\delta_1(v),\ D-7+\delta_2(v)\bigr),\qquad
 d(w_1,v)=\min\bigl(p+\delta_1(v),\ D-p+\delta_2(v)\bigr),
\]
where $D-7=4m+8$, $p=2m+4$ and $D-p=2m+11$ (Definition~\ref{def:N03-params}).

We compute $\delta_1(v)$ and $\delta_2(v)$ from the table
\eqref{eq:N04-values}.  The line $q,q'$ gives
$(\delta_1,\delta_2)(q')=(D-6-m,\,7+m)=(3m+9,\,m+7)$.  The line $t_k$ at
$k=m-2$ gives $(\delta_1,\delta_2)(t_{m-2})=(D-7-(m-2),\,7+m-2)=(3m+10,\,m+5)$.
For a member $u$ of $U_{j,c}$, the line $u\in\Int(U_{j,c})$ gives
$\delta_1(u)=j+h_{U_{j,c}}(u)$ and $\delta_2(u)=7+j+\hb_{U_{j,c}}(u)$, and
by Lemma~\ref{lem:N03-inside}(iv) the profiles are
$(h,\hb)(v^{j,c}_{L_j-1})=(L_j-1,\,1)$ and
$(h,\hb)(r^{j,c}_{\mu_j+1})=(L_j+1,\,1)$.  With $L_m=2m+8$ and
$L_{m-1}=2m+10$ (Definition~\ref{def:N03-params}, $L_k=4m+8-2k$):
\[
\begin{array}{lll}
 v & \delta_1(v) & \delta_2(v)\\\hline
 q' & 3m+9 & m+7\\
 t_{m-2} & 3m+10 & m+5\\
 v^{m,c}_{L_m-1} & m+(L_m-1)=3m+7 & 7+m+1=m+8\\
 r^{m,c}_{\mu_m+1} & m+(L_m+1)=3m+9 & 7+m+1=m+8\\
 v^{m-1,c}_{L_{m-1}-1} & (m-1)+(L_{m-1}-1)=3m+8 & 7+(m-1)+1=m+7\\
 r^{m-1,c}_{\mu_{m-1}+1} & (m-1)+(L_{m-1}+1)=3m+10 & 7+(m-1)+1=m+7
\end{array}
\]

The row $w_3$.  In every line of the table, $\delta_2(v)\ge m+5$ and
$\delta_1(v)\le3m+10$, so $D-7+\delta_2(v)\ge5m+13$ and
$7+\delta_1(v)\le3m+17$; as $m\ge3$, $5m+13>3m+17$.  So
$d(w_3,v)=7+\delta_1(v)$ for every $v$ of the table:
\begin{align*}
 d(w_3,q')&=3m+16, & d(w_3,t_{m-2})&=3m+17,\\
 d(w_3,v^{m,c}_{L_m-1})&=3m+14, & d(w_3,r^{m,c}_{\mu_m+1})&=3m+16,\\
 d(w_3,v^{m-1,c}_{L_{m-1}-1})&=3m+15, & d(w_3,r^{m-1,c}_{\mu_{m-1}+1})&=3m+17 .
\end{align*}
So $w_3$ separates $q'$ from every vertex of \eqref{eq:N06-fivekinds}
except $r^{m,c}_{\mu_m+1}$.

The row $w_1$.  By the table, $\delta_1(q')=\delta_1(r^{m,c}_{\mu_m+1})=3m+9$,
so $p+\delta_1(v)=2m+4+3m+9=5m+13$ for both $v=q'$ and
$v=r^{m,c}_{\mu_m+1}$.  At $v=q'$, $\delta_2(v)=m+7$ and
$D-p+\delta_2(v)=2m+11+m+7=3m+18$.  At $v=r^{m,c}_{\mu_m+1}$,
$\delta_2(v)=m+8$ and $D-p+\delta_2(v)=2m+11+m+8=3m+19$.  As $m\ge3$,
$5m+13\ge3m+19$, so the minimum is the second term in both cases:
\[
 d(w_1,q')=3m+18,\qquad d(w_1,r^{m,c}_{\mu_m+1})=3m+19 .
\]
So $w_1$ separates $q'$ and $r^{m,c}_{\mu_m+1}$.  This proves (iv).
\end{proof}

\begin{lemma}[the vertices $z_i$]\label{lem:N06-skel-z}
Let $\varphi$ be a restricted formula and $G:=G_\varphi$
(Definition~\ref{def:N03-graph}); $G$ is connected
(Lemma~\ref{lem:N04-potentials}).  Let $w_1$ be the row of the arm $1$
and, for an option $c$, $\rho_{1,c}$ the core of $U_{1,c}$
(Definition~\ref{def:N03-rows}).  Let $1\le i\le7$, and $y\ne z_i$ a
vertex of $G$ in no $\Int(Y^j)$, $1\le j\le4$, with $y\notin\Int(U_{1,c})$.
Then $\rho_{1,c}$ or $w_1$ separates $z_i$ and $y$.
\end{lemma}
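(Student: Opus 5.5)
The plan is to compute the two distances $d(w_1,z_i)$ and $d(\rho_{1,c},z_i)$ exactly, and then show that any $y$ agreeing with $z_i$ at $w_1$ is too far from $\rho_{1,c}$.

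\emph{Step 1: the readings of $z_i$.} The vertex $z_i$ lies in no arm interior. By the table \eqref{eq:N04-values}, $\delta_2(z_i)=8-i\le p=\pi_1$. So Corollary~\ref{cor:N04-other-arm}(ii) gives
\[
 d(w_1,z_i)=D-p+8-i=p+15-i,
\]
using $D-p=p+7$. For the core, $z_i\notin\Int(U_{1,c})$, so Corollary~\ref{cor:N04-cores}(i) gives $d(\rho_{1,c},z_i)=\min(\alpha+d(s_1,z_i),\,\beta_1+d(t_1,z_i))$. Lemma~\ref{lem:N04-skeleton}(iii) gives $d(t_1,z_i)=i$. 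The triangle inequality with base point $T_2$ gives $d(s_1,z_i)\ge\delta_2(s_1)-\delta_2(z_i)=D-9+i$, so the first term is huge. Hence
\[
 d(\rho_{1,c},z_i)=\beta_1+i=2m+3+i.
\]

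\emph{Step 2: $w_1$ pins down $\delta_1(y)$.} Suppose $w_1$ does not separate $z_i$ and $y$.
\begin{itemize}
\item If $1\le\delta_2(y)\le7$, then Lemma~\ref{lem:N04-small}(ii) makes $y=z_{i'}$ with $i'\ne i$. Step 1 applied to $z_{i'}$ gives $d(w_1,y)=p+15-i'\ne p+15-i$, so $w_1$ separates them after all.
\item The value $\delta_2(y)=0$ means $y=T_2$, which is handled by the next case, since $\delta_1(T_2)=D$ makes $d(w_1,T_2)=p+7$ while forcing $\delta_1=15-i$ fails.
\item So we may take $\delta_2(y)\ge8$. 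Then by the first formula of Lemma~\ref{lem:N04-rows}(iii), $d(w_1,y)=\min(p+\delta_1(y),\,p+7+\delta_2(y))$. The second term is at least $p+15>p+15-i$. Hence $\delta_1(y)=15-i$, which lies in $[8,14]$.
\end{itemize}

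\emph{Step 3: $\rho_{1,c}$ separates.} Since $y\notin\Int(U_{1,c})$, Corollary~\ref{cor:N04-cores}(i) again expresses $d(\rho_{1,c},y)$ through $d(s_1,y)$ and $d(t_1,y)$. I bound both terms from below by the triangle inequality.
\begin{itemize}
\item With base point $T_1$, $d(s_1,y)\ge\delta_1(y)-1=14-i$. So the first term is at least $2m+19-i$, which exceeds $2m+3+i$ because $i\le7$.
\item By Lemma~\ref{lem:N04-skeleton}(iv), $\delta_1(y)+\delta_2(y)\ge D$, so $\delta_2(y)\ge D-15+i$. With base point $T_2$, $d(t_1,y)\ge\delta_2(y)-8\ge 4m-8+i$. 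So the second term is at least $6m-5+i$, which exceeds $2m+3+i$ because $m\ge3$.
\end{itemize}
Thus $d(\rho_{1,c},y)>d(\rho_{1,c},z_i)$, and $\rho_{1,c}$ separates $z_i$ and $y$.

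The main obstacle is the bookkeeping in Step 2. That step must rule out the competing branch of the minimum at $w_1$, and it must treat the small-$\delta_2$ vertices separately via Lemma~\ref{lem:N04-small}(ii). That lemma is exactly where the hypothesis that $y$ lies in no arm interior is needed. The hypothesis $y\notin\Int(U_{1,c})$ is what lets Corollary~\ref{cor:N04-cores}(i) apply in Step 3.
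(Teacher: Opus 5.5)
Your proof is correct and uses exactly the paper's ingredients (Corollary~\ref{cor:N04-cores}(i), the first formula of Lemma~\ref{lem:N04-rows}(iii), Lemma~\ref{lem:N04-small}(ii), and triangle inequalities from $T_1$ and $T_2$), only in the opposite order. The paper first uses $\rho_{1,c}$ to force $d(s_1,y)=i-2$ or $d(t_1,y)=i$ and then refutes each case with $w_1$, whereas you first use $w_1$ to force $\delta_1(y)=15-i$ and then show $d(\rho_{1,c},y)>d(\rho_{1,c},z_i)$. One wording fix: the case $y=T_2$ is not ``handled by the next case'' but is settled directly by $d(w_1,T_2)=p+7<p+15-i$.
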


\begin{proof}
Let $D$, $p$, $\alpha$ and $\beta_1$ be the numbers of
Definition~\ref{def:N03-params}, and $\delta_1=d(T_1,\cdot)$,
$\delta_2=d(T_2,\cdot)$ (Lemma~\ref{lem:N04-potentials}).  By
Definition~\ref{def:N03-params} and Lemma~\ref{lem:N03-params}(iii),
$D=4m+15\ge27$.  By
Lemma~\ref{lem:N03-params}(i), $\alpha=\beta_1+\sigma_1=\beta_1+2$ and
$D-p=p+7$.  By Definition~\ref{def:N03-graph}, $\pi_1=p$.

(1) \emph{The distances from $z_i$.}  By the line $z_i$ of the table
\eqref{eq:N04-values}, $\delta_1(z_i)=D-8+i$ and $\delta_2(z_i)=8-i$.  The
vertex $z_i$ lies in $S$, so $z_i\notin\Int(U_{1,c})$ and
$z_i\notin\Int(Y^1)$ (Lemma~\ref{lem:N03-structure}(i)).

Corollary~\ref{cor:N04-cores}(i) therefore applies at the level $1$, the
option $c$ and $v:=z_i$, and gives
\[
 d(\rho_{1,c},z_i)=\min\bigl(\alpha+d(s_1,z_i),\ \beta_1+d(t_1,z_i)\bigr).
\]
By Lemma~\ref{lem:N04-skeleton}(iii), $d(t_1,z_i)=i$.  By the triangle
inequality and the symmetry of $d$ (Definition~\ref{def:N01-graph}),
\[
 \delta_1(z_i)=d(T_1,z_i)\le d(T_1,s_1)+d(s_1,z_i)=\delta_1(s_1)+d(s_1,z_i).
\]
By the lines $z_i$ and $s_k$ of the table \eqref{eq:N04-values}, at $k=1$,
$\delta_1(z_i)=D-8+i$ and $\delta_1(s_1)=1$.  Hence
\[
 d(s_1,z_i)\ge\delta_1(z_i)-\delta_1(s_1)=(D-8+i)-1=D-9+i.
\]
So
$\alpha+d(s_1,z_i)\ge\alpha+D-9+i>\beta_1+i$, as $\alpha>\beta_1$ and
$D\ge9$.  Hence $d(\rho_{1,c},z_i)=\beta_1+i$.

As $z_i\notin\Int(Y^1)$ and $\delta_2(z_i)=8-i\le7\le p=\pi_1$,
Corollary~\ref{cor:N04-other-arm}(ii) with $a:=1$ gives
\[
 d(w_1,z_i)=D-\pi_1+\delta_2(z_i)=(D-p)+(8-i)=p+15-i.
\]

(2) \emph{If neither separates.}  Suppose that neither $\rho_{1,c}$ nor
$w_1$ separates $z_i$ and $y$.  As $y\notin\Int(U_{1,c})$,
Corollary~\ref{cor:N04-cores}(i) applies at the level $1$, the option $c$
and $v:=y$, and gives, with $\alpha=\beta_1+2$,
\[
 d(\rho_{1,c},y)=\min\bigl(\alpha+d(s_1,y),\ \beta_1+d(t_1,y)\bigr)
 =\min\bigl(\beta_1+2+d(s_1,y),\ \beta_1+d(t_1,y)\bigr).
\]
As $d(\rho_{1,c},y)=d(\rho_{1,c},z_i)=\beta_1+i$, one of the two terms
equals $\beta_1+i$, so $d(s_1,y)=i-2$ or $d(t_1,y)=i$.

(3) \emph{The case $d(s_1,y)=i-2$.}  By the triangle inequality and the
line $s_k$ of the table \eqref{eq:N04-values} at $k=1$,
\[
 \delta_1(y)\le\delta_1(s_1)+d(s_1,y)=1+(i-2)=i-1\le6.
\]
As $y\notin\Int(Y^1)$ and $\delta_1(y)\le6\le D-p=D-\pi_1$,
Corollary~\ref{cor:N04-other-arm}(i) with $a:=1$ gives
$d(w_1,y)=\pi_1+\delta_1(y)=p+\delta_1(y)\le p+6$.  By (1),
$d(w_1,z_i)=p+15-i\ge p+8$, as $i\le7$.  This contradicts
$d(w_1,y)=d(w_1,z_i)$.

(4) \emph{The case $d(t_1,y)=i$.}  By the triangle inequality and the line
$t_k$ of the table \eqref{eq:N04-values} at $k=1$,
\[
 \delta_1(y)\ge\delta_1(t_1)-d(t_1,y)=(D-8)-i.
\]
As $y\notin\Int(Y^1)$, the first formula of Lemma~\ref{lem:N04-rows}(iii)
with $j=1$ gives, as $\pi_1=p$ and $D-p=p+7$,
\[
 d(w_1,y)=\min\bigl(\pi_1+\delta_1(y),\ D-\pi_1+\delta_2(y)\bigr)
 =\min\bigl(p+\delta_1(y),\ p+7+\delta_2(y)\bigr).
\]
The first term is at least $p+D-8-i>p+15-i$, as $D\ge27>23$.  As
$d(w_1,y)=d(w_1,z_i)=p+15-i$, the minimum is its second term.  So
$p+7+\delta_2(y)=p+15-i$, that is,
$\delta_2(y)=(p+15-i)-(p+7)=8-i$.  As $1\le i\le7$, $1\le8-i\le7$.
As $y$ lies in no $\Int(Y^j)$, Lemma~\ref{lem:N04-small}(ii) gives
$y=z_{8-(8-i)}=z_i$.  This contradicts $y\ne z_i$.
\end{proof}

\begin{lemma}[the vertices $s_k$]\label{lem:N06-skel-s}
Let $\varphi$ be a restricted formula and $G:=G_\varphi$
(Definition~\ref{def:N03-graph}); $G$ is connected
(Lemma~\ref{lem:N04-potentials}).  Let $w_3$ be the row of the arm $3$,
$k$ a level, $c$ an option and $\rho_{k,c}$ the core of $U_{k,c}$
(Definition~\ref{def:N03-rows}).  Let $y\ne s_k$ be a vertex of $G$ in no
$\Int(Y^j)$, $1\le j\le4$.  Then $\rho_{k,c}$ or $w_3$ separates $s_k$ and
$y$.
\end{lemma}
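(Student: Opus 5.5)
The plan is to compute the two readings of $s_k$ and show that a vertex $y$ matching the reading of $w_3$ is forced into a very restricted position, where $\rho_{k,c}$ then separates it.

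\emph{The readings of $s_k$.} By the line $s_k$ of the table \eqref{eq:N04-values}, $\delta_1(s_k)=k\le m\le D-7=D-\pi_3$. Since $s_k\in S$, it lies in no $\Int(Y^j)$. So Corollary~\ref{cor:N04-other-arm}(i) with $a:=3$ gives $d(w_3,s_k)=7+k$. By Lemma~\ref{lem:N04-rows}(i) with $j=k$, $d(\rho_{k,c},s_k)=\alpha$.

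\emph{The row $w_3$ pins $\delta_1(y)$.} Suppose that neither $\rho_{k,c}$ nor $w_3$ separates $s_k$ and $y$. As $y\notin\Int(Y^3)$, the first formula of Lemma~\ref{lem:N04-rows}(iii) gives
\[
d(w_3,y)=\min\bigl(7+\delta_1(y),\ D-7+\delta_2(y)\bigr).
\]
The second term is at least $D-7=4m+8>7+m\ge7+k$. So $d(w_3,y)=7+k$ forces $\delta_1(y)=k$, with $1\le k\le m$. As $y$ lies in no arm and $y\ne s_k$, Lemma~\ref{lem:N04-small}(i) leaves one possibility: $y$ is a member of some copy $U_{j,c'}$ with $j<k$ and $h_{U_{j,c'}}(y)=k-j$.

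\emph{The core $\rho_{k,c}$ rules this out.} This is the step needing a real estimate. Since $j<k$, $y\notin\Int(U_{k,c})$, so Corollary~\ref{cor:N04-cores}(i) gives
\[
d(\rho_{k,c},y)=\min\bigl(\alpha+d(s_k,y),\ \beta_k+d(t_k,y)\bigr).
\]
This must equal $\alpha$. The first term exceeds $\alpha$ because $y\ne s_k$, so we would need $d(t_k,y)=\alpha-\beta_k=\sigma_k=2k$ (Lemma~\ref{lem:N03-params}(i)). We bound $d(t_k,y)$ from below with $T_2$ as base point. By the table, $\delta_2(y)=7+j+\hb_{U_{j,c'}}(y)$ and $\delta_2(t_k)=7+k$, so
\[
d(t_k,y)\ge\delta_2(y)-\delta_2(t_k)=\hb_{U_{j,c'}}(y)+j-k.
\]
By the triangle inequality in $U_{j,c'}$ and Lemma~\ref{lem:N03-inside}(i), $\hb_{U_{j,c'}}(y)\ge L_j-h_{U_{j,c'}}(y)=L_j-(k-j)$. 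Hence
\[
d(t_k,y)\ge L_j+2j-2k=4m+8-2k>2k,
\]
using $L_j=4m+8-2j$ and $k\le m$. This is a contradiction, so $\rho_{k,c}$ or $w_3$ separates $s_k$ and $y$.

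The main obstacle is this last lower bound on $d(t_k,y)$. The plan is to combine the potential $\delta_2$ with the fact that a vertex close to $s_j$ inside its copy must be far from $t_j$.
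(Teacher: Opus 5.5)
Your proof is correct. It follows the paper through the localisation of $y$: you compute the readings $d(w_3,s_k)=7+k$ and $d(\rho_{k,c},s_k)=\alpha$, force $\delta_1(y)=k$, and conclude by Lemma~\ref{lem:N04-small}(i) that $y$ is a member of some $U_{j,c'}$ with $j<k$ and $h_{U_{j,c'}}(y)=k-j$. The two arguments part at the final step.

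The paper takes the first exit from the copy of $y$ (Corollary~\ref{cor:N04-levels}(i)). Equating $d(\rho_{k,c},y)$ with $\alpha$ gives $\min(3k-j,\hb_{U_{j,c'}}(y))=k+j$, hence $\kappa_{U_{j,c'}}(y)=2j=\sigma_j$, so $y$ would be a diagonal member of $U_{j,c'}$. Lemma~\ref{lem:N03-inside}(v) says diagonal members have $h\ge\mu_j$, which contradicts $h=k-j<\mu_j$.

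You take the first exit from the copy of $\rho_{k,c}$ instead (Corollary~\ref{cor:N04-cores}(i)). This reduces the coincidence to $d(t_k,y)=\sigma_k=2k$. You then rule that out with the potential $\delta_2$ and the inequality $h+\hb\ge L_j$, which give $d(t_k,y)\ge 4m+8-2k=L_k>2k$.

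What your route buys is independence from the template's design. It never uses where the diagonal members sit, only the triangle inequality and the table~\eqref{eq:N04-values}. It can be shortened further: $D\le\delta_1(y)+\delta_2(y)$ gives $\delta_2(y)\ge D-k$, hence $d(t_k,y)\ge (D-k)-(7+k)=L_k$ directly. Once $\delta_1(y)=k$ is known, you need the localisation only to get $y\notin\Int(U_{k,c})$, and that already follows from $\delta_1\ge k+1$ on members of $U_{k,c}$.

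What the paper's route buys is uniformity. It keeps to the computation of Corollary~\ref{cor:N04-levels}(i) that drives all of Section~\ref{sec:N06}, and it shows the would-be collision as a diagonal member, in line with the paper's theme that only diagonal members are ever ambiguous. The price is the extra dependence on Lemma~\ref{lem:N03-inside}(v), and through it on the diagonal positions of Lemma~\ref{lem:N02-skew}(iii).
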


\begin{proof}
Let $D$, $\alpha$, $\beta_k$, $\sigma_k$ and $\mu_j$ ($1\le j\le m$) be the
numbers of Definition~\ref{def:N03-params}, and $\delta_1=d(T_1,\cdot)$,
$\delta_2=d(T_2,\cdot)$ (Lemma~\ref{lem:N04-potentials}).  By
Definition~\ref{def:N03-params} and Lemma~\ref{lem:N03-params}(i),
$\alpha=\beta_k+\sigma_k=\beta_k+2k$ and $D-7=4m+8$.  By
Definition~\ref{def:N03-graph}, $\pi_3=7$.

(1) \emph{The distances from $s_k$.}  By Lemma~\ref{lem:N04-rows}(i),
$d(\rho_{j,c},s_k)=\alpha+|j-k|$; with $j=k$, $d(\rho_{k,c},s_k)=\alpha$.
By the line $s_k$ of the table \eqref{eq:N04-values},
$\delta_1(s_k)=k\le m\le D-7=D-\pi_3$.  As $s_k\notin\Int(Y^3)$
(Lemma~\ref{lem:N03-structure}(i)), Corollary~\ref{cor:N04-other-arm}(i)
with $a:=3$ gives
\[
 d(w_3,s_k)=\pi_3+\delta_1(s_k)=7+k.
\]

(2) \emph{If neither separates, then $\delta_1(y)=k$.}  Suppose that
neither $\rho_{k,c}$ nor $w_3$ separates $s_k$ and $y$.  As
$y\notin\Int(Y^3)$, the first formula of Lemma~\ref{lem:N04-rows}(iii)
with $j=3$ gives, as $\pi_3=7$,
\[
 d(w_3,y)=\min\bigl(\pi_3+\delta_1(y),\ D-\pi_3+\delta_2(y)\bigr)
 =\min\bigl(7+\delta_1(y),\ D-7+\delta_2(y)\bigr).
\]
As $\delta_2(y)\ge0$ and $k\le m$, the second term is at least
$D-7=4m+8>7+k$.  As $d(w_3,y)=d(w_3,s_k)=7+k$, the minimum is its first
term.  So $7+\delta_1(y)=7+k$, that is, $\delta_1(y)=k$.

(3) \emph{Where $y$ lies.}  We apply Lemma~\ref{lem:N04-small} with
$u:=y$.  Its hypothesis holds, as $y$ lies in no $\Int(Y^j)$,
$1\le j\le4$.  By (2), $\delta_1(y)=k$, and $1\le k\le m$, so
Lemma~\ref{lem:N04-small}(i) applies.  It gives two possibilities.

The first is $y=s_{\delta_1(y)}=s_k$.  It is excluded, as $y\ne s_k$.

The second is that $y$ is a member of a copy $U_{j,c'}$ with
$j<\delta_1(y)=k$ and $h_{U_{j,c'}}(y)=\delta_1(y)-j=k-j$.

(4) \emph{$y$ is diagonal.}  As $j<k$, $(k,c)\ne(j,c')$, so
Corollary~\ref{cor:N04-levels}(i), at the level $j$, the option $c'$ and
the member $y$ of $U_{j,c'}$, applies to the row $\rho_{k,c}$:
\[
 d(\rho_{k,c},y)=|k-j|+\beta_k
 +\min\bigl(\sigma_k+h_{U_{j,c'}}(y),\ \hb_{U_{j,c'}}(y)\bigr).
\]
Here $|k-j|=k-j$, as $j<k$, and
$\sigma_k+h_{U_{j,c'}}(y)=2k+(k-j)=3k-j$.  So
\[
 d(\rho_{k,c},y)=(k-j)+\beta_k+\min\bigl(3k-j,\ \hb_{U_{j,c'}}(y)\bigr).
\]
As $\rho_{k,c}$ does not separate $s_k$ and $y$,
$d(\rho_{k,c},y)=d(\rho_{k,c},s_k)=\alpha=\beta_k+2k$.  So
\[
 \min\bigl(3k-j,\ \hb_{U_{j,c'}}(y)\bigr)=(\beta_k+2k)-(k-j)-\beta_k=k+j.
\]
As $j<k$, $(3k-j)-(k+j)=2(k-j)>0$, so the minimum is its second term:
$\hb_{U_{j,c'}}(y)=k+j$.  So, with the skew of
Definition~\ref{def:N03-names},
\[
 \kappa_{U_{j,c'}}(y)=\hb_{U_{j,c'}}(y)-h_{U_{j,c'}}(y)=(k+j)-(k-j)=2j
 =\sigma_j,
\]
that is, $y$ is a diagonal member of $U_{j,c'}$.

(5) \emph{The contradiction.}  By (4), $y$ is a diagonal member of
$U_{j,c'}$.  So Lemma~\ref{lem:N03-inside}(v), at the level $j$ and the
option $c'$, applies to the member $y$, and gives
$h_{U_{j,c'}}(y)\ge\mu_j$, where $\mu_j=2m+4-2j$
(Definition~\ref{def:N03-params}).  But by (3),
$h_{U_{j,c'}}(y)=k-j\le m-j$, and $m-j<2(m-j)+4=\mu_j$ as $m\ge j$.  So
$h_{U_{j,c'}}(y)<\mu_j$, a contradiction.
\end{proof}

\begin{lemma}[the skeleton]\label{lem:N06-skeleton}
Let $\varphi$ be a restricted formula and $G:=G_\varphi$
(Definition~\ref{def:N03-graph}); $G$ is connected
(Lemma~\ref{lem:N04-potentials}).  Let $f$ be a choice and $W:=W_f$ its
pattern (Definition~\ref{def:N03-pattern}), and $S$ the skeleton
(Definition~\ref{def:N03-rows}).  Let $x\in S$.  Then $W$ separates $x$
from every vertex $y\ne x$ of $G$ (Definition~\ref{def:N01-resolve}).
\end{lemma}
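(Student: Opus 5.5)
The plan is to reduce to the case where $y$ lies in the skeleton or in a free copy, and then run through the elements of $S=\{T_1,T_2\}\cup\{s_k,t_k\}\cup\{z_1,\dots,z_7\}\cup\{q'\}$ one kind at a time, using the lemmas just proved. First, if $y$ is a member of a rowed copy, Lemma~\ref{lem:N06-rowed} (applied with the roles of $x$ and $y$ exchanged) already separates $x$ and $y$. If $y\in\Int(Y^j)$ for some arm, Lemma~\ref{lem:N06-arms} does the same. If $y=q$, then $q\in W$ separates $q$ from every other vertex, by Definition~\ref{def:N01-resolve}. So we may assume that $y$ lies in no arm interior and in no rowed copy: $y\in S$, or $y$ is a member of a free copy $U_{k,f_k}$. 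All the landmarks named below lie in $W_f$, by Definition~\ref{def:N03-pattern}. At each level $W$ contains two cores $\rho_{k,c}$ with $c\ne f_k$, and it contains $q,w_1,w_2,w_3,w_4$.

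Next we treat $x$ case by case.
\begin{itemize}
\item For $x=T_1$, Lemma~\ref{lem:N06-skel-easy}(i) applies, since $y\notin\Int(Y^3)$.
\item For $x=T_2$, Lemma~\ref{lem:N06-skel-easy}(ii) applies.
\item For $x=q'$, Lemma~\ref{lem:N06-skel-easy}(iv) applies.
\item For $x=t_k$, pick an option $c\ne f_k$. Then $\rho_{k,c}\in W$, and $y\notin\Int(U_{k,c})$ because $U_{k,c}$ is a rowed copy, so Lemma~\ref{lem:N06-skel-easy}(iii) applies.
\item For $x=z_i$, pick $c\ne f_1$. Then $\rho_{1,c}\in W$, $y\notin\Int(U_{1,c})$ and $y$ is in no arm interior, so Lemma~\ref{lem:N06-skel-z} gives that $\rho_{1,c}$ or $w_1$ separates $z_i$ and $y$.
\item For $x=s_k$, pick $c\ne f_k$. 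Lemma~\ref{lem:N06-skel-s} gives that $\rho_{k,c}$ or $w_3$ separates them; it only needs $y$ outside the arm interiors.
\end{itemize}

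These cases cover all of $S$. The main point to check is that the hypotheses of the earlier lemmas hold after the reduction. Those lemmas require $y$ outside the arm interiors and outside a particular copy $U_{k,c}$ or $U_{1,c}$ whose core is a landmark. This holds because we pick $c\ne f_k$, which makes that copy rowed, and $y$ is not a member of any rowed copy. Nothing else should be delicate. The real work, including the diagonal-vertex argument for $s_k$, was already done in Lemmas~\ref{lem:N06-skel-easy}--\ref{lem:N06-skel-s}.
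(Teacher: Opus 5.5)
Your proposal is correct and follows the paper's proof essentially step for step. You first dispose of $y$ in a rowed copy or an arm interior via Lemmas~\ref{lem:N06-rowed} and~\ref{lem:N06-arms}, then handle each kind of skeleton vertex with Lemmas~\ref{lem:N06-skel-easy}, \ref{lem:N06-skel-z} and~\ref{lem:N06-skel-s}, taking the core of a rowed copy as the paper does with its fixed $c_k\ne f_k$. Your extra case $y=q$ is harmless but not needed, since none of the lemmas you invoke excludes it.
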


\begin{proof}
For each level $k$ fix an option $c_k\ne f_k$.  By
Definition~\ref{def:N03-pattern}, $\rho_{k,c_k}$ ($1\le k\le m$), $q$,
$w_1$ and $w_3$ lie in $W$.  Let $y\ne x$ be a vertex of $G$.  If $y$ is a
member of a rowed copy, Lemma~\ref{lem:N06-rowed}, applied to $y$, shows
that $W$ separates $y$ from $x$.  If $y\in\Int(Y^j)$ for some $j$,
Lemma~\ref{lem:N06-arms}, applied to $y$, does the same.  So from now on
$y$ lies in no $\Int(Y^j)$, and in no $\Int(U_{k,c_k})$, as these copies
are rowed.

By Definition~\ref{def:N03-rows}, $x$ is $T_1$, $T_2$, $t_k$ or $s_k$ for
a level $k$, $q'$, or $z_i$ for $1\le i\le7$.
\begin{itemize}
\item If $x=T_1$: as $y\notin\Int(Y^3)$, Lemma~\ref{lem:N06-skel-easy}(i)
  shows that $w_3$ separates $x$ and $y$.
\item If $x=T_2$: as $y\notin\Int(Y^1)\cup\Int(Y^3)$,
  Lemma~\ref{lem:N06-skel-easy}(ii) shows that $w_1$ or $w_3$ separates
  them.
\item If $x=t_k$: as $y\notin\Int(U_{k,c_k})$,
  Lemma~\ref{lem:N06-skel-easy}(iii) with $c:=c_k$ shows that
  $\rho_{k,c_k}$ separates them.
\item If $x=q'$: Lemma~\ref{lem:N06-skel-easy}(iv) shows that
  $q$, $w_1$ or $w_3$ separates them.
\item If $x=z_i$: $y$ lies in no arm and $y\notin\Int(U_{1,c_1})$, so
  Lemma~\ref{lem:N06-skel-z} with $c:=c_1$ shows
  that $\rho_{1,c_1}$ or $w_1$ separates them.
\item If $x=s_k$: as $y$ lies in no arm, Lemma~\ref{lem:N06-skel-s} with
  $c:=c_k$ shows that $\rho_{k,c_k}$ or $w_3$ separates them.
\end{itemize}
In each case $y\ne x$, as assumed.
All these landmarks lie in $W$.
\end{proof}

\begin{lemma}[the equations of two levels]\label{lem:N06-cross-eqs}
Let $\varphi$ be a restricted formula and $G:=G_\varphi$
(Definition~\ref{def:N03-graph}); $G$ is connected
(Lemma~\ref{lem:N04-potentials}).  Let $L_m$, $\beta_j$ and $\sigma_j$
($1\le j\le m$) be the numbers of Definition~\ref{def:N03-params}, $w_3$
the row of the arm $3$ and $\rho_{j,c}$ the cores
(Definition~\ref{def:N03-rows}), and $\delta_1=d(T_1,\cdot)$
(Lemma~\ref{lem:N04-potentials}).  Let $k<k'$ be levels, $\Delta:=k'-k$,
$a$ and $b$ options (Definition~\ref{def:N03-options}), $x$ a member of
$U_{k,a}$ and $y$ a member of $U_{k',b}$; profiles and skews of $x$ are
taken in $U_{k,a}$, and those of $y$ in $U_{k',b}$
(Definition~\ref{def:N03-names}).  Let $c\ne a$ and $c'\ne b$ be options.
\begin{enumerate}
\item[\rm(i)] If $w_3$ does not separate $x$ and $y$, then
  $\delta_1(x)=\delta_1(y)$, that is, $h_{U_{k',b}}(y)=h_{U_{k,a}}(x)-\Delta$.
\item[\rm(ii)] Suppose that neither $w_3$ nor $q$ separates $x$ and
  $y$.  If $k'<m$, then
  \[
   \min\bigl(L_m,\ \kappa_{U_{k',b}}(y)-1\bigr)
   =\min\bigl(L_m,\ \kappa_{U_{k,a}}(x)-1\bigr)+2\Delta ;
  \]
  if $k'=m$, then
  \[
   \kappa_{U_{m,b}}(y)=\min\bigl(L_m,\ \kappa_{U_{k,a}}(x)-1\bigr)+2\Delta .
  \]
\item[\rm(iii)] If neither $w_3$ nor $\rho_{k,c}$ separates $x$ and $y$,
  then $\min(\sigma_k,\kappa_{U_{k,a}}(x))=\min(\sigma_k,\kappa_{U_{k',b}}(y))$.
\item[\rm(iv)] If neither $w_3$ nor $\rho_{k',c'}$ separates $x$ and $y$,
  then
  $\min(\sigma_{k'},\kappa_{U_{k',b}}(y))=\min(\sigma_{k'},\kappa_{U_{k,a}}(x))+2\Delta$.
\end{enumerate}
\end{lemma}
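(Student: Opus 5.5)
The plan is direct computation. Each of the four items compares the readings of two landmarks at $x$ and at $y$, using the formulas of Corollary~\ref{cor:N04-levels}, and eliminates one unknown through (i). Write $h,\hb,\kappa$ for the profile in $U_{k,a}$ at $x$ and $h',\hb',\kappa'$ for the profile in $U_{k',b}$ at $y$.

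For (i), Corollary~\ref{cor:N04-levels}(iii) gives $d(w_3,x)=7+\delta_1(x)=7+k+h$ and $d(w_3,y)=7+k'+h'$. Equality of these two readings is $\delta_1(x)=\delta_1(y)$, that is, $h'=h-\Delta$.

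For (ii), apply Corollary~\ref{cor:N04-levels}(ii) at $x$; since $k<k'\le m$, we have $k<m$:
\[
d(q,x)=m-k+1+\min(L_m+h,\ \hb-1).
\]
Subtract $h$ inside the minimum, so that $d(q,x)=m-k+1+h+\min(L_m,\kappa-1)$.
\begin{itemize}
\item If $k'<m$, the same computation gives $d(q,y)=m-k'+1+h'+\min(L_m,\kappa'-1)$. Substitute $h'=h-\Delta$ from (i). Equating the two readings then leaves $\min(L_m,\kappa'-1)=\min(L_m,\kappa-1)+2\Delta$, because $-k'-h'=-k-h$ gives back one $\Delta$ on each side, for $2\Delta$ in all.
\item If $k'=m$, then $d(q,y)=1+\hb'=1+h'+\kappa'$. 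Equating with $d(q,x)$ and using $h'=h-\Delta$ and $m-k=\Delta$ gives $\kappa'=\Delta+\Delta+\min(L_m,\kappa-1)$.
\end{itemize}
The sign bookkeeping in this item is the only place that needs care.

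For (iii), Corollary~\ref{cor:N04-levels}(i) applies to $\rho_{k,c}$ at $x$, since $c\ne a$, and at $y$, since $k\ne k'$. Factoring $h$ out of the minimum gives
\[
d(\rho_{k,c},x)=\beta_k+h+\min(\sigma_k,\kappa),\qquad
d(\rho_{k,c},y)=\Delta+\beta_k+h'+\min(\sigma_k,\kappa').
\]
With $h'=h-\Delta$, equating the two yields the claim.

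Item (iv) is the same computation with $\rho_{k',c'}$, valid since $c'\ne b$ and $k\ne k'$:
\[
d(\rho_{k',c'},x)=\Delta+\beta_{k'}+h+\min(\sigma_{k'},\kappa),\qquad
d(\rho_{k',c'},y)=\beta_{k'}+h'+\min(\sigma_{k'},\kappa').
\]
Substituting $h'=h-\Delta$ gives $\min(\sigma_{k'},\kappa')=\min(\sigma_{k'},\kappa)+2\Delta$.

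The main obstacle is only the careful handling of the minimum in (ii) in the case $k'=m$, where the $q$-formula has no minimum. Everything else follows from the earlier formulas.
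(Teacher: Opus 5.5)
Your proposal is correct and follows the paper's proof essentially step for step: the same four readings from Corollary~\ref{cor:N04-levels}, the same factoring of $h$ out of each minimum, the same substitution $h'=h-\Delta$ from (i), and the same applicability checks for the two cores. There is one small slip in your justification of the case $k'<m$ of (ii): the combination that produces $2\Delta$ is $-k'+h'=-k+h-2\Delta$, not $-k'-h'=-k-h$, which is just $\delta_1(x)=\delta_1(y)$ and would cancel the $\Delta$'s; your displayed formulas and your conclusion are nonetheless right.
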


\begin{proof}
Put $U:=U_{k,a}$ and $U':=U_{k',b}$.  We first record an identity.  By
Definition~\ref{def:N03-names}, $\kappa_U(x)=\hb_U(x)-h_U(x)$, so
$\hb_U(x)=h_U(x)+\kappa_U(x)$.  Adding $h_U(x)$ to both terms of a minimum
adds it to the minimum, so for every integer $s$,
\begin{equation}\label{eq:N06-skewmin}
 \min\bigl(s+h_U(x),\ \hb_U(x)\bigr)=h_U(x)+\min\bigl(s,\ \kappa_U(x)\bigr),
\end{equation}
and likewise at $y$ in $U'$:
$\min(s+h_{U'}(y),\ \hb_{U'}(y))=h_{U'}(y)+\min(s,\ \kappa_{U'}(y))$.

(1) We prove (i).  Suppose that $w_3$ does not separate $x$ and $y$.
Corollary~\ref{cor:N04-levels}(iii) at the level $k$, the option $a$ and
the member $x$ gives $d(w_3,x)=7+\delta_1(x)=7+k+h_U(x)$; at the level
$k'$, the option $b$ and the member $y$ it gives
$d(w_3,y)=7+\delta_1(y)=7+k'+h_{U'}(y)$.  As $w_3$ does not separate $x$
and $y$, $d(w_3,x)=d(w_3,y)$.  So $7+\delta_1(x)=7+\delta_1(y)$, that is,
$\delta_1(x)=\delta_1(y)$; and $7+k+h_U(x)=7+k'+h_{U'}(y)$, that is,
$h_{U'}(y)=h_U(x)-(k'-k)=h_U(x)-\Delta$.

(2) We prove (ii).  Suppose that neither $w_3$ nor $q$ separates $x$ and
$y$.  By (i), $h_{U'}(y)=h_U(x)-\Delta$.  As $k<k'\le m$, $k<m$, so
Corollary~\ref{cor:N04-levels}(ii) gives
$d(q,x)=m-k+1+\min(L_m+h_U(x),\ \hb_U(x)-1)$.  As
$\hb_U(x)-1=h_U(x)+(\kappa_U(x)-1)$ (Definition~\ref{def:N03-names}),
\begin{equation}\label{eq:N06-dqx}
 d(q,x)=m-k+1+h_U(x)+\min\bigl(L_m,\ \kappa_U(x)-1\bigr).
\end{equation}
As $q$ does not separate $x$ and $y$, $d(q,y)=d(q,x)$.  We compute
$d(q,y)$ in the two cases of (ii).

\emph{Case $k'<m$.}  Corollary~\ref{cor:N04-levels}(ii) at the level
$k'$ gives $d(q,y)=m-k'+1+\min(L_m+h_{U'}(y),\ \hb_{U'}(y)-1)$, and
$\hb_{U'}(y)-1=h_{U'}(y)+(\kappa_{U'}(y)-1)$, so
\[
 d(q,y)=m-k'+1+h_{U'}(y)+\min\bigl(L_m,\ \kappa_{U'}(y)-1\bigr).
\]
With $k'=k+\Delta$, $h_{U'}(y)=h_U(x)-\Delta$ and \eqref{eq:N06-dqx},
$d(q,y)=d(q,x)$ reads
\[
 m-k+1+h_U(x)-2\Delta+\min\bigl(L_m,\ \kappa_{U'}(y)-1\bigr)
 =m-k+1+h_U(x)+\min\bigl(L_m,\ \kappa_U(x)-1\bigr),
\]
that is, $\min(L_m,\kappa_{U'}(y)-1)=\min(L_m,\kappa_U(x)-1)+2\Delta$.

\emph{Case $k'=m$.}  Corollary~\ref{cor:N04-levels}(ii) at the level $m$
gives $d(q,y)=1+\hb_{U'}(y)$, and $\hb_{U'}(y)=h_{U'}(y)+\kappa_{U'}(y)$,
so
\[
 d(q,y)=1+h_{U'}(y)+\kappa_{U'}(y)=1+h_U(x)-\Delta+\kappa_{U'}(y).
\]
Here $m-k=\Delta$, so \eqref{eq:N06-dqx} reads
$d(q,x)=\Delta+1+h_U(x)+\min(L_m,\kappa_U(x)-1)$, and $d(q,y)=d(q,x)$
reads
\[
 1+h_U(x)-\Delta+\kappa_{U'}(y)
 =\Delta+1+h_U(x)+\min\bigl(L_m,\ \kappa_U(x)-1\bigr),
\]
that is, $\kappa_{U'}(y)=\min(L_m,\kappa_U(x)-1)+2\Delta$.  This proves (ii).

(3) We prove (iii).  Suppose that neither $w_3$ nor $\rho_{k,c}$
separates $x$ and $y$.  By (i), $h_{U'}(y)=h_U(x)-\Delta$.  We apply
Corollary~\ref{cor:N04-levels}(i) to the row $\rho_{k,c}$; it requires the
member's copy to differ from $U_{k,c}$.  At the member $x$ of $U_{k,a}$,
$(k,c)\ne(k,a)$ as $c\ne a$, so, with \eqref{eq:N06-skewmin} at
$s:=\sigma_k$,
\[
 d(\rho_{k,c},x)=|k-k|+\beta_k+\min\bigl(\sigma_k+h_U(x),\ \hb_U(x)\bigr)
 =\beta_k+h_U(x)+\min\bigl(\sigma_k,\ \kappa_U(x)\bigr).
\]
At the member $y$ of $U_{k',b}$, $(k,c)\ne(k',b)$ as $k\ne k'$, so, with
\eqref{eq:N06-skewmin} and then $h_{U'}(y)=h_U(x)-\Delta$,
\[
 d(\rho_{k,c},y)=|k-k'|+\beta_k+h_{U'}(y)+\min\bigl(\sigma_k,\ \kappa_{U'}(y)\bigr)
 =\Delta+\beta_k+h_U(x)-\Delta+\min\bigl(\sigma_k,\ \kappa_{U'}(y)\bigr).
\]
As $\rho_{k,c}$ does not separate $x$ and $y$,
$d(\rho_{k,c},x)=d(\rho_{k,c},y)$.  Subtracting $\beta_k+h_U(x)$ from both
sides gives (iii).

(4) We prove (iv).  Suppose that neither $w_3$ nor $\rho_{k',c'}$
separates $x$ and $y$.  By (i), $h_{U'}(y)=h_U(x)-\Delta$.  We apply
Corollary~\ref{cor:N04-levels}(i) to the row $\rho_{k',c'}$.  At the member
$x$ of $U_{k,a}$, $(k',c')\ne(k,a)$ as $k'\ne k$, so, with
\eqref{eq:N06-skewmin} at $s:=\sigma_{k'}$,
\[
 d(\rho_{k',c'},x)=|k'-k|+\beta_{k'}+h_U(x)+\min\bigl(\sigma_{k'},\ \kappa_U(x)\bigr)
 =\Delta+\beta_{k'}+h_U(x)+\min\bigl(\sigma_{k'},\ \kappa_U(x)\bigr).
\]
At the member $y$ of $U_{k',b}$, $(k',c')\ne(k',b)$ as $c'\ne b$, so
\[
 d(\rho_{k',c'},y)=|k'-k'|+\beta_{k'}+h_{U'}(y)+\min\bigl(\sigma_{k'},\ \kappa_{U'}(y)\bigr)
 =\beta_{k'}+h_U(x)-\Delta+\min\bigl(\sigma_{k'},\ \kappa_{U'}(y)\bigr).
\]
As $\rho_{k',c'}$ does not separate $x$ and $y$, these are equal.
Subtracting $\beta_{k'}+h_U(x)$ from both sides,
\[
 \Delta+\min\bigl(\sigma_{k'},\ \kappa_U(x)\bigr)
 =-\Delta+\min\bigl(\sigma_{k'},\ \kappa_{U'}(y)\bigr),
\]
and adding $\Delta$ to both sides gives (iv).
\end{proof}

\begin{lemma}[free members of two levels]\label{lem:N06-cross}
Let $\varphi$ be a restricted formula and $G:=G_\varphi$
(Definition~\ref{def:N03-graph}); $G$ is connected
(Lemma~\ref{lem:N04-potentials}).  Let $f$ be a choice and $W:=W_f$ its
pattern (Definition~\ref{def:N03-pattern}), and $\delta_1=d(T_1,\cdot)$
(Lemma~\ref{lem:N04-potentials}).  Let $1\le k<k'\le m$, $x$ a member of
the free copy $U_{k,f_k}$ and $y$ a member of the free copy $U_{k',f_{k'}}$
(Definition~\ref{def:N03-pattern}).  If $W$ does not separate $x$ and $y$,
then $k'<m$, $x$ is a diagonal member of $U_{k,f_k}$, $y$ is a diagonal member of
$U_{k',f_{k'}}$ (Definition~\ref{def:N03-names}), and
$\delta_1(x)=\delta_1(y)$.
\end{lemma}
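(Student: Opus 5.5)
The plan is to apply Lemma~\ref{lem:N06-cross-eqs} with $a:=f_k$ and $b:=f_{k'}$, and then solve the resulting equations for the two skews. Write $\Delta:=k'-k\ge1$, $X:=\kappa_{U_{k,f_k}}(x)$ and $Y:=\kappa_{U_{k',f_{k'}}}(y)$. Fix options $c\ne f_k$ and $c'\ne f_{k'}$. By Definition~\ref{def:N03-pattern}, the landmarks $w_3$, $q$, $\rho_{k,c}$ and $\rho_{k',c'}$ all lie in $W$. Since $W$ does not separate $x$ and $y$, none of these four landmarks does, so every item (i)--(iv) of Lemma~\ref{lem:N06-cross-eqs} applies. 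By Definition~\ref{def:N03-params}, $\sigma_k=2k$, $\sigma_{k'}=2k'$ and $L_m=2m+8$. Item (i) already gives $\delta_1(x)=\delta_1(y)$.

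First I determine $X$ from (iii) and (iv). Item (iv) reads $\min(2k',Y)=\min(2k',X)+2\Delta$. If $X\ge2k'$, the right side is $2k'+2\Delta>2k'\ge\min(2k',Y)$, which is impossible. Hence $X<2k'$ and $\min(2k',Y)=X+2\Delta$. This leaves two cases.
\begin{itemize}
\item If $Y<2k'$, then $Y=X+2\Delta$, so $X=Y-2\Delta<2k'-2\Delta=2k$. Item (iii) then gives $X=\min(2k,X)=\min(2k,X+2\Delta)$. Since $X+2\Delta>X$, this forces $X=2k$, contradicting $X<2k$.
\item Otherwise $Y\ge2k'$, and $2k'=X+2\Delta$, that is, $X=2k=\sigma_k$.
\end{itemize}
So $x$ is diagonal in $U_{k,f_k}$ and $Y\ge2k'$. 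Item (iii) is then automatically consistent, since both sides equal $2k$.

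Second I use item (ii), the equation coming from $q$, to exclude $k'=m$ and to pin down $Y$. Its right side is $\min(L_m,X-1)+2\Delta=\min(2m+8,2k-1)+2\Delta=2k'-1$.
\begin{itemize}
\item If $k'=m$, (ii) gives $Y=2k'-1$, contradicting $Y\ge2k'$. Hence $k'<m$.
\item For $k'<m$, (ii) reads $\min(L_m,Y-1)=2k'-1$. As $L_m=2m+8>2k'-1$, this forces $Y-1=2k'-1$, so $Y=2k'=\sigma_{k'}$ and $y$ is diagonal in $U_{k',f_{k'}}$.
\end{itemize}

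The only delicate part is the case analysis on the minima in the first step. The order matters there: (iv) must be used first to bound $X$ from above, and only then (iii) to exclude the branch $Y<2k'$. The step with $q$ is then a direct computation.
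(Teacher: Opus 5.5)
Your proof is correct and follows the paper's route: you invoke Lemma~\ref{lem:N06-cross-eqs} for the four landmarks $w_3$, $q$, $\rho_{k,c}$, $\rho_{k',c'}$ of $W_f$ and solve the resulting equations on minima. The only difference is the order of the case analysis: you obtain $X=\sigma_k$ and $Y\ge\sigma_{k'}$ from items (iii) and (iv) alone, then let the $q$-equation (ii) exclude $k'=m$ and fix $Y$, whereas the paper uses the $q$-equation to rule out $X=Y$ and returns to (iv) to exclude $k'=m$.
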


\begin{proof}
Put $U:=U_{k,f_k}$, $U':=U_{k',f_{k'}}$ and $\Delta:=k'-k$; $\Delta\ge1$ as
$k<k'$.  Let $L_m$, $\sigma_k$ and $\sigma_{k'}$ be the numbers of
Definition~\ref{def:N03-params}; by that definition, $\sigma_k=2k$,
$\sigma_{k'}=2k'$ and $L_m=4m+8-2m=2m+8$, so $\sigma_{k'}=\sigma_k+2\Delta$,
and $\sigma_{k'}\le2m<L_m$ as $k'\le m$.  Choose options $c\ne f_k$ and
$c'\ne f_{k'}$.  By Definition~\ref{def:N03-pattern}, $w_3$, $q$,
$\rho_{k,c}$ and $\rho_{k',c'}$ lie in $W$.

Suppose that $W$ does not separate $x$ and $y$.  Then none of these four
landmarks separates them.  So Lemma~\ref{lem:N06-cross-eqs} applies with
$a:=f_k$, $b:=f_{k'}$ and the options $c,c'$: its hypotheses hold, as
$k<k'$, $x$ is a member of $U_{k,f_k}$, $y$ is a member of $U_{k',f_{k'}}$,
$c\ne f_k$ and $c'\ne f_{k'}$.  By Lemma~\ref{lem:N06-cross-eqs}(i),
$\delta_1(x)=\delta_1(y)$.  By Lemma~\ref{lem:N06-cross-eqs}(ii), (iii)
and (iv),
\begin{align}
 \min\bigl(L_m,\ \kappa_{U'}(y)-1\bigr)&=\min\bigl(L_m,\ \kappa_U(x)-1\bigr)+2\Delta
   &&\text{if }k'<m,\label{eq:N06-Eq}\\
 \kappa_{U'}(y)&=\min\bigl(L_m,\ \kappa_U(x)-1\bigr)+2\Delta
   &&\text{if }k'=m,\label{eq:N06-Eqm}\\
 \min\bigl(\sigma_k,\ \kappa_U(x)\bigr)
 &=\min\bigl(\sigma_k,\ \kappa_{U'}(y)\bigr),\label{eq:N06-Erho}\\
 \min\bigl(\sigma_{k'},\ \kappa_{U'}(y)\bigr)
 &=\min\bigl(\sigma_{k'},\ \kappa_U(x)\bigr)+2\Delta.\label{eq:N06-Erho2}
\end{align}

(1) \emph{$\kappa_U(x)\le\sigma_k$.}  The left side of \eqref{eq:N06-Erho2}
is a minimum one of whose terms is $\sigma_{k'}$, so it is at most
$\sigma_{k'}$.  Hence $\min(\sigma_{k'},\kappa_U(x))+2\Delta\le\sigma_{k'}$,
that is, $\min(\sigma_{k'},\kappa_U(x))\le\sigma_{k'}-2\Delta=\sigma_k$.  As
$\sigma_k<\sigma_{k'}$, this minimum is not $\sigma_{k'}$; so it is
$\kappa_U(x)$, and $\kappa_U(x)\le\sigma_k$.

(2) \emph{$\kappa_U(x)=\sigma_k$.}  By (1),
$\min(\sigma_k,\kappa_U(x))=\kappa_U(x)$, so \eqref{eq:N06-Erho} gives
$\kappa_U(x)=\min(\sigma_k,\kappa_{U'}(y))$.  So $\kappa_U(x)=\sigma_k$ or
$\kappa_U(x)=\kappa_{U'}(y)$.  Suppose that $\kappa_U(x)=\kappa_{U'}(y)=:\kappa$;
then $\kappa\le\sigma_k<L_m$, so $\min(L_m,\kappa-1)=\kappa-1$.  If
$k'<m$, \eqref{eq:N06-Eq} gives $\kappa-1=\kappa-1+2\Delta$; if $k'=m$,
\eqref{eq:N06-Eqm} gives $\kappa=\kappa-1+2\Delta$.  Both contradict
$\Delta\ge1$.  So $\kappa_U(x)=\sigma_k$, and $x$ is a diagonal member of
$U$ (Definition~\ref{def:N03-names}).  As $\sigma_k<L_m$,
$\min(L_m,\kappa_U(x)-1)=\sigma_k-1$.

(3) \emph{$k'<m$.}  Suppose that $k'=m$.  Then \eqref{eq:N06-Eqm} and (2)
give $\kappa_{U'}(y)=\sigma_k-1+2\Delta=\sigma_m-1$.  So
the left side of \eqref{eq:N06-Erho2} is $\min(\sigma_m,\sigma_m-1)=\sigma_m-1$,
while its right side is $\min(\sigma_m,\sigma_k)+2\Delta=\sigma_k+2\Delta=\sigma_m$,
as $\sigma_k<\sigma_m$.  This contradicts \eqref{eq:N06-Erho2}.  So $k'<m$.

(4) \emph{$\kappa_{U'}(y)=\sigma_{k'}$.}  By (3), $k'<m$, so
\eqref{eq:N06-Eq} and (2) give
\[
 \min\bigl(L_m,\kappa_{U'}(y)-1\bigr)=\sigma_k-1+2\Delta=\sigma_{k'}-1 .
\]
As
$\sigma_{k'}-1<L_m$, this minimum is not $L_m$; so it is
$\kappa_{U'}(y)-1$, and $\kappa_{U'}(y)=\sigma_{k'}$.  So $y$ is a diagonal
member of $U'$ (Definition~\ref{def:N03-names}).
\end{proof}

\begin{lemma}[diagonal members of free copies]\label{lem:N06-designed}
Let $\varphi$ be a restricted formula and $G:=G_\varphi$
(Definition~\ref{def:N03-graph}); $G$ is connected
(Lemma~\ref{lem:N04-potentials}).  Let $p$, $L_m$, $\beta_j$ and
$\sigma_j$ ($1\le j\le m$) be the numbers of
Definition~\ref{def:N03-params}, $\pi_i$ ($1\le i\le4$) the positions of
Definition~\ref{def:N03-graph}, $\rho_{j,c}$ and $w_i$ the rows
(Definition~\ref{def:N03-rows}), and $\delta_1=d(T_1,\cdot)$
(Lemma~\ref{lem:N04-potentials}).  Let $f$ be a choice and $W:=W_f$ its
pattern (Definition~\ref{def:N03-pattern}).  Let $k$ be a level and $x$ a
diagonal member of the free copy $U_{k,f_k}$
(Definitions~\ref{def:N03-names} and~\ref{def:N03-pattern}).
\begin{enumerate}
\item[\rm(i)] For every row $\rho_{j,c}\in W$,
  $d(\rho_{j,c},x)=\beta_j+j+\delta_1(x)$.
\item[\rm(ii)] If $k<m$, then $d(q,x)=m+\delta_1(x)$.
\item[\rm(iii)] For $1\le i\le4$, $d(w_i,x)=\pi_i+\delta_1(x)$.
\item[\rm(iv)] $d(T_2,x)=7+\delta_1(x)+2k$.
\item[\rm(v)] If $k<m$, and $k'\ne k$ is a level with $k'<m$, and
  $y$ a diagonal member of
  $U_{k',f_{k'}}$ with $\delta_1(y)=\delta_1(x)$, then $W$ does not
  separate $x$ and $y$.
\end{enumerate}
\end{lemma}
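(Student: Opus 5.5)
Throughout, write $U:=U_{k,f_k}$, $h:=h_U(x)$, $\hb:=\hb_U(x)$.  Since $x$ is diagonal, $\kappa_U(x)=\sigma_k=2k$, so $\hb=h+2k$.  Moreover $\delta_1(x)=k+h$ by the table \eqref{eq:N04-values}, and by Lemma~\ref{lem:N03-inside}(v), $h\ge\mu_k$.  Every item will be read off from the first-exit formulas of Section~\ref{sec:N04} with these substitutions; item (v) then follows by comparing the resulting expressions for $x$ and $y$.

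\emph{Items (i)--(iv).}  For (i), a row $\rho_{j,c}\in W$ satisfies $(j,c)\ne(k,f_k)$, because the free copy carries no row.  Corollary~\ref{cor:N04-levels}(i) together with \eqref{eq:N06-skewmin} then gives
\[
 d(\rho_{j,c},x)=|j-k|+\beta_j+h+\min(\sigma_j,2k).
\]
If $j\le k$, the minimum is $2j$ and $|j-k|=k-j$, so the distance is $\beta_j+j+k+h=\beta_j+j+\delta_1(x)$.  If $j>k$, the minimum is $2k$ and $|j-k|=j-k$, so the distance is again $\beta_j+j+k+h$.

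For (ii), Corollary~\ref{cor:N04-levels}(ii) gives $d(q,x)=m-k+1+h+\min(L_m,2k-1)$.  Since $2k-1<L_m=2m+8$, this equals $m+k+h=m+\delta_1(x)$.

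For (iii), the rows $w_3,w_4$ are covered directly by Corollary~\ref{cor:N04-levels}(iii).  For $w_1,w_2$, Corollary~\ref{cor:N04-levels}(iv) gives $\min(p+\delta_1,\,p+7+\delta_2)$ with $\delta_2(x)=7+k+\hb=7+3k+h$, so the second term exceeds the first and the distance is $p+\delta_1(x)$.

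For (iv), $\delta_2(x)=7+k+\hb=7+k+h+2k=7+\delta_1(x)+2k$ by the table \eqref{eq:N04-values}.

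\emph{Item (v).}  Apply (i)--(iii) to both $x$ and $y$.  This is allowed: $y$ is diagonal in the free copy of its level, and $k,k'<m$.  Each landmark of $W$, namely a row $\rho_{j,c}$, $q$, or some $w_i$, has a distance to $x$ that depends only on the landmark and on $\delta_1(x)$, not on the level.  The hypothesis $\delta_1(y)=\delta_1(x)$ therefore makes every landmark distance to $x$ equal to the corresponding distance to $y$.  Since $W$ consists exactly of such rows, $q$ and $w_1,\dots,w_4$ (Definition~\ref{def:N03-pattern}), no landmark of $W$ separates $x$ and $y$.

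\emph{Main obstacle.}  The one delicate step is (i), where one must check that both signs of $j-k$ yield the same expression $\beta_j+j+\delta_1(x)$.  The other comparisons of terms in the minima are routine inequalities using $L_m=2m+8>2m\ge\sigma_j$ and $\hb\ge h$.
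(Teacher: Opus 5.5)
Your proposal is correct and follows the paper's proof essentially line by line. Both substitute $\hb_U(x)=h_U(x)+2k$ and $\delta_1(x)=k+h_U(x)$ into Corollary~\ref{cor:N04-levels}(i)--(iv) and the table \eqref{eq:N04-values}, and for (v) both note that the resulting distances depend only on the landmark and on $\delta_1$. The differences are cosmetic: you split (i) into the cases $j\le k$ and $j>k$ where the paper writes $|j-k|+2\min(j,k)=j+k$, and the bound $h_U(x)\ge\mu_k$ you quote from Lemma~\ref{lem:N03-inside}(v) is never used.
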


\begin{proof}
Put $U:=U_{k,f_k}$.  As $x$ is a diagonal member of $U$,
$\kappa_U(x)=\sigma_k=2k$ (Definitions~\ref{def:N03-names}
and~\ref{def:N03-params}), and $\kappa_U(x)=\hb_U(x)-h_U(x)$
(Definition~\ref{def:N03-names}); so
\begin{equation}\label{eq:N06-diag-skew}
 \hb_U(x)=h_U(x)+2k .
\end{equation}
Let $\delta_2=d(T_2,\cdot)$ (Lemma~\ref{lem:N04-potentials}).  By the line
$u\in\Int(U_{k,c})$ of the table \eqref{eq:N04-values} of
Lemma~\ref{lem:N04-potentials}, at the member $x$ of $U$, and by
\eqref{eq:N06-diag-skew},
\begin{equation}\label{eq:N06-diag}
 \delta_1(x)=k+h_U(x),\qquad
 \delta_2(x)=7+k+\hb_U(x)=7+k+h_U(x)+2k=7+\delta_1(x)+2k .
\end{equation}

(1) We prove (i).  Let $\rho_{j,c}\in W$.  By
Definition~\ref{def:N03-pattern}, $c\ne f_j$; so $(j,c)\ne(k,f_k)$, and
Corollary~\ref{cor:N04-levels}(i) applies at the level $k$, the option
$f_k$ and the member $x$ of $U$, to the row $\rho_{j,c}$:
\[
 d(\rho_{j,c},x)=|j-k|+\beta_j+\min\bigl(\sigma_j+h_U(x),\ \hb_U(x)\bigr).
\]
By Definition~\ref{def:N03-params}, $\sigma_j=2j$, and by
\eqref{eq:N06-diag-skew}, $\hb_U(x)=h_U(x)+2k$; so
\[
 \min\bigl(\sigma_j+h_U(x),\ \hb_U(x)\bigr)
 =\min\bigl(2j+h_U(x),\ 2k+h_U(x)\bigr)=h_U(x)+2\min(j,k).
\]
As $|j-k|=\max(j,k)-\min(j,k)$, we get
$|j-k|+2\min(j,k)=\max(j,k)+\min(j,k)=j+k$.  Hence, by
\eqref{eq:N06-diag},
\[
 d(\rho_{j,c},x)=j+k+\beta_j+h_U(x)=\beta_j+j+\delta_1(x).
\]

(2) We prove (ii).  Let $k<m$.  Corollary~\ref{cor:N04-levels}(ii) at the
level $k$, the option $f_k$ and the member $x$ of $U$ gives
\[
 d(q,x)=m-k+1+\min\bigl(L_m+h_U(x),\ \hb_U(x)-1\bigr).
\]
By Definition~\ref{def:N03-params}, $L_m=4m+8-2m=2m+8$, and
$2k-1<2m<L_m$; so, by \eqref{eq:N06-diag-skew},
$\min(L_m+h_U(x),\ \hb_U(x)-1)=\min(L_m+h_U(x),\ 2k-1+h_U(x))=2k-1+h_U(x)$.
Hence, by \eqref{eq:N06-diag},
\[
 d(q,x)=m-k+1+2k-1+h_U(x)=m+(k+h_U(x))=m+\delta_1(x).
\]

(3) We prove (iii) for $i=3,4$.  Corollary~\ref{cor:N04-levels}(iii) at
the level $k$, the option $f_k$ and the member $x$ of $U$ gives
$d(w_3,x)=d(w_4,x)=7+\delta_1(x)$.  By Definition~\ref{def:N03-graph},
$\pi_3=\pi_4=7$.  So $d(w_i,x)=\pi_i+\delta_1(x)$ for $i=3,4$.

(4) We prove (iii) for $i=1,2$.  Corollary~\ref{cor:N04-levels}(iv) at
the level $k$, the option $f_k$ and the member $x$ of $U$ gives
\[
 d(w_1,x)=d(w_2,x)=\min\bigl(p+\delta_1(x),\ p+7+\delta_2(x)\bigr).
\]
By \eqref{eq:N06-diag}, $p+7+\delta_2(x)=p+14+\delta_1(x)+2k>p+\delta_1(x)$.
So the minimum is $p+\delta_1(x)$.  By Definition~\ref{def:N03-graph},
$\pi_1=\pi_2=p$.  So $d(w_i,x)=\pi_i+\delta_1(x)$ for $i=1,2$.

(5) We prove (iv).  By \eqref{eq:N06-diag},
$d(T_2,x)=\delta_2(x)=7+\delta_1(x)+2k$.

(6) We prove (v).  Let $k<m$, $k'\ne k$ a level with $k'<m$, and $y$ a diagonal member of
$U_{k',f_{k'}}$ with $\delta_1(y)=\delta_1(x)$.  By
Definition~\ref{def:N03-pattern}, the landmarks of $W$ are the rows
$\rho_{j,c}$ with $c\ne f_j$, the vertex $q$ and the rows $w_1,\dots,w_4$.
Items (i), (ii) and (iii) hold at $x$, and also at $y$ with $k'$ in place
of $k$, as $y$ is a diagonal member of the free copy $U_{k',f_{k'}}$.
In these formulas the level $k$ of $x$ does not appear; so at $y$ they
give the same expressions with $\delta_1(y)$ in place of $\delta_1(x)$.
So, with $\delta_1(x)=\delta_1(y)$,
\[
\begin{aligned}
 d(\rho_{j,c},x)&=\beta_j+j+\delta_1(x)=\beta_j+j+\delta_1(y)=d(\rho_{j,c},y)
   &&(\rho_{j,c}\in W),\\
 d(q,x)&=m+\delta_1(x)=m+\delta_1(y)=d(q,y),\\
 d(w_i,x)&=\pi_i+\delta_1(x)=\pi_i+\delta_1(y)=d(w_i,y) &&(1\le i\le4).
\end{aligned}
\]
So no landmark of $W$ separates $x$ and $y$
(Definition~\ref{def:N01-resolve}).
\end{proof}

\begin{theorem}[validity of patterns]\label{thm:N06-validity}
Let $\varphi$ be a restricted formula and $G:=G_\varphi$
(Definition~\ref{def:N03-graph}); $G$ is connected
(Lemma~\ref{lem:N04-potentials}).  Let $f$ be a choice and $W_f$ its
pattern, with the free copies $U_{k,f_k}$
(Definition~\ref{def:N03-pattern}), and $\delta_1=d(T_1,\cdot)$
(Lemma~\ref{lem:N04-potentials}).
\begin{enumerate}
\item[\rm(i)] Two distinct vertices $x,y$ of $G$ are not separated by
  $W_f$ if and only if there are levels $k\ne k'$, both less than
  $m$, such that
  \begin{itemize}
  \item $x$ is a diagonal member of $U_{k,f_k}$
    (Definition~\ref{def:N03-names}),
  \item $y$ is a diagonal member of $U_{k',f_{k'}}$, and
  \item $\delta_1(x)=\delta_1(y)$.
  \end{itemize}
\item[\rm(ii)] $W_f\cup\{T_2\}$ resolves $G$.
\end{enumerate}
\end{theorem}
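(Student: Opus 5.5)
The plan is to assemble the lemmas of this section by a case split on where $x$ and $y$ lie.

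For (i), the ``if'' direction is exactly Lemma~\ref{lem:N06-designed}(v). For ``only if'', let $x\ne y$ be unseparated by $W_f$. By Lemma~\ref{lem:N03-structure}(i) every vertex lies in $S$, equals $q$, is a member of a copy, or lies in some $\Int(Y^j)$. Each of these classes is then excluded except the one in the statement:
\begin{itemize}
\item If either vertex lies in $S$, Lemma~\ref{lem:N06-skeleton} separates them.
\item If either is in an arm interior, Lemma~\ref{lem:N06-arms} separates them.
\item If either is a member of a rowed copy, Lemma~\ref{lem:N06-rowed} separates them.
\item If either equals $q$, then $q\in W_f$ separates it from the other vertex (Definition~\ref{def:N01-resolve}).
\end{itemize}
So both $x$ and $y$ are members of free copies. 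If they lie in the same free copy $U_{k,f_k}$, Lemma~\ref{lem:N06-copy} gives $q$ or $w_3$ in $W_f$ separating them. Otherwise their levels differ, since each level has exactly one free copy. After renaming so that the level $k$ of $x$ is smaller than the level $k'$ of $y$, Lemma~\ref{lem:N06-cross} yields $k'<m$ (hence $k<m$), both vertices diagonal, and $\delta_1(x)=\delta_1(y)$. That is the conclusion; the symmetric roles of $x$ and $y$ in the statement absorb the renaming.

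For (ii), it suffices by (i) to show that $T_2$ separates every pair $x,y$ described in (i). By Lemma~\ref{lem:N06-designed}(iv), $d(T_2,x)=7+\delta_1(x)+2k$ and $d(T_2,y)=7+\delta_1(y)+2k'$. With $\delta_1(x)=\delta_1(y)$ and $k\ne k'$, these differ. Pairs already separated by $W_f$ remain separated by the larger set.

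The only real care is bookkeeping: checking that the case split is exhaustive and that each invoked lemma's hypotheses (membership in $W_f$ of $q$, the $w_j$ and the relevant cores) hold. No step is substantially hard given the preceding lemmas; the main subtlety is the same-copy versus different-level split for free members.
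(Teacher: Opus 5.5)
Your proposal is correct and follows the paper's proof essentially step for step. You take the ``if'' direction from Lemma~\ref{lem:N06-designed}(v). For ``only if'' you exclude the skeleton, $q$, members of rowed copies and arm interiors with Lemmas~\ref{lem:N06-skeleton}, \ref{lem:N06-rowed} and~\ref{lem:N06-arms}, then use Lemma~\ref{lem:N06-copy} for two members of the same free copy and Lemma~\ref{lem:N06-cross} (after ordering the levels) for different levels, and you get (ii) from Lemma~\ref{lem:N06-designed}(iv) exactly as the paper does.
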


\begin{proof}
Put $W:=W_f$.

(1) We prove the ``if'' part of (i).  Let $k\ne k'$ be levels less
than $m$, $x$ a
diagonal member of $U_{k,f_k}$ and $y$ a diagonal member of
$U_{k',f_{k'}}$ with $\delta_1(x)=\delta_1(y)$.  By
Lemma~\ref{lem:N06-designed}(v) at the level $k$, the member $x$, the
level $k'$ and the member $y$ (its hypotheses $k<m$ and $k'<m$ hold),
$W$ does not separate $x$ and $y$.

(2) We prove the ``only if'' part of (i).  Let $x\ne y$ be two vertices
of $G$ not separated by $W$.  Let $v$ be $x$ or $y$, and $v'$ the other
one; so $v'\ne v$ and $W$ does not separate $v$ and $v'$.  By
Lemma~\ref{lem:N03-structure}(i), $v$ lies in $S$, or $v=q$, or $v$ is a
member of a copy $U_{\ell,c}$, or $v$ lies in some $\Int(Y^j)$.  Then:
\begin{itemize}
\item $v\notin S$: Lemma~\ref{lem:N06-skeleton} at the vertex $v$ would
  say that $W$ separates $v$ from $v'$;
\item $v\ne q$: $q\in W$ (Definition~\ref{def:N03-pattern}), and a
  landmark separates itself from every other vertex
  (Definition~\ref{def:N01-resolve}), so $W$ would separate $q$ from $v'$;
\item $v$ is not a member of a rowed copy: if $v$ were a member of
  $U_{\ell,c}$ with $c\ne f_\ell$, then Lemma~\ref{lem:N06-rowed} at the
  level $\ell$, the option $c$ and the member $v$ would say that $W$
  separates $v$ from $v'\ne v$;
\item $v$ lies in no $\Int(Y^j)$ ($1\le j\le4$): if it did,
  Lemma~\ref{lem:N06-arms} at $j$ and the vertex $v$ would say that $W$
  separates $v$ from $v'$.
\end{itemize}
So, for some level $\ell$, $v$ is a member of a copy $U_{\ell,c}$ with
$c=f_\ell$, that is, of the free copy $U_{\ell,f_\ell}$
(Definition~\ref{def:N03-pattern}).

Hence $x$ is a member of $U_{k,f_k}$ and $y$ a member of $U_{k',f_{k'}}$
for some levels $k,k'$.  If $k=k'$, then $x\ne y$ are two members of
$U_{k,f_k}$, and Lemma~\ref{lem:N06-copy} at the level $k$, the option
$f_k$ and the members $x,y$ says that $q$ or $w_3$ separates them; both
lie in $W$ (Definition~\ref{def:N03-pattern}), a contradiction.  So
$k\ne k'$.  Exchanging $x$ and $y$ if necessary, we may assume that
$k<k'$: the hypothesis ($W$ does not separate $x$ and $y$) and the
conclusion are symmetric in $x$ and $y$.  Then Lemma~\ref{lem:N06-cross}
applies at the levels $k<k'$, the member $x$ of $U_{k,f_k}$ and the
member $y$ of $U_{k',f_{k'}}$: $k<k'<m$, $x$ and $y$ are diagonal members and
$\delta_1(x)=\delta_1(y)$.

(3) We prove (ii).  Let $x\ne y$ be two vertices of $G$.  If $W$
separates them, so does $W\cup\{T_2\}$.  Otherwise, by (i), $x$ is a
diagonal member of $U_{k,f_k}$ and $y$ a diagonal member of
$U_{k',f_{k'}}$ with $k\ne k'$ and $\delta_1(x)=\delta_1(y)$.
Lemma~\ref{lem:N06-designed}(iv) at the level $k$ and the member $x$
gives $d(T_2,x)=7+\delta_1(x)+2k$, and at the level $k'$ and the member
$y$, $d(T_2,y)=7+\delta_1(y)+2k'$.  As $\delta_1(x)=\delta_1(y)$ and
$k\ne k'$, $d(T_2,x)\ne d(T_2,y)$.  So $T_2$ separates $x$ and $y$.
Hence $W\cup\{T_2\}$ separates every two distinct vertices of $G$, that
is, resolves $G$ (Definition~\ref{def:N01-resolve}).
\end{proof}


\section{Treewidth}\label{sec:N07}

\begin{definition}[tree decompositions]\label{def:N07-td}
Let $H$ be a graph.  A \emph{tree decomposition} of $H$ (Robertson and
Seymour; see \cite[Section~2]{Bodlaender98}) is a tree
(Definition~\ref{def:N01-graph})
$\mathcal T$ together with a set $B_\nu\subseteq V(H)$, the \emph{bag} of
$\nu$, for each node $\nu$ of $\mathcal T$, such that
\begin{enumerate}
\item[\rm(a)] every vertex of $H$ lies in some bag;
\item[\rm(b)] for every edge $xy$ of $H$, some bag contains both $x$ and
  $y$;
\item[\rm(c)] for every vertex $x$ of $H$, the nodes $\nu$ with
  $x\in B_\nu$ induce a connected subtree of $\mathcal T$.
\end{enumerate}
Its \emph{width} is $\max_\nu|B_\nu|-1$.  The \emph{treewidth} of $H$ is
the least width of a tree decomposition of $H$.
\end{definition}

\begin{definition}[two-terminal series--parallel graphs]\label{def:N07-sp}
A \emph{two-terminal graph} $(H;s,t)$ is a graph $H$ with two distinct
vertices $s,t$ of $H$, its \emph{terminals}.  For a path
$(u_0,u_1,\dots,u_\ell)$ in a graph, with $\ell\ge1$, we write
$\Pi(u_0,\dots,u_\ell)$ for the two-terminal graph $(P;u_0,u_\ell)$, where
$P$ is the graph of the path (Definition~\ref{def:N01-graph}).
Let $(H_1;s_1,t_1)$ and $(H_2;s_2,t_2)$ be two-terminal graphs whose
graphs are subgraphs of one graph.
\begin{itemize}
\item If $t_1=s_2$ and $V(H_1)\cap V(H_2)=\{t_1\}$, their \emph{series
  composition} is
  \[
   (H_1;s_1,t_1)\circ(H_2;s_2,t_2):=(H_1\cup H_2;\ s_1,t_2).
  \]
\item If $s_1=s_2$, $t_1=t_2$ and $V(H_1)\cap V(H_2)=\{s_1,t_1\}$, their
  \emph{parallel composition} is
  \[
   (H_1;s_1,t_1)\,\|\,(H_2;s_2,t_2):=(H_1\cup H_2;\ s_1,t_1).
  \]
\end{itemize}
Here $H_1\cup H_2$ is the graph whose vertices and edges are those of
$H_1$ and those of $H_2$.  A \emph{single edge} is a two-terminal graph
$(K_2;s,t)$, where $K_2$ is the graph with the two vertices $s,t$ and the
edge $st$.  A two-terminal graph is \emph{series--parallel}
(Eppstein~\cite{Eppstein92}), written $(H;s,t)\in\mathcal{SP}$, or $H\in\mathcal{SP}$ when the terminals are
fixed by the context, when it is obtained from single edges by a finite
sequence of series and parallel compositions.
\end{definition}

Definition~\ref{def:N07-sp} is the definition of two-terminal series
parallel graphs of \cite[Section~1]{Eppstein92} and of
\cite[Section~8.3]{Bodlaender98}, written for subgraphs of one graph:
there a composition takes a disjoint union and identifies the terminals,
here the two graphs already meet exactly in those vertices.  Two
differences do not matter: \cite{Eppstein92} orients every edge from the
source to the sink and takes undirected graphs through an orientation,
and both allow multigraphs, so an edge $s_1t_1$ lying in both graphs of a
parallel composition is one edge here and two parallel edges there.

\begin{lemma}[series--parallel graphs]\label{lem:N07-sp}
Let $\mathcal{SP}$ be the class of Definition~\ref{def:N07-sp}.
\begin{enumerate}
\item[\rm(i)] For every path $(u_0,\dots,u_\ell)$ with $\ell\ge1$,
  $\Pi(u_0,\dots,u_\ell)\in\mathcal{SP}$.
\item[\rm(ii)] If $(H;s,t)\in\mathcal{SP}$ and $\psi$ is an isomorphism
  from $H$ onto a graph $H'$, then the two-terminal graph
  $(H';\psi(s),\psi(t))$ is in $\mathcal{SP}$.
\item[\rm(iii)] Every $(H;s,t)\in\mathcal{SP}$ has treewidth at most two
  (Definition~\ref{def:N07-td}) \cite[Theorem~41]{Bodlaender98}.
\item[\rm(iv)] Let $(H;s,t)\in\mathcal{SP}$.  Then $H$ is connected, and
  for every vertex $v$ of $H$, every vertex $u\ne v$ of $H$ is joined to
  $s$ or to $t$ by a path of $H-v$.
\end{enumerate}
\end{lemma}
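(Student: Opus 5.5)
(i) I will argue by induction on $\ell$. For $\ell=1$, $\Pi(u_0,u_1)$ is a single edge. For $\ell\ge2$, I will write $\Pi(u_0,\dots,u_\ell)=\Pi(u_0,\dots,u_{\ell-1})\circ\Pi(u_{\ell-1},u_\ell)$. This series composition is allowed: the vertices of a path are distinct, so the two vertex sets meet exactly in $\{u_{\ell-1}\}$.

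(ii) I will induct on the number of compositions in a construction of $(H;s,t)$. The image of a single edge under $\psi$ is a single edge. If $H=H_1\cup H_2$ is a series or parallel composition, then $\psi$ restricted to $H_i$ is an isomorphism onto $\psi(H_i)$. Because $\psi$ is injective, $V(\psi(H_1))\cap V(\psi(H_2))=\psi(V(H_1)\cap V(H_2))$. Hence the images satisfy the same intersection conditions, and $\psi(H)=\psi(H_1)\cup\psi(H_2)$ is the same composition of the images. One subtlety is that the induction runs over two-terminal graphs, not just graphs; the terminals are carried along by $\psi$.

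(iii) This is the cited theorem \cite[Theorem~41]{Bodlaender98}. The remark after Definition~\ref{def:N07-sp} justifies that our definition agrees with the one used there, up to merging parallel edges. Merging parallel edges does not change tree decompositions, since condition (b) of Definition~\ref{def:N07-td} depends only on which pairs are adjacent.

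(iv) I will prove both claims together by induction on the construction, using the stronger hypothesis that the statement holds for each part.

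Connectivity is routine. A single edge is connected. A union of two connected graphs that share a vertex (the shared terminal or terminals) is connected.

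For the second claim, in the base case $H=K_2$ we have $u\ne v$ with $u\in\{s,t\}$, so the trivial path works.

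\emph{Series case.} Let $H=H_1\cup H_2$ with $H_1$ having terminals $s,c$ and $H_2$ having terminals $c,t$, where $c$ is the shared vertex. Take $u\ne v$, and suppose $u\in V(H_1)$; the case $u\in V(H_2)$ is symmetric. If $v\notin V(H_1)$, then $H_1$ is a connected subgraph of $H-v$ containing both $u$ and $s$. If $v\in V(H_1)$, the induction hypothesis gives a path in $H_1-v$ from $u$ to $s$ or to $c$. Reaching $s$ finishes the case. If it reaches $c$, then $v\ne c$, so $v\notin V(H_2)$, and a path in $H_2$ from $c$ to $t$ (which exists since $H_2$ is connected) avoids $v$. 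Concatenating gives a walk from $u$ to $t$, which contains the required path.

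\emph{Parallel case.} Take $u\in V(H_i)$ and apply the hypothesis in $H_i-v$. This directly gives a path to $s$ or $t$, because both parallel parts share the terminals $s,t$. If $v\notin V(H_i)$, I use the connectivity of $H_i$ instead.

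The main obstacle I expect is in the series case, making sure that the vertex $c$ is handled correctly. The key point is that when $v=c$, the path inside $H_1-v$ cannot end at $c$, so it must end at $s$.
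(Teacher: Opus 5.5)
Your proposal is correct and follows the paper's proof essentially step by step: induction on $\ell$ for (i), induction on the number of compositions for (ii) and (iv), and for (iii) the citation of Bodlaender's Theorem~41 together with the remark that merging parallel edges does not affect tree decompositions. The only difference is cosmetic: in the series case of (iv) you split according to which part contains $u$, whereas the paper splits according to whether $v$ is the shared vertex $c$; both organise the same argument validly.
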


\begin{proof}
(1) We prove (i), by induction on $\ell$.  If $\ell=1$, then
$\Pi(u_0,u_1)$ is the single edge $(K_2;u_0,u_1)$, in $\mathcal{SP}$ by
Definition~\ref{def:N07-sp}.  Let $\ell\ge2$.  By the induction
hypothesis, $\Pi(u_0,\dots,u_{\ell-1})\in\mathcal{SP}$, and
$\Pi(u_{\ell-1},u_\ell)$ is a single edge.  The vertices of a path are
distinct, so the two paths share exactly the vertex $u_{\ell-1}$, and their
union is the graph of the path $(u_0,\dots,u_\ell)$.  So
\[
 \Pi(u_0,\dots,u_\ell)=\Pi(u_0,\dots,u_{\ell-1})\circ\Pi(u_{\ell-1},u_\ell)
 \in\mathcal{SP}.
\]

(2) We prove (ii), by induction on the number of compositions in a
construction of $(H;s,t)$ from single edges.  If there is none, $(H;s,t)$
is a single edge $(K_2;s,t)$, and $\psi$ maps it onto the single edge
$(\psi(K_2);\psi(s),\psi(t))$, which is in $\mathcal{SP}$.  Otherwise
the last composition is a series composition $(H_1\cup H_2;s_1,t_2)$ or
a parallel composition $(H_1\cup H_2;s_1,t_1)$ of two graphs
$(H_1;s_1,t_1),(H_2;s_2,t_2)\in\mathcal{SP}$ constructed with fewer
compositions, for which (ii) holds by the induction hypothesis.

In the series case, $t_1=s_2$ and $V(H_1)\cap V(H_2)=\{t_1\}$; let $\psi$
be an isomorphism from $H_1\cup H_2$ onto $H'$.  The restrictions of
$\psi$ to $H_1$ and to $H_2$ are isomorphisms onto the subgraphs
$\psi(H_1)$ and $\psi(H_2)$ of $H'$, so, by the induction hypothesis,
$(\psi(H_1);\psi(s_1),\psi(t_1))$ and $(\psi(H_2);\psi(s_2),\psi(t_2))$
are in $\mathcal{SP}$.  As $t_1=s_2$, $\psi(t_1)=\psi(s_2)$; as $\psi$ is
injective,
$V(\psi(H_1))\cap V(\psi(H_2))=\psi(V(H_1)\cap V(H_2))=\{\psi(t_1)\}$; and
as $\psi$ is onto, $\psi(H_1)\cup\psi(H_2)=\psi(H_1\cup H_2)=H'$.  So
$(H';\psi(s_1),\psi(t_2))$ is the series composition of two members of
$\mathcal{SP}$, and is in $\mathcal{SP}$.

In the parallel case, $s_1=s_2$, $t_1=t_2$ and
$V(H_1)\cap V(H_2)=\{s_1,t_1\}$; let $\psi$ be an isomorphism from
$H_1\cup H_2$ onto $H'$.  The restrictions of $\psi$ to $H_1$ and to $H_2$
are isomorphisms onto the subgraphs $\psi(H_1)$ and $\psi(H_2)$ of $H'$,
so, by the induction hypothesis, $(\psi(H_1);\psi(s_1),\psi(t_1))$ and
$(\psi(H_2);\psi(s_2),\psi(t_2))$ are in $\mathcal{SP}$.  As $s_1=s_2$ and $t_1=t_2$, $\psi(s_1)=\psi(s_2)$
and $\psi(t_1)=\psi(t_2)$; as $\psi$ is injective,
$V(\psi(H_1))\cap V(\psi(H_2))=\psi(V(H_1)\cap V(H_2))=\{\psi(s_1),\psi(t_1)\}$;
and as $\psi$ is onto, $\psi(H_1)\cup\psi(H_2)=\psi(H_1\cup H_2)=H'$.  So
$(H';\psi(s_1),\psi(t_1))$ is the parallel composition of two members of
$\mathcal{SP}$, and is in $\mathcal{SP}$.

(3) We prove (iii).  As noted after Definition~\ref{def:N07-sp}, the
compositions of that definition are those of
\cite[Section~8.3]{Bodlaender98}, except that an edge lying in both
graphs of a parallel composition is one edge of the union and two
parallel edges of the multigraph.  So every $(H;s,t)\in\mathcal{SP}$ is
isomorphic to a series parallel graph $G$ of \cite{Bodlaender98} with
some parallel edges merged.  By
\cite[Theorem~41]{Bodlaender98}, $G$ has a tree decomposition of width at
most two.  It is also one of $H$: Definition~\ref{def:N07-td} involves
the edges only in condition (b), which asks less of a graph with fewer
edges, and an isomorphism carries a tree decomposition onto one of the
same width.  So the treewidth of $H$ is at most two.

(4) We prove (iv), by induction on the number of compositions in a
construction of $(H;s,t)$ from single edges.  If there is none, $(H;s,t)$
is a single edge $(K_2;s,t)$: it is connected, and for $v\in\{s,t\}$ the
only vertex $u\ne v$ is the other terminal, joined to itself by the path
$(u)$.  Otherwise the last composition is a series composition
$(H_1\cup H_2;s_1,t_2)$ or a parallel composition $(H_1\cup H_2;s_1,t_1)$
of two graphs $(H_1;s_1,t_1),(H_2;s_2,t_2)\in\mathcal{SP}$ constructed
with fewer compositions, for which (iv) holds by the induction
hypothesis.  In both cases $H_1$ and $H_2$ are connected and share a
vertex ($t_1$, or $s_1$), so $H=H_1\cup H_2$ is connected.  Let $v$ be a
vertex of $H$ and $u\ne v$ a vertex of $H$.  A path of $H_1-v$ or of
$H_2-v$ is a path of $H-v$.

\emph{The series case.}  Here $s=s_1$, $t=t_2$, and
$V(H_1)\cap V(H_2)=\{c\}$ with $c:=t_1=s_2$.  Each of $H_1,H_2$ has the
terminal $c$ and one terminal of $H$: $s$ for $H_1$ and $t$ for $H_2$.
If $v=c$: $u$ lies in some $H_i$, and by the induction hypothesis $u$
is joined in $H_i-c$ to a terminal of $H_i$ other than $c$, which is a
terminal of $H$.  If $v\ne c$: let $i$ be the index with $v\in V(H_i)$
and $j$ the other one, so that $v\notin V(H_j)$ and $H_j-v=H_j$ is
connected; so every vertex of $H_j$, and in particular $c$, is joined in
$H_j$ to the terminal of $H$ in $H_j$.  If $u\in V(H_j)$, this joins
$u$.  If $u\in V(H_i)$, by the induction hypothesis $u$ is joined in
$H_i-v$ to a terminal of $H_i$: to the terminal of $H$ in $H_i$, or to
$c$, and then on to the terminal of $H$ in $H_j$.

\emph{The parallel case.}  Here $s=s_1=s_2$, $t=t_1=t_2$ and
$V(H_1)\cap V(H_2)=\{s,t\}$.  If $v\in\{s,t\}$: $u$ lies in some $H_i$,
and by the induction hypothesis $u$ is joined in $H_i-v$ to $s$ or to
$t$.  If $v\notin\{s,t\}$: let $i$ be the index with $v\in V(H_i)$ and
$j$ the other one, so that $H_j-v=H_j$ is connected and joins every
vertex of $H_j$ to $s$.  If $u\in V(H_j)$, this joins $u$.  If
$u\in V(H_i)$, by the induction hypothesis $u$ is joined in $H_i-v$ to
$s$ or to $t$.
\end{proof}

\begin{lemma}[the template is series--parallel]\label{lem:N07-template}
Let $\T=\T(L,\mu;E)$ be a template (Definition~\ref{def:N02-template}),
with the spine $v_0,\dots,v_L$, the terminals $s=v_0$ and $t=v_L$, the
$t$-ear $r_0,\dots,r_{\mu+2}$ ($r_0=v_{L-\mu}$, $r_{\mu+2}=v_L$) and the
crossings $x^j_0,\dots,x^j_{\ell_j}$, $j<e$, with $x^j_0=v_{a_j}$,
$x^j_{\ell_j}=v_{b_j}$, $a_j=\mu-1-j$, $b_j=\mu+1+j$ and
$\ell_j=2j+2+\varepsilon_j$.  Then $(\T;s,t)\in\mathcal{SP}$
(Definition~\ref{def:N07-sp}).
\end{lemma}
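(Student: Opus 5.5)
We build $(\T;s,t)$ explicitly from paths by series and parallel compositions (Definition~\ref{def:N07-sp}), using Lemma~\ref{lem:N07-sp}(i) to certify that every path piece is in $\mathcal{SP}$. The spine is cut at the ear ends, which by Lemma~\ref{lem:N02-ends} occur in the order
\[
 0<a_1<a_0<b_0<b_1<L-\mu<L
\]
(with $a_1,b_1$ absent when $e\le1$, and both crossings absent when $e=0$).

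\textbf{Innermost block.} If $e\ge1$, put
\[
 B_0:=\Pi(v_{a_0},\dots,v_{b_0})\,\|\,\Pi(x^0_0,\dots,x^0_{\ell_0}).
\]
Both paths run from $v_{a_0}$ to $v_{b_0}$, and they meet only at these ends because the crossing members are distinct from the spine vertices. So the parallel composition is defined, and $(B_0;v_{a_0},v_{b_0})\in\mathcal{SP}$.

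\textbf{Second block.} If $e=2$, form
\[
 \Pi(v_{a_1},\dots,v_{a_0})\circ B_0\circ\Pi(v_{b_0},\dots,v_{b_1}),
\]
which is a two-terminal graph with terminals $v_{a_1},v_{b_1}$. Then compose it in parallel with $\Pi(x^1_0,\dots,x^1_{\ell_1})$ to obtain $B_1$. The series compositions are valid because consecutive pieces meet only at the shared spine vertex, since $a_1<a_0<b_0<b_1$. The parallel composition is valid because crossing $1$ meets the rest only at $v_{a_1},v_{b_1}$.

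\textbf{Outermost block.} Let $B$ be the outermost block: $B_1$ if $e=2$, $B_0$ if $e=1$, and the single vertex situation if $e=0$ (in which case we simply use the whole spine segment $\Pi(v_0,\dots,v_{L-\mu})$ directly). Form the $t$-end block
\[
 R':=\Pi(v_{L-\mu},\dots,v_L)\,\|\,\Pi(r_0,\dots,r_{\mu+2}).
\]
Then
\[
 (\T;s,t)=\Pi(v_0,\dots,v_{a})\circ B\circ\Pi(v_{b},\dots,v_{L-\mu})\circ R',
\]
where $a,b$ are the ends of the outermost crossing. All the segments involved have length at least $1$, by Lemma~\ref{lem:N02-ends}: $a\ge1$ and $b\le L-\mu-1$. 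The $t$-ear segment also has length at least $1$, since $\mu\ge3$.

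\textbf{Main obstacle.} The only real work is checking the disjointness hypotheses of each composition, and verifying that the final union is exactly $\T$. For disjointness, the pieces are spine intervals that are disjoint except at shared endpoints, together with ears whose interiors are pairwise disjoint and disjoint from the spine (Definition~\ref{def:N02-template}). For the union, every vertex and edge of $\T$ lies on the spine, the $t$-ear, or a crossing, and each of these is used exactly once. I expect this bookkeeping, especially in the nested case $e=2$, to be the only delicate point; the rest follows directly from Lemma~\ref{lem:N07-sp}(i).
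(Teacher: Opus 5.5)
Your proposal is correct and follows essentially the same construction as the paper. The paper likewise builds the innermost block as the parallel composition of the two-edge spine segment $v_{\mu-1},v_\mu,v_{\mu+1}$ with crossing $0$. When $e=2$ it adds one spine edge on each side and composes the result in parallel with crossing $1$. It closes the $t$-end as the spine segment $v_{L-\mu},\dots,v_L$ in parallel with the $t$-ear, joins everything by left-associated series compositions with the remaining spine segments, and treats $e=0$ separately as $\Pi(v_0,\dots,v_{L-\mu})\circ R$, with the same vertex-intersection and length checks you list.
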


\begin{proof}
By Definition~\ref{def:N02-template}, the vertices $v_0,\dots,v_L$,
$r_1,\dots,r_{\mu+1}$ and $x^j_1,\dots,x^j_{\ell_j-1}$ ($j<e$) are pairwise
distinct, and $\T$ is the union of the spine, the $t$-ear and the
crossings.  By Lemma~\ref{lem:N02-ends}, $1\le a_j<b_j\le L-\mu-1$ for
$j<e$, and $L-\mu\ge5$; by Definition~\ref{def:N02-template},
\eqref{eq:N02-hyp}, $\mu\ge3$.  Put $X^j:=\Pi(x^j_0,\dots,x^j_{\ell_j})$
for $j<e$; it has $\ell_j=2j+2+\varepsilon_j\ge4$ edges, as
$\varepsilon_j\ge2$.  Every path below has at least one edge, so it is in
$\mathcal{SP}$ by Lemma~\ref{lem:N07-sp}(i).  For each composition we
give the intersection of the vertex sets, which follows from the
distinctness above.  By Definition~\ref{def:N02-template},
$e\in\{0,1,2\}$; steps (2), (4) and (5) treat the three values.

(1) \emph{The end $R$.}  Put
\[
 A:=\Pi(v_{L-\mu},v_{L-\mu+1},\dots,v_L),\qquad
 R':=\Pi(r_0,r_1,\dots,r_{\mu+2}),\qquad R:=A\,\|\,R',
\]
where $A$ has $\mu\ge3$ edges and $R'$ has $\mu+2$ edges, both with the
terminals $(v_{L-\mu},v_L)$, and
$V(A)\cap V(R')=\{v_{L-\mu},v_L\}$, as $r_1,\dots,r_{\mu+1}$ are not spine
vertices.  So $R\in\mathcal{SP}$, with the terminals $(v_{L-\mu},v_L)$, and
$R$ is the union of the spine from $v_{L-\mu}$ to $v_L$ and the $t$-ear.

(2) \emph{The case $e=0$.}  Put $P:=\Pi(v_0,\dots,v_{L-\mu})$, with
$L-\mu\ge5$ edges.  Then $V(P)\cap V(R)=\{v_{L-\mu}\}$, and
$P\cup R$ is the union of the spine and the $t$-ear, which is $\T$.  So
\[
 (\T;s,t)=P\circ R\in\mathcal{SP}.
\]

(3) \emph{The middle $M_0$, when $e\ge1$.}  Here $a_0=\mu-1$ and
$b_0=\mu+1$.  Put
\[
 M_0:=\Pi(v_{\mu-1},v_\mu,v_{\mu+1})\,\|\,X^0 ,
\]
where the spine path has $2$ edges and $X^0$ has $\ell_0\ge4$ edges, both
with the terminals $(v_{\mu-1},v_{\mu+1})$, and the intersection of their
vertex sets is $\{v_{\mu-1},v_{\mu+1}\}$, as $x^0_1,\dots,x^0_{\ell_0-1}$
are not spine vertices.  So $M_0\in\mathcal{SP}$, with the terminals
$(v_{\mu-1},v_{\mu+1})$; its vertices are $v_{\mu-1},v_\mu,v_{\mu+1}$ and
the members of the crossing $0$.

(4) \emph{The case $e=1$.}  Put
\[
 P_1:=\Pi(v_0,\dots,v_{\mu-1}),\qquad P_2:=\Pi(v_{\mu+1},\dots,v_{L-\mu}),
\]
with $\mu-1\ge2$ and $L-\mu-(\mu+1)\ge1$ edges, the latter as
$b_0=\mu+1\le L-\mu-1$.  Then
\[
\begin{aligned}
 V(P_1)\cap V(M_0)&=\{v_{\mu-1}\}, &
 V(P_1\cup M_0)\cap V(P_2)&=\{v_{\mu+1}\},\\
 V(P_1\cup M_0\cup P_2)\cap V(R)&=\{v_{L-\mu}\},
\end{aligned}
\]
so the series compositions
\[
 H_1:=P_1\circ M_0,\qquad H_2:=H_1\circ P_2,\qquad H_3:=H_2\circ R
\]
are defined, with the terminals $(v_0,v_{\mu+1})$, $(v_0,v_{L-\mu})$ and
$(v_0,v_L)=(s,t)$, and $H_3\in\mathcal{SP}$.  The union
$P_1\cup\Pi(v_{\mu-1},v_\mu,v_{\mu+1})\cup P_2\cup A$ is the spine, so
$H_3=P_1\cup M_0\cup P_2\cup R$ is the union of the spine, the crossing
$0$ and the $t$-ear, which is $\T$.  So $(\T;s,t)\in\mathcal{SP}$.

(5) \emph{The case $e=2$.}  Here $a_1=\mu-2$ and $b_1=\mu+2$.  Put
\[
 N:=\bigl(\Pi(v_{\mu-2},v_{\mu-1})\circ M_0\bigr)\circ\Pi(v_{\mu+1},v_{\mu+2}),
\]
the series compositions being defined as
$V(\Pi(v_{\mu-2},v_{\mu-1}))\cap V(M_0)=\{v_{\mu-1}\}$ and the vertices of
$\Pi(v_{\mu-2},v_{\mu-1})\cup M_0$ meet those of $\Pi(v_{\mu+1},v_{\mu+2})$
in $\{v_{\mu+1}\}$; $N$ has the terminals $(v_{\mu-2},v_{\mu+2})$ and the
vertices $v_{\mu-2},\dots,v_{\mu+2}$ and the members of the crossing $0$.
The crossing $X^1$ has the vertices $v_{\mu-2}$, $x^1_1,\dots,x^1_{\ell_1-1}$,
$v_{\mu+2}$, so $V(N)\cap V(X^1)=\{v_{\mu-2},v_{\mu+2}\}$, and
\[
 M_1:=N\,\|\,X^1\in\mathcal{SP},
\]
with the terminals $(v_{\mu-2},v_{\mu+2})$.  Put
$P_1:=\Pi(v_0,\dots,v_{\mu-2})$ and $P_2:=\Pi(v_{\mu+2},\dots,v_{L-\mu})$,
with $\mu-2\ge1$ and $L-\mu-(\mu+2)\ge1$ edges, as $1\le a_1$ and
$b_1\le L-\mu-1$.  Then
\[
\begin{aligned}
 V(P_1)\cap V(M_1)&=\{v_{\mu-2}\}, &
 V(P_1\cup M_1)\cap V(P_2)&=\{v_{\mu+2}\},\\
 V(P_1\cup M_1\cup P_2)\cap V(R)&=\{v_{L-\mu}\},
\end{aligned}
\]
so the series compositions
\[
 H_1:=P_1\circ M_1,\qquad H_2:=H_1\circ P_2,\qquad H_3:=H_2\circ R
\]
are defined, with the terminals $(v_0,v_{\mu+2})$, $(v_0,v_{L-\mu})$ and
$(v_0,v_L)=(s,t)$, and $H_3\in\mathcal{SP}$.  Its graph is
$P_1\cup M_1\cup P_2\cup R$: the union of
$P_1$, $\Pi(v_{\mu-2},v_{\mu-1})$, $\Pi(v_{\mu-1},v_\mu,v_{\mu+1})$,
$\Pi(v_{\mu+1},v_{\mu+2})$, $P_2$ and $A$ is the spine, and the rest is
$X^0$, $X^1$ and the $t$-ear; so the graph is $\T$, and
$(\T;s,t)\in\mathcal{SP}$.
\end{proof}

\begin{theorem}[treewidth of $G_\varphi$]\label{thm:N07-tw}
Let $\varphi$ be a restricted formula and $G:=G_\varphi$
(Definition~\ref{def:N03-graph}).
\begin{enumerate}
\item[\rm(i)] $(G;T_1,T_2)\in\mathcal{SP}$
  (Definition~\ref{def:N07-sp}).
\item[\rm(ii)] $G$ has a tree decomposition of width at most two
  (Definition~\ref{def:N07-td}); so the treewidth of $G$ is at most two.
\end{enumerate}
\end{theorem}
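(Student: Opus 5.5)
The plan is to build $(G;T_1,T_2)$ explicitly from series and parallel compositions of members of $\mathcal{SP}$, using Lemma~\ref{lem:N07-template} for the copies and Lemma~\ref{lem:N07-sp}(i) for paths. Part (ii) then follows at once from (i) and Lemma~\ref{lem:N07-sp}(iii).

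\emph{The copies and the levels.}  First, each copy $(U_{k,c};s_k,t_k)$ lies in $\mathcal{SP}$. This follows from Lemma~\ref{lem:N07-template} and Lemma~\ref{lem:N07-sp}(ii), applied with the isomorphism $\psi$ of Lemma~\ref{lem:N03-copies}(i), which sends $s,t$ to $s_k,t_k$. The three copies of one level meet only in $\{s_k,t_k\}$, so $B_k:=U_{k,1}\,\|\,U_{k,2}\,\|\,U_{k,3}$ is in $\mathcal{SP}$ with terminals $(s_k,t_k)$.

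\emph{Nesting the levels from the inside out.}  I will handle the idle level together with the pair $q,q'$. The paths $\Pi(t_m,q,t_{m-1})$, $\Pi(t_m,q',t_{m-1})$ and the edge $\Pi(t_m,t_{m-1})$ compose in parallel into a graph $Q$ with terminals $(t_m,t_{m-1})$. Then set $N_m:=B_m\circ Q$, which has terminals $(s_m,t_{m-1})$. For $k=m-1$ down to $1$ define
\[
 N_k:=\bigl(\Pi(s_k,s_{k+1})\circ N_{k+1}\bigr)\,\|\,B_k ,
\]
with terminals $(s_k,t_k)$. Here $N_{k+1}$ has terminals $(s_{k+1},t_k)$ after the composition, and for $k\ge2$ I will instead form $\bigl((\Pi(s_k,s_{k+1})\circ N_{k+1})\,\|\,B_k\bigr)\circ\Pi(t_k,t_{k-1})$. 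The indices need care. The cleanest choice is to let $N_{k+1}$ carry terminals $(s_{k+1},t_{k+1})$, and to write the inner piece as $\Pi(s_k,s_{k+1})\circ N_{k+1}\circ\Pi(t_{k+1},t_k)$, placed in parallel with $B_k$. At the bottom, $N_m:=B_m\,\|\,Q'$ fails, since $q$ attaches to $t_{m-1}$ and not to $s_m$. So I will treat the bottom step specially: form $\Pi(s_{m-1},s_m)\circ B_m\circ Q$, which has terminals $(s_{m-1},t_{m-1})$, and place it in parallel with $B_{m-1}$. Finally,
\[
 (G;T_1,T_2)=\Bigl(\Pi(T_1,s_1)\circ N_1\circ\Pi(t_1,z_1,\dots,z_7,T_2)\Bigr)\,\|\,Y^1\,\|\,Y^2\,\|\,Y^3\,\|\,Y^4 .
\]
Each arm $Y^j$ is in $\mathcal{SP}$ as $\Pi(y^j_0,\dots,y^j_{\pi_j-1})\circ\bigl(\Pi(y^j_{\pi_j-1},w_j,y^j_{\pi_j+1})\,\|\,\Pi(y^j_{\pi_j-1},\tilde y^j,y^j_{\pi_j+1})\bigr)\circ\Pi(y^j_{\pi_j+1},\dots,y^j_D)$.

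\emph{Checks, and the main obstacle.}  At each composition I must verify that the vertex sets meet exactly in the required terminals. This follows from the pairwise distinctness in Definition~\ref{def:N03-graph}, since each $\Int(U_{k,c})$ and each $\Int(Y^j)$ is disjoint from everything else and the frame vertices are distinct. I must also verify that the union of all pieces is exactly $G$, that is, that every edge of Definition~\ref{def:N03-graph} appears. The frame edges $s_ks_{k+1}$ and $t_{k+1}t_k$, the four edges at $q,q'$, the copies and the arms are each used once. The main obstacle is this bookkeeping at the bottom of the nesting, where $q,q'$ sit across $t_mt_{m-1}$ rather than across a single level. I expect it to be handled by absorbing $Q$ into the step that builds level $m-1$, as above, which uses $m\ge3$ (Lemma~\ref{lem:N03-params}(iii)) only to make sure that $t_{m-1}$ exists and is distinct from $t_m$.
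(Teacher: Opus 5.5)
Your proposal is correct and, once the abandoned first attempts are set aside (such as $N_m:=B_m\circ Q$ and the recursion with mismatched terminals), it is the paper's proof: your $N_k$ is the paper's $Q_k$, built by downward induction as $\bigl(\Pi(s_k,s_{k+1})\circ N_{k+1}\circ R_k\bigr)\,\|\,B_k$ with your $B_k$, where $R_k=\Pi(t_{k+1},t_k)$ for $k<m-1$ and $R_{m-1}$ is your parallel composition $Q$ of the three $t_m$--$t_{m-1}$ paths. The arms and the final assembly (with $\Pi(T_1,s_1)$, the tail and the four arms in parallel) coincide with the paper's, and so does the deduction of (ii) from (i) and Lemma~\ref{lem:N07-sp}(iii).
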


\begin{proof}
We use the notation of Definitions~\ref{def:N03-graph},
\ref{def:N03-names} and~\ref{def:N03-rows}.  By
Definition~\ref{def:N03-graph}, the vertices $T_1$, $T_2$, $s_k$, $t_k$,
$z_i$, $q$, $q'$, the members of the copies, the $y^j_i$ ($1\le i\le D-1$)
and the $\tilde y^j$ are pairwise distinct; this gives the intersections
of vertex sets below.

(1) \emph{The copies.}  Let $k$ be a level and $c$ an option.  By
Lemma~\ref{lem:N03-params}(ii), $\T:=\T(L_k,\mu_k;E_{k,c})$ is a template,
so $(\T;s,t)\in\mathcal{SP}$ by Lemma~\ref{lem:N07-template}.  By
Lemma~\ref{lem:N03-copies}, the map $\psi$ there has $\psi(s)=s_k$,
$\psi(t)=t_k$, and by Lemma~\ref{lem:N03-copies}(i) it is an isomorphism
from $\T$ onto $U_{k,c}$.  By Lemma~\ref{lem:N07-sp}(ii),
$(U_{k,c};s_k,t_k)\in\mathcal{SP}$.

(2) \emph{The arms.}  Let $1\le j\le4$ and $\pi:=\pi_j$.  By
Definition~\ref{def:N03-graph}, $\pi\in\{7,p\}$, and by
Definition~\ref{def:N03-params}, $p=2m+4$ and $D=4m+15$; so
$2\le\pi\le D-2$.  Put
\[
 C_j:=\Pi(y^j_{\pi-1},y^j_\pi,y^j_{\pi+1})\,\|\,\Pi(y^j_{\pi-1},\tilde y^j,y^j_{\pi+1}),
\]
two paths with $2$ edges and the terminals $(y^j_{\pi-1},y^j_{\pi+1})$,
whose vertex sets meet in $\{y^j_{\pi-1},y^j_{\pi+1}\}$.  Then
\[
 Y^j=\bigl(\Pi(y^j_0,\dots,y^j_{\pi-1})\circ C_j\bigr)\circ\Pi(y^j_{\pi+1},\dots,y^j_D),
\]
where the two paths have $\pi-1\ge1$ and $D-\pi-1\ge1$ edges, the first
meets $C_j$ in $\{y^j_{\pi-1}\}$, and the union of the first and $C_j$
meets the last in $\{y^j_{\pi+1}\}$; the union of the three is the arm
$Y^j$ of Definition~\ref{def:N03-rows}.  So
$(Y^j;T_1,T_2)\in\mathcal{SP}$, as $y^j_0=T_1$ and $y^j_D=T_2$.

(3) \emph{The levels.}  For $1\le k\le m$ let $Q_k$ be the subgraph of $G$
formed by the copies of the levels $k,\dots,m$ and the frame edges
$s_is_{i+1}$, $t_{i+1}t_i$ for $k\le i<m$; its vertices are the $s_i,t_i$
($k\le i\le m$) and the members of the copies of the levels $i\ge k$,
and, if $k\le m-1$, the vertices $q$ and $q'$ with their four edges.  We
show by downward induction on $k$ that $(Q_k;s_k,t_k)\in\mathcal{SP}$.
For $k=m$, the three copies of the level $m$ pairwise meet in
$\{s_m,t_m\}$, so
\[
 Q_m=U_{m,1}\cup U_{m,2}\cup U_{m,3}=(U_{m,1}\,\|\,U_{m,2})\,\|\,U_{m,3}
 \in\mathcal{SP},
\]
with the terminals $(s_m,t_m)$, by (1).  Let $k<m$ and
$(Q_{k+1};s_{k+1},t_{k+1})\in\mathcal{SP}$.  Let $R_k$ be the single
edge $\Pi(t_{k+1},t_k)$ if $k<m-1$, and for $k=m-1$ let
\[
 R_{m-1}:=\bigl(\Pi(t_m,t_{m-1})\,\|\,\Pi(t_m,q,t_{m-1})\bigr)\,\|\,\Pi(t_m,q',t_{m-1}),
\]
the three paths from $t_m$ to $t_{m-1}$ of Definition~\ref{def:N03-graph}
(the frame edge and the two paths through $q$ and $q'$), which pairwise
meet in $\{t_m,t_{m-1}\}$ as $q\ne q'$ are not frame vertices; so
$R_{m-1}\in\mathcal{SP}$ with the terminals $(t_m,t_{m-1})$ and the
vertices $t_m,t_{m-1},q,q'$.  The vertex $s_k$ is not a
vertex of $Q_{k+1}$, so $V(\Pi(s_k,s_{k+1}))\cap V(Q_{k+1})=\{s_{k+1}\}$.
The vertices of $R_k$ other than $t_{k+1}$ are $t_k$ if $k<m-1$, and
$t_{m-1},q,q'$ if $k=m-1$; none of them is a vertex of
$\Pi(s_k,s_{k+1})\cup Q_{k+1}$, as $t_k\notin V(Q_{k+1})$ and, for
$k=m-1$, $q,q'\notin V(Q_m)$.  So the vertices of
$\Pi(s_k,s_{k+1})\cup Q_{k+1}$ meet those of $R_k$ in $\{t_{k+1}\}$.
Hence
\[
 B_k:=\bigl(\Pi(s_k,s_{k+1})\circ Q_{k+1}\bigr)\circ R_k\in\mathcal{SP},
\]
with the terminals $(s_k,t_k)$ and the vertices $V(Q_{k+1})\cup\{s_k,t_k\}$,
plus $q,q'$ if $k=m-1$.
The copies $U_{k,1},U_{k,2},U_{k,3}$ pairwise meet in $\{s_k,t_k\}$, and
each meets $B_k$ in $\{s_k,t_k\}$, as its members are not vertices of
$Q_{k+1}$.  So
\[
 Q_k=\bigl((U_{k,1}\,\|\,U_{k,2})\,\|\,U_{k,3}\bigr)\,\|\,B_k\in\mathcal{SP},
\]
with the terminals $(s_k,t_k)$; its graph is the union of the copies of
the level $k$, of $Q_{k+1}$ and of the edges $s_ks_{k+1}$, $t_{k+1}t_k$,
and for $k=m-1$ of the vertices $q,q'$ and their edges,
which is $Q_k$.

(4) \emph{The graph $G$; we prove (i).}  Put
\[
 P:=\bigl(\Pi(T_1,s_1)\circ Q_1\bigr)\circ\Pi(t_1,z_1,\dots,z_7,T_2),
\]
the compositions being defined as $V(\Pi(T_1,s_1))\cap V(Q_1)=\{s_1\}$ and
the vertices of $\Pi(T_1,s_1)\cup Q_1$ meet those of
$\Pi(t_1,z_1,\dots,z_7,T_2)$ in $\{t_1\}$; so $P\in\mathcal{SP}$ with the
terminals $(T_1,T_2)$, and $P$ is the union of the frame and the copies.
By Lemma~\ref{lem:N03-structure}(i), the sets $\Int(Y^j)$ are pairwise
disjoint and disjoint from $V(P)$, so $P$ and the arms pairwise meet in
$\{T_1,T_2\}$.  Hence
\[
 H_0:=\bigl(\bigl(\bigl(P\,\|\,Y^1\bigr)\,\|\,Y^2\bigr)\,\|\,Y^3\bigr)\,\|\,Y^4
 \in\mathcal{SP},
\]
by (2) and (3).  By Definition~\ref{def:N03-graph}, $G$ is the union of
the frame, the copies, the pair $q,q'$ with its four edges, and
the arms; the first three form $P$ ($q,q'$ lie in $Q_{m-1}$, by
(3)); so $H_0=G$.  This proves (i).

(5) We prove (ii).  By (i) and Lemma~\ref{lem:N07-sp}(iii), $G$ has
treewidth at most two, that is, a tree decomposition of width at most
two (Definition~\ref{def:N07-td}).
\end{proof}


\section{Pathwidth}\label{sec:N10}

\begin{definition}[path decompositions, boundaries, layouts]\label{def:N10-pd}
Let $H$ be a graph.  A \emph{path decomposition} of $H$ is a tree
decomposition of $H$ (Definition~\ref{def:N07-td}) whose tree is a path
$\nu_1,\dots,\nu_r$; its width is its width as a tree decomposition.  The
\emph{pathwidth} of $H$ is the least width of a path decomposition of
$H$; so the treewidth of $H$ is at most its pathwidth.

For a set $P\subseteq V(H)$, $N(P)$ is the set of vertices of $H$ with a
neighbour in $P$, and the \emph{boundary} of $P$ is
$\partial P:=N(P)\setminus P$, the set of vertices outside $P$ with a
neighbour in $P$.  For an ordering $v_1,\dots,v_n$ of $V(H)$, its
\emph{prefixes} are the sets $P_i:=\{v_1,\dots,v_i\}$ ($1\le i\le n$); the
ordering is a \emph{layout of width $k$} when $|\partial P_i|\le k$ for
every $i$.
\end{definition}

\begin{lemma}[layouts give path decompositions]\label{lem:N10-layout}
Let $H$ be a graph and $v_1,\dots,v_n$ a layout of $H$ of width $k$, with
the prefixes $P_i$ and their boundaries $\partial P_i$
(Definition~\ref{def:N10-pd}).  Then the path $\nu_1,\dots,\nu_n$ with the
bags $X_i:=\{v_i\}\cup\partial P_i$ is a path decomposition of $H$ of
width at most $k$.
\end{lemma}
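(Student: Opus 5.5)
We check the three conditions (a)--(c) of Definition~\ref{def:N07-td} for the path $\nu_1,\dots,\nu_n$ with bags $X_i=\{v_i\}\cup\partial P_i$, and then bound the bag sizes.

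\emph{Width and condition (a).} Each bag satisfies $|X_i|\le1+|\partial P_i|\le1+k$, so the width is at most $k$. Condition (a) holds because $v_i\in X_i$.

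\emph{Condition (b).} Let $v_iv_j$ be an edge with $i<j$. Then $v_i\in P_i$ and $v_j\notin P_i$, and $v_j$ has the neighbour $v_i$ in $P_i$, so $v_j\in\partial P_i$. Hence $X_i$ contains both ends.

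\emph{Condition (c).} This is the step that needs an explicit description of the bags containing a given vertex. Fix $v_j$. It lies in $X_j$. For $i\neq j$, it lies in $X_i$ iff $v_j\in\partial P_i$, that is, iff $j>i$ and $v_j$ has a neighbour among $v_1,\dots,v_i$. Let $a$ be the least index of a neighbour of $v_j$, if any.
\begin{itemize}
\item If $a<j$, the set $\{i: v_j\in X_i\}$ is $\{a,a+1,\dots,j-1\}\cup\{j\}=\{a,\dots,j\}$. Indeed, for every $i$ with $a\le i<j$, the neighbour $v_a$ lies in $P_i$, and for $i<a$ no neighbour of $v_j$ lies in $P_i$.
\item If $a$ does not exist or $a>j$, the set is just $\{j\}$.
\end{itemize}
In both cases the set is an interval of consecutive indices, so the nodes whose bags contain $v_j$ form a subpath, which is a connected subtree.

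The main point to get right is this interval argument in (c). It needs that the prefixes are nested and that membership of $v_j$ in $\partial P_i$ is monotone in $i$ as long as $i<j$; both are immediate from the definitions. The path $\nu_1,\dots,\nu_n$ is a tree (Definition~\ref{def:N01-graph}), so this is a path decomposition of width at most $k$ (Definition~\ref{def:N10-pd}).
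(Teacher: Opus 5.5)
Your proof is correct and follows the paper's argument essentially step for step. It checks (a) and (b) the same way, uses the same interval description $\{a,\dots,j\}$ (or $\{j\}$ alone) of the nodes whose bags contain $v_j$ for (c), and gives the same bound $|X_i|\le1+|\partial P_i|\le k+1$ on the width; taking $a$ as the least index of any neighbour, rather than of an earlier neighbour as the paper does, is an equivalent formulation.
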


\begin{proof}
We check the conditions (a)--(c) of Definition~\ref{def:N07-td}.
Condition (a): $v_i\in X_i$.  Condition (b): let $v_iv_j$ be an edge with
$i<j$; then $v_j\in N(P_i)$ and $v_j\notin P_i$, so $v_j\in\partial P_i$,
and both $v_i$ and $v_j$ lie in $X_i$.  Condition (c): let $v=v_j$.  For
$i>j$, $v\in P_i$, so $v\notin\partial P_i$, and $v\ne v_i$; so
$v\notin X_i$.  For $i<j$, $v\in X_i$ iff $v\in\partial P_i$ iff some
neighbour of $v$ lies in $P_i$.  If $v$ has no neighbour among
$v_1,\dots,v_{j-1}$, the nodes whose bags contain $v$ are $\nu_j$ alone.
Otherwise let $i_0<j$ be the least index of a neighbour of $v$; then
$v\in X_i$ exactly for $i_0\le i\le j$, and these nodes form the subpath
$\nu_{i_0},\dots,\nu_j$.  The width: $|X_i|\le1+|\partial P_i|\le k+1$.
\end{proof}

\begin{lemma}[placing a vertex]\label{lem:N10-place}
Let $H$ be a graph, $P\subseteq V(H)$ and $w\in V(H)\setminus P$, with
boundaries as in Definition~\ref{def:N10-pd}.  Then
\[
 \partial(P\cup\{w\})=\bigl(\partial P\setminus\{w\}\bigr)\cup
 \bigl(N(w)\setminus(P\cup\{w\})\bigr).
\]
\end{lemma}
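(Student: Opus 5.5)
The statement is a set identity, so I will prove it by double inclusion, working directly from the definitions $N(P)$ and $\partial P=N(P)\setminus P$ of Definition~\ref{def:N10-pd}. Put $P':=P\cup\{w\}$. The first step is to note that $N(P')=N(P)\cup N(w)$: a vertex has a neighbour in $P'$ exactly when it has one in $P$ or is adjacent to $w$. From this,
\[
 \partial P'=\bigl(N(P)\cup N(w)\bigr)\setminus P'
 =\bigl(N(P)\setminus P'\bigr)\cup\bigl(N(w)\setminus P'\bigr).
\]

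Next I compare the two pieces with the right-hand side of the statement. The second piece is literally $N(w)\setminus(P\cup\{w\})$, so nothing needs to be done there. For the first piece I check that $N(P)\setminus P'=\partial P\setminus\{w\}$. Both sides consist of the vertices that lie in $N(P)$, do not lie in $P$, and differ from $w$. Removing $P'$ means removing $P$ and $w$, while $\partial P\setminus\{w\}$ means first removing $P$ and then $w$. So the two sets agree.

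Nothing here is a real obstacle, since the proof is only set algebra. The one point to state explicitly is why $w$ must be removed from $\partial P$: $w$ may lie in $\partial P$ because it has a neighbour in $P$, but it belongs to $P'$ and so is not in $\partial P'$. Also, $w\notin N(w)$ because the graph is simple, so the removal of $w$ in the second piece is harmless rather than essential. The hypothesis $w\notin P$ is never used; it only makes $P'$ a genuinely larger prefix.
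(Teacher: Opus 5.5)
Your proof is correct and takes essentially the same route as the paper. Both rest on the fact that a vertex has a neighbour in $P\cup\{w\}$ iff it has one in $P$ or is adjacent to $w$: the paper checks this pointwise for vertices $x\notin P\cup\{w\}$, while you write it as $N(P\cup\{w\})=N(P)\cup N(w)$ and then distribute the set difference, and your remark that the hypothesis $w\notin P$ is never used is also right.
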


\begin{proof}
A vertex $x\notin P\cup\{w\}$ lies in $\partial(P\cup\{w\})$ iff it has a
neighbour in $P\cup\{w\}$, that is, iff it has a neighbour in $P$, which
means $x\in\partial P$, or $w$ is a neighbour of $x$, which means
$x\in N(w)$ (Definition~\ref{def:N10-pd}).
\end{proof}

\begin{definition}[paths with ears]\label{def:N10-pwe}
A graph $H$ is a \emph{path with ears} when it consists of a path
$W=(w_0,w_1,\dots,w_n)$ with $n\ge2$ and of paths $F_1,\dots,F_h$
($h\ge0$), the \emph{ears}, where
$F_j=(w_{\alpha_j},f^j_1,\dots,f^j_{\ell_j-1},w_{\beta_j})$ with
$\ell_j\ge2$ and $1\le\alpha_j<\beta_j\le n$, such that the inner vertices
$f^j_i$ of the ears are pairwise distinct, distinct from the $w_i$, and
of degree two in $H$, and $E(H)$ consists of the edges of $W$ and of the
$F_j$.  The \emph{depth} of $H$ is
$d:=\max_{1\le i<n}\bigl|\{j:\alpha_j\le i<\beta_j\}\bigr|$, with $d:=0$
if $h=0$.  The \emph{ear order} of $H$ is the list $w_1,\dots,w_{n-1}$ in
which, right after $w_i$, the inner vertices $f^j_1,\dots,f^j_{\ell_j-1}$
of each ear $F_j$ with $\alpha_j=i$ are inserted, in increasing $j$.
\end{definition}

\begin{lemma}[the ear order]\label{lem:N10-pwe}
Let $H$ be a path with ears of depth $d$, with the notation of
Definition~\ref{def:N10-pwe}; boundaries are taken in $H$.
\begin{enumerate}
\item[\rm(i)] The ear order lists every vertex of $H$ other than
  $w_0,w_n$ exactly once.
\item[\rm(ii)] For every nonempty prefix $Q$ of the ear order,
  $\partial Q\subseteq\{w_0,w_n\}\cup M$ for a set $M$ of at most $d+1$
  vertices.
\item[\rm(iii)] For the set $Q$ of all vertices other than $w_0,w_n$,
  $\partial Q=\{w_0,w_n\}$.
\end{enumerate}
\end{lemma}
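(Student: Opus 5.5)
The proof is a direct inspection of the ear order, with (ii) done by an explicit description of the boundary of each prefix.

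For (i), the vertices of $H$ are the $w_i$ ($0\le i\le n$) and the inner vertices $f^j_i$, all pairwise distinct by Definition~\ref{def:N10-pwe}. The ear order lists each $w_i$ with $1\le i\le n-1$ once. Each ear $F_j$ has a unique $\alpha_j$ with $1\le\alpha_j<\beta_j\le n$, so $\alpha_j\le n-1$. Hence its inner vertices are inserted exactly once, right after $w_{\alpha_j}$, and (i) follows.

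For (ii), let $Q$ be a nonempty prefix, and let $i$ be the largest index with $w_i\in Q$. Such an $i$ exists since the first element is $w_1$. Then $Q$ consists of $w_1,\dots,w_i$, all inner vertices of the ears with $\alpha_j<i$, and an initial segment of the list of inner vertices of the ears with $\alpha_j=i$: every ear before some index $j_0$ is complete and ear $j_0$ is partially listed. We describe $\partial Q$ using the edge list of $H$ (Definition~\ref{def:N10-pwe}).

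\begin{itemize}
\item Along $W$, the outside neighbours of $Q$ are at most $w_0$ and $w_{i+1}$; $w_0$ is allowed, and $w_{i+1}$ may be $w_n$.
\item The ears with $\alpha_j<i$ are fully in $Q$. Such an ear contributes $w_{\beta_j}$ when $\beta_j>i$, and $\alpha_j\le i<\beta_j$ then holds.
\item Among the ears with $\alpha_j=i$, each complete one again contributes $w_{\beta_j}$ with $\alpha_j\le i<\beta_j$. The partial ear contributes its next inner vertex; its end $w_{\beta_j}$ is not yet a neighbour of $Q$ unless the ear is complete.
\item The ears with $\alpha_j=i$ not yet started contribute nothing, since their inner vertices neighbour only $w_i$ and other inner vertices.
\end{itemize}

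So, apart from $w_0$ and $w_n$, $\partial Q$ lies in $\{w_{i+1}\}\cup\{w_{\beta_j}:\alpha_j\le i<\beta_j\}\cup\{\text{one inner vertex}\}$. This is where the count is delicate. The set $\{w_{\beta_j}:\alpha_j\le i<\beta_j\}$ has size at most $d$, but together with $w_{i+1}$ and the inner vertex that gives $d+2$.

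The key observation is that the partial ear $j_0$ satisfies $\alpha_{j_0}=i<\beta_{j_0}$. It is counted in the depth at $i$, yet its end $w_{\beta_{j_0}}$ is not in $\partial Q$, so its slot in the depth count can be given to its inner vertex. Hence the ears crossing position $i$ contribute at most $d$ vertices in all, and $M$ has at most $d+1$ elements, counting $w_{i+1}$ once.

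The remaining points are the boundary cases. When $i=n-1$, $w_{i+1}=w_n$ is already allowed. When $w_{\beta_j}=w_{i+1}$ or $w_{\beta_j}=w_n$, repetitions only help. The depth is defined as a maximum over $1\le i<n$, which covers every $i$ that occurs here.

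For (iii), $Q=V(H)\setminus\{w_0,w_n\}$ by (i). Its only possible outside neighbours are $w_0$ and $w_n$, and both are actual neighbours: $w_1\in Q$ is adjacent to $w_0$, and $w_{n-1}\in Q$ is adjacent to $w_n$, using $n\ge2$. Hence $\partial Q=\{w_0,w_n\}$.
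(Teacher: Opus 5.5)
Your (i) and (iii) are correct, and your (iii), via the complement of $Q$ being $\{w_0,w_n\}$, is more direct than the paper's. Describing $\partial Q$ directly from the largest $i$ with $w_i\in Q$ is also a legitimate alternative to the paper's induction on the prefixes $Q_i$.

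But one step of that description is false. You say that the ears with $\alpha_j=i$ not yet started ``contribute nothing, since their inner vertices neighbour only $w_i$ and other inner vertices''. Their first inner vertex $f^j_1$ is adjacent to $w_i\in Q$ and is not in $Q$, so $f^j_1\in\partial Q$ for every such ear. Your intermediate claim, that $\partial Q\setminus\{w_0,w_n\}$ lies in $\{w_{i+1}\}\cup\{w_{\beta_j}:\alpha_j\le i<\beta_j\}$ plus \emph{one} inner vertex, therefore fails whenever two ears start at the same $w_i$. Definition~\ref{def:N10-pwe} allows this; it inserts such ears ``in increasing $j$''. For instance, take $n=3$ and the two ears $(w_1,f^1_1,w_3)$ and $(w_1,f^2_1,w_3)$. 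The prefix $Q=\{w_1\}$ has $\partial Q=\{w_0,w_2,f^1_1,f^2_1\}$, with two inner vertices.

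The repair is your own slot-reassignment, applied to every ear starting at $w_i$ and not only to $j_0$. Each ear with $\alpha_j\le i<\beta_j$ contributes at most one vertex to $\partial Q$ through itself: $w_{\beta_j}$ if it is complete, its next inner vertex if it is partial, and $f^j_1$ if it has not started. Adding $w_{i+1}$ gives $|M|\le d+1$; in the example, $M=\{w_2,f^1_1,f^2_1\}$ and $d=2$. This is exactly the accounting in step (3) of the paper's proof. There $\{f^j_1:\alpha_j=i\}$ is put into the boundary of $Q_{i-1}\cup\{w_i\}$, and each traversal replaces $f^j_1$ by the current inner vertex and finally by $w_{\beta_j}$.

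A smaller point: your claim that $w_{\beta_{j_0}}\notin\partial Q$ need not hold, since a complete ear or $w_{i+1}$ may supply that vertex. The count only needs that ear $j_0$ does not contribute its far end. In the paper's applications (Lemma~\ref{lem:N10-pieces}) no two ears share a first end, so your version would suffice there, but not for the lemma as stated.
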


\begin{proof}
(1) We prove (i).  The vertices of $H$ other than $w_0,w_n$ are
$w_1,\dots,w_{n-1}$ and the inner vertices of the ears, pairwise distinct
(Definition~\ref{def:N10-pwe}); the ear order lists each $w_i$ once and
the inner vertices of each ear once, after $w_{\alpha_j}$.

(2) \emph{Traversing an ear.}  Let $P\subseteq V(H)$ contain $w_{\alpha_j}$
but no inner vertex of the ear $F_j$, and let
$U_i:=\{f^j_1,\dots,f^j_i\}$ for $1\le i\le\ell_j-1$.  We claim
\[
 \partial(P\cup U_i)=\bigl(\partial P\setminus U_i\bigr)\cup
 \bigl(\{f^j_{i+1}\}\setminus P\bigr),
\]
where $f^j_{\ell_j}:=w_{\beta_j}$.  Indeed, a vertex $x\notin P\cup U_i$
lies in $\partial(P\cup U_i)$ iff it has a neighbour in $P$, which means
$x\in\partial P$, or a neighbour $f^j_{i'}$ with $1\le i'\le i$; the
neighbours of $f^j_{i'}$ are $f^j_{i'-1}$ and $f^j_{i'+1}$, as $f^j_{i'}$
has degree two, and the only one outside $P\cup U_i$ is $f^j_{i+1}$,
since $f^j_0=w_{\alpha_j}\in P$.  So, while the ear is traversed, the
boundary changes only by the replacement of $f^j_1$, which lies in
$\partial P$, by the current inner vertex, and at the end by
$w_{\beta_j}$ if that vertex is unplaced.

(3) We prove (ii), with the sets
\[
 S_i:=\{w_0,w_n,w_{i+1}\}\cup\{w_{\beta_j}:\alpha_j\le i<\beta_j\}
 \qquad(1\le i\le n-1).
\]
Let $Q_i$ be the prefix ending with the last inner vertex of the last
ear starting at $w_i$, or with $w_i$ if no ear starts there, and
$Q_0:=\emptyset$.  We show by induction on $i$ that
$\partial Q_i\subseteq S_i$, and that every prefix between $Q_{i-1}$ and
$Q_i$ has its boundary in $\{w_0,w_n\}\cup M$ with $|M|\le d+1$.

Let $1\le i\le n-1$ and assume $\partial Q_{i-1}\subseteq S_{i-1}$ (for
$i=1$, $\partial Q_0=\emptyset$).  The vertex placed after $Q_{i-1}$ is
$w_i$, so Lemma~\ref{lem:N10-place} with $P:=Q_{i-1}$ and $w:=w_i$ gives
\[
 \partial(Q_{i-1}\cup\{w_i\})=\bigl(\partial Q_{i-1}\setminus\{w_i\}\bigr)
 \cup\bigl(N(w_i)\setminus(Q_{i-1}\cup\{w_i\})\bigr).
\]
We bound the two sets.  First, by the induction hypothesis and the
definition of $S_{i-1}$, removing $w_i$ removes the vertices $w_{\beta_j}$
with $\beta_j=i$:
\[
 \partial Q_{i-1}\setminus\{w_i\}\subseteq S_{i-1}\setminus\{w_i\}
 =\{w_0,w_n\}\cup\{w_{\beta_j}:\alpha_j\le i-1,\ \beta_j>i\}.
\]
Second, by Definition~\ref{def:N10-pwe},
$N(w_i)=\{w_{i-1},w_{i+1}\}\cup\{f^j_1:\alpha_j=i\}\cup\{f^j_{\ell_j-1}:\beta_j=i\}$.
Here $w_{i-1}\in Q_{i-1}$ if $i\ge2$, and $w_{i-1}=w_0$ if $i=1$; and
each $f^j_{\ell_j-1}$ with $\beta_j=i$ lies in $Q_{i-1}$, since the ear
$F_j$ has $\alpha_j<\beta_j=i$ and its inner vertices were placed right
after $w_{\alpha_j}$.  So
\[
 N(w_i)\setminus(Q_{i-1}\cup\{w_i\})\subseteq\{w_0,w_{i+1}\}\cup\{f^j_1:\alpha_j=i\}.
\]
Together,
\[
 \partial(Q_{i-1}\cup\{w_i\})\subseteq\{w_0,w_n,w_{i+1}\}
 \cup\{w_{\beta_j}:\alpha_j\le i-1,\ \beta_j>i\}\cup\{f^j_1:\alpha_j=i\}.
\]
Then the ears $F_j$ with $\alpha_j=i$ are traversed in turn, and by (2)
each traversal replaces $f^j_1$ by the current inner vertex and, at the
end, by $w_{\beta_j}$, which is unplaced as $\beta_j>i$.  So at every
prefix between $Q_{i-1}$ and $Q_i$, the boundary is contained in
$\{w_0,w_n,w_{i+1}\}$ together with one vertex for each ear $F_j$ with
$\alpha_j\le i<\beta_j$: its far end $w_{\beta_j}$, its first inner vertex
$f^j_1$, or its current inner vertex.  There are at most $d$ such ears,
so $|M|\le d+1$; and at $Q_i$ every such ear contributes $w_{\beta_j}$, so
$\partial Q_i\subseteq S_i$.

(4) We prove (iii).  By (3) at $i=n-1$,
$\partial Q_{n-1}\subseteq S_{n-1}=\{w_0,w_n\}$, as $\beta_j\le n$ for every
$j$; and $w_0,w_n\in\partial Q_{n-1}$, being adjacent to
$w_1,w_{n-1}\in Q_{n-1}$ and unplaced.
\end{proof}

\begin{lemma}[the pieces of $G_\varphi$ as paths with ears]\label{lem:N10-pieces}
Let $\varphi$ be a restricted formula and $G:=G_\varphi$
(Definition~\ref{def:N03-graph}), with the copies $U_{k,c}$ and their
vertices $v^{k,c}_i$, $r^{k,c}_i$, $x^{k,c,j}_i$ (Definition~\ref{def:N03-names}),
the arms $Y^j$ (Definition~\ref{def:N03-rows}) and the tail
$z_1,\dots,z_7$.
\begin{enumerate}
\item[\rm(i)] Let $k$ be a level and $c$ an option, and write $L:=L_k$,
  $\mu:=\mu_k$, $e:=|Z_{k,c}|$, $a_j:=\mu-1-j$, $b_j:=\mu+1+j$, and
  $v_i$, $r_i$, $x^j_i$ for the vertices $v^{k,c}_i$, $r^{k,c}_i$,
  $x^{k,c,j}_i$ of $U_{k,c}$ (Definition~\ref{def:N03-names}).  Then
  $U_{k,c}$ is a path with ears (Definition~\ref{def:N10-pwe}) with the
  path $v_0,\dots,v_L$ ($v_0=s_k$, $v_L=t_k$), the ears
  $X^j=(v_{a_j},x^j_1,\dots,x^j_{\ell_j-1},v_{b_j})$ ($j<e$) and
  $R=(v_{L-\mu},r_1,\dots,r_{\mu+1},t_k)$, and depth $2$ if $e=2$ and
  $1$ otherwise.  Its ear order (Definition~\ref{def:N10-pwe}) is the list
  of its members $v_1,\dots,v_{L-1}$ in which,
  for each $j<e$, the members $x^j_1,\dots,x^j_{\ell_j-1}$ are inserted right
  after $v_{a_j}$, and $r_1,\dots,r_{\mu+1}$ right after $v_{L-\mu}$.
\item[\rm(ii)] Each arm $Y^j$ is a path with ears with the path
  $y^j_0,\dots,y^j_D$ ($y^j_0=T_1$, $y^j_D=T_2$) and the one ear
  $(y^j_{\pi_j-1},\tilde y^j,y^j_{\pi_j+1})$, of depth $1$; its ear order is
  $y^j_1,\dots,y^j_{\pi_j-1},\tilde y^j,y^j_{\pi_j},\dots,y^j_{D-1}$.
\item[\rm(iii)] The subgraph $Z$ of $G$ formed by the path
  $t_1,z_1,\dots,z_7,T_2$ is a path with ears with no ear, of depth $0$,
  with the ear order $z_1,\dots,z_7$; and $Z$ is attached at $\{t_1,T_2\}$
  (Definition~\ref{def:N01-attach}).
\end{enumerate}
\end{lemma}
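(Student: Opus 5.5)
The plan is to check each of the three items directly against Definition~\ref{def:N10-pwe}: name the path and the ears, check the index conditions $1\le\alpha_j<\beta_j\le n$, check that the inner vertices are distinct, of degree two and disjoint from the path, check that the edge set is exactly path plus ears, then compute the depth and read off the ear order.

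For (i), the path is the spine $v_0,\dots,v_L$ with $n=L\ge2$. The ears are the crossings $X^j$, with $\alpha=a_j$ and $\beta=b_j$, and the $t$-ear $R$, with $\alpha=L-\mu$ and $\beta=L$.
\begin{itemize}
\item The bounds $1\le a_j<b_j\le L-\mu-1<L-\mu<L$ come from Lemma~\ref{lem:N02-ends}, transported to $U_{k,c}$ by the isomorphism $\psi$ of Lemma~\ref{lem:N03-copies}(i).
\item Distinctness of the inner vertices, their degree two in $U_{k,c}$, and the fact that the edges are those of the spine, crossings and $t$-ear all follow from Definition~\ref{def:N02-template} and Lemma~\ref{lem:N02-ears}(i) through $\psi$.
\end{itemize}
For the depth, fix $1\le i<L$ and count the ears with $\alpha\le i<\beta$.
\begin{itemize}
\item The intervals $[a_1,b_1)\supset[a_0,b_0)$ are nested, so both crossings are counted exactly for $a_0\le i<b_0$ (that is, $i=\mu-1,\mu$).
\item The interval $[L-\mu,L)$ of $R$ is disjoint from them, since $b_j<L-\mu$.
\end{itemize}
So the depth is $2$ if $e=2$, $1$ if $e\le1$ (the $t$-ear alone gives $1$), and at least $1$ in all cases. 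The ear order is then literally Definition~\ref{def:N10-pwe} written with these $\alpha$'s. At most one ear starts at each spine vertex, because $a_0\ne a_1$ and $L-\mu\ne a_j$, so no ordering among simultaneous ears needs to be fixed.

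For (ii), the path is $y^j_0,\dots,y^j_D$ and the single ear is $(y^j_{\pi_j-1},\tilde y^j,y^j_{\pi_j+1})$, of length $2$. We need $1\le\pi_j-1<\pi_j+1\le D$, which follows from $2\le\pi_j\le D-2$ as shown in Lemma~\ref{lem:N03-structure}(iii). By Definition~\ref{def:N03-graph}, $\tilde y^j$ is distinct from the $y^j_i$ and has degree two. The depth is $1$, attained at $i=\pi_j-1,\pi_j$. The ear order inserts $\tilde y^j$ right after $y^j_{\pi_j-1}$, which is the stated list.

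For (iii), the path $t_1,z_1,\dots,z_7,T_2$ has $n=8\ge2$ and no ears, so its depth is $0$ by definition and its ear order is $z_1,\dots,z_7$. For attachment, Definition~\ref{def:N03-graph} shows that the only edges at $z_i$ are the two tail edges, so every edge with an end in $\Int(Z)=\{z_1,\dots,z_7\}$ lies in $Z$.

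The only step needing real care is the depth count in (i), specifically the nesting and disjointness of the intervals; everything else is bookkeeping through $\psi$.
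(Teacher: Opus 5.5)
Your proposal is correct and follows the paper's proof essentially step by step. It makes the same identification of the path and the ears, counts the depth the same way (the interval $[L-\mu,L)$ of the $t$-ear is disjoint from the crossing intervals, and $a_1<a_0<b_0<b_1$), and observes, as the paper does, that no two ears start at the same spine vertex. The only cosmetic difference is that you transport the template facts through the isomorphism $\psi$ of Lemma~\ref{lem:N03-copies}, whereas the paper reads them off Definition~\ref{def:N03-graph} directly as a renaming of the template.
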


\begin{proof}
(1) We prove (i).  By Definition~\ref{def:N03-graph}, $U_{k,c}$ is the
template $\T(L_k,\mu_k;E_{k,c})$ with its vertices renamed, so
Definition~\ref{def:N02-template} applies to it: it consists of the
spine, a path of length $L\ge2$, the $t$-ear and the crossings, paths
whose inner vertices $r_1,\dots,r_{\mu+1}$ and $x^j_1,\dots,x^j_{\ell_j-1}$
are pairwise distinct, distinct from the spine vertices, and of degree
two; the $t$-ear has $\mu+2\ge2$ edges and runs from $v_{L-\mu}$ to
$v_L$, with $1\le L-\mu<L$; the crossing $j$ has $\ell_j\ge2$ edges and
runs from $v_{a_j}$ to $v_{b_j}$, with $1\le a_j<b_j\le L-\mu-1$
(Lemma~\ref{lem:N02-ends}).  So $U_{k,c}$ is a path with ears as stated.
Its depth (Definition~\ref{def:N10-pwe}): for a position $i$ with
$1\le i<L$, the ear $R$, with $\alpha=L-\mu$ and $\beta=L$, satisfies
$\alpha\le i<\beta$ iff $L-\mu\le i$, and the ear $X^j$ iff
$a_j\le i<b_j$.  As $b_j\le L-\mu-1$, no position satisfies both
$L-\mu\le i$ and $i<b_j$.  So the number of ears $F$ with
$\alpha_F\le i<\beta_F$ is at most $1$ at the positions $i\ge L-\mu$, and
at most $e$ at the positions $i<L-\mu$.  When $e=2$, $a_1<a_0<b_0<b_1$ by
the formulas, so this number is $2$ at the positions $a_0\le i<b_0$ and
at most $2$ everywhere.  When $e\le1$ it is at most $1$ everywhere, and
it is $1$ at $i=L-\mu\le L-1$.  So the depth is $2$ if $e=2$ and $1$
otherwise.  By Definition~\ref{def:N10-pwe}, the ear order inserts the
inner vertices of each ear right after its first end: $x^j_1,\dots,x^j_{\ell_j-1}$
right after $v_{a_j}$ and $r_1,\dots,r_{\mu+1}$ right after $v_{L-\mu}$; no
two ears start at the same vertex, as $a_{e-1}<\dots<a_0<L-\mu$, so the
order of insertion at one vertex does not arise.  This is the list
stated in (i).

(2) We prove (ii).  By Definition~\ref{def:N03-graph}, the arm $Y^j$
consists of the path $y^j_0,\dots,y^j_D$ and the vertex $\tilde y^j$ with
the edges $\tilde y^jy^j_{\pi_j-1}$, $\tilde y^jy^j_{\pi_j+1}$, and
$\tilde y^j$ has no other neighbour in $G$; and $\pi_j\in\{7,p\}$ with
$p=2m+4$ and $D=4m+15$ (Definition~\ref{def:N03-params}), so
$1\le\pi_j-1<\pi_j+1\le D$.  So $Y^j$ is a path with the one ear
$(y^j_{\pi_j-1},\tilde y^j,y^j_{\pi_j+1})$, of depth $1$, and its ear order
is as stated.

(3) We prove (iii).  By Definition~\ref{def:N03-graph}, $t_1,z_1,\dots,z_7,T_2$
is a path of $G$, and the vertices $z_1,\dots,z_7$ have no other
neighbours in $G$: they belong to the frame only, and the parts of $G$
share only the terminals $s_k,t_k$ of the copies and $T_1,T_2$ of the
arms.  So $Z$ is a path with no ear, of depth $0$, with the ear order
$z_1,\dots,z_7$; and every edge of $G$ with an end in
$\Int(Z)=\{z_1,\dots,z_7\}$ is an edge of $Z$, so $Z$ is attached at
$\{t_1,T_2\}$.
\end{proof}

\begin{lemma}[gluing an attached subgraph]\label{lem:N10-glue}
Let $G$ be a graph and $X$ a subgraph of $G$ attached at $\{a,b\}$
(Definition~\ref{def:N01-attach}).  Let $P\subseteq V(G)\setminus\Int(X)$
and $Q\subseteq\Int(X)$.  Then, with $\partial$ the boundary in $G$ and
$\partial_X$ the boundary in $X$ (Definition~\ref{def:N10-pd}),
\[
 \partial(P\cup Q)\ \subseteq\ (\partial P\setminus Q)\cup\{a,b\}\cup
 \bigl(\partial_XQ\cap\Int(X)\bigr).
\]
\end{lemma}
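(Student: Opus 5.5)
The inclusion is proved elementwise.  Let $x\in\partial(P\cup Q)$, so $x\notin P\cup Q$ and $x$ has a neighbour $y\in P\cup Q$ (Definition~\ref{def:N10-pd}).  If $x\in\{a,b\}$ there is nothing to prove, so assume $x\notin\{a,b\}$.  We then split according to whether $y$ lies in $P$ or in $Q$.

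\emph{Case $y\in P$.}  Then $x\in N(P)$ and $x\notin P$, so $x\in\partial P$.  Since $x\notin Q$, we get $x\in\partial P\setminus Q$.

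\emph{Case $y\in Q$.}  Then $y\in\Int(X)$.  Because $X$ is attached at $\{a,b\}$ (Definition~\ref{def:N01-attach}), the edge $xy$ of $G$, which has the end $y$ in $\Int(X)$, is an edge of $X$.  Hence $x\in V(X)$, and $x$ is a neighbour of $y$ in $X$, with $x\notin Q$.  So $x\in\partial_XQ$.  As $x\in V(X)\setminus\{a,b\}=\Int(X)$, we conclude $x\in\partial_XQ\cap\Int(X)$.

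In every case $x$ lies in the right-hand side, which proves the inclusion.  The only point needing care is the second case, where the attachment hypothesis is used to move the edge $xy$ from $G$ into $X$ and so turn a neighbour in $G$ into a neighbour in $X$.  Note that the hypothesis $P\cap\Int(X)=\emptyset$ is not needed for the inclusion itself; it is only there for the later use of the lemma.
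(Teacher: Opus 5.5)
Your proof is correct and is essentially the paper's argument: the paper works at the level of sets, writing $\partial(P\cup Q)\subseteq(\partial P\setminus Q)\cup(N(Q)\setminus Q)$, using the attachment to get $N(Q)=N_X(Q)$ and hence $N(Q)\setminus Q=\partial_XQ\subseteq(\partial_XQ\cap\Int(X))\cup\{a,b\}$, while you argue the same steps elementwise. Your remark that the hypothesis $P\cap\Int(X)=\emptyset$ is not needed is also accurate; the paper's proof does not use it either.
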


\begin{proof}
By Definition~\ref{def:N10-pd}, $\partial S=N(S)\setminus S$ for every
$S\subseteq V(G)$, and $N(P\cup Q)=N(P)\cup N(Q)$, since a vertex has a
neighbour in $P\cup Q$ iff it has one in $P$ or one in $Q$.  Hence
\begin{align*}
 \partial(P\cup Q)&=\bigl(N(P)\cup N(Q)\bigr)\setminus(P\cup Q)
 =\bigl(N(P)\setminus(P\cup Q)\bigr)\cup\bigl(N(Q)\setminus(P\cup Q)\bigr)\\
 &\subseteq(\partial P\setminus Q)\cup\bigl(N(Q)\setminus Q\bigr).
\end{align*}
As $Q\subseteq\Int(X)$ and $X$ is attached at $\{a,b\}$, every edge of $G$
at a vertex of $Q$ is an edge of $X$ (Definition~\ref{def:N01-attach}), so
the neighbours in $G$ of the vertices of $Q$ are their neighbours in $X$:
$N(Q)=N_X(Q)$, and $N(Q)\setminus Q=N_X(Q)\setminus Q=\partial_XQ$.
Finally $\partial_XQ\subseteq V(X)=\Int(X)\cup\{a,b\}$
(Definition~\ref{def:N01-attach}), so
$\partial_XQ\subseteq(\partial_XQ\cap\Int(X))\cup\{a,b\}$.  Together,
\[
 \partial(P\cup Q)\subseteq(\partial P\setminus Q)\cup\{a,b\}\cup
 \bigl(\partial_XQ\cap\Int(X)\bigr).
\]
\end{proof}

\begin{theorem}[pathwidth of $G_\varphi$]\label{thm:N10-pw}
Let $\varphi$ be a restricted formula and $G:=G_\varphi$
(Definition~\ref{def:N03-graph}).  Then $G$ has a layout of width five,
hence a path decomposition of width at most five
(Definition~\ref{def:N10-pd}); so the pathwidth of $G$ is at most five.
\end{theorem}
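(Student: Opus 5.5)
The plan is to build an explicit layout of $G$ by concatenating the ear orders of the pieces, and to bound every prefix boundary with Lemma~\ref{lem:N10-glue} and Lemma~\ref{lem:N10-pwe}(ii). Lemma~\ref{lem:N10-layout} then turns the layout into a path decomposition of width at most five, and Definition~\ref{def:N10-pd} gives the pathwidth bound.

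\textbf{The order.} I place the vertices in three stages.
\begin{enumerate}
\item[(1)] I place $T_1$ and $T_2$ first.
\item[(2)] I place the four arms one after another, each in its ear order from Lemma~\ref{lem:N10-pieces}(ii).
\item[(3)] I process the levels in increasing order $k=1,\dots,m$. For level $k$ I place $s_k$ and $t_k$, then the three copies $U_{k,1},U_{k,2},U_{k,3}$ one after another, each in its ear order from Lemma~\ref{lem:N10-pieces}(i). When $k=m-1$ has been completed I place $q,q'$; I could equally place them right after $t_m$.
\end{enumerate}
Finally I place the tail $z_1,\dots,z_7$ in the ear order of Lemma~\ref{lem:N10-pieces}(iii). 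The tail could instead be placed right after $t_1$; I will choose whichever gives smaller boundaries.

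\textbf{Bounding the boundaries.} During stage (2), $\partial P$ consists of $T_1,T_2$ and whatever the current arm contributes. Each arm is attached at $\{T_1,T_2\}$ with depth $1$, so Lemma~\ref{lem:N10-glue} together with Lemma~\ref{lem:N10-pwe}(ii) gives at most $2+2=4$ vertices. By Lemma~\ref{lem:N10-pwe}(iii), a finished arm contributes only $T_1,T_2$.

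During level $k$ of stage (3), the frame vertices with an unplaced neighbour are few. They are $s_k$ and $t_k$ of the current level, since their copies are being placed, together with $s_{k+1}$ and $t_{k+1}$ as unplaced neighbours of $s_k$ and $t_k$. Here $T_1$ has left the boundary once $s_1$ and the arms are placed, and $T_2$ stays only if the tail is still unplaced. The earlier vertices $s_{k-1}$ and $t_{k-1}$ have already dropped out.

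When a copy is being traversed, it is attached at $\{s_k,t_k\}$. Lemma~\ref{lem:N10-glue} then bounds the interior part of the boundary by the set $M$ of Lemma~\ref{lem:N10-pwe}(ii), so it contributes at most $d+1\le3$ interior vertices. The terminals are $s_k,t_k$. Adding the outside vertices $s_{k+1},t_{k+1}$ gives $2+3+2=7$, which is too many. This count is the step I expect to be the obstacle.

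\textbf{Refining the order.} To bring the count down, I plan to interleave more carefully. I place $s_{k+1}$ immediately after $s_k$, and I do not place $t_k$ until all three copies of level $k$ are done. Then the outside contribution shrinks: $s_{k}$ is dropped once it has no unplaced neighbours, and $t_{k}$ is not yet placed, so it appears only as a boundary vertex, which I count once.

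Concretely, the outside vertices during a copy traversal are then among $s_k$, $t_k$ and $s_{k+1}$, giving at most $3$. The term $M$ from Lemma~\ref{lem:N10-pwe}(ii) already includes the terminal $w_n=t_k$ separately. Even so, at depth $2$ the count may reach $6$.

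The precise fix is to use the finer description in Lemma~\ref{lem:N10-pwe}(ii)'s proof. At every prefix the boundary is $\{w_0,w_n,w_{i+1}\}$ plus one vertex per open ear, and depth $2$ only occurs on the short stretch $a_0\le i<b_0$. On that stretch, $w_0=s_k$ has already left the boundary once $v_1$ and the frame successor $s_{k+1}$ are placed. I will use exactly this to gain one unit there, so that the total is at most $5$.

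With $T_2$ removed early by placing the tail right after $t_1$, a case check over the stages yields $|\partial P_i|\le5$ throughout, and Lemma~\ref{lem:N10-layout} concludes. I expect this bookkeeping, making the depth-$2$ stretch fit within five, to be the main difficulty.
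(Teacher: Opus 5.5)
\textbf{There is a genuine gap: your ordering does not have width five.} The counts that support it misread the boundary. By Definition~\ref{def:N10-pd}, $\partial P$ consists of the \emph{unplaced} vertices that have a placed neighbour: a placed vertex is never in it, while all of its unplaced neighbours are.

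Consider first stages (1) and (2). With $T_1,T_2$ placed first, the prefix $\{T_1,T_2\}$ already has ten boundary vertices: $s_1$, $z_7$ and $y^j_1,y^j_{D-1}$ for $1\le j\le4$. While $Y^1$ is traversed, the six end vertices of the other arms remain in the boundary. So the claim that the boundary is ``$T_1,T_2$ and whatever the current arm contributes'' is false. Equivalently, you applied Lemma~\ref{lem:N10-glue} but dropped its term $\partial P\setminus Q$. That term is harmless only when $\partial P$ lies in the attachment set $\{a,b\}$ of the piece being traversed.

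Stage (3) has the same defect. Placing $s_k,t_k$ before the copies of level $k$ puts nine vertices into the boundary at once, namely $v^{k,c}_1$, $v^{k,c}_{L_k-1}$ and $r^{k,c}_{\mu_k+1}$ for $c=1,2,3$, in addition to $s_{k+1},t_{k+1}$. Your refinement makes this worse. Placing $s_k$ and then $s_{k+1}$ early adds the six vertices $v^{k,c}_1$ and $v^{k+1,c}_1$ to the boundary and removes nothing. So the ``gain of one unit'' on the depth-two stretch has no basis, and the closing case check is only announced, not carried out.

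\textbf{The missing idea is an invariant:} before an attached piece is traversed, the boundary should lie in that piece's attachment set. You get it by placing the interior of every piece \emph{before} the vertices it hangs on. The paper's order is:
for $k=m,m-1,\dots,1$, the ear orders of $U_{k,1},U_{k,2},U_{k,3}$, then $s_k$, then $t_k$, with $q,q'$ right after $t_m$; then $z_1,\dots,z_7$; then the ear orders of $Y^1,\dots,Y^4$; and finally $T_1,T_2$.

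With this order, the bounds go as follows.
\begin{itemize}
\item Before each copy of level $k$ the boundary lies in $\{s_k,t_k\}$. Lemmas~\ref{lem:N10-glue} and~\ref{lem:N10-pwe}(ii) then give at most $2+(d+1)\le5$ directly. The depth-two case is exactly tight and needs no finer bookkeeping.
\item Placing $s_k,t_k$ moves the boundary to $\{s_{k-1},t_{k-1}\}$. At level $m$ it passes through at most four vertices, because of $q,q'$.
\item Level $1$ ends with boundary $\{T_1,z_1\}$.
\item The tail keeps the boundary to at most three vertices and ends with $\{T_1,T_2\}$.
\item Each arm keeps it to at most four.
\item $T_1,T_2$ come last, when all their neighbours are placed.
\end{itemize}

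The increasing direction of the levels is not itself the problem. It also works if you place the arms, the tail, $T_2$ and $T_1$ first, leaving boundary $\{s_1,t_1\}$, and at every level still place the copies before $s_k,t_k$. What breaks your plan is placing terminals before the pieces attached to them.
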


\begin{proof}
We use the notation of Definitions~\ref{def:N03-graph}
and~\ref{def:N03-names} and of Lemma~\ref{lem:N10-pieces}; boundaries are
taken in $G$ unless said otherwise.

(1) \emph{The ordering.}  For $k=m,m-1,\dots,1$ in turn: the ear orders
of $U_{k,1}$, $U_{k,2}$, $U_{k,3}$; then $s_k$; then $t_k$; and, for
$k=m$ only, then $q$ and $q'$.  Then the ear order $z_1,\dots,z_7$ of $Z$.
Then the ear orders of $Y^1,Y^2,Y^3,Y^4$ in turn.  Then $T_1$, then
$T_2$.  By Lemma~\ref{lem:N03-structure}(i) and
Lemma~\ref{lem:N10-pwe}(i) applied to the pieces of
Lemma~\ref{lem:N10-pieces}, every vertex of $G$ appears exactly once.
Moreover, by Lemma~\ref{lem:N10-pwe}(i), the ear order of a piece $X$
lists exactly the vertices of $X$ other than its two attachment vertices,
that is, $\Int(X)$ (Definition~\ref{def:N01-attach}).  So, for every piece
$X$: the set $P$ of the vertices placed before the ear order of $X$ is
disjoint from $\Int(X)$; a nonempty prefix $Q$ of the ear order of $X$ is
a subset of $\Int(X)$; and $Q=\Int(X)$ at the end of the ear order.  We
bound the boundary $B$ of every prefix, going through the ordering, and
use at each piece the same two facts: Lemma~\ref{lem:N10-glue}, for the
piece $X$ attached at $\{a,b\}$, the set $P$ of the vertices placed before
the piece, and a prefix $Q$ of its ear order; and
Lemma~\ref{lem:N10-pwe}(ii), (iii) for $\partial_XQ$.

(2) \emph{A copy.}  Let $X:=U_{k,c}$, attached at $\{s_k,t_k\}$
(Lemma~\ref{lem:N03-structure}(ii)), and let $P$ be the set of the
vertices placed before the ear order of $X$.  We prove the following
claim: if $\partial P\subseteq\{s_k,t_k\}$, then $|B|\le5$ for every
prefix of the ordering ending in the ear order of $X$, and
$\partial(P\cup\Int(X))\subseteq\{s_k,t_k\}$.

Let $Q$ be a nonempty prefix of the ear order of $X$.  The prefix of the
ordering ending with $Q$ is $P\cup Q$, so $B=\partial(P\cup Q)$, and by
(1), $P\subseteq V(G)\setminus\Int(X)$ and $Q\subseteq\Int(X)$.  So
Lemma~\ref{lem:N10-glue} applies with $a:=s_k$ and $b:=t_k$ and gives
\[
 B\subseteq(\partial P\setminus Q)\cup\{s_k,t_k\}\cup
 \bigl(\partial_XQ\cap\Int(X)\bigr)
 \subseteq\{s_k,t_k\}\cup\bigl(\partial_XQ\cap\Int(X)\bigr),
\]
the second inclusion by the hypothesis $\partial P\subseteq\{s_k,t_k\}$.
It remains to bound $\partial_XQ\cap\Int(X)$.  By
Lemma~\ref{lem:N10-pieces}(i), $X$ is a path with ears with $w_0=s_k$,
$w_n=t_k$ and depth $d\le2$, and $Q$ is a nonempty prefix of its ear
order; so Lemma~\ref{lem:N10-pwe}(ii) gives
$\partial_XQ\subseteq\{s_k,t_k\}\cup M$ for a set $M$ with
$|M|\le d+1\le3$.  Hence $\partial_XQ\cap\Int(X)\subseteq M$ and
$|B|\le|\{s_k,t_k\}|+|M|\le2+3=5$.  At the end of the ear order,
$Q=\Int(X)$ by (1), and Lemma~\ref{lem:N10-pwe}(iii) gives
$\partial_XQ=\{s_k,t_k\}$; so
$\partial_XQ\cap\Int(X)=\emptyset$, and
$B=\partial(P\cup\Int(X))\subseteq\{s_k,t_k\}$.  This proves the claim.

(3) \emph{The frame vertices of the level $m$.}  Let $P$ be the set
of the vertices placed before $s_m$, and suppose
$\partial P\subseteq\{s_m,t_m\}$.  We prove the following claim: the four
prefixes $P\cup\{s_m\}$, $P\cup\{s_m,t_m\}$, $P\cup\{s_m,t_m,q\}$ and
$P\cup\{s_m,t_m,q,q'\}$ have at most four boundary vertices each, and
$\partial(P\cup\{s_m,t_m,q,q'\})\subseteq\{s_{m-1},t_{m-1}\}$.

By (1), $P=\Int(U_{m,1})\cup\Int(U_{m,2})\cup\Int(U_{m,3})$.  By
Definition~\ref{def:N03-graph},
\[
\begin{aligned}
 N(s_m)&=\{s_{m-1},v^{m,1}_1,v^{m,2}_1,v^{m,3}_1\},\\
 N(t_m)&=\{t_{m-1},q,q'\}\cup\{v^{m,c}_{L_m-1},r^{m,c}_{\mu_m+1}:c=1,2,3\},\\
 N(q)&=N(q')=\{t_m,t_{m-1}\},
\end{aligned}
\]
and the $v^{m,c}_1$, $v^{m,c}_{L_m-1}$, $r^{m,c}_{\mu_m+1}$ are members of
$U_{m,c}$, hence in $P$.  Lemma~\ref{lem:N10-place} with $w:=s_m$ gives
\[
 \partial(P\cup\{s_m\})=(\partial P\setminus\{s_m\})\cup
 \bigl(N(s_m)\setminus(P\cup\{s_m\})\bigr)\subseteq\{t_m\}\cup\{s_{m-1}\}.
\]
Lemma~\ref{lem:N10-place} with $w:=t_m$ gives
\begin{align*}
 \partial(P\cup\{s_m,t_m\})&=\bigl(\partial(P\cup\{s_m\})\setminus\{t_m\}\bigr)\cup\bigl(N(t_m)\setminus(P\cup\{s_m,t_m\})\bigr)\\
 &\subseteq\{s_{m-1}\}\cup\{t_{m-1},q,q'\}.
\end{align*}
Lemma~\ref{lem:N10-place} with $w:=q$ gives
\begin{align*}
 \partial(P\cup\{s_m,t_m,q\})&=\bigl(\partial(P\cup\{s_m,t_m\})\setminus\{q\}\bigr)\cup\bigl(N(q)\setminus(P\cup\{s_m,t_m,q\})\bigr)\\
 &\subseteq\{s_{m-1},t_{m-1},q'\}\cup\{t_{m-1}\}.
\end{align*}
Lemma~\ref{lem:N10-place} with $w:=q'$ gives
\begin{align*}
 \partial(P\cup\{s_m,t_m,q,q'\})&=\bigl(\partial(P\cup\{s_m,t_m,q\})\setminus\{q'\}\bigr)\cup\bigl(N(q')\setminus(P\cup\{s_m,t_m,q,q'\})\bigr)\\
 &\subseteq\{s_{m-1},t_{m-1}\}\cup\{t_{m-1}\}.
\end{align*}
The four boundaries
have at most four vertices.  This proves the claim.

(4) \emph{The frame vertices of a level $k$ with $1<k<m$.}  Let $k$
be a level with $1<k<m$, let $P$ be the set of the vertices placed before
$s_k$, and suppose $\partial P\subseteq\{s_k,t_k\}$.  We prove the
following claim: the two prefixes $P\cup\{s_k\}$ and $P\cup\{s_k,t_k\}$
have at most two boundary vertices each, and
$\partial(P\cup\{s_k,t_k\})\subseteq\{s_{k-1},t_{k-1}\}$.

By (1), $P$ consists of the vertices of the levels $k+1,\dots,m$, of
$q,q'$, and of $\Int(U_{k,c})$ for $c=1,2,3$.  By
Definition~\ref{def:N03-graph},
\[
\begin{aligned}
 N(s_k)&=\{s_{k-1},s_{k+1},v^{k,1}_1,v^{k,2}_1,v^{k,3}_1\},\\
 N(t_k)&=\{t_{k-1},t_{k+1}\}\cup\{v^{k,c}_{L_k-1},r^{k,c}_{\mu_k+1}:c=1,2,3\}
   \cup\{q,q':k=m-1\},
\end{aligned}
\]
and the $v^{k,c}_1$, $v^{k,c}_{L_k-1}$, $r^{k,c}_{\mu_k+1}$ are members of
$U_{k,c}$, hence in $P$, as are $s_{k+1}$, $t_{k+1}$, $q$, $q'$.
Lemma~\ref{lem:N10-place} with $w:=s_k$ gives
\[
 \partial(P\cup\{s_k\})=(\partial P\setminus\{s_k\})\cup
 \bigl(N(s_k)\setminus(P\cup\{s_k\})\bigr)\subseteq\{t_k\}\cup\{s_{k-1}\}.
\]
Lemma~\ref{lem:N10-place} with $w:=t_k$ gives
\[
 \partial(P\cup\{s_k,t_k\})=\bigl(\partial(P\cup\{s_k\})\setminus\{t_k\}\bigr)\cup
 \bigl(N(t_k)\setminus(P\cup\{s_k,t_k\})\bigr)\subseteq\{s_{k-1}\}\cup\{t_{k-1}\}.
\]
The two boundaries have at most two vertices.  This proves the
claim.

(5) \emph{The frame vertices of the level $1$.}  Let $P$ be the set of
the vertices placed before $s_1$, and suppose
$\partial P\subseteq\{s_1,t_1\}$.  We prove the following claim: the two
prefixes $P\cup\{s_1\}$ and $P\cup\{s_1,t_1\}$ have at most two boundary
vertices each, and $\partial(P\cup\{s_1,t_1\})\subseteq\{T_1,z_1\}$.

By (1), $P$ consists of the vertices of the levels $2,\dots,m$, of
$q,q'$, and of $\Int(U_{1,c})$ for $c=1,2,3$.  By
Definition~\ref{def:N03-graph}, the frame consists of the paths
$T_1,s_1,\dots,s_m$ and $t_m,\dots,t_1,z_1,\dots,z_7,T_2$, so
\[
\begin{aligned}
 N(s_1)&=\{T_1,s_2,v^{1,1}_1,v^{1,2}_1,v^{1,3}_1\},\\
 N(t_1)&=\{z_1,t_2\}\cup\{v^{1,c}_{L_1-1},r^{1,c}_{\mu_1+1}:c=1,2,3\},
\end{aligned}
\]
and the $v^{1,c}_1$, $v^{1,c}_{L_1-1}$, $r^{1,c}_{\mu_1+1}$ are members of
$U_{1,c}$, hence in $P$, as are $s_2$, $t_2$.
Lemma~\ref{lem:N10-place} with $w:=s_1$ gives
\[
 \partial(P\cup\{s_1\})=(\partial P\setminus\{s_1\})\cup
 \bigl(N(s_1)\setminus(P\cup\{s_1\})\bigr)\subseteq\{t_1\}\cup\{T_1\}.
\]
Lemma~\ref{lem:N10-place} with $w:=t_1$ gives
\[
 \partial(P\cup\{s_1,t_1\})=\bigl(\partial(P\cup\{s_1\})\setminus\{t_1\}\bigr)\cup
 \bigl(N(t_1)\setminus(P\cup\{s_1,t_1\})\bigr)\subseteq\{T_1\}\cup\{z_1\}.
\]
The two boundaries have at most two vertices.  This proves the
claim.

(6) \emph{The levels.}  Let $P_k$ be the set of the vertices placed
before the ear order of $U_{k,1}$; by (1), $P_m=\emptyset$, and for
$1<k\le m$, $P_{k-1}$ is the set $P\cup\{s_m,t_m,q,q'\}$ of (3) if $k=m$
and the set $P\cup\{s_k,t_k\}$ of (4) if $k<m$, where $P$ is the set
placed before $s_k$.  We prove by induction on $k=m,m-1,\dots,1$ that
$\partial P_k\subseteq\{s_k,t_k\}$ and that $|B|\le5$ for every prefix
ending with a vertex of the level $k$, that is, a member of $U_{k,1}$,
$U_{k,2}$, $U_{k,3}$, or $s_k$, $t_k$, or $q,q'$ when $k=m$.  For $k=m$,
$\partial P_m=\emptyset$.  Let $k$ be a level with
$\partial P_k\subseteq\{s_k,t_k\}$.  The claim of (2) for $U_{k,1}$
gives $|B|\le5$ during its ear order and
$\partial(P_k\cup\Int(U_{k,1}))\subseteq\{s_k,t_k\}$; as
$P_k\cup\Int(U_{k,1})$ is the set placed before the ear order of
$U_{k,2}$, the claim of (2) for $U_{k,2}$ gives the same, and then the
claim for $U_{k,3}$.  So the set $P$ placed before $s_k$ has
$\partial P\subseteq\{s_k,t_k\}$.  If $k>1$, the claim of (3) if $k=m$,
or of (4) if $k<m$, gives $|B|\le4$ for the prefixes ending with $s_k$,
$t_k$ and, for $k=m$, $q$, $q'$, and
$\partial P_{k-1}\subseteq\{s_{k-1},t_{k-1}\}$, the hypothesis for
$k-1$.  If $k=1$, the claim of (5) gives $|B|\le2$ for the prefixes
ending with $s_1$ and $t_1$, and $\partial(P\cup\{s_1,t_1\})
\subseteq\{T_1,z_1\}$ for the prefix ending with $t_1$.

(7) \emph{The tail.}  Let $X:=Z$, attached at $\{t_1,T_2\}$
(Lemma~\ref{lem:N10-pieces}(iii)), and let $P$ be the set of the
vertices placed before $z_1$; by (6), $\partial P\subseteq\{T_1,z_1\}$,
and $t_1\in P$.  We prove the following claim: $|B|\le3$ for every
prefix of the ordering ending in $z_1,\dots,z_7$, and
$\partial(P\cup\Int(X))\subseteq\{T_1,T_2\}$.

Let $Q$ be a nonempty prefix of $z_1,\dots,z_7$, the ear order of $X$.
The prefix of the ordering ending with $Q$ is $P\cup Q$, so
$B=\partial(P\cup Q)$, and by (1), $P\subseteq V(G)\setminus\Int(X)$ and
$Q\subseteq\Int(X)$.  So Lemma~\ref{lem:N10-glue} applies with $a:=t_1$
and $b:=T_2$ and gives
\[
 B\subseteq(\partial P\setminus Q)\cup\{t_1,T_2\}\cup
 \bigl(\partial_XQ\cap\Int(X)\bigr)
 \subseteq\{T_1\}\cup\{t_1,T_2\}\cup\bigl(\partial_XQ\cap\Int(X)\bigr),
\]
the second inclusion as $\partial P\subseteq\{T_1,z_1\}$ and $z_1\in Q$.
Moreover $t_1\notin B$, as $t_1\in P$; so
$B\subseteq\{T_1,T_2\}\cup(\partial_XQ\cap\Int(X))$.  It remains to bound
$\partial_XQ\cap\Int(X)$.  By Lemma~\ref{lem:N10-pieces}(iii), $X$ is a
path with ears with $w_0=t_1$, $w_n=T_2$ and depth $0$, and $Q$ is a
nonempty prefix of its ear order; so Lemma~\ref{lem:N10-pwe}(ii) gives
$\partial_XQ\subseteq\{t_1,T_2\}\cup M$ for a set $M$ with $|M|\le1$.
Hence $\partial_XQ\cap\Int(X)\subseteq M$ and
$|B|\le|\{T_1,T_2\}|+|M|\le2+1=3$.  At the end of the ear order,
$Q=\Int(X)$ by (1), and Lemma~\ref{lem:N10-pwe}(iii) gives
$\partial_XQ=\{t_1,T_2\}$; so
$\partial_XQ\cap\Int(X)=\emptyset$, and
$B=\partial(P\cup\Int(X))\subseteq\{T_1,T_2\}$.  This proves the claim.

(8) \emph{An arm.}  Let $1\le j\le4$, let $X:=Y^j$, attached at
$\{T_1,T_2\}$ (Lemma~\ref{lem:N03-structure}(ii)), and let $P$ be the
set of the vertices placed before the ear order of $X$.  We prove the
following claim: if $\partial P\subseteq\{T_1,T_2\}$, then $|B|\le4$ for
every prefix of the ordering ending in the ear order of $X$, and
$\partial(P\cup\Int(X))\subseteq\{T_1,T_2\}$.

Let $Q$ be a nonempty prefix of the ear order of $X$.  The prefix of the
ordering ending with $Q$ is $P\cup Q$, so $B=\partial(P\cup Q)$, and by
(1), $P\subseteq V(G)\setminus\Int(X)$ and $Q\subseteq\Int(X)$.  So
Lemma~\ref{lem:N10-glue} applies with $a:=T_1$ and $b:=T_2$ and gives
\[
 B\subseteq(\partial P\setminus Q)\cup\{T_1,T_2\}\cup
 \bigl(\partial_XQ\cap\Int(X)\bigr)
 \subseteq\{T_1,T_2\}\cup\bigl(\partial_XQ\cap\Int(X)\bigr),
\]
the second inclusion by the hypothesis $\partial P\subseteq\{T_1,T_2\}$.
It remains to bound $\partial_XQ\cap\Int(X)$.  By
Lemma~\ref{lem:N10-pieces}(ii), $X$ is a path with ears with $w_0=T_1$,
$w_n=T_2$ and depth $1$, and $Q$ is a nonempty prefix of its ear order;
so Lemma~\ref{lem:N10-pwe}(ii) gives
$\partial_XQ\subseteq\{T_1,T_2\}\cup M$ for a set $M$ with $|M|\le2$.
Hence $\partial_XQ\cap\Int(X)\subseteq M$ and
$|B|\le|\{T_1,T_2\}|+|M|\le2+2=4$.  At the end of the ear order,
$Q=\Int(X)$ by (1), and Lemma~\ref{lem:N10-pwe}(iii) gives
$\partial_XQ=\{T_1,T_2\}$; so $\partial_XQ\cap\Int(X)=\emptyset$, and
$B=\partial(P\cup\Int(X))\subseteq\{T_1,T_2\}$.  This proves the claim.

(9) \emph{The four arms.}  Let $P^j$ be the set of the vertices placed
before the ear order of $Y^j$; by (1), $P^1=P\cup\Int(Z)$ for the set
$P$ of (7), and $P^{j+1}=P^j\cup\Int(Y^j)$ for $1\le j\le3$.  We prove
by induction on $j=1,\dots,4$ that $\partial P^j\subseteq\{T_1,T_2\}$
and that $|B|\le4$ for every prefix of the ordering ending in the ear
order of $Y^j$.  For $j=1$, $\partial P^1\subseteq\{T_1,T_2\}$ by (7).
Let $1\le j\le3$ with $\partial P^j\subseteq\{T_1,T_2\}$.  The claim of
(8) for $Y^j$ gives $|B|\le4$ during its ear order and
$\partial P^{j+1}=\partial(P^j\cup\Int(Y^j))\subseteq\{T_1,T_2\}$, the
hypothesis for $j+1$.  For $j=4$, the claim of (8) for $Y^4$ gives
$|B|\le4$ during its ear order.

(10) \emph{The end.}  Let $P$ be the set of the vertices placed before
$T_1$.  By (1), $T_1$ comes right after the ear order of $Y^4$, so
$P=P^4\cup\Int(Y^4)$ with $P^4$ the set of (9), and the claim of (8) for
$Y^4$, with $\partial P^4\subseteq\{T_1,T_2\}$ from (9), gives
$\partial P\subseteq\{T_1,T_2\}$.  By Definition~\ref{def:N03-graph},
\[
 N(T_1)=\{s_1,y^1_1,y^2_1,y^3_1,y^4_1\},\qquad
 N(T_2)=\{z_7,y^1_{D-1},y^2_{D-1},y^3_{D-1},y^4_{D-1}\},
\]
and all these vertices lie in $P$.  Lemma~\ref{lem:N10-place} with $w:=T_1$
gives
\[
 \partial(P\cup\{T_1\})=(\partial P\setminus\{T_1\})\cup
 \bigl(N(T_1)\setminus(P\cup\{T_1\})\bigr)\subseteq\{T_2\}\cup\emptyset .
\]
Lemma~\ref{lem:N10-place} with $w:=T_2$ gives
\[
 \partial(P\cup\{T_1,T_2\})=\bigl(\partial(P\cup\{T_1\})\setminus\{T_2\}\bigr)\cup
 \bigl(N(T_2)\setminus(P\cup\{T_1,T_2\})\bigr)=\emptyset .
\]
The two boundaries have at most one vertex.

By (1), every prefix of the ordering ends with a vertex of a level, of
the tail, of an arm, or with $T_1$ or $T_2$; so, by (6), (7), (9) and
(10), every prefix has at most five boundary vertices.  The ordering is
a layout of width five (Definition~\ref{def:N10-pd}), and
Lemma~\ref{lem:N10-layout} gives a path decomposition of width at most
five.
\end{proof}


\section{Four properties of \texorpdfstring{$G_\varphi$}{G\_phi}}\label{sec:N09}

\begin{lemma}[four properties of $G_\varphi$]\label{lem:N09-placement}
Let $\varphi$ be a restricted formula and $G:=G_\varphi$
(Definition~\ref{def:N03-graph}); $G$ is connected
(Lemma~\ref{lem:N04-potentials}).  Let $D=4m+15$ be the number of
Definition~\ref{def:N03-params}.
\begin{enumerate}
\item[\rm(i)] $G$ is not outerplanar: it contains a subdivision of
  $K_{2,3}$.
\item[\rm(ii)] $d(T_1,T_2)=D$; so the diameter of $G$, the largest
  distance between two of its vertices, is at least $4m+15$.
\item[\rm(iii)] $|E(G)|-|V(G)|+1\le12m+9$; so deleting some set of at
  most $12m+9$ edges of $G$ leaves a spanning tree of $G$.
\item[\rm(iv)] $G$ is two-connected (Definition~\ref{def:N01-graph}).
\end{enumerate}
\end{lemma}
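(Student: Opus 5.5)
I will treat the four items separately; each rests on the explicit structure of Definition~\ref{def:N03-graph} and on results already proved.

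For (i), I will exhibit a subdivision of $K_{2,3}$ with branch vertices $T_1,T_2$ on one side. The three internally disjoint $T_1$--$T_2$ paths are the arms $y^1_0,\dots,y^1_D$ and $y^2_0,\dots,y^2_D$, together with the frame path $T_1,s_1,\dots,s_m$, followed by the spine of $U_{m,1}$ from $s_m$ to $t_m$, followed by $t_m,\dots,t_1,z_1,\dots,z_7,T_2$. Each path has at least one interior vertex, so each supplies a degree-two vertex on the other side. Their interiors are pairwise disjoint by Lemma~\ref{lem:N03-structure}(i). Outerplanar graphs have no $K_{2,3}$ minor, so $G$ is not outerplanar.

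For (ii), Lemma~\ref{lem:N04-skeleton}(iv) gives $d(T_1,T_2)=D=4m+15$ directly, and the diameter is at least any single distance.

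For (iii), I will count edges against vertices piece by piece, using the fact that a path with ears contributes one more edge than interior vertices for each ear and each parallel branch. By Lemma~\ref{lem:N07-sp}(iv) and Theorem~\ref{thm:N07-tw}, $G$ is connected, so $|E|-|V|+1$ is the cyclomatic number, and I will count it by the compositions of Theorem~\ref{thm:N07-tw}. Each series composition leaves the cyclomatic number additive. Each parallel composition of two connected two-terminal graphs adds the two numbers plus one. The contributions are then as follows:
\begin{itemize}
\item a template has cyclomatic number $1+e\le3$ (the $t$-ear and at most two crossings, by Lemma~\ref{lem:N03-options}(ii));
\item each arm contributes $1$ (the twin);
\item $R_{m-1}$ contributes $2$;
\item each level contributes $2$ from combining its three copies with $B_k$ (or $Q_m$);
\item the four arms contribute $4$ from being put in parallel with $P$.
\end{itemize}
This gives a total of at most $9m+2(m-1)+2+2+4+4$. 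I need to double-check this against $12m+9$: it is about $11m+10$. At $m\ge3$ that is at most $12m+9$, since $11m+10\le12m+9$ holds iff $m\ge1$. The exact bookkeeping of the ``$+1$'' per parallel composition is the only delicate point, and the bound has slack, so a generous count suffices. The spanning-tree statement follows because a connected graph minus a maximal set of edges whose removal keeps it connected is a spanning tree, and exactly $|E|-|V|+1$ edges are removed.

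For (iv), which I expect to be the main obstacle, I will use Lemma~\ref{lem:N07-sp}(iv) with $(G;T_1,T_2)\in\mathcal{SP}$ (Theorem~\ref{thm:N07-tw}(i)). It says that after deleting any vertex $v$, every remaining vertex is joined to $T_1$ or $T_2$ in $G-v$. So it suffices to show that $T_1$ and $T_2$ are joined in $G-v$ when both survive, and this holds because there are four internally disjoint arms from $T_1$ to $T_2$ and $v$ lies on at most one of them. If $v\in\{T_1,T_2\}$, every vertex is joined to the surviving terminal, so $G-v$ is connected. Also $|V(G)|\ge3$ trivially, which completes (iv).
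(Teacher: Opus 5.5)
Your items (ii) and (iv) are the paper's arguments. Item (i) is correct with a different subdivision of $K_{2,3}$. You take $T_1,T_2$ as branch vertices and use as the three paths two arms and the length-$D$ path through the spine of $U_{m,1}$. The paper takes $s_k,t_k$ and the three spines of one level. Both work, since your three paths have length at least two and pairwise disjoint interiors by Lemma~\ref{lem:N03-structure}(i).

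Item (iii) takes a different route from the paper, but your count contains an error.

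\emph{The error.} A level $k<m$ is $Q_k=((U_{k,1}\,\|\,U_{k,2})\,\|\,U_{k,3})\,\|\,B_k$. That is four branches, hence three parallel compositions, so the level contributes $3$, not $2$. Only $Q_m=(U_{m,1}\,\|\,U_{m,2})\,\|\,U_{m,3}$ contributes $2$. With your own per-piece bounds the correct total is
\[
 9m+\bigl(3(m-1)+2\bigr)+2+4+4=12m+9 .
\]
So the bound is met with equality by your estimate. Your figure $11m+10$ undercounts by $m-1$, and the claim that the bound has slack is false. Any further generosity, for instance a looser bound per copy, would break it.

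\emph{What else to add.} Once the count is repaired, your route is valid. You should also justify why a parallel composition adds exactly one: the two sides must share no edge. That holds here, since the only edge between terminals that occurs, $t_mt_{m-1}$, lies in just one branch of $R_{m-1}$.

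\emph{Comparison with the paper.} The paper does not use the decomposition. It counts edges minus non-terminal vertices part by part: $-2$ for the frame, $2+e\le4$ per copy, $2$ for the pair and $2$ per arm, giving $|E(G)|-|V(G)|\le12m+8$. This needs only that the parts have pairwise disjoint edge sets. Your approach reuses Theorem~\ref{thm:N07-tw}, but it forces you to track every composition, which is where the slip occurred.
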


\begin{proof}
(1) We prove (i).  Let $k$ be a level.  By Definition~\ref{def:N03-graph},
the spines $v^{k,c}_0,v^{k,c}_1,\dots,v^{k,c}_{L_k}$ of the three copies
$U_{k,1},U_{k,2},U_{k,3}$ are paths from $v^{k,c}_0=s_k$ to
$v^{k,c}_{L_k}=t_k$ with $L_k=4m+8-2k\ge2$ edges, whose inner vertices
are members of distinct copies, hence pairwise distinct.  Three
internally disjoint $s_k$--$t_k$ paths with at least two edges each form
a subdivision of $K_{2,3}$, and a graph containing a subdivision of
$K_{2,3}$ has $K_{2,3}$ as a minor.  By \cite[Lemma~78 and the remark
after it]{Bodlaender98}, the class of outerplanar graphs is closed under
taking minors and $K_{2,3}$ is not outerplanar.  So $G$ is not
outerplanar.

(2) We prove (ii).  By Lemma~\ref{lem:N04-skeleton}(iv), $d(T_1,T_2)=D$,
and $D=4m+15$ by Definition~\ref{def:N03-params}.  So the largest
distance between two vertices of $G$ is at least $4m+15$.

(3) We prove (iii).  By Definition~\ref{def:N03-graph}, $G$ is the union
of the frame, the $3m$ copies, the pair $q,q'$ with its four edges and the four arms, whose
vertices other than the terminals $s_k,t_k$ of the copies and $T_1,T_2$
of the arms are pairwise distinct.  Every edge of a copy has an end among
its members (no edge of a copy joins $s_k$ to $t_k$, as $L_k\ge2$ and the
$t$-ear and the crossings end at spine vertices other than both
terminals), every edge of an arm has an end in $\Int(Y^j)$, and the
four edges at $q,q'$ have an end in $\{q,q'\}$; so the edge sets of the
parts are pairwise disjoint, and $|E(G)|$ is their sum.  We count the
edges minus the vertices, part by part, counting the terminals in the
frame:
\begin{itemize}
\item the frame, the paths $T_1,s_1,\dots,s_m$ and
  $t_m,\dots,t_1,z_1,\dots,z_7,T_2$: $2m+9$ vertices and $2m+7$ edges, a
  difference of $-2$;
\item a copy $U_{k,c}$ with $e:=|Z_{k,c}|$ crossings:
  $L_k+(\mu_k+2)+\sum_{j<e}\ell^{k,c}_j$ edges and
  $(L_k-1)+(\mu_k+1)+\sum_{j<e}(\ell^{k,c}_j-1)$ vertices other than
  $s_k,t_k$, a difference of $2+e\le4$, as $e\le2$
  (Lemma~\ref{lem:N03-options}(ii));
\item the pair $q,q'$: $4$ edges and $2$ vertices, a difference
  of $2$;
\item an arm $Y^j$: $D+2$ edges and the $D$ vertices
  $y^j_1,\dots,y^j_{D-1},\tilde y^j$ other than $T_1,T_2$, a difference
  of $2$.
\end{itemize}
Hence
\[
 |E(G)|-|V(G)|\le-2+3m\cdot4+2+4\cdot2=12m+8 .
\]
As $G$ is connected, it has a spanning tree, and a tree with $|V(G)|$
vertices has $|V(G)|-1$ edges \cite[Section~1.5]{Diestel17}; deleting
the other $|E(G)|-|V(G)|+1\le12m+9$ edges of $G$ leaves that tree.

(4) We prove (iv).  $G$ has the four distinct vertices $T_1,T_2,s_1,t_1$
(Definition~\ref{def:N03-graph}).  Let $v$ be a vertex of $G$.  We prove
that $G-v$ is connected, that is, that any two vertices of $G-v$ are
joined by a path of $G-v$.  If some vertex $T$ of $G-v$ is joined in
$G-v$ to every vertex of $G-v$, this holds: two vertices are joined
through $T$.  We find such a vertex $T$.

By Theorem~\ref{thm:N07-tw}(i), $(G;T_1,T_2)\in\mathcal{SP}$, so by
Lemma~\ref{lem:N07-sp}(iv) every vertex of $G-v$ is joined in $G-v$ to
$T_1$ or to $T_2$.

If $v\in\{T_1,T_2\}$, then $v$ is not a vertex of $G-v$, so every vertex
of $G-v$ is joined in $G-v$ to the other terminal, and that terminal is
such a $T$.

Let $v\notin\{T_1,T_2\}$.  By Definition~\ref{def:N03-graph}, each arm
$Y^j$ contains the path $y^j_0,\dots,y^j_D$ from $T_1$ to $T_2$, whose
inner vertices lie in $\Int(Y^j)$, and the four sets $\Int(Y^j)$ are
pairwise disjoint (Lemma~\ref{lem:N03-structure}(i)).  So $v$ lies in at
most one of them, and for another $j$ the path $y^j_0,\dots,y^j_D$ joins
$T_1$ and $T_2$ in $G-v$.  Hence a vertex of $G-v$ joined to $T_2$ is
joined to $T_1$ through this path, so every vertex of $G-v$ is joined in
$G-v$ to $T_1$, and $T_1$ is such a $T$.

So $G-v$ is connected for every vertex $v$, and $G$ is two-connected
(Definition~\ref{def:N01-graph}).
\end{proof}


\section{The main theorem}\label{sec:N08}

\begin{lemma}[the cells of a copy]\label{lem:N08-cells}
Let $\varphi$ be a restricted formula and $G:=G_\varphi$
(Definition~\ref{def:N03-graph}).  Let $k$ be a level, $c$ an option and
$Z_{k,c}$ its option set (Definition~\ref{def:N03-options}), $\mu_k$ the
number of Definition~\ref{def:N03-params}, and $\delta_1=d(T_1,\cdot)$
(Lemma~\ref{lem:N04-potentials}).  Then the set of values of $\delta_1$
at the diagonal members of $U_{k,c}$ (Definition~\ref{def:N03-names}) is
\[
 \{2m+4-k\}\cup Z_{k,c},
\]
and $\delta_1$ is injective on these members.
\end{lemma}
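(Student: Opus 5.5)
The plan is to read off the diagonal members of $U_{k,c}$ from the template and then evaluate $\delta_1$ at each of them. By Lemma~\ref{lem:N03-copies}(iii) the diagonal members of $U_{k,c}$ are the images under $\psi$ of the diagonal members of $\T(L_k,\mu_k;E_{k,c})$. By Lemma~\ref{lem:N02-skew}(iii), applied with $\sigma=\sigma_k$ (Lemma~\ref{lem:N03-params}(ii)), these are:
\begin{itemize}
\item the spine member $v^{k,c}_{\mu_k}$, with $h_{U_{k,c}}=\mu_k$;
\item for each $j<e=|Z_{k,c}|$, the crossing member $x^{k,c,j}_{j+1+\varepsilon_j/2}$, with $h_{U_{k,c}}=\mu_k+\varepsilon_j/2$.
\end{itemize}
The profile values transfer to $U_{k,c}$ by Lemma~\ref{lem:N03-copies}(ii). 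These $1+e$ vertices are pairwise distinct: the first is a spine member, the others are members of distinct crossings.

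Next I evaluate $\delta_1$ using the line $u\in\Int(U_{k,c})$ of table \eqref{eq:N04-values} in Lemma~\ref{lem:N04-potentials}, which gives $\delta_1(u)=k+h_{U_{k,c}}(u)$. At the spine diagonal member this is
\[
 k+\mu_k=k+2m+4-2k=2m+4-k .
\]
At the crossing $j$, Definition~\ref{def:N03-params} gives $\varepsilon_j=2(\zeta_j-k-\mu_k)$, where $\zeta_j$ is the $j$-th smallest cell of $Z_{k,c}$. Hence
\[
 \delta_1=k+\mu_k+(\zeta_j-k-\mu_k)=\zeta_j .
\]
So the set of values is $\{2m+4-k\}\cup\{\zeta_j:j<e\}=\{2m+4-k\}\cup Z_{k,c}$.

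For injectivity I check that these $1+e$ values are pairwise distinct. The cells $\zeta_0<\zeta_1$ are distinct. Every cell is at least $2m+6$ (Lemma~\ref{lem:N03-options}(i)), while $2m+4-k\le 2m+3$ since $k\ge1$, so no cell equals $2m+4-k$. Since the diagonal members are distinct vertices with distinct $\delta_1$ values, $\delta_1$ is injective on them.

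The only real care needed is the bookkeeping that identifies the $j$-th entry of $E_{k,c}$ with the crossing $j$ of the template, together with the arithmetic $k+\mu_k=2m+4-k$. I do not expect a genuine obstacle.
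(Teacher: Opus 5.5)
Your proposal is correct and follows essentially the same route as the paper. It identifies the diagonal members through Lemma~\ref{lem:N02-skew}(iii) and the isomorphism of Lemma~\ref{lem:N03-copies}, evaluates $\delta_1=k+h_{U_{k,c}}$ from the table \eqref{eq:N04-values}, and separates the values with $2m+4-k\le2m+3<2m+6\le\zeta_j$ together with $\zeta_0<\zeta_1$.
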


\begin{proof}
(1) Let $Z_{k,c}=\{\zeta_0<\dots<\zeta_{e-1}\}$, so that $e=|Z_{k,c}|$ and
the entries of $E_{k,c}$ are $\varepsilon_j=2(\zeta_j-k-\mu_k)$ for $j<e$
(Definition~\ref{def:N03-params}).  By Lemma~\ref{lem:N02-skew}(iii), the
diagonal members of the template $\T(L_k,\mu_k;E_{k,c})$ are $v_{\mu_k}$,
with $h=\mu_k$, and $x^j_{j+1+\varepsilon_j/2}$ for each $j<e$, with
$h=\mu_k+\varepsilon_j/2$.  By Lemma~\ref{lem:N03-copies}(iii), the
isomorphism $\psi$ of Lemma~\ref{lem:N03-copies} maps the diagonal
members of the template onto those of $U_{k,c}$; so these are
$v^{k,c}_{\mu_k}$ and $x^{k,c,j}_{j+1+\varepsilon_j/2}$ ($j<e$), and by
Lemma~\ref{lem:N03-copies}(ii) their profiles are
\[
 h_{U_{k,c}}(v^{k,c}_{\mu_k})=\mu_k,\qquad
 h_{U_{k,c}}(x^{k,c,j}_{j+1+\varepsilon_j/2})=\mu_k+\varepsilon_j/2
 =\mu_k+(\zeta_j-k-\mu_k)=\zeta_j-k .
\]

(2) By the line $u\in\Int(U_{k,c})$ of the table \eqref{eq:N04-values} of
Lemma~\ref{lem:N04-potentials}, $\delta_1(u)=k+h_{U_{k,c}}(u)$ for every
member $u$ of $U_{k,c}$.  With (1) and $\mu_k=2m+4-2k$
(Definition~\ref{def:N03-params}),
\[
 \delta_1(v^{k,c}_{\mu_k})=k+\mu_k=2m+4-k,\qquad
 \delta_1(x^{k,c,j}_{j+1+\varepsilon_j/2})=k+(\zeta_j-k)=\zeta_j\quad(j<e).
\]
So the set of values is $\{2m+4-k\}\cup Z_{k,c}$.  These $e+1$ values are
pairwise distinct: as $k\ge1$, $2m+4-k\le2m+3$; by
Lemma~\ref{lem:N03-options}(i), every cell is at least $2m+6$, so
$\zeta_j\ge2m+6>2m+3$ for each $j<e$; and $\zeta_0<\zeta_1$ if $e=2$.  The
$e+1$ diagonal members are distinct vertices with pairwise distinct
values of $\delta_1$, so $\delta_1$ is injective on them.
\end{proof}

\begin{definition}[the assignment and the literals of a choice]\label{def:N08-assign}
Let $\varphi$ be a restricted formula with the clauses $C_1,\dots,C_n$
and the variables $\xi_1,\dots,\xi_r$ (Definition~\ref{def:N03-formula}).
For a choice $f$ (Definition~\ref{def:N03-pattern}) let $A_f$ be the
assignment that gives each variable the value set by its selected option,
$A_f(\xi_g):=v(f_{n+g})$ ($1\le g\le r$), so that
$A_f(\xi_g)=\mathrm{true}$ iff $f_{n+g}=1$ (Definition~\ref{def:N03-options}), and
let $\lambda_i(f):=\lambda_{i,\min(f_i,|C_i|)}\in C_i$ be the literal
selected by $f_i$ at the level $i$ ($1\le i\le n$)
(Definition~\ref{def:N03-options}).
\end{definition}

\begin{lemma}[patterns and satisfying choices]\label{lem:N08-choice}
Let $\varphi$ be a restricted formula with the clauses
$C_1,\dots,C_n$ and the variables $\xi_1,\dots,\xi_r$, so that $m=n+r+1$
(Definition~\ref{def:N03-formula}), and with the levels $1,\dots,n$ of the
clauses, $n+1,\dots,n+r$ of the variables and $m$ the idle level
(Definition~\ref{def:N03-options}).  Let $G:=G_\varphi$
(Definition~\ref{def:N03-graph}), $f$ a choice and $W_f$ its pattern
(Definition~\ref{def:N03-pattern}), and $A_f$, $\lambda_i(f)$
($1\le i\le n$) as in Definition~\ref{def:N08-assign}.  Then $W_f$
resolves $G$ iff $A_f$ satisfies $\lambda_i(f)$ for every $1\le i\le n$
(Figure~\ref{fig:N08-cells}).
\end{lemma}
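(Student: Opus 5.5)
The plan is to reduce the statement to a combinatorial condition on cells, using Theorem~\ref{thm:N06-validity}(i) and Lemma~\ref{lem:N08-cells}.

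By Theorem~\ref{thm:N06-validity}(i), $W_f$ fails to resolve $G$ iff there are levels $k\ne k'$, both less than $m$, a diagonal member $x$ of $U_{k,f_k}$ and a diagonal member $y$ of $U_{k',f_{k'}}$ with $\delta_1(x)=\delta_1(y)$. By Lemma~\ref{lem:N08-cells}, the sets of values of $\delta_1$ at the diagonal members of these two copies are $\{2m+4-k\}\cup Z_{k,f_k}$ and $\{2m+4-k'\}\cup Z_{k',f_{k'}}$. So the first step is to show that these two sets meet iff $Z_{k,f_k}\cap Z_{k',f_{k'}}\ne\emptyset$. The values $2m+4-k$ and $2m+4-k'$ differ because $k\ne k'$. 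Each is at most $2m+3$, while every cell is at least $2m+6$ (Lemma~\ref{lem:N03-options}(i)). Hence a common value must be a common cell. Conversely, a common cell is attained by a diagonal member in each copy. So $W_f$ fails to resolve $G$ iff there are levels $k<k'<m$ with $Z_{k,f_k}\cap Z_{k',f_{k'}}\ne\emptyset$. The restriction $k'<m$ costs nothing, since $Z_{m,c}=\emptyset$ means the idle level never shares a cell.

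The second step applies Lemma~\ref{lem:N03-options}(iv). For $k<k'$, the sets share a cell iff $k=i$ is a clause level, $k'=n+g$ is the level of the variable $\xi_g$ of the literal $\lambda$ that option $f_i$ selects, and option $f_{n+g}$ falsifies $\lambda$. In the notation of Definition~\ref{def:N08-assign}, $\lambda=\lambda_i(f)$. Option $f_{n+g}$ falsifies $\lambda$ exactly when the assignment giving $\xi_g$ the value $v(f_{n+g})=A_f(\xi_g)$ does not satisfy $\lambda$ (Definition~\ref{def:N03-options}). Since $\lambda$ involves only $\xi_g$, this holds iff $A_f$ does not satisfy $\lambda_i(f)$.

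Combining the two steps, $W_f$ fails to resolve $G$ iff some $i\le n$ has $A_f$ not satisfying $\lambda_i(f)$. Taking the contrapositive gives the statement.

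The main obstacle is the first step: checking that the diagonal-vertex value sets intersect exactly through cells. This rests on the separation between the value $2m+4-k$ and the cell range, and on not losing cases at the idle level. Both points are handled by Lemma~\ref{lem:N08-cells} and by $Z_{m,c}=\emptyset$. The rest is unwinding definitions.
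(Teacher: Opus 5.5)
Your proposal is correct and follows the paper's proof essentially step for step. As in the paper, Theorem~\ref{thm:N06-validity}(i) and Lemma~\ref{lem:N08-cells} reduce resolution to disjointness of the option sets, using $2m+4-k\le 2m+3<2m+6$ and $Z_{m,c}=\emptyset$. Then Lemma~\ref{lem:N03-options}(iv), with $v(f_{n+g})=A_f(\xi_g)$, turns a shared cell into a falsified selected literal.

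One detail the paper states that you leave implicit: the two diagonal members lie in different copies, so they are distinct. Theorem~\ref{thm:N06-validity}(i) needs this when you go from a common value of $\delta_1$ to non-resolution.
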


\begin{proof}
(1) By Definition~\ref{def:N01-resolve}, $W_f$ resolves $G$ iff it
separates every two distinct vertices of $G$.  By
Theorem~\ref{thm:N06-validity}(i), $W_f$ does not resolve $G$ iff
there are levels $k\ne k'$ with $k,k'<m$, a diagonal member $x$ of $U_{k,f_k}$ and a
diagonal member $y$ of $U_{k',f_{k'}}$ with $\delta_1(x)=\delta_1(y)$;
such $x$ and $y$ are distinct, being members of different copies
(Lemma~\ref{lem:N03-structure}(i)).  By Lemma~\ref{lem:N08-cells} at the
level $k$ and the option $f_k$, the values of $\delta_1$ at the diagonal
members of $U_{k,f_k}$ form the set $\{2m+4-k\}\cup Z_{k,f_k}$, and at
the level $k'$ and the option $f_{k'}$ the set
$\{2m+4-k'\}\cup Z_{k',f_{k'}}$.  So, naming the two levels so that
$k<k'$, $W_f$ resolves $G$ iff for all levels $k<k'$ with $k'<m$ the sets
$\{2m+4-k\}\cup Z_{k,f_k}$ and $\{2m+4-k'\}\cup Z_{k',f_{k'}}$ are
disjoint.  As $k\ne k'$, $2m+4-k\ne2m+4-k'$; and $2m+4-k\le2m+3<2m+6$,
while every cell is at least $2m+6$ (Lemma~\ref{lem:N03-options}(i)), so
neither common value lies in an option set.  Hence the two sets are
disjoint iff $Z_{k,f_k}\cap Z_{k',f_{k'}}=\emptyset$, and $W_f$ resolves
$G$ iff $Z_{k,f_k}\cap Z_{k',f_{k'}}=\emptyset$ for all levels $k<k'$
with $k'<m$; as $Z_{m,c}=\emptyset$ (Definition~\ref{def:N03-options}),
the condition $k'<m$ may be dropped.

(2) \emph{Lemma~\ref{lem:N03-options}(iv) for the choice $f$.}  For
$1\le i\le n$ let $g(i)$ be the index with $\xi_{g(i)}$ the variable of
$\lambda_i(f)$; the level of $\xi_{g(i)}$ is $n+g(i)$
(Definition~\ref{def:N03-options}).  Let $k<k'$ be levels.  By
Definition~\ref{def:N08-assign}, if $k\le n$, the literal that $f_k$
selects at the level $k$ is $\lambda_k(f)$.  So
Lemma~\ref{lem:N03-options}(iv), at the levels $k<k'$ and the options
$f_k,f_{k'}$, says
\begin{equation}\label{eq:N08-meet}
 Z_{k,f_k}\cap Z_{k',f_{k'}}\ne\emptyset
 \iff k\le n,\ k'=n+g(k),\ \text{and $f_{n+g(k)}$ falsifies $\lambda_k(f)$.}
\end{equation}

(3) We prove that the following two statements are equivalent:
\begin{align}
 &Z_{k,f_k}\cap Z_{k',f_{k'}}\ne\emptyset\ \text{ for some levels $k<k'$;}
 \label{eq:N08-exists}\\
 &\text{$f_{n+g(i)}$ falsifies $\lambda_i(f)$ for some $i\in\{1,\dots,n\}$.}
 \label{eq:N08-some}
\end{align}

\eqref{eq:N08-exists} $\Rightarrow$ \eqref{eq:N08-some}.  Let $k<k'$ be
levels with $Z_{k,f_k}\cap Z_{k',f_{k'}}\ne\emptyset$.  By
\eqref{eq:N08-meet}, $k\le n$ and $f_{n+g(k)}$ falsifies $\lambda_k(f)$.
So $i:=k$ satisfies \eqref{eq:N08-some}.

\eqref{eq:N08-some} $\Rightarrow$ \eqref{eq:N08-exists}.  Let
$i\in\{1,\dots,n\}$ with $f_{n+g(i)}$ falsifying $\lambda_i(f)$.  Put
$k:=i$ and $k':=n+g(i)$.  Then $k\le n<k'$, so $k<k'$ are levels, and the
right side of \eqref{eq:N08-meet} holds for them.  By \eqref{eq:N08-meet},
$Z_{k,f_k}\cap Z_{k',f_{k'}}\ne\emptyset$, which is \eqref{eq:N08-exists}.

(4) For every literal $\lambda$ with variable $\xi_g$,
\begin{equation}\label{eq:N08-falsify}
 \text{$f_{n+g}$ falsifies $\lambda$}\iff\text{$A_f$ does not satisfy $\lambda$.}
\end{equation}
Indeed, by Definition~\ref{def:N03-options}, $f_{n+g}$ falsifies
$\lambda$ iff $\lambda=\neg\xi_g$ and $f_{n+g}=1$, or $\lambda=\xi_g$ and
$f_{n+g}\in\{2,3\}$; that is, iff $\lambda=\neg\xi_g$ and
$v(f_{n+g})=\mathrm{true}$, or $\lambda=\xi_g$ and
$v(f_{n+g})=\mathrm{false}$.  By Definition~\ref{def:N08-assign},
$v(f_{n+g})=A_f(\xi_g)$.  By Definition~\ref{def:N03-formula}, $A_f$ does
not satisfy $\lambda$ iff $\lambda=\neg\xi_g$ and
$A_f(\xi_g)=\mathrm{true}$, or $\lambda=\xi_g$ and
$A_f(\xi_g)=\mathrm{false}$.  The two conditions coincide.

(5) We conclude:
\[
\begin{aligned}
 \text{$W_f$ resolves $G$}
 &\iff \forall\ \text{levels $k<k'$}:\ Z_{k,f_k}\cap Z_{k',f_{k'}}=\emptyset
 &&\text{by (1)}\\
 &\iff \forall\ i\in\{1,\dots,n\}:\ \text{$f_{n+g(i)}$ does not falsify $\lambda_i(f)$}
 &&\text{by (3)}\\
 &\iff \forall\ i\in\{1,\dots,n\}:\ \text{$A_f$ satisfies $\lambda_i(f)$}
 &&\text{by \eqref{eq:N08-falsify},}
\end{aligned}
\]
where \eqref{eq:N08-falsify} is applied at $\lambda:=\lambda_i(f)$ and
$g:=g(i)$ for each $i$.
\end{proof}

\begin{figure}[ht]
\centering
\begin{subfigure}{\textwidth}
\centering
\begin{tikzpicture}[font=\scriptsize,x=0.62cm,y=0.5cm]
\foreach \c in {9,...,18} { \draw[gray!40] (\c,0.4) -- (\c,-4.4); \node[below] at (\c,-4.4) {$\c$}; }
\node[below] at (19.2,-4.4) {$\delta_1$};
\node[left] at (8.6,0) {level $1$, clause $C_1=(\xi_1,\xi_2)$};
\draw[gray!40] (8.7,0) -- (18.3,0);
\node[hv,inner sep=1.6pt] at (13,0) {};
\node[circle,fill,inner sep=1.6pt] at (17,0) {};
\node[left] at (8.6,-1) {level $2$, clause $C_2=(\neg\xi_1)$};
\draw[gray!40] (8.7,-1) -- (18.3,-1);
\node[hv,inner sep=1.6pt] at (12,-1) {};
\node[circle,fill,inner sep=1.6pt] at (18,-1) {};
\node[left] at (8.6,-2) {level $3$, variable $\xi_1$};
\draw[gray!40] (8.7,-2) -- (18.3,-2);
\node[hv,inner sep=1.6pt] at (11,-2) {};
\node[circle,fill,inner sep=1.6pt] at (16,-2) {};
\node[left] at (8.6,-3) {level $4$, variable $\xi_2$};
\draw[gray!40] (8.7,-3) -- (18.3,-3);
\node[hv,inner sep=1.6pt] at (10,-3) {};
\node[left] at (8.6,-4) {level $5$, idle};
\draw[gray!40] (8.7,-4) -- (18.3,-4);
\node[hv,inner sep=1.6pt] at (9,-4) {};
\end{tikzpicture}
\caption{$f=(2,1,2,1,1)$: clause $C_1$ chooses $\xi_2$, clause $C_2$ chooses
$\neg\xi_1$, $\xi_1$ is false and $\xi_2$ is true.  No two levels share a
value, and $W_f$ resolves $G_\varphi$.}
\end{subfigure}

\medskip
\begin{subfigure}{\textwidth}
\centering
\begin{tikzpicture}[font=\scriptsize,x=0.62cm,y=0.5cm]
\foreach \c in {9,...,18} { \draw[gray!40] (\c,0.4) -- (\c,-4.4); \node[below] at (\c,-4.4) {$\c$}; }
\node[below] at (19.2,-4.4) {$\delta_1$};
\node[left] at (8.6,0) {level $1$, clause $C_1=(\xi_1,\xi_2)$};
\draw[gray!40] (8.7,0) -- (18.3,0);
\node[hv,inner sep=1.6pt] at (13,0) {};
\node[circle,fill,inner sep=1.6pt] at (17,0) {};
\node[left] at (8.6,-1) {level $2$, clause $C_2=(\neg\xi_1)$};
\draw[gray!40] (8.7,-1) -- (18.3,-1);
\node[hv,inner sep=1.6pt] at (12,-1) {};
\node[circle,fill,inner sep=1.6pt] at (18,-1) {};
\node[left] at (8.6,-2) {level $3$, variable $\xi_1$};
\draw[gray!40] (8.7,-2) -- (18.3,-2);
\node[hv,inner sep=1.6pt] at (11,-2) {};
\node[circle,fill,inner sep=1.6pt] at (18,-2) {};
\node[left] at (8.6,-3) {level $4$, variable $\xi_2$};
\draw[gray!40] (8.7,-3) -- (18.3,-3);
\node[hv,inner sep=1.6pt] at (10,-3) {};
\node[left] at (8.6,-4) {level $5$, idle};
\draw[gray!40] (8.7,-4) -- (18.3,-4);
\node[hv,inner sep=1.6pt] at (9,-4) {};
\draw[thick] (18,-1.5) ellipse (0.35 and 1.05);
\end{tikzpicture}
\caption{$f=(2,1,1,1,1)$: the same, except that $\xi_1$ is true, which
falsifies the chosen literal $\neg\xi_1$ of $C_2$.  Levels $2$ and $3$
share the cell $18$.}
\end{subfigure}
\caption{How a choice $f$ is tested, for the formula
$\varphi=((\xi_1,\xi_2),(\neg\xi_1))$: two clauses and two variables,
so $m=5$ levels, the last one idle, one row each.  The three occurrences $(1,\xi_1)$,
$(1,\xi_2)$, $(2,\neg\xi_1)$ have the cells $16$, $17$, $18$
(Definition~\ref{def:N03-options}).  A dot in the row of the level $k$
at the column $x$ is a diagonal member of the free copy $U_{k,f_k}$ at
distance $\delta_1=x$ from $T_1$: the white dot is the common value
$2m+4-k$, the black dots are the cells of the chosen option
(Lemma~\ref{lem:N08-cells}); at a clause level the cell of the chosen
literal, at a variable level the cells of the occurrences of the literal
that the chosen value falsifies.  By Theorem~\ref{thm:N06-validity}(i),
$W_f$ resolves $G_\varphi$ iff no column holds two dots.  In (a), every
column holds one dot and $W_f$ resolves $G_\varphi$; the assignment
$\xi_1$ false, $\xi_2$ true satisfies $\varphi$.  In (b), $\xi_1$ is set
true, which falsifies the literal $\neg\xi_1$ chosen for $C_2$, and the
rows of the levels $2$ and $3$ both hold the cell $18$ (circled): these
two diagonal members are the only pair of vertices that $W_f$ does not
separate.}
\label{fig:N08-cells}
\end{figure}

\begin{corollary}[some pattern resolves iff $\varphi$ is satisfiable]\label{cor:N08-sat}
Let $\varphi$ be a restricted formula with the clauses
$C_1,\dots,C_n$ and the variables $\xi_1,\dots,\xi_r$, so that $m=n+r+1$
(Definition~\ref{def:N03-formula}), and with the levels $1,\dots,n$ of the
clauses and $n+1,\dots,n+r$ of the variables
(Definition~\ref{def:N03-options}).  Let $G:=G_\varphi$
(Definition~\ref{def:N03-graph}).  Some pattern $W_f$
(Definition~\ref{def:N03-pattern}) resolves $G$ iff $\varphi$ is
satisfiable (Definition~\ref{def:N03-formula}).
\end{corollary}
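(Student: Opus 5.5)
The plan is to derive the corollary directly from Lemma~\ref{lem:N08-choice}, which says that $W_f$ resolves $G$ iff $A_f$ satisfies the selected literal $\lambda_i(f)$ for every clause index $1\le i\le n$. The proof then has two directions, each a short translation between choices and assignments.

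For the forward direction, suppose some pattern $W_f$ resolves $G$. By Lemma~\ref{lem:N08-choice}, $A_f$ satisfies $\lambda_i(f)$ for every $i$. Since $\lambda_i(f)=\lambda_{i,\min(f_i,|C_i|)}\in C_i$ (Definition~\ref{def:N08-assign}), $A_f$ satisfies every clause $C_i$, so it satisfies $\varphi$ (Definition~\ref{def:N03-formula}). Hence $\varphi$ is satisfiable.

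For the converse, let $A$ be an assignment satisfying $\varphi$. I build a choice $f$ level by level.
\begin{itemize}
\item At a variable level $n+g$, put $f_{n+g}:=1$ if $A(\xi_g)=\mathrm{true}$ and $f_{n+g}:=2$ otherwise. Then $v(f_{n+g})=A(\xi_g)$, so $A_f=A$ by Definition~\ref{def:N08-assign}.
\item At a clause level $i$, $A$ satisfies some literal $\lambda_{i,c}$ with $1\le c\le|C_i|\le3$. Put $f_i:=c$; then $\min(f_i,|C_i|)=c$, so $\lambda_i(f)=\lambda_{i,c}$.
\item At the idle level, put $f_m:=1$.
\end{itemize}
With this choice, $A_f=A$ satisfies $\lambda_i(f)$ for every $i$, so Lemma~\ref{lem:N08-choice} shows that $W_f$ resolves $G$.

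There is no real obstacle, since the substance is in Lemma~\ref{lem:N08-choice}. The only points to state explicitly are that every clause has a literal index $c\le|C_i|$, so the selected literal is exactly the chosen one, and that $v(1)=\mathrm{true}$ and $v(2)=\mathrm{false}$ make $A_f=A$.
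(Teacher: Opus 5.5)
Your proposal is correct and follows the paper's proof essentially line for line: both directions go through Lemma~\ref{lem:N08-choice}, and the converse builds $f$ from a witnessing literal index $c_i\le|C_i|$ at each clause level and from $f_{n+g}\in\{1,2\}$ at each variable level, so that $A_f=A$. Your explicit choice $f_m:=1$ at the idle level fills in a component that the paper's definition of $f$ leaves unstated.
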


\begin{proof}
(1) $\Rightarrow$.  Let $W_f$ resolve $G$.  By Lemma~\ref{lem:N08-choice},
$A_f$ satisfies $\lambda_i(f)$ for every $1\le i\le n$.  As
$\lambda_i(f)\in C_i$ (Definition~\ref{def:N08-assign}), $A_f$ satisfies
every clause $C_i$, hence $\varphi$ (Definition~\ref{def:N03-formula}).
So $\varphi$ is satisfiable.

(2) $\Leftarrow$.  Let the assignment $A$ satisfy $\varphi$.  For each
$1\le i\le n$, $A$ satisfies some literal of $C_i$
(Definition~\ref{def:N03-formula}); choose $c_i$ with $1\le c_i\le|C_i|$
such that $A$ satisfies $\lambda_{i,c_i}$.  As $|C_i|\le3$
(Definition~\ref{def:N03-formula}(a)), $c_i\in\{1,2,3\}$ is an option.
Define the choice $f$ by
\[
 f_i:=c_i\quad(1\le i\le n),\qquad
 f_{n+g}:=\begin{cases}1&\text{if $A(\xi_g)=\mathrm{true}$,}\\
 2&\text{if $A(\xi_g)=\mathrm{false}$}\end{cases}\quad(1\le g\le r).
\]
By Definition~\ref{def:N08-assign}, $A_f(\xi_g)=v(f_{n+g})=A(\xi_g)$
for every $g$; so $A_f=A$.  Moreover
$\lambda_i(f)=\lambda_{i,\min(c_i,|C_i|)}=\lambda_{i,c_i}$, as
$c_i\le|C_i|$.  So $A_f$ satisfies $\lambda_i(f)$ for every $1\le i\le n$,
and by Lemma~\ref{lem:N08-choice}, $W_f$ resolves $G$.
\end{proof}

\begin{theorem}[the value of $\md(G_\varphi)$]\label{thm:N08-value}
Let $\varphi$ be a restricted formula and $G:=G_\varphi$
(Definition~\ref{def:N03-graph}); $G$ is connected
(Lemma~\ref{lem:N04-potentials}).  Then
\[
 \md(G)=\begin{cases}2m+5&\text{if $\varphi$ is satisfiable,}\\
 2m+6&\text{otherwise.}\end{cases}
\]
\end{theorem}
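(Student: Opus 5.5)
The proof assembles three earlier results: the rigidity of small resolving sets (Corollary~\ref{cor:N05-pattern}), the characterisation of resolving patterns (Corollary~\ref{cor:N08-sat}), and the upper bound $W_f\cup\{T_2\}$ (Theorem~\ref{thm:N06-validity}(ii)).

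\emph{Upper bounds.}  Take any choice $f$, say $f=(1,\dots,1)$.  By Theorem~\ref{thm:N06-validity}(ii), $W_f\cup\{T_2\}$ resolves $G$.  Its size is at most $|W_f|+1=2m+6$, by Definition~\ref{def:N03-pattern}.  Hence $\md(G)\le2m+6$ in all cases.  If $\varphi$ is satisfiable, Corollary~\ref{cor:N08-sat} gives a choice $f$ whose pattern $W_f$ resolves $G$.  Since $|W_f|=2m+5$, this yields $\md(G)\le2m+5$.

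\emph{Lower bounds.}  Lemma~\ref{lem:N05-lower}(iii) gives $\md(G)\ge2m+5$ for every restricted formula.  Combined with the upper bound, $\md(G)=2m+5$ when $\varphi$ is satisfiable.

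Now suppose $\varphi$ is unsatisfiable, and let $W$ resolve $G$ with $|W|\le2m+5$; we derive a contradiction.  By Corollary~\ref{cor:N05-pattern}(ii), there are an automorphism $\psi$ of $G$ and a choice $f$ with $\psi(W)=W_f$.  By Lemma~\ref{lem:N01-twins}(iv), the image $\psi(W)$ also resolves $G$, so the pattern $W_f$ resolves $G$.  Corollary~\ref{cor:N08-sat} then says $\varphi$ is satisfiable, a contradiction.  Therefore $\md(G)\ge2m+6$, and with the upper bound $\md(G)=2m+6$.

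No step here is a real obstacle; all the substance sits in the earlier lemmas.  The one point to check is that Lemma~\ref{lem:N01-twins}(iv) applies to an arbitrary automorphism $\psi$, not only to the twin swap.  It does: the statement is for any automorphism, and $\psi$ here is a composition of twin swaps, which is one.
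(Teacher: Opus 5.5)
Your proof is correct and follows the paper's own argument step by step. It gets the lower bound from Lemma~\ref{lem:N05-lower}(iii) and the upper bounds from Corollary~\ref{cor:N08-sat} and Theorem~\ref{thm:N06-validity}(ii). It handles the unsatisfiable case through Corollary~\ref{cor:N05-pattern}(ii), Lemma~\ref{lem:N01-twins}(iv) and Corollary~\ref{cor:N08-sat}. The only cosmetic difference is that you bound $|W_f\cup\{T_2\}|\le 2m+6$ instead of checking $T_2\notin W_f$, and that bound is all the upper bound needs.
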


\begin{proof}
(1) \emph{The lower bound.}  By Lemma~\ref{lem:N05-lower}(iii), every set
resolving $G$ has at least $2m+5$ vertices.  So $\md(G)\ge2m+5$
(Definition~\ref{def:N01-resolve}).

(2) \emph{$\varphi$ satisfiable.}  By Corollary~\ref{cor:N08-sat}, some
pattern $W_f$ resolves $G$, and $|W_f|=2m+5$
(Definition~\ref{def:N03-pattern}).  So $\md(G)\le2m+5$, and with (1),
$\md(G)=2m+5$.

(3) \emph{$\varphi$ not satisfiable: $\md(G)\ge2m+6$.}  Suppose that some
$W$ resolves $G$ with $|W|\le2m+5$.  Corollary~\ref{cor:N05-pattern}
applies to $W$: by its item (ii) there are an automorphism $\psi$ of $G$
and a choice $f$ with $\psi(W)=W_f$.  By Lemma~\ref{lem:N01-twins}(iv),
$\psi(W)=W_f$ resolves $G$.  By Corollary~\ref{cor:N08-sat}, $\varphi$ is
satisfiable, a contradiction.  So every set resolving $G$ has at least
$2m+6$ vertices, and $\md(G)\ge2m+6$.

(4) \emph{$\varphi$ not satisfiable: $\md(G)\le2m+6$.}  Let $f$ be any
choice.  By Theorem~\ref{thm:N06-validity}(ii), $W_f\cup\{T_2\}$
resolves $G$.  The landmarks of $W_f$ are cores of
copies, $q$ and the rows $w_j$ (Definition~\ref{def:N03-pattern}), which
lie in the sets $\Int(U_{k,c})$, $\{q\}$ and $\Int(Y^j)$ of
Lemma~\ref{lem:N03-structure}(i), while $T_2\in S$; so $T_2\notin W_f$,
and $|W_f\cup\{T_2\}|=2m+6$.  So $\md(G)\le2m+6$, and with (3),
$\md(G)=2m+6$.
\end{proof}

\begin{definition}[3-CNF formulas]\label{def:N08-3cnf}
A \emph{3-CNF formula} is a finite sequence of sets of literals over
finitely many variables (its clauses), each with at most three literals;
clauses may be empty and may contain a literal and its negation.
Satisfaction is as in Definition~\ref{def:N03-formula}; an empty clause is
satisfied by no assignment.  \emph{3-SAT} is the problem of deciding
whether a 3-CNF formula is satisfiable; it is NP-complete
\cite{Cook71,Karp72}.
\end{definition}

\begin{lemma}[normalisation]\label{lem:N08-normal}
There is an algorithm, the \emph{normalisation algorithm}, that in time
linear in the size of $\psi$ maps every 3-CNF formula $\psi$
(Definition~\ref{def:N08-3cnf}) to one of the answers ``satisfiable'',
``unsatisfiable'', or to a restricted formula $\varphi$
(Definition~\ref{def:N03-formula}) with $m\le16n_\psi+1$, where $n_\psi$ is
the number of clauses of $\psi$, such that the answer is correct, and
$\varphi$ is satisfiable iff $\psi$ is.
\end{lemma}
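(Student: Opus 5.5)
The algorithm will be a sequence of linear-time passes over $\psi$, each preserving satisfiability, ending either in a definite answer or in a sequence of clauses meeting conditions (a)--(c) of Definition~\ref{def:N03-formula}.

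\emph{Step 1: cleaning the clauses.}  If some clause is empty, answer ``unsatisfiable''.  Delete every clause containing a literal and its negation, since such a clause is satisfied by every assignment.  In each remaining clause, identify repeated literals, so that its literals have pairwise distinct variables; each is then a clause in the sense of Definition~\ref{def:N03-formula}.  If no clause remains, answer ``satisfiable''.

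\emph{Step 2: renaming the variables.}  Keep only the variables that occur, and renumber them $\xi_1,\dots,\xi_r$; this gives (b).  Because at least one clause remains and it is nonempty, $n\ge1$ and $r\ge1$.

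\emph{Step 3: bounding occurrences.}  Apply the standard splitting.  For a variable $\xi$ occurring $t$ times, replace its occurrences by fresh variables $\xi^{(1)},\dots,\xi^{(t)}$, each carrying the same sign as the occurrence it replaces.  Then add the equivalence cycle of two-literal clauses
\[
 (\neg\xi^{(1)},\xi^{(2)}),\ (\neg\xi^{(2)},\xi^{(3)}),\ \dots,\ (\neg\xi^{(t)},\xi^{(1)}),
\]
which forces all copies to be equal.  If $t=1$, no cycle is needed; a cycle of length one would be the clause $(\neg\xi^{(1)},\xi^{(1)})$, which is not allowed.  If $t=2$, the cycle consists of two clauses on two distinct variables, which is fine.

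After this step every copy occurs once in an original clause and twice in the cycle, once positively and once negatively.  Hence each literal $\xi^{(i)}$ or $\neg\xi^{(i)}$ occurs at most twice, which gives (c).

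Satisfiability is preserved.  Given a satisfying assignment of the old formula, give every copy the value of its variable.  Conversely, the cycle clauses are the implications $\xi^{(i)}\Rightarrow\xi^{(i+1)}$ around a cycle, so any satisfying assignment makes all copies equal, and it then induces a satisfying assignment of the old formula.

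\emph{Counting.}  Let $O\le3n_\psi$ be the number of occurrences after Step 1.  The new formula has at most $n_\psi+O\le4n_\psi$ clauses and at most $O\le3n_\psi$ variables, so
\[
 m=n+r+1\le7n_\psi+1\le16n_\psi+1 .
\]
Each step is a single pass with a table indexed by variable, so the whole algorithm runs in linear time, assuming variables are given as integers of size linear in the input.

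The main obstacle is the bookkeeping at the edge cases of Step 3.  For $t=1$ no cycle is added, so the literal occurs once and (c) holds trivially.  For $t=2$, the clauses $(\neg a,b)$ and $(\neg b,a)$ are legitimate, and each of the four literals occurs at most twice.  The remaining point is to check that no clause produced anywhere in the construction repeats a variable.
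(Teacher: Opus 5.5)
Your proposal is correct and follows the paper's proof essentially step for step: the same removal of empty and tautological clauses, the same Tovey-style splitting of each variable with at least two occurrences into one fresh variable per occurrence tied by the cycle $(\neg\xi^{(j)},\xi^{(j+1)})$, and the same two-way equisatisfiability argument. Three small points: the check you leave open is immediate, since distinct occurrences get distinct fresh variables and a cycle clause joins $\xi^{(j)}\ne\xi^{(j+1)}$ because $t\ge2$; the numbering $\xi_1,\dots,\xi_r$ must be done after the splitting rather than before it, or (b) fails for the split variables that no longer occur; and your count $r\le O\le3n_\psi$ is sharper than the paper's $r\le3n\le12n_\psi$, giving $m\le7n_\psi+1$.
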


\begin{proof}
(1) \emph{Trivial clauses.}  If $\psi$ has an empty clause, answer
``unsatisfiable''.  Delete every clause containing a literal and its
negation; every assignment satisfies such a clause, so satisfiability is
unchanged.  If no clause remains, answer ``satisfiable''.  Every
remaining clause is a nonempty set of at most three literals on distinct
variables.  So the result of this step is satisfiable iff $\psi$ is.

(2) \emph{One variable per occurrence.}  An \emph{occurrence} of a
variable $\xi$ is a pair (clause index,
literal) whose literal has variable $\xi$.  For each variable $\xi$ with
$\tau\ge2$ occurrences $(i_1,\lambda_1),\dots,(i_\tau,\lambda_\tau)$,
introduce new variables $\xi^{(1)},\dots,\xi^{(\tau)}$, replace $\lambda_j$
in clause $i_j$ by the literal of $\xi^{(j)}$ with the sign of $\lambda_j$
($1\le j\le\tau$), and add the clauses $\{\neg\xi^{(j)},\xi^{(j+1)}\}$ for
$1\le j\le\tau$, with $\xi^{(\tau+1)}:=\xi^{(1)}$.
Variables with one occurrence are kept.  Number the variables that
occur in the clauses of the result $\xi_1,\dots,\xi_r$, write the
literals of each clause in any order, and call the result $\varphi$, with
$n$ clauses; output $\varphi$.

(3) \emph{$\varphi$ is restricted.}  Each clause is nonempty with at most three
literals.  A replaced clause has distinct variables, since distinct
occurrences receive distinct new variables; an added clause
$\{\neg\xi^{(j)},\xi^{(j+1)}\}$ has two distinct variables as $\tau\ge2$.
Every $\xi_g$ occurs, by the numbering in (2).  A new variable $\xi^{(j)}$ occurs once in clause $i_j$,
once as $\neg\xi^{(j)}$ in $\{\neg\xi^{(j)},\xi^{(j+1)}\}$ and once as
$\xi^{(j)}$ in $\{\neg\xi^{(j-1)},\xi^{(j)}\}$ (indices modulo $\tau$); so each of its two literals
lies in at most two clauses.  A kept variable occurs once.  And $n\ge1$.

(4) \emph{Equisatisfiability.}  If $A$ satisfies $\psi$ after step (1), put
$B(\xi^{(j)}):=A(\xi)$ for the new variables and $B:=A$ on kept variables; every
replaced clause has the values of its original literals, and every added
clause has both literals of equal variable value, one of them negated, so
it is satisfied.  Conversely, if $B$ satisfies $\varphi$, the added
clauses give $B(\xi^{(1)})\Rightarrow B(\xi^{(2)})\Rightarrow\dots\Rightarrow
B(\xi^{(\tau)})\Rightarrow B(\xi^{(1)})$, so all $B(\xi^{(j)})$ are equal; put
$A(\xi):=B(\xi^{(1)})$ for such a variable $\xi$, $A:=B$ on kept variables,
and $A$ arbitrary on the variables of $\psi$ that occur in no clause of
$\varphi$.  Each original clause then has the value of its replaced
clause under $B$, true.

(5) \emph{Size and time.}  Let $n_\psi$ be the number of clauses of
$\psi$.  Step (1) deletes clauses, and step (2) adds one clause per
occurrence of a split variable; a clause has at most three occurrences,
so $n\le n_\psi+3n_\psi=4n_\psi$.  Every variable of $\varphi$ occurs in a
clause of $\varphi$, and a clause has at most three literals, so
$r\le3n\le12n_\psi$.  Hence $m=n+r+1\le16n_\psi+1$.  Let $v_\psi$ be the
number of variables of $\psi$; the size of $\psi$ is at least
$n_\psi+v_\psi$, and $\psi$ has at most $3n_\psi$ literal positions.  Step
(1) is one pass over the literal positions.  Step (2) groups the
occurrences by variable in one pass, creates at most one new variable
and one clause per occurrence, and renames the literals in one pass.  So
the normalisation algorithm runs in time
$O(n_\psi+v_\psi)$, linear in the size of $\psi$.
\end{proof}

Lemma~\ref{lem:N08-normal} is a form of a classical reduction.  Tovey
\cite{Tovey84} writes $r,s$-SAT for satisfiability restricted to
instances with exactly $r$ variables per clause and at most $s$
occurrences per variable, all expressions in conjunctive normal form with
no repeated variable in a clause.  Starting from 3-SAT, he replaces the
$k$ occurrences of a variable $x$ with $k>3$ by new variables
$x_1,\dots,x_k$ and appends the clauses $\{x_i\vee\neg x_{i+1}\}$ ($i<k$)
and $\{x_k\vee\neg x_1\}$, whose cyclic structure forces the $x_i$ to be
all equal; this gives his Theorem~2.1: satisfiability is NP-complete for
instances with $2$ or $3$ variables per clause and at most $3$
occurrences per variable.  He also shows that $3,4$-SAT is NP-complete
(Theorem~2.3) and that every instance of $r,r$-SAT is satisfiable
(Theorem~2.4), so with exactly three literals per clause the bound of
three occurrences cannot be reached; this is why restricted formulas
allow clauses of fewer than three literals.  Step (2) of the proof above
is Tovey's splitting, with the cycle in the other direction, applied to
every variable with at least two occurrences, so that each of the two
literals of a new variable lies in at most two clauses: condition (c) of
Definition~\ref{def:N03-formula} bounds the occurrences of each literal,
not of each variable.  Step (1) supplies the other requirements of
Definition~\ref{def:N03-formula}, nonempty clauses with distinct
variables and every variable occurring; $m\ge3$ is automatic
(Lemma~\ref{lem:N03-params}(iii)).

\begin{theorem}[NP-completeness]\label{thm:N08-np}
The following problem is NP-complete: given a two-connected
two-terminal series--parallel graph $(G;s,t)$
(Definitions~\ref{def:N01-graph} and~\ref{def:N07-sp}) of pathwidth at
most five (Definition~\ref{def:N10-pd}) and an integer $k\ge0$, decide
whether $\md(G)\le k$ (Definition~\ref{def:N01-resolve}).
\end{theorem}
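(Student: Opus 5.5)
The proof has two halves: membership in NP, and a polynomial reduction from 3-SAT (Definition~\ref{def:N08-3cnf}) that produces instances of the restricted kind.

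\emph{Membership in NP.} A certificate is a set $W\subseteq V(G)$ with $|W|\le k$. To check it, compute the distances from each landmark by breadth-first search, which takes $O(|W|\cdot|E(G)|)$ time. Then compare the distance vectors of all pairs of vertices, which takes $O(|V(G)|^2|W|)$ time. Both are polynomial, and the extra hypotheses on the input (two-connected, series--parallel, pathwidth at most five) only shrink the problem.

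\emph{The reduction.} Given a 3-CNF formula $\psi$, run the normalisation algorithm of Lemma~\ref{lem:N08-normal}.
\begin{itemize}
\item If it answers ``satisfiable'', output one fixed yes-instance. If it answers ``unsatisfiable'', output one fixed no-instance. Both instances come from a single fixed restricted formula $\varphi_0$, say $((\xi_1))$, which is satisfiable. With $G:=G_{\varphi_0}$, the pair $(G;T_1,T_2)$ with $k=2m+5$ is a yes-instance, and with $k=2m+4$ it is a no-instance, by Theorem~\ref{thm:N08-value}.
\item Otherwise the algorithm returns a restricted formula $\varphi$, satisfiable iff $\psi$ is. Output $(G_\varphi;T_1,T_2)$ with $k:=2m+5$.
\end{itemize}

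\emph{The output is a valid instance.} Theorem~\ref{thm:N07-tw}(i) shows $(G_\varphi;T_1,T_2)$ is two-terminal series--parallel. Lemma~\ref{lem:N09-placement}(iv) shows $G_\varphi$ is two-connected. Theorem~\ref{thm:N10-pw} shows its pathwidth is at most five.

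\emph{Correctness.} By Theorem~\ref{thm:N08-value}, $\md(G_\varphi)\le2m+5$ iff $\varphi$ is satisfiable, hence iff $\psi$ is satisfiable.

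\emph{Polynomial time.} Lemma~\ref{lem:N08-normal} gives $m\le16n_\psi+1$, and Lemma~\ref{lem:N03-structure}(iv) gives $|V(G_\varphi)|=O(m^2)$. The number of edges is $O(m^2)$ as well: Lemma~\ref{lem:N09-placement}(iii) bounds $|E|-|V|$ by $12m+8$. The cells, the lists $E_{k,c}$ and the graph itself are all written out directly from Definitions~\ref{def:N03-options}, \ref{def:N03-params} and~\ref{def:N03-graph}, so the construction runs in time polynomial in the size of $\psi$.

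This bookkeeping is the only step that needs some care. The substance is already contained in Theorem~\ref{thm:N08-value} and the structural theorems, so no real obstacle is expected. Corollary~\ref{cor:N08-tw2}, stated in the main theorem, then follows because two-terminal series--parallel graphs have treewidth at most two (Lemma~\ref{lem:N07-sp}(iii)).
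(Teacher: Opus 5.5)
Your proposal is correct and follows the paper's proof. Membership in NP is shown by checking a certificate with breadth-first search. The reduction runs the normalisation of Lemma~\ref{lem:N08-normal} and outputs $((G_\varphi;T_1,T_2),2m+5)$. The instance is valid by Theorem~\ref{thm:N07-tw}(i), Lemma~\ref{lem:N09-placement}(iv) and Theorem~\ref{thm:N10-pw}, and correctness follows from Theorem~\ref{thm:N08-value}.

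The only difference is in the trivial answers. The paper outputs the triangle $K_3$ with $k=2$ or $k=1$, and checks by hand that $K_3$ is series--parallel, two-connected, of pathwidth two, with $\md(K_3)=2$. You instead reuse $G_{\varphi_0}$ for the restricted formula $\varphi_0=((\xi_1))$. It has $m=3$ and $\md(G_{\varphi_0})=11$, so $k=11$ and $k=10$ give the yes- and no-instances, and no separate checks are needed.
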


\begin{proof}
(1) \emph{Membership in NP.}  By Lemma~\ref{lem:N07-sp}(iv), $G$ is
connected.  By Definition~\ref{def:N01-resolve},
$\md(G)\le k$ iff some set $W\subseteq V(G)$ with $|W|\le k$ resolves
$G$.  Such a $W$ is a certificate: the distances $d(w,x)$ for all $w\in W$
and $x\in V(G)$ are computed by breadth-first search from each $w\in W$
in polynomial time, and $W$ resolves $G$ iff the vectors
$(d(w,x))_{w\in W}$, $x\in V(G)$, are pairwise distinct
(Definition~\ref{def:N01-resolve}), which is checked in polynomial time.

(2) \emph{The reduction.}  3-SAT is NP-complete
(Definition~\ref{def:N08-3cnf}).  Let $K_3$ be the triangle with the
vertices $s,t,u$.  Given a 3-CNF formula $\psi$, run the normalisation
algorithm of Lemma~\ref{lem:N08-normal} on it.  On ``satisfiable''
output $((K_3;s,t),2)$, on ``unsatisfiable'' output $((K_3;s,t),1)$, and
on a restricted formula $\varphi$ output $((G_\varphi;T_1,T_2),2m+5)$.
The outputs are instances of the problem.  $(K_3;s,t)$ is the parallel
composition of the single edge $(K_2;s,t)$ and $\Pi(s,u,t)$, which meet
in $\{s,t\}$, so it is in $\mathcal{SP}$ (Definition~\ref{def:N07-sp});
and $K_3$ has three vertices and $K_3-v$ is an edge for each $v$, so
$K_3$ is two-connected (Definition~\ref{def:N01-graph}).
The single bag $\{s,t,u\}$ is a path decomposition of $K_3$ of width
two (Definitions~\ref{def:N07-td} and~\ref{def:N10-pd}), so $K_3$ has
pathwidth at most two.  $(G_\varphi;T_1,T_2)\in\mathcal{SP}$ by
Theorem~\ref{thm:N07-tw}(i), $G_\varphi$ is two-connected by
Lemma~\ref{lem:N09-placement}(iv), and $G_\varphi$ has pathwidth at most
five by Theorem~\ref{thm:N10-pw}.

(3) \emph{Correctness.}  $\md(K_3)=2$: the set $\{s,t\}$ resolves
$K_3$, as $(d(s,x),d(t,x))$ is $(0,1)$, $(1,0)$, $(1,1)$ at $x=s,t,u$,
and a single vertex does not, being at distance $1$ from the two others
(Definition~\ref{def:N01-resolve}).  So $\md(K_3)\le2$ holds and
$\md(K_3)\le1$ does not, as required when the normalisation algorithm
answers.  When it outputs $\varphi$, Theorem~\ref{thm:N08-value} gives
$\md(G_\varphi)=2m+5$ if $\varphi$ is satisfiable and $2m+6$ otherwise.
So $\md(G_\varphi)\le2m+5$ iff $\varphi$ is satisfiable, iff $\psi$ is
satisfiable (Lemma~\ref{lem:N08-normal}).

(4) \emph{Time.}  By Lemma~\ref{lem:N08-normal}, the normalisation
algorithm runs in time linear in the size of $\psi$, and $m\le16n_\psi+1$,
where $n_\psi$ is the number of clauses of $\psi$.  By
Lemma~\ref{lem:N03-structure}(iv), $G_\varphi$ has at most
$78m^2+60m+71$ vertices, and its parts (Definition~\ref{def:N03-graph})
are written in time polynomial in $m$, hence in $n_\psi$.
\end{proof}

\begin{corollary}[treewidth two, pathwidth five]\label{cor:N08-tw2}
The following problem is NP-complete: given a connected graph $G$ of
treewidth at most two (Definition~\ref{def:N07-td}) and pathwidth at
most five (Definition~\ref{def:N10-pd}) and an integer $k\ge0$, decide
whether $\md(G)\le k$.
\end{corollary}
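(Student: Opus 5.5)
The plan is to derive Corollary~\ref{cor:N08-tw2} directly from Theorem~\ref{thm:N08-np}, in two parts: membership in NP and NP-hardness.

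\emph{Membership in NP.} I would repeat step (1) of the proof of Theorem~\ref{thm:N08-np}. Since $G$ is connected by hypothesis, the certificate is a set $W$ with $|W|\le k$. We then verify that $W$ resolves $G$ by running breadth-first search from each $w\in W$ and comparing the resulting distance vectors. The hypotheses on treewidth and pathwidth are not needed for this step.

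\emph{NP-hardness.} I would use the reduction from the proof of Theorem~\ref{thm:N08-np}, namely the normalisation of Lemma~\ref{lem:N08-normal} followed by the outputs $((K_3;s,t),2)$, $((K_3;s,t),1)$ or $((G_\varphi;T_1,T_2),2m+5)$. Here we simply forget the terminals. So it remains to check that every output graph is a valid instance of the new problem, meaning it is connected, has treewidth at most two and has pathwidth at most five.
\begin{itemize}
\item For $K_3$: it is connected, and the single bag $\{s,t,u\}$ gives a path decomposition of width two. This bag is also a tree decomposition, so $K_3$ has treewidth at most two and pathwidth at most two.
\item For $G_\varphi$: it is connected by Lemma~\ref{lem:N04-potentials}. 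It has treewidth at most two by Theorem~\ref{thm:N07-tw}(ii), and pathwidth at most five by Theorem~\ref{thm:N10-pw}.
\end{itemize}
Correctness then carries over unchanged from step (3) of Theorem~\ref{thm:N08-np}, since $\md(G)$ does not depend on the choice of terminals. The running time is polynomial by step (4) of that proof.

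No step here is a real obstacle, because all the work lies in the earlier results. The only point needing care is to confirm that $\md$ and the yes/no answer are unaffected by dropping the terminals. That holds because the metric dimension is defined on the underlying graph alone (Definition~\ref{def:N01-resolve}).
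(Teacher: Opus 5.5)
Your proof is correct and follows essentially the paper's route: membership in NP via step (1) of Theorem~\ref{thm:N08-np}, which uses only connectivity, and hardness by forgetting the terminals. The one small difference is in the hardness part. The paper reduces from the whole problem of Theorem~\ref{thm:N08-np}, using Lemma~\ref{lem:N07-sp}(iii),(iv) to get connectivity and treewidth for any two-terminal series--parallel instance. You instead compose with the 3-SAT reduction and check the two output graphs $K_3$ and $G_\varphi$ directly, and both arguments are valid.
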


\begin{proof}
Membership in NP: step (1) of the proof of Theorem~\ref{thm:N08-np} uses
only that $G$ is connected.  Hardness: the map
$((G;s,t),k)\mapsto(G,k)$ sends every instance of the problem of
Theorem~\ref{thm:N08-np} to an instance of the present problem with the
same answer, as $G$ is connected (Lemma~\ref{lem:N07-sp}(iv)), has
treewidth at most two (Lemma~\ref{lem:N07-sp}(iii)) and has pathwidth at
most five; it is computed in linear time.  So the problem of
Theorem~\ref{thm:N08-np}, which is NP-hard, reduces to the present one.
\end{proof}

\section{Acknowledgements}\label{sec:ack}

This work was carried out with a large language model, Claude (Anthropic)
\cite{Claude}, used through its command-line interface Claude Code.

The research is the end of a months-long project, conducted since July
2026 in a single repository, in which the author directed the model and
the model did the mathematical work: it attempted a polynomial-time
algorithm for metric dimension on graphs of treewidth two, built and
checked that attempt for many weeks in a record of over eighteen hundred
numbered working notes, found the obstruction that the attempt could not
cross, and turned that obstruction into the reduction of this paper.
The construction of $G_\varphi$, the statements and the proofs of all
sections were written by the model. The author then read the paper
statement by statement, asked for every argument to be made explicit,
and reorganised and rewrote the presentation with the model; the
introduction, the placement among the known results and the open
questions were written in a dialogue.

The model also produced, independently of the paper, a formalisation of
the main results in the proof assistant Lean~4 \cite{Lean4} with its
library Mathlib \cite{Mathlib}.  It proves, with no unproved statement
and only the standard axioms of Lean, that for every restricted formula
$\varphi$ the graph $G_\varphi$ is connected and two-connected, that
$(G_\varphi;T_1,T_2)$ is a two-terminal series--parallel graph, that
$G_\varphi$ has treewidth
at most two and pathwidth at most five, and that its metric dimension is $2m+5$ if $\varphi$ is
satisfiable and $2m+6$ otherwise; and that the normalisation of
Lemma~\ref{lem:N08-normal} preserves satisfiability.  It does not
formalise membership in NP or the running time, which have no model in
the library.  The formalisation, with instructions to check it, is
archived at \cite{LeanArchive}.

The author thanks the model for its work, and takes responsibility for
every statement of this paper.

\bibliographystyle{plainnat}
\bibliography{references}
\end{document}